\numberwithin{equation}{section}
\newtheorem{thm}{Theorem}[section]
\newtheorem{conj}{Conjecture}[section]
\newtheorem{lemma}{Lemma}[section]
\newtheorem{cor}{Corollary}[section]
\theoremstyle{remark}
\newtheorem{rem}{Remark}[section]
\theoremstyle{plain}
\newtheorem{prop}{Proposition}[section]
\newtheorem*{prop*}{Proposition}
\newenvironment{nalign}{
    \begin{equation}
    \begin{aligned}
}{
    \end{aligned}
    \end{equation}
    \ignorespacesafterend
}
\theoremstyle{remark}
\newcommand{\pu}{\partial_u}
\newcommand{\pv}{\partial_v}
\newcommand{\T}{\mathcal{T}}
\newcommand{\e}{\mathrm{e}}
\newcommand{\DU}{\mathcal{D}_{U_0}}
\renewcommand{\(}{\left(}
\renewcommand{\)}{\right)}
\newcommand{\ualpha}{\underline{\alpha}}
\newcommand{\ubeta}{\underline{\beta}}
\newcommand{\con}{\text{constant}}
\newcommand{\dd}{\mathop{}\!\mathrm{d}}
\title{The Case Against Smooth Null Infinity I: \\Heuristics and Counter-Examples} 
\author[1]{Lionor M. A. Kehrberger} 
\affil[1]{University of Cambridge, Department of Applied Mathematics and Theoretical Physics, Wilberforce Road, Cambridge CB3 0WA, United Kingdom}
\date{October 2, 2023} 
\begin{document}
\pagenumbering{gobble}

\maketitle 
\begin{abstract}\small
This paper initiates a series of works dedicated to the rigorous study of the precise structure of gravitational radiation near infinity.

We begin with a brief review of an argument due to Christodoulou~\cite{CHRISTODOULOU2002} stating that Penrose's proposal of smooth conformal compactification of spacetime (or smooth null infinity) fails to accurately capture the structure of gravitational radiation emitted by $N$ infalling masses coming from past timelike infinity $i^-$.

  Modelling gravitational radiation by scalar radiation, we then take a first step towards a rigorous, fully general relativistic understanding of the non-smoothness of null infinity by constructing solutions to the spherically symmetric Einstein-Scalar field equations. Our constructions are motivated by Christodoulou's argument: They arise dynamically from polynomially decaying boundary data, $r\phi\sim |t|^{-p}$ as $t\to-\infty$, on a timelike hypersurface (to be thought of as the surface of a star) and the no incoming radiation condition, $r\partial_v\phi=0$, on past null infinity.
    We show that if the initial Hawking mass at past timelike infinity $i^-$ is non-zero, then there exists a constant $C\neq 0$  such that, in the case $p=1$, we obtain the following asymptotic expansion near $\mathcal{I}^+$, precisely in accordance with the non-smoothness of  $\mathcal{I}^+$: $\partial_v(r\phi)=Cr^{-3}\log r+\mathcal{O}(r^{-3})$. Similarly, if $p>1$, we find constant coefficient logarithmic terms appearing at higher orders in the expansion of $\partial_v(r\phi)$.
    
    Even though these results are obtained in the non-linear setting, we show that the same logarithmic terms appear already in the linear theory, i.e.\ when considering the spherically symmetric linear wave equation on a fixed Schwarzschild background. 
    
    As a corollary, we can apply our results to the scattering problem on Schwarzschild: Putting smooth compactly supported scattering data for the linear (or coupled) wave equation on $\mathcal{I}^-$ and on $\mathcal{H}^-$, we find that the asymptotic expansion of $\pv(r\phi)$ near $\mathcal{I}^+$ generically contains logarithmic terms at second order, i.e.\ at order $r^{-4}\log r$. 
\end{abstract}

\newpage
    \begingroup
\hypersetup{linkcolor=black}
    \tableofcontents{}
    \endgroup
\newpage
\pagenumbering{arabic}
\part{Introduction, motivation and summary of the main results}\label{sec:part1}
\section{Introduction}\label{sec:introduction}
This work is concerned with the rigorous mathematical analysis of gravitational waves near infinity. In particular, it contains various dynamical constructions of physically motivated example spacetimes that violate the well-known \textit{peeling property} of gravitational radiation and, thus, do not possess a smooth null infinity.

The paper aims to be accessible to an audience of both mathematicians and physicists. In hopes of achieving this aim, we divided it into two parts, with only the second one containing the actual mathematical proofs. 

In the first part (Part~\ref{sec:part1}), we give some historical background on the concept of smooth null infinity and review an important argument against smooth null infinity due to Christodoulou, which forms the main motivation for the present  work. This is done in section~\ref{sec:introduction}. Motivated by this argument, we then summarise, explain, and discuss the main results of this work (in the form of mathematical theorems) in section~\ref{sec:results}. 

The  proofs of these results are then entirely contained in Part~\ref{sec:counterexamples} of this paper, which, in principle, can be read independently of Part~\ref{sec:part1}.

\subsection{Historical background}
The first direct detection of gravitational waves a few years ago~\cite{Abbott2016ObservationMerger} may not only well be seen as one of the most important experimental achievements in recent times, but also as one of theoretical physics' greatest triumphs.
The theoretical analysis of gravitational waves "near infinity", i.e.\ far away from an \textit{isolated system} emitting them, has seen its basic ideas set up in the 1960s, in works by Bondi, van der Burg and Metzner~\cite{Bondi1962GravitationalSystem}, Sachs~\cite{R.Sachs1961GravitationalCondition,Sachs1962GravitationalSpace-time}, Penrose and Newman~\cite{Newman1962}, and others.
The ideas developed in these works were combined by Penrose's notion of \textit{asymptotic simplicity}~\cite{Penrose1965ZeroBehaviour}, a concept that can now be found in most advanced textbooks on general relativity. The idea behind this notion is to characterise the asymptotic behaviour of gravitational radiation by the requirement that the conformal structure of spacetime be smoothly\footnote{In fact, smooth here can be replaced by $C^k$ for, say, $k\geq4$. } extendable to "null infinity" (denoted by $\mathcal{I}$ and to be thought of as a "boundary of the spacetime") -- the place where gravitational radiation is observed. This requirement is also referred to as the spacetime possessing a "smooth null infinity".
Implied by this smoothness assumption is, amongst other things, the so-called \textit{Sachs peeling property}. This states that the different components of the Weyl curvature tensor fall off with certain negative integer powers of a certain parameter $r$ (whose role will in our context be played by the area radius function) as null infinity is approached along null geodesics~\cite{Penrose1965ZeroBehaviour}.\footnote{A more precise statement is given in section~\ref{sec:intro:CHR} below.}

Although Penrose's proposal of smooth null infinity has certainly left a notable impact 
 on the asymptotic analysis of gravitational radiation, its assumptions have been subject to debate ever since.
  In particular, the implied Sachs peeling property has been a cause of early controversy; in fact, it remained unclear for decades whether there even exist non-trivial dynamical solutions to Einstein's equations that exhibit the Sachs peeling behaviour or a smooth null infinity. 
  This question has been answered in the affirmative in the case of hyperboloidal initial data in~\cite{Friedrich1983CauchyRelativity,Friedrich1986OnStructure,Andersson1992OnEquations} and, more recently, also in the more interesting case of asymptotically flat initial data in~\cite{Chrusciel2002ExistenceSpacetimes,Corvino2007OnCompactification}, where a large class of asymptotically simple solutions was constructed by gluing the interior part of initial data to e.g.\  Schwarzschild initial data in the exterior (using the gluing results of~\cite{Corvino2000ScalarEquations}) and then exploiting the domain of dependence property combined with the fact that Schwarzschild initial data lead to a smooth null infinity.  See also the recent~\cite{Kroon} or the survey article~\cite{Friedrich2017Peeling} and references therein for related works.
  A similar result with a different approach (based on~\cite{Christodoulou2014}) was obtained in~\cite{Klainerman2003PeelingEquations}, where it was shown that if the initial data decay fast enough towards spatial infinity\footnote{So fast as to force the angular momentum of the initial data set to vanish.}, then the evolution of those data satisfies peeling.

While the analyses above show that the class of solutions with smooth $\mathcal{I}$  is non-trivial, they tell us very little about the \textit{physical} relevance of that class. 
Moreover, several heuristic works~\cite{Bardeen1973RadiationBackground, Schmidt1979TheSpace-Time,Walker1979RelativisticPast,Isaacson1984ExtensionFormula, Winicour1985LogarithmicFlatness} have hinted at  Penrose's regularity assumptions being too rigid to admit physically relevant systems, and a relation between the non-vanishing of the quadrupole moment of the radiating mass distribution and the failure of   $\mathcal{I}$ to be smooth was suggested by Damour using perturbative methods~\cite{Damour1986AnalyticalRadiation.}. 
In fact, there is a much stronger argument against the smoothness of $\mathcal{I}$ due to Christodoulou~\cite{CHRISTODOULOU2002}, which we will review now.
The core contents and results of the present paper (which are logically independent from Christodoulou's argument, but heavily motivated by it) will then be introduced in section~\ref{sec:results}, where we will present various classes of physically motivated counter-examples to smooth null infinity. The reader impatient for the results may wish to skip to section~\ref{sec:results} directly.

\subsection{Christodoulou's argument against smooth null infinity }\label{sec:intro:CHR}
Perhaps the most striking argument against smooth null infinity comes from the monumental work of Christodoulou and Klainerman on the proof of the global non-linear stability of the Minkowski spacetime~\cite{Christodoulou2014}. 
The results of this work do not confirm the Sachs peeling property; moreover, an argument by Christodoulou~\cite{CHRISTODOULOU2002}, which adds to the proof~\cite{Christodoulou2014} a physical assumption on the radiative amplitude on $\mathcal{I}$, shows that this failing of peeling is not a shortcoming of the proof but is, instead, likely to be a true physical effect. 
It is this argument~\cite{CHRISTODOULOU2002} which gives the present section its name, and which forms the main motivation for the present paper. 
Since it does not appear to be widely known, we will give a brief review of it now.

First, let us outline the setup.
In the work~\cite{Christodoulou2014}, given asymptotically flat vacuum initial data sufficiently close to the Minkowski initial data, two foliations of the dynamical vacuum solution $(M,g)$ -- which is shown to remain globally close and quantitatively settle down to the Minkowski spacetime -- are constructed: A foliation of maximal hypersurfaces, which are level sets $\mathcal{H}_t$ of a canonical time function $t$, as well as a foliation of outgoing null hypersurfaces,  level sets $\mathcal{C}_u^+$ of a canonical optical function $u$ (to be thought of as retarded time and tending to $-\infty$ as $i^0$ is approached). 

Let now $e_4$ be a suitable choice of the corresponding generating (outgoing) null geodesic vector field of $\mathcal{C}_u^+$ and $e_3$ a suitable choice of conjugate incoming null normal s.t.\ $g(e_4,e_3)=-2$, let $X,Y$ be vector fields on the spacelike 2-surfaces $S_{t,u}=\mathcal{H}_t\cap\mathcal{C}_u^+$, and let $\slashed{\epsilon}$ be the volume form induced on $S_{t,u}$.
Then, under the following null decomposition\footnote{This decomposition is closely related to the decomposition into the Newman--Penrose scalars $\Psi_0,\dots,\Psi_4$.} of the Riemann tensor $R$,
    \begin{align}\label{eq:intro:nulldecomp}
            \alpha(X,Y):=R(X,e_4,Y,e_4),&&\beta(X):=R(X,e_4,e_3,e_4)\nonumber,\\
            \underline{\alpha}(X,Y):=R(X,e_3,Y,e_3),&&\underline{\beta}(X):=R(X,e_3,e_3,e_4),\\
            4\rho:= R(e_4,e_3,e_4,e_3),	&&2\sigma \slashed{\epsilon}(X,Y):=R(X,Y,e_3,e_4)\nonumber,
    \end{align}
Penrose's regularity requirements would require the Sachs peeling property to hold, i.e., they would require along each $\mathcal{C}_u^+$ the following decay rates, $r$ denoting the area radius of $S_{t,u}$:
    \begin{nalign}\label{eq:intro:SachsPeeling}
        \alpha=\mathcal{O}(r^{-5}), &&\beta=\mathcal{O}(r^{-4}),&&
        \rho=\mathcal{O}(r^{-3}),\\\sigma=\mathcal{O}(r^{-3}),&&\ubeta=\mathcal{O}(r^{-2}),&&\ualpha=\mathcal{O}(r^{-1}).
    \end{nalign}
        In addition, depending on the precise regularity under which the conformal structure of spacetime is assumed to be extendable, Penrose's proposal would imply that all of the above quantities will admit higher-order power series expansions in $1/r$. 
However, the results of~\cite{Christodoulou2014} only confirm the last four rates of  \eqref{eq:intro:SachsPeeling}, whereas, for $\alpha$ and $\beta$, the following weaker decay results are obtained:
\begin{equation}\label{eq:introd:nonpeeling}
   \alpha, \beta=\mathcal{O}(r^{-\frac72}),
\end{equation}
so the peeling hierarchy is \textit{chopped off} at $r^{-7/2}$.

Now, on the one hand, the rates \eqref{eq:introd:nonpeeling} are only shown in~\cite{Christodoulou2014} to be upper bounds  (i.e.\ not asymptotics). Moreover, one might think that these upper bounds can be improved if one imposes further conditions on the initial data -- for, the data considered in~\cite{Christodoulou2014} are only required to have $\alpha, \beta=\mathcal{O}(r^{-7/2})$ on the initial hypersurface. Indeed, 
 one can slightly adapt the methods of Christodoulou--Klainerman to show that if the initial data decay much faster than assumed in~\cite{Christodoulou2014}, the peeling rates \eqref{eq:intro:SachsPeeling} can indeed be recovered~\cite{Klainerman2003PeelingEquations}. We will return to this at the end of this section.

On the other hand, as remarked before, the fundamental question is not whether there exist initial data which lead to solutions satisfying peeling, but whether \textbf{\textit{physically relevant spacetimes satisfy peeling.}} Evidently, any answer to this latter question must appeal to some additional physical principle. This is exactly what Christodoulou does in \cite{CHRISTODOULOU2002}. There, he shows that, indeed, the rates \eqref{eq:intro:SachsPeeling} cannot be recovered in several physically relevant systems, making the idea of smooth $\mathcal{I}$ physically implausible. 
At the core of Christodoulou's argument lies the assumption that the \textit{Bondi mass} along $\mathcal I^+$ decays with the rate predicted by the quadrupole approximation for a system of $N$ infalling masses coming from past infinity\todo{leave out the following}
, combined with the assumption that there be no incoming radiation from past null infinity.

\begin{rem}\label{rem1}
Before we move on to explain Christodoulou's argument, we shall make an important remark. 
Even though we stressed that one should not derive arguments for or against peeling from sufficiently strong Cauchy data assumptions, but rather appeal to some physical ingredients,  
we still want to make \underline{some} initial data assumptions in order to have access to the results of~\cite{Christodoulou2014}.
These results, a priori, only hold for evolutions of asymptotically flat vacuum initial data sufficiently close to Minkowski initial data, i.e.\ data for which, in particular, a certain Sobolev norm $||\cdot||_{\text{CK}}$ is small. We shall call such data \textbf{C--K small data}. 

Of course, C--K small data are not directly suited to describe the evolutions of spacetimes with $N$ infalling masses.
However, consider now initial data which are only required to have finite (as opposed to small) $||\cdot||_{\text{CK}}$-norm and to be vacuum only in a neighbourhood of spatial infinity (as opposed to everywhere). We shall call such data \textbf{C--K compatible}. Let us explain this terminology: 
One can now restrict these data to a region, let's call it the exterior region, sufficiently close to spacelike infinity in a way so that the data in this exterior region are vacuum and have arbitrarily small $||\cdot||_{\text{CK}}$-norm. 
By a gluing argument, one can then  extend these exterior data to interior data whose $||\cdot||_{\text{CK}}$-norm can also be chosen sufficiently small so that the resulting glued data are C--K small. 
 Therefore, the results of~\cite{Christodoulou2014} apply to the (C--K small) glued data, and, thus, by the domain of dependence property, they apply to the domain of dependence of the exterior part of the (C--K compatible) original data, i.e.\ in a neighbourhood of spacelike infinity containing a piece of null infinity.\footnote{We note that one should be able to avoid this gluing argument by appealing to the results of~\cite{Klainerman2003TheRelativity}, see the first remark below Definition~3.6.4 therein. Alternatively, one could also use the results of the more general~\cite{Keir}, since that work does not require the constraint equations to be satisfied on data.}
It is evolutions of C--K compatible data that we shall make statements on. One can reasonably expect that such evolutions contain a large class of physically interesting systems such as that of $N$ infalling masses from the infinite past.
\end{rem}

We can now paraphrase Christodoulou's result~\cite{CHRISTODOULOU2002}:
\par\smallskip\noindent
\centerline{\fbox{\begin{minipage}{0.95\textwidth}
\textit{Consider all evolutions of C--K compatible initial data which \textbf{a)} satisfy on $\mathcal I^-$ the no incoming radiation condition and \textbf{b)} behave on $\mathcal I^+$ as predicted by the quadrupole approximation for  $N$ infalling masses. These evolutions do not admit a smooth conformal compactification. 
}
\end{minipage}}}
\par\smallskip
More precisely, the failure of these evolutions to admit a smooth conformal compactification  manifests itself in the asymptotic expansion of $\beta$ near future null infinity containing logarithmic terms at leading order (namely, at order $r^{-4}\log r$).\footnote{As will be discussed in~\cite{IV}, this interpretation of Christodoulo's statement, particularly with the decay rate of $\beta$ being of order $r^{-4}\log r$, is not quite correct, at least not in the context of linearised gravity. Furthermore, it is explained in~\cite{IV} that the class of C--K compatible data is too small to capture relevant physics. We added this footnote to the third version of the paper so as to already make the reader aware of this, for details, see~\cite{IV}.}

Let us briefly expose the main ideas of the proof of the above statement:
We recall from~\cite{Christodoulou2014} that the traceless part $\hat{\underline{\chi}}$ of the connection coefficient
\begin{equation}
    \underline{\chi}(X,Y)=g(\nabla_X e_3,Y)
\end{equation}
tends along any given $\mathcal{C}_u^+$ to
    \begin{equation}
        \lim_{\mathcal{C}_u^+,r\to\infty}r \hat{\underline{\chi}}=\Xi(u)
    \end{equation}
 as the area radius function $r$ associated to $S_{t,u}$ tends to infinity. 
 Here, $\Xi(u)$ is a 2-form on the unit sphere $\mathbb S^2$ that should be thought of as living on future null infinity and which defines the radiative amplitude per solid angle. The quantity $\hat{\underline{\chi}}$ is often called the ingoing shear of the 2-surfaces $S_{t,u}$, and the limit $\Xi$ is sometimes referred to as \textit{Bondi news}.
 Indeed, one of the many important corollaries of~\cite{Christodoulou2014} is the \emph{Bondi mass loss formula}: If $M(u)$ denotes the Bondi mass along $\mathcal I^+$, then we have
 \begin{equation}\label{eq:Argument:Bondimass}
 \frac{\partial}{\partial u	}M(u)=-\frac{1}{16\pi}\int_{\mathbb S^2}|\Xi(u,\cdot)|^2.
 \end{equation}
 
 Now, the quadrupole approximation for $N$ infalling masses predicts that $\pu M\sim -|u|^{-4}$ as $u\to-\infty$ (it is assumed that the relative velocities tend to constant values near the infinite past and that the mass distribution has non-vanishing quadrupole moment) and, thus, in view of \eqref{eq:Argument:Bondimass}, that
 \begin{equation}\label{eq:intro:Xi-}
 \lim_{u\to -\infty} u^2\Xi=:\Xi^-\neq 0.
 \end{equation}

Christodoulou's two core observations then are the following: Even though $\beta$ itself only decays like $r^{-7/2}$ (see \eqref{eq:introd:nonpeeling}), its derivative in the $e_3$-direction  decays like $r^{-4}$ as a consequence of the differential Bianchi identities. Schematically, an analysis of Einstein's equations on $\mathcal I^+$ moreover reveals that, assuming \eqref{eq:intro:Xi-}, 
\begin{equation}
 \lim_{\mathcal{C}_u^+,r\to\infty}\partial_u(r^4\beta)=\frac{\slashed{\mathcal{D}}^{(3)}\Xi^-}{|u|}+\dots,
\end{equation}
where $\slashed{\mathcal{D}}^{(3)}$ is a third-order differential operator on $\mathbb S^2$.
The most difficult part of the argument then consists of obtaining a similar estimate for $\pu(r^4\beta)$ away from null infinity. Once this is achieved, one can integrate $\pu(r^4\beta)$ from initial data ($t=0$) to obtain schematically (see Figure~\ref{fig:1} below):
\begin{equation}\label{eq:Argument:betadifference}
 (r^4\beta)(u_2,t)-(r^4\beta)(u_1(t),0)\sim \int_{u_1(t)}^{u_2} \frac{\slashed{\mathcal{D}}^{(3)}\Xi^-}{|u|}\dd u\sim (\log r_{t,u_2}-\log |u_2|)\cdot \slashed{\mathcal{D}}^{(3)}\Xi^-.
\end{equation}
%
\begin{figure}[htbp]
\floatbox[{\capbeside\thisfloatsetup{capbesideposition={right,top},capbesidewidth=4cm}}]{figure}[\FBwidth]
{\caption{Schematic depiction of Christodoulou's argument. Integrating $\pu(r^4\beta)\sim|u|^{-1}$ from initial data gives rise to logarithmic terms.}\label{fig:1}}
{\includegraphics[width = 200pt]{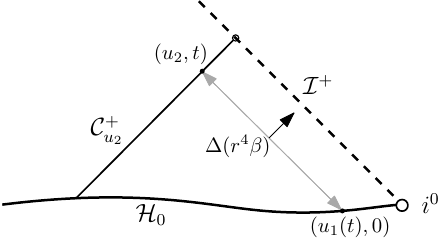}}
\end{figure}
Here, $r_{t,u_2}$ denotes the area radius of $S_{t,u_2}$, and we used that $u_1(t)\sim r_{t,u_2}$. 

Finally, Christodoulou argues that $r^4\beta$ remains finite on $t=0$ as a consequence of the \textit{no incoming radiation condition}, which is the statement that the Bondi mass remains constant along past null infinity.

He thus concludes that the peeling property is violated by $\beta$, and that one instead has that
\begin{equation}
\beta=B^*r^{-4}(\log r-\log|u|)+\mathcal O(r^{-4})
\end{equation}
for a 1-form $B^*$ which encodes physical information about the quadrupole distribution of the infalling matter and which is independent of $u$. 

Similarly, he shows show that $\alpha=O(r^{-4})$, in contrast to the $r^{-5}$-rate predicted by peeling.

Now, rather than \textit{imposing} \eqref{eq:intro:Xi-}, it would of course be desirable to \emph{dynamically derive} the rate \eqref{eq:intro:Xi-} (and thus the failure of peeling) from a suitable scattering setup resembling that of $N$ infalling masses. 

In fact, this is exactly what we present in section~\ref{sec:introduction:scalar}, albeit for a simpler model. In this context, we will also be able to motivate the following simpler conjectures (cf.\ Thms.~\ref{thm.intro:nullcase} and~\ref{thm.intro:scattering}):
\begin{conj}\label{conj1}
Consider the scattering problem for the Einstein vacuum equations with conformally regular data on an ingoing null hypersurface and no incoming radiation from past null infinity. Then, generically, the future development fails to be conformally smooth near $\mathcal I^+$.
\end{conj}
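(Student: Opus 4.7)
The plan is to replace the \emph{assumption} \eqref{eq:intro:Xi-} in Christodoulou's argument by a \emph{dynamical derivation} of $\Xi^-\neq 0$ from the prescribed scattering data, thereby lifting the strategy carried out in the scalar setting of this paper (cf.\ Thms.~\ref{thm.intro:nullcase} and~\ref{thm.intro:scattering}) to the full vacuum theory. Concretely, one would proceed in three stages: (i)~construct the future development of the scattering data; (ii)~show that, generically, $\Xi(u) = \Xi^-\,|u|^{-2} + o(|u|^{-2})$ as $u\to-\infty$ with $\Xi^-\neq 0$; (iii)~feed this into the mechanism sketched in \eqref{eq:Argument:Bondimass}--\eqref{eq:Argument:betadifference} to produce the obstructing $r^{-4}\log r$ term in $\beta$.

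For stage~(i), I would impose no incoming radiation on $\I$ (i.e.\ the vanishing of the appropriate shear limit on $\I$) together with conformally regular data on an ingoing null cone $\underline{\mathcal{C}}$, and construct the forward evolution semi-globally. Close to spatial infinity the solution is accessible by a scattering-adapted version of~\cite{Christodoulou2014}, in the spirit of Remark~\ref{rem1}. Stage~(ii) is the key new analytical input. Transposing the scalar argument, $\Xi(u) = \lim_{r\to\infty} r\hchib$ along $\mathcal{C}_u^+$ is computed by integrating the transport equation for $\hchib$ from the data cone back to $\I$, with source terms coming from the curvature and the lower-order connection coefficients. The conformal regularity of the ingoing data should, generically, contribute a non-vanishing constant to the limit of $|u|^2\Xi(u)$ as $u\to -\infty$, yielding $\Xi^-\neq 0$; here "generically" means that the resulting linear functional on the data does not vanish identically, in direct analogy with the non-vanishing quadrupole moment hypothesis in the $N$-body picture.

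Once $\Xi^-\neq 0$ is dynamically established, stage~(iii) is essentially a transcription of Christodoulou's argument: from the Bianchi equations one obtains $\pu(r^4\beta)\sim \slashed{\mathcal{D}}^{(3)}\Xi^-/|u|$ at $\mathcal{I}^+$, this estimate must then be propagated off $\mathcal{I}^+$, and finally integrated in $u$ from the ingoing data cone (where $r^4\beta$ is bounded by the conformal regularity of the data) to yield $\beta = B^\ast r^{-4}(\log r-\log|u|) + \mathcal{O}(r^{-4})$ with $B^\ast\neq 0$. The main obstacle will be stage~(ii): whereas the scalar analogue essentially amounts to integrating a single linear-principal-part wave equation along $u$ and $v$, here one has to propagate sharp asymptotics through the nonlinearly coupled Bianchi/null-structure hierarchy in a regime approaching $i^-$, where any loss in error decay, or a hidden cancellation among curvature components, could spoil either the $|u|^{-2}$ rate of $\Xi$ or the non-vanishing of its coefficient. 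A secondary difficulty, but one that is unavoidable for a clean statement of the conjecture, is to characterise precisely the exceptional subset of ingoing data for which $\Xi^-$ does vanish, and to show that this subset is genuinely non-generic (of infinite codimension, one would hope).
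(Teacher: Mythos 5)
What you have written is not a proof but a research programme, and there is in fact nothing to compare it against: the statement is labelled a \emph{conjecture} in the paper precisely because the paper does not prove it. Conjecture~\ref{conj1} is \emph{motivated} by the scalar results (Theorem~\ref{thm.intro:nullcase}, cf.\ also Theorem~\ref{thm.intro:scattering}), and the paper explicitly defers its resolution to future work (section~\ref{sec:intro:future}), proposing a staged programme through higher-$\ell$ modes of the linear wave equation, then the Teukolsky equations, and only then the full Einstein vacuum system.

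Your roadmap is nonetheless consistent with the paper's own thinking, and it is worth spelling out the agreement. Your stages (i)--(iii) are exactly the vacuum transcription of the scalar mechanism of section~\ref{sec:null}: construct the semi-global scattering development, establish dynamically that $\Xi\sim|u|^{-2}$ with a generically non-vanishing $\Xi^-$, and then run Christodoulou's $\pu(r^4\beta)\sim\slashed{\mathcal{D}}^{(3)}\Xi^-/|u|$ argument to extract the $r^{-4}\log r$ obstruction. Your stage~(ii) is precisely the upgrade the paper singles out (``rather than \emph{imposing} \eqref{eq:intro:Xi-}, it would of course be desirable to \emph{dynamically derive} the rate''), and its scalar analogue is the propagation of $\Phi^-=\lim|u|r\phi$ from data to $\mathcal{I}^+$ and of $\lim_{\mathcal{I}^+}\zeta\sim|u|^{-2}$, cf.\ Remark~\ref{long remark ZetaXi}. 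Where you diverge from the paper's stated plan is that you propose jumping straight to the vacuum equations; the paper argues that the good $r^{-3}$-weight that makes the spherically symmetric wave equation tractable is absent for higher $\ell$ and for the Bianchi/Teukolsky hierarchy, and that one must first build the approximate-conservation-law (higher Newman--Penrose) technology to substitute for it. You acknowledge this when you flag that integrating the transport equation for $\hchib$ back to $\I$ through the nonlinearly coupled null-structure system near $i^-$ is the genuinely open step. Indeed it is: as it stands, your proposal does not close the conjecture, but it does correctly locate the gap, and it correctly identifies the additional issue (characterising the non-generic set on which $\Xi^-=0$) that any clean theorem would have to address.
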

\begin{conj}\label{conj2}
Consider the \textit{scattering problem} for the Einstein vacuum equations with compactly supported data on $\mathcal{I}^-$ and a Minkowskian $i^-$. Then, generically, the future development fails to be conformally smooth near $\mathcal{I}^+$. 
\end{conj}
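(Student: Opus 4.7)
The proof strategy would combine the mechanism identified in the scalar analogues of this paper with a suitable adaptation of Christodoulou's argument reviewed above. The plan is to first construct the scattering solution, then identify the non-trivial Bondi mass generated by the incoming pulse, and finally propagate the resulting asymptotic structure from past to future null infinity.

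First, one would invoke an existence theory for the characteristic scattering problem for the Einstein vacuum equations with smooth, compactly supported free shear data on $\mathcal{I}^-$ and Minkowskian past timelike infinity $i^-$. For data of sufficiently small amplitude this ought to follow by adapting the existing semi-global vacuum scattering constructions, e.g.\ those underlying the works invoked in Remark~\ref{rem1}. The assumption of Minkowskian $i^-$ means that, strictly to the past of the support of the incoming radiation on $\mathcal{I}^-$, the spacetime coincides with a piece of Minkowski space.

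Next, by the appropriate Bondi mass gain formula along $\mathcal{I}^-$, the past Bondi mass starts at $0$ at $i^-$ and generically attains a strictly positive value $M_{\mathrm{final}}>0$ once the support of the pulse has passed. Consequently, the region of the spacetime near spacelike infinity $i^0$ is asymptotically Schwarzschildean with mass $M_{\mathrm{final}}$ rather than Minkowskian, and the ADM mass seen from $\mathcal{I}^+$ near $i^0$ is non-zero. The most substantive step is then to argue that, in this asymptotically Schwarzschild outgoing region, the outgoing news $\Xi(u)=\lim_{\mathcal{C}_u^+} r\hat{\underline{\chi}}$ on $\mathcal{I}^+$ is generically \emph{not} compactly supported in $u$, despite the data being compactly supported on $\mathcal{I}^-$: backscattering off $M_{\mathrm{final}}$ is expected to produce a non-trivial tail with $\lim_{u\to-\infty}|u|^{n}\Xi(u)\neq 0$ for some finite $n\geq 2$, in analogy with the scalar statements proved later in the paper. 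Once such a tail is in place, one adapts Christodoulou's integration of $\partial_u(r^4\beta)$ from initial data, with the $|u|^{-1}$-inhomogeneity driving \eqref{eq:Argument:betadifference} replaced by the faster decay $|u|^{-n}$, to conclude that logarithmic terms now appear at subleading order $r^{-4}\log r$ in the expansion of $\beta$, rather than at the leading $r^{-4}$-order.

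The main obstacle will be twofold. Rigorously setting up the scattering problem and uniformly controlling all curvature components and connection coefficients up to $\mathcal{I}^\pm$ is already a substantial task in view of the coupling through the Bianchi identities and the need to track behavior both near $i^-$ and out to $i^0$. The more delicate difficulty, however, is to extract the precise $u$-asymptotics of the outgoing news $\Xi$ generated by the backscattering off the self-generated Schwarzschildean mass: this is a statement about a genuinely non-linear interaction between the outgoing radiation and the geometry it has itself created. Even in the simpler scalar model treated in this paper the analogous computation requires considerable care, and in the vacuum setting the tensorial structure together with the coupled propagation of $\alpha,\beta,\rho,\sigma,\underline{\beta},\underline{\alpha}$ makes isolating the log coefficient and verifying its generic non-vanishing the essential non-trivial input.
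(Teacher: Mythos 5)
Conjecture~\ref{conj2} is left \emph{open} in the paper: it is motivated by the scalar scattering result (Theorem~\ref{thm.intro:scattering}, proved in section~\ref{sec:scattering}) and explicitly deferred to a future program in section~\ref{sec:intro:future} (higher $\ell$-modes, then Teukolsky, then the full vacuum system). There is therefore no proof in the paper to compare against. Your sketch does track the intended mechanism faithfully: the compactly supported incoming pulse generates a strictly positive final past Bondi mass; backscattering off that self-generated mass produces a non-compactly-supported tail in the outgoing news $\Xi$ on $\mathcal{I}^+$; and a Christodoulou-type integration of the Bianchi equations then extracts the logarithmic obstruction. You also correctly flag the genuinely hard step — extracting the precise $u$-asymptotics of $\Xi$ as a non-linear backscattering effect — which is exactly what the scalar model of Theorem~\ref{null:thm:scattering} (cf.~\eqref{6.3}) is designed to capture and what the paper identifies as the missing input in the vacuum setting.

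However, the concluding step contains an internal inconsistency. You correctly replace the $|u|^{-1}$-inhomogeneity of Christodoulou's $N$-body argument by a faster-decaying $|u|^{-n}$ with $n\geq 2$, yet you conclude the log sits at $r^{-4}\log r$ — the same order it occupies in the $N$-body scenario. For $n>1$ the integral $\int_{-\infty}^{u}|u'|^{-n}\dd u'$ converges, so $r^4\beta$ attains a finite limit along each outgoing cone and no logarithm appears at that order; one must iterate (commute with $r^4$ and integrate again, as in the proof of Theorem~\ref{thm:null:asymptotics of dvrphi,p=3}) to locate the divergence one order deeper. This is precisely what the scalar model shows: passing from the $N$-body analogue ($p=2$, log at $r^{-3}\log r$ in $\pv(r\phi)$, Theorem~\ref{thm.intro:timelikecase}) to scattering ($p=3$, log at $r^{-4}\log r$, Theorem~\ref{thm.intro:scattering}), the log moves one order further out. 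Under the paper's dictionary $\beta\leftrightarrow\pv(r\phi)$, the expected location of the log in $\beta$ for Conjecture~\ref{conj2} is therefore $r^{-5}\log r$, i.e.\ a failure of peeling for $\beta$ at \emph{second} order, matching the abstract's phrasing for the scalar case. (Also note that $r^{-4}\log r$ is not "subleading" to $r^{-4}$ — it dominates it — which further suggests the intended claim was $r^{-5}\log r$.)
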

To clarify, we do not explicitly conjecture that the leading-order peeling behaviour \eqref{eq:intro:SachsPeeling} is violated, but that there will be logarithmic terms in the expansion of e.g.~$\beta$ or $\alpha$ at some finite, potentially higher order.

Before we move on to the next section, we feel that it may be helpful to comment on the work~\cite{Klainerman2003PeelingEquations}. There, it is shown that if one works with faster decaying $r^{\frac12+\epsilon}$-weighted  C--K data (which have finite $||r^{\frac12+\epsilon}\cdot||_{\text{CK}}$-norm), then peeling holds for $\beta$ if $\epsilon>0$, and also for $\alpha$ if $\epsilon>1$. 
So how is this consistent with the above result? Well, one of the results of~\cite{Klainerman2003PeelingEquations} implies that $r^{\frac12+\epsilon}$-weighted C--K data lead to solutions which have $|\Xi|\leq |u|^{-2-\epsilon}$, hence the data considered in~\cite{Klainerman2003PeelingEquations} are incompatible with eq.~\eqref{eq:intro:Xi-} or, in other words, with the quadrupole approximation of $N$ infalling masses. The same applies to~\cite{Chrusciel2002ExistenceSpacetimes,Corvino2007OnCompactification}.

\section{Overview of the main results (Thms.~\ref{thm.intro:timelikecase}--\ref{thm.intro:scattering}) and of upcoming work }\label{sec:results}
\subsection{Construction of counter-examples to smooth null infinity within the  Einstein-Scalar field system in spherical symmetry} \label{sec:introduction:scalar}
While the argument~\cite{CHRISTODOULOU2002} presented above already forms a serious obstruction to peeling, one would ultimately -- in order to develop a fully general relativistic understanding of the non-smoothness of null infinity -- like to actually construct solutions to Einstein's equations that resemble the setup of $N$ infalling masses from past infinity (and which lead to \eqref{eq:intro:Xi-} \textit{dynamically}). That is to say, one would like to understand the semi-global evolution of a configuration of $N$ masses at past timelike infinity with no incoming radiation from $\mathcal{I}^-$. More concretely, one would like to understand the asymptotics of such solutions in a neighbourhood of $i^0$ containing a piece of $\mathcal{I}^+$.

Of course, the resolution of this problem seems to be quite difficult.

We will therefore, in this paper, take only a first step towards the resolution of said problem by explicitly constructing a fully general relativistic example system 
 that is based on a simple realisation of infalling masses from past timelike infinity and the no incoming radiation condition; 
 namely, we consider the Einstein-Scalar field  equations for a chargeless and massless scalar field under the assumption of spherical symmetry:
\begin{align}
    R_{\mu\nu}-\frac 1 2 R g_{\mu\nu}=2T_{\mu\nu}=2T^{sf}_{\mu\nu},
\end{align}
with the matter content\footnote{We can also include a Maxwell field that is coupled to the geometry (and not to the scalar field) in the equations.} given by
\begin{align}
    T^{sf}_{\mu\nu}= \phi_{;\mu} \phi_{;\nu} - \frac 1 2 g_{\mu\nu}\phi^{;\xi} \phi_{;\xi}. 
\end{align}
Here, $\phi$ denotes the scalar field,
$R_{\mu\nu}$ the Ricci tensor,  $R$ the scalar curvature of the metric $g_{\mu\nu}$, and "$;$" denotes covariant differentiation. 
 
The assumption of spherical symmetry essentially allows us to write the unknown metric in double null coordinates $(u,v)$ as
\begin{equation}
g=-\Omega^2\dd u\dd v+r^2\,\gamma,
\end{equation}
where $\gamma$ is the standard metric on the unit sphere $\mathbb S^2$, and where $\Omega$ and $r$ (the area radius function) are functions depending only on $u$ and $v$. The spherically symmetric Einstein-Scalar field system thus reduces to a system of hyperbolic partial differential equations for the unknowns $\Omega$, $r$ and $\phi$ in two dimensions. In practice, it is often convenient to replace $\Omega$ in this system with the \textit{Hawking mass} $m$, which is defined in terms of $\Omega$ and~$r$.

We construct for this system data resembling the assumptions of Christodoulou's argument that lead to a non-smooth future null infinity in the following way:

On past null infinity, to resemble the no incoming radiation condition (for more details on the interpretation of this, see Remark~\ref{remark:null:bondimass2}), we set 
\begin{equation}\label{eq:intro:noincomingradiation}
    \pv(r\phi)|_{\mathcal{I}^-}=0,
\end{equation}
where $v$ is advanced time, see Figure~\ref{fig:2} below.
Note that, in spherical symmetry, it is not possible to have $N$ infalling masses for $N>1$. We thus have to restrict to a single infalling mass. In particular, there can be no non-vanishing quadrupole moment.
To still have some version of "infalling masses" that emit (scalar) radiation, we therefore impose decaying boundary data on a smooth timelike hypersuface\footnote{The precise conditions on $\Gamma$ are fairly general and, in particular, admit cases where $r|_\Gamma$ tends to a finite or infinite limit. For the derivation of upper bounds, we only require $r|_\Gamma>2M$, where $M$ is defined in \eqref{eq:intro:Hawking}. For the derivation of lower bounds, we require the slightly stronger assumption $r|_{\Gamma}> 2.95 M$. We expect that this bound can be improved.} $\Gamma$ (to be thought of as the surface of a single star) such that 
\begin{align}\label{eq:intro:Gammafields}
    r\phi|_\Gamma = \frac{C}{|t|^{p-1}}+\mathcal{O}\left(\frac{1}{|t|^{p-1+\epsilon}}\right),&&\boldsymbol{T}(r\phi|_\Gamma)= \frac{(p-1)C}{|t|^{p}}+\mathcal{O}\left(\frac{1}{|t|^{p+\epsilon}}\right),
\end{align}
where $C\neq0$ and $p>1$ are constants, $\boldsymbol{T}$ is the normalised future-directed vector field generating $\Gamma$, and $t$ is its corresponding parameter ($\boldsymbol{T}(t)=1$), tending to $-\infty$ as $i^-$ is approached. 
It will turn out that, in the case $p=2$,  this condition implies the precise analogue of eq.~\eqref{eq:intro:Xi-}, i.e.\ the prediction of the quadrupole approximation (see also the Remark~\ref{long remark ZetaXi}). This motivates the case $p=2$ to be the most interesting one.

Finally, we need the "infalling mass" to be non-vanishing; we thus set the Hawking mass $m$ to be positive initially, i.e.\ at $i^-$: 
\begin{equation}\label{eq:intro:Hawking}
    m(i^-)=M>0.
\end{equation}
\begin{rem}
Note already that conditions \eqref{eq:intro:noincomingradiation} and \eqref{eq:intro:Hawking} are to be understood in a certain limiting sense; indeed, we will construct solutions where $\mathcal{I}^-$ is replaced by an outgoing null hypersurface $\mathcal{C}^+_{u_n}$ at finite retarded time $u_n$ and then show that the solutions to these mixed characteristic-boundary value problems converge to a unique limiting solution as $u_n\to-\infty$, that is, as $\mathcal{C}_{u_n}^+$ "approaches" $\mathcal{I}^-$. We will then show that the solution constructed in this way is the unique solution to our problem, cf.\ Remark~\ref{rem:intro:unique}.
\end{rem}

To more clearly state the following rough versions of our results, we remark that, throughout most parts of this work, we work in a globally regular double null coordinate system $(u,v)$ (see Figure~\ref{fig:2} below) in which $\mathcal{I}^+$ can be identified with $v=\infty$, $\mathcal{I}^-$ can be identified with $u=-\infty$, and which satisfies $u=v$ on $\Gamma$ and $\pv r=1$ along $\mathcal{I}^-$ (in a limiting sense).

We then have the following theorem (see Thms.~\ref{thm:timelike:final!!!} and~\ref{thm:timelike:logs} for the precise statement):
    \begin{thm}\label{thm.intro:timelikecase}
        For sufficiently regular initial/boundary data on $\mathcal{I}^-$ and $\Gamma$ as above, i.e.\ obeying eqns.\ \eqref{eq:intro:noincomingradiation}, \eqref{eq:intro:Gammafields}, \eqref{eq:intro:Hawking},  a unique semi-global solution to the spherically symmetric Einstein-Scalar field system exists for sufficiently large negative values of $u$. Moreover, if $p=2$, we get the following asymptotic behaviour for the outgoing derivative of the radiation field:\footnote{Here, and in the remainder of the paper, we write $f\sim g$ if there exist positive constants $A,B$ s.t.\ $Af\leq g\leq Bf$. Similarly, we write $f=\mathcal O(g)$ if there exists a constant $A$ s.t.\ $|f|\leq A g$. }
\begin{equation}
     |\pv(r\phi)|\sim  
\begin{cases}
\frac{\log r}{r^3}, & u=\con,\,\, v \to \infty ,\\
\frac{1}{r^3}, & v=\con,\,\, u \to -\infty,\\
\frac{1}{r^3}, & v+u=\con,\,\, v\to \infty.
\end{cases}
\end{equation}
    More precisely, for fixed values of $u$, we obtain the following asymptotic expansion as $\mathcal{I}^+$ is approached:
        \begin{equation}\label{eq:intro:thmtimelikeasymptotics}
            \pv(r\phi)(u,v)=B^* \frac{\log r-\log|u|}{r^3}+\mathcal{O}(r^{-3}).
        \end{equation}
        Here, $B^*\neq 0$ is a constant independent of $u$ given by 
        \begin{equation}
        B^*=-2M\lim_{u\to - \infty}|u|r\phi(u,v),
        \end{equation}
    and the limit above exists and is independent of $v$.
    \end{thm}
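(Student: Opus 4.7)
The plan is to proceed in three stages: a limiting construction of the solution, a bootstrap extracting sharp decay for $r\phi$ and the geometric quantities, and a direct integration of the reduced wave equation that produces the logarithm.

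For existence, I would, for each $u_{n}\ll 0$, pose a mixed characteristic/timelike-boundary value problem with data on the cone $\{u=u_{n}\}$ simulating the no-incoming-radiation condition (taking $\pv(r\phi)=0$ there and matching the Hawking mass to $M$ at its endpoint on $\Gamma$) together with the prescribed $\Gamma$-data restricted to $t\le t_{n}$. Local well-posedness for the spherically symmetric Einstein-scalar field system, combined with standard continuation criteria and a priori bounds forced by the mass $M$ and the decay of the boundary data, yields a semi-global solution for each $u_{n}$; a comparison/stability argument in $u_{n}$ then provides the unique limiting solution.

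The key identity is the reduced wave equation
\begin{equation*}
\pu\pv(r\phi)=-\frac{\Omega^{2}m}{2r^{2}}\,\phi,
\end{equation*}
which follows from combining $\Box_{g}\phi=0$ with the Einstein constraint $\pu\pv r=-\Omega^{2}/(4r)-\pu r\,\pv r/r$ and the Hawking mass formula $1-2m/r=-4\pu r\,\pv r/\Omega^{2}$. Integrating in $u'$ from $-\infty$ to $u$ and using $\pv(r\phi)|_{\mathcal{I}^{-}}=0$ yields
\begin{equation*}
\pv(r\phi)(u,v)=-\int_{-\infty}^{u}\frac{\Omega^{2}m}{2r^{2}}\,\phi(u',v)\,\mathrm{d}u'.
\end{equation*}
A bootstrap driven by Hawking mass monotonicity, the Raychaudhuri equations $\pu(\Omega^{-2}\pu r)=-r\Omega^{-2}(\pu\phi)^{2}$ and $\pv(\Omega^{-2}\pv r)=-r\Omega^{-2}(\pv\phi)^{2}$, and the gauge conditions ($u=v$ on $\Gamma$, $\pv r\to 1$ on $\mathcal{I}^{-}$) should yield $\Omega^{2}\to 4$, $m\to M$, $\pu r\to-(1-2M/r)$, $\pv r\to 1$ at sufficient rates, together with the central decay estimate $(r\phi)(u,v)\sim -C^{\ast}/|u|$ as $u\to-\infty$ uniformly in $v$, with $C^{\ast}\neq 0$ determined by the $\Gamma$-data. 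Changing variables from $u'$ to $r=r(u',v)$ at fixed large $v$ (so $\mathrm{d}u'=\mathrm{d}r/\pu r$ and $r-v\sim|u'|$) and using the exact identity $\Omega^{2}/\pu r=-4\pv r/(1-2m/r)\to -4$ to leading order reduces the integral to
\begin{equation*}
\pv(r\phi)(u,v)\sim 2MC^{\ast}\int_{|u|}^{\infty}\frac{\mathrm{d}s}{s(s+v)^{3}}.
\end{equation*}
Partial fractions give $\int_{|u|}^{\infty}\mathrm{d}s/[s(s+v)^{3}]=[\log(|u|+v)-\log|u|]/v^{3}+O(v^{-2}r^{-1})$, and since $|u|+v=r+O(1)$ and $v^{-3}=r^{-3}(1+O(|u|/r))$, this yields the advertised expansion with $B^{\ast}=2MC^{\ast}=-2M\lim_{u\to-\infty}|u|(r\phi)\neq 0$, the nonvanishing following from $M>0$ and $C\neq 0$.

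The main obstacle is the coordinated bootstrap of the second stage: establishing simultaneously the sharp pointwise decay of $r\phi$ along incoming null cones and the sharp rates at which $\Omega^{2}$, $m$, $\pu r$, $\pv r$ approach their Schwarzschildean values near $\mathcal{I}^{-}\cup i^{-}$, with enough precision both to justify the change of variables $u'\mapsto r$ and to absorb all error terms into the $\mathcal{O}(r^{-3})$ remainder. This will require carefully combining the Hawking mass monotonicity and the Raychaudhuri equations with the transport equations for $\pu(r\phi)$ and $\pv(r\phi)$ along the respective null generators, and will consume the bulk of the work. Once this geometric framework is in place, the explicit integral computed above extracts the logarithmic coefficient essentially for free.
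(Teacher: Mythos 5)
Your overall architecture — a limiting construction over truncated null cones, a bootstrap for sharp decay, and a direct integration of the reduced wave equation — is the paper's strategy, and your final integral computation (change of variable $u'\mapsto r$, partial fractions, $|u|+v\approx r$) reproduces the paper's derivation of the log term, with the correct constant $B^*=-2M\lim_{u\to-\infty}|u|r\phi$. Your reduced wave equation $\pu\pv(r\phi)=-\tfrac{\Omega^2 m}{2r^2}\phi$ is identical to the paper's $\pu\pv(r\phi)=2\nu\kappa\,m\,r\phi/r^3$ once one unwinds $\kappa=-\tfrac14\Omega^2\nu^{-1}$.

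The gap is in the stage you label ``the main obstacle'' and defer. You write that a bootstrap driven by Hawking mass monotonicity, Raychaudhuri, and the transport equations ``should yield'' the estimate $r\phi\sim C^*/|u|$ uniformly in $v$ with a fixed constant $C^*$. This is not merely technically laborious; a naive bootstrap genuinely fails to close here for $p=2$, and fails to produce a $v$-independent limit at all. Two separate issues arise. First, integrating the wave equation in $v$ from $\Gamma$ can only show $|\pu(r\phi)|\lesssim|u|^{-1}$ (the decay of $r\phi$ itself), which is borderline non-integrable; this is exactly the reason the paper's Theorem~\ref{thm:timelike:cc} is stated only for $p>3/2$ and its Remark~\ref{rem:pgreater32} flags the defect, and it blocks both the mass bootstrap and the Cauchy/Gr\"onwall limiting argument at $p=2$ without a smallness assumption. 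Second, boundedness $|u|\,|r\phi|\lesssim 1$ does not by itself give that $\lim_{u\to-\infty}|u|\,r\phi$ \emph{exists} and is independent of $v$, which is what you need so that $B^*$ is a genuine constant rather than a bounded $u$-dependent quantity; you assert this as an output of the bootstrap but give no mechanism for it. The paper's resolution to both problems is a specific additional idea: commute with the generator $\boldsymbol{T}=\pu+\pv$ of $\Gamma$, bootstrap $\boldsymbol{T}r$ and $\boldsymbol{T}(r\phi)$ via the $\boldsymbol{T}$-commuted wave equations (Theorems~\ref{thm:refinements:TR},~\ref{thm:refinements:TPhi}), obtain the improved $|\pu(r\phi)|\lesssim|u|^{-2}+r^{-2}|u|^{-1}$, and then control $\pu(|u|r\phi)$ via the combination $r\phi-|u|\boldsymbol{T}(r\phi)$ (Lemma~\ref{lemma:trick}) to extract the limit $\Phi^-$. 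Without this (or an equivalent substitute), your middle stage does not close and the constant in your final integral is not well-defined. A smaller point: you propose to ``match the Hawking mass to $M$ at its endpoint on $\Gamma$'' and then solve a mixed problem, but the Hawking mass cannot be freely prescribed on a timelike boundary — it must be recovered dynamically, which is precisely what makes local existence (Proposition~\ref{prop:localexistence}) nontrivial in this setting and why the paper instead cuts off the $\Gamma$-data and attaches a vacuum Schwarzschild past.
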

\begin{figure}[htbp]
\floatbox[{\capbeside\thisfloatsetup{capbesideposition={right,top},capbesidewidth=4.4cm}}]{figure}[\FBwidth]
{\caption{The Penrose diagram of the solution of Theorem~\ref{thm.intro:timelikecase}. We impose polynomially decaying data on a timelike boundary $\Gamma$ and no incoming radiation from past null infinity $\mathcal I^-$. Note that, with our choice of coordinates ($u=v$ on $\Gamma$), $\Gamma$ becomes a straight line.}\label{fig:2}}
{ \includegraphics[width = 135pt]{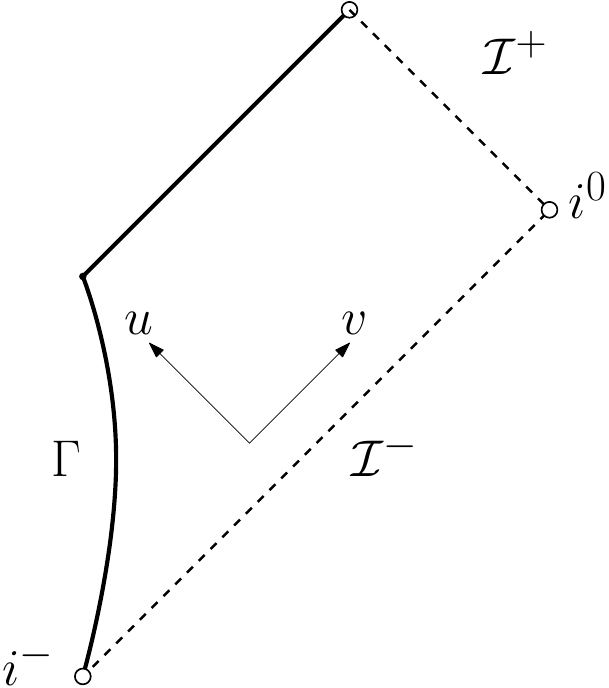}}
\end{figure}
   \begin{rem}\label{rem:intro:unique}
The uniqueness in the above statement is with respect to the class of solutions with uniformly bounded Hawking mass. See also Remark \ref{rem:uniqueness}. 
    \end{rem}
     Theorem \ref{thm.intro:timelikecase} shows that the asymptotic expansion of $\pv(r\phi)$ near $\mathcal{I}^+$, which should be thought of as the analogue to $\beta$ for the wave equation, contains logarithmic terms and, thus, fails to be regular in the conformal picture (i.e.\ in the variable $1/r$), whereas the expansion near $\mathcal{I}^-$ remains regular.\footnote{\label{fn:1}Since various ideas regarding the relation between the conformal regularity of $\mathcal I^-$ and  that of $\mathcal I^+$ have been entertained in the literature, we want to point out that, in our setting, the smoothness or non-smoothness of $\mathcal I^-$ is completely inconsequential to the smoothness of $\mathcal I^+$.  See also footnote~\ref{fn:2}.}
    
  One can moreover show that, for general integer $p>2$ , one instead gets the following expansion for fixed values of $u$:
         \begin{equation}
            \pv(r\phi)(u,v)=B(u)\frac{1}{r^3}+\dots +B'\frac{\log r}{r^{p+1}}+\mathcal{O}(r^{-p-1}),
        \end{equation}
        where the $\dots $-terms denote negative integer powers of $r$, and where $B'\neq 0$ is a constant determined by $M$ and $\lim_{u\to-\infty}|u|^{p-1}r\phi$, the latter limit again being independent of $v$.

We can also state the precise analogue of the argument~\cite{CHRISTODOULOU2002} presented in section~\ref{sec:intro:CHR} for the Einstein-Scalar field system (see Remark~\ref{rem:proofofthmintrocor}):
   \begin{thm}\label{thm.intro.corollary}
       Suppose a semi-global solution to the spherically symmetric Einstein-Scalar field system with Hawking mass $m\geq c>0$ for some constant $c$ and $m(\mathcal I^-)\equiv M>0$ and obeying the no incoming radiation condition exists such that, on $\mathcal{I}^+$, $r\phi =\Phi^- |u|^{-1}+\mathcal{O}(|u|^{-1-\epsilon})$. 
       Then, for fixed values of $u$, we obtain the following asymptotic expansion of $\pv(r\phi)$ as $\mathcal{I}^+$ is approached:
        \begin{equation}
            \pv(r\phi)(u,v)=B^* \frac{\log r-\log|u|}{r^3}+\mathcal{O}(r^{-3}),
        \end{equation}
        where $B^*$ is a constant independent of $u$ given again by $-2M\Phi^-$.
    \end{thm}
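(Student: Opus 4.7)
The proof is a direct transcription of Christodoulou's argument from Section~\ref{sec:intro:CHR} into the spherically symmetric Einstein--scalar field setting, with the correspondence: the wave equation replaces the Bianchi identity, $\pv(r\phi)$ replaces $\beta$, the Hawking mass $m$ replaces the Bondi mass, and the no-incoming-radiation condition on $\mathcal{I}^-$ supplies the "initial" data for the integration. The three structural ingredients are: a propagation equation for $r^3\pv(r\phi)$ whose $\mathcal{I}^+$-limit is dictated by $M$ and $\Phi^-$; the vanishing of $\pv(r\phi)$ on $\mathcal{I}^-$; and integration in $u$, which turns a divergent $1/|u|$ integrand into the cut-off combination $\log r-\log|u|$.

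First, I would write the spherically symmetric wave equation $\Box_g\phi=0$ in the form
\[
    \pu\pv(r\phi) \;=\; -\frac{2m\,\Omega^2}{r^3}(r\phi) \;+\; (\text{lower-order terms involving } \pu r,\ \pv r),
\]
the analogue of the equation for $\pu(r^4\beta)$ used by Christodoulou. Using the Einstein constraint $\Omega^2(1-2m/r)=-4\pu r\pv r$, the gauge ($u=v$ on $\Gamma$, $\pv r \to 1$ on $\mathcal{I}^-$), the hypothesis $m\geq c>0$, and $\pv(r\phi)|_{\mathcal{I}^-}=0$, I would establish the $\mathcal{I}^\pm$-limits $\Omega^2\to 1$, $m|_{\mathcal{I}^-}\equiv M$, and $m|_{\mathcal{I}^+}(u)\to M$ as $u\to-\infty$ (the last via the Bondi-mass-loss formula applied to the hypothesised decay of $r\phi$ at $\mathcal{I}^+$).

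Second, I would compute
\[
    \pu\big(r^3\pv(r\phi)\big) \;=\; 3r^2(\pu r)\pv(r\phi) \;+\; r^3\pu\pv(r\phi) \;=\; -2m\Omega^2(r\phi) \;+\; \text{(subleading)},
\]
and take the limit $v\to\infty$ at fixed $u$. The assumed expansion $r\phi(u,\infty)=\Phi^-|u|^{-1}+\mathcal{O}(|u|^{-1-\epsilon})$ together with the $\mathcal{I}^+$-limits above then gives
\[
    \lim_{v\to\infty}\pu\big(r^3\pv(r\phi)\big)(u,v) \;=\; -\frac{2M\,\Phi^-}{|u|} \;+\; \mathcal{O}(|u|^{-1-\epsilon}).
\]
Finally, I would integrate in $u'$ from $\mathcal{I}^-$ to $u$ at fixed finite $v$, using $\pv(r\phi)|_{\mathcal{I}^-}=0$ as the initial condition. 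The integrand is approximately $-2M\Phi^-/|u'|$ while $|u'|\ll r(u',v)\sim v$, but is cut off once $|u'|\gtrsim v$, since in that range $\pv(r\phi)$ has only just evolved away from zero. This truncation of the otherwise divergent $\int du'/|u'|$ at the scale $|u'|\sim r$ produces precisely the finite combination $\log r-\log|u|$ and yields
\[
    r^3\pv(r\phi)(u,v) \;=\; -2M\Phi^-\big(\log r(u,v)-\log|u|\big) \;+\; \mathcal{O}(1),
\]
i.e.\ the claim with $B^*=-2M\Phi^-$.

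The main obstacle is the last step: the pointwise $\mathcal{I}^+$-limit of $\pu(r^3\pv(r\phi))$ must be upgraded to an estimate uniform in $v$ and in $u'\in(-\infty,u]$, with error terms integrable in $u'$ (so as to contribute only $\mathcal{O}(1)$ rather than a competing $\log$). This requires a priori control on $\pu r$, $\Omega^2$, $m$, and on $\pv(r\phi)$ itself throughout the integration region — precisely the bootstrap estimates furnished by the proof of Theorem~\ref{thm.intro:timelikecase}, which is why this result is naturally stated as a corollary of that theorem rather than proved from scratch. Once those estimates are granted, the expansion follows by substitution and the identification of the cut-off region $|u'|\sim r(u',v)$.
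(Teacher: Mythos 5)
Your high-level plan is the right one — integrate from $\mathcal I^-$, where the no-incoming-radiation condition controls the boundary term, and let the $\sim 1/|u'|$ behaviour of the source generate $\log r-\log|u|$ — and you correctly compute the $\mathcal I^+$-limit $\lim_{v\to\infty}\pu(r^3\pv(r\phi))(u,v)=-2M\Phi^-/|u|+\mathcal O(|u|^{-1-\epsilon})$. But the decomposition via $r^3$-commutation, together with the "cut-off" heuristic you give for it, does not close. You write $\pu(r^3\pv(r\phi))=3r^2\nu\,\pv(r\phi)+r^3\pu\pv(r\phi)$ and dismiss the first piece as subleading. That is true on $\mathcal I^+$, but when you integrate in $u'$ from $-\infty$ at fixed finite $v$ you pass through $|u'|\gg v$, where $r(u',v)\sim|u'|$ and $\pv(r\phi)(u',v)\sim r^{-3}$, so that $3r^2\nu\,\pv(r\phi)\sim|u'|^{-1}$ — the \emph{same} size as the "main" term $2m\nu\kappa\,r\phi\sim-2M\Phi^-/|u'|$. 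Neither piece is individually cut off; what actually makes $\pu(r^3\pv(r\phi))$ integrable near $\mathcal I^-$ is a cancellation at order $|u'|^{-1}$ between the two (indeed the explicit formula of Theorem~\ref{thm:null:asymptotics of dvrphi} gives $\pv(r\phi)(u',v)\to-\tfrac{2M\Phi^-}{3}r^{-3}$ as $u'\to-\infty$ at fixed $v$, precisely the coefficient needed). Your stated mechanism — "$\pv(r\phi)$ has only just evolved away from zero" — misses this: $r^3\pv(r\phi)$ tends to a \emph{nonzero} constant near $\mathcal I^-$ at fixed $v$ (so the integration also picks up a boundary term at $\mathcal I^-$ that your $\pv(r\phi)|_{\mathcal I^-}=0$ bookkeeping overlooks), and it is the cancellation of derivatives, not any smallness of the integrand's ingredients, that produces the cut-off. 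As written, your integral is the difference of two non-convergent pieces, and controlling it requires already knowing $\pv(r\phi)$ — a circularity you gloss over.

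The paper avoids this entirely for $p=2$ by \emph{not} commuting with $r^3$: it integrates $\pu\pv(r\phi)=2m\nu\kappa\,r\phi/r^3=-2M\Phi^-/(r^3|u'|)+\cdots$ directly in $u'$ from $\mathcal I^-$ — the integrand is $\mathcal O(|u'|^{-4})$ at fixed $v$, so convergence is automatic — and then computes $\int_{-\infty}^u du'/((v-u')^3 u')$ explicitly by partial fractions to read off $\log r-\log|u|$; this is the proof of Theorem~\ref{thm:null:asymptotics of dvrphi}, combined with Remark~\ref{rem:proofofthmintrocor}. The $r^3$-commutation is what the paper uses for $p\geq3$ (Theorem~\ref{thm:null:asymptotics of dvrphi,p=3}), and there it needs a preliminary decay estimate on $\pv(r\phi)$ plus an iteration — exactly the self-referential control your proposal needs but does not supply. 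A secondary correction: the a priori estimates required (propagating $r\phi\sim\Phi^-/|u|$ from $\mathcal I^+$ to all finite $v$, together with bounds on $\nu,\kappa,m$) are obtained by re-running the null-case argument of Corollary~\ref{cor:nullcase:asymptotics}, integrating $\pu(r\phi)$ from $v=\infty$ rather than from $\mathcal C_{\mathrm{in}}$; they do not come from the boundary-value bootstrap of Theorem~\ref{thm.intro:timelikecase}, which is a logically separate result.
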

Indeed, the main work of this paper consists of showing that both the lower and upper bounds on the $u$-decay of $r\phi$ imposed on $\Gamma$ are propagated all the way up to $\mathcal{I}^+$.\footnote{The propagation of $u$-decay is somewhat special to spherical symmetry, see also section~\ref{sec:intro:higherl} and the upcoming~\cite{Kerrburger3}.} The limit $\Phi^-$ then plays a similar role to $\Xi^-$ from \eqref{eq:intro:Xi-}, see already Remark~\ref{long remark ZetaXi}.

We remark that, even though the above theorems are proved for the coupled problem, the methods of the proofs can also be specialised to the linearised problem (see section~\ref{sec:linear} of the present paper or  section~11 of~\cite{Dafermos2005}), i.e.\ the problem of the wave equation on a fixed Schwarzschild (or Reissner--Nordstr\"om) background:
   \begin{thm}\label{thm.intro:linearcase}
        Consider the spherically symmetric wave equation
        \begin{equation}
            \nabla^\mu\nabla_\mu \phi=0
        \end{equation}
       on a fixed Schwarzschild background with mass $M\neq 0$, where $\nabla$ is the connection induced by the Schwarzschild metric
        \begin{equation}
            g^{\text{Schw}}=-\left(1-\frac{2M}{r}\right)\dd t^2+\left(1-\frac{2M}{r}\right)^{-1}\dd r^2+r^2 \dd\Omega^2,
        \end{equation}
        and consider sufficiently regular initial/boundary data as above, i.e.\ obeying eqns.~\eqref{eq:intro:noincomingradiation} and \eqref{eq:intro:Gammafields}.
        Then the results of Theorems~\ref{thm.intro:timelikecase},~\ref{thm.intro.corollary} apply.
    \end{thm}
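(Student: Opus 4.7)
My plan is to specialise the analysis of Thms.~\ref{thm.intro:timelikecase}--\ref{thm.intro.corollary} to a fixed Schwarzschild background, where the Hawking mass is identically $M$ and all geometric quantities are explicit. Working in double null coordinates $(u,v)$ normalised so that $u=v$ on $\Gamma$ and $\pv r\to 1$ along $\I$, and setting $\Psi:=r\phi$, the spherically symmetric wave equation reduces to the 1+1 linear equation
\begin{equation*}
\pu\pv\Psi = -V(r)\,\Psi,\qquad V(r)=\frac{2M\Omega^2}{r^3},\quad \Omega^2=1-\tfrac{2M}{r},
\end{equation*}
and the factor $r^{-3}$ in $V$ is precisely what will generate the $r^{-3}\log r$ behaviour. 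Well-posedness of the mixed problem is handled by approximation: replace $\I$ by an ingoing null cone $\mathcal{C}^+_{u_n}$ at finite $u=u_n$ on which we prescribe $\pv\Psi=0$; together with the decaying boundary data on $\Gamma$, standard linear characteristic-boundary theory gives a unique solution $\Psi_{u_n}$ in the region $u_n\leq u\leq v$. Uniform $|u|^{-(p-1)}$ bounds established below make $\{\Psi_{u_n}\}$ Cauchy and yield the semi-global solution in the limit $u_n\to -\infty$.

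The core of the proof is the propagation of polynomial $u$-decay from $\Gamma$ up to $\I$. For the upper bound I would run a bootstrap: assuming $|\Psi|\leq A|u|^{-(p-1)}$, integrate $\pu\pv\Psi=-V\Psi$ first in $u$ from $\mathcal{C}^+_{u_n}$ (where $\pv\Psi$ vanishes) to control $\pv\Psi$, then in $v$ from $\Gamma$ to recover $\Psi$; the $r^{-3}$ decay of $V$ renders the potential contribution integrable and closes the bound, while the assumption $r|_\Gamma\gtrsim 2.95M$ controls the region where $\Omega^{2}$ degenerates. A matching lower bound requires showing that $\Phi^-:=\lim_{u\to-\infty}|u|^{p-1}\Psi(u,v)$ exists and is $v$-independent. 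Both properties follow from integrating $|u|^{p-1}\pv\Psi$ in $v$, which is absolutely convergent thanks to the just-proved upper bound and the $r^{-3}$-integrability of $V$; the value $\Phi^-$ is then pinned down by the boundary data on $\Gamma$.

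To extract the logarithmic expansion I would integrate the wave equation in $u$ at fixed $v$, using $\pv\Psi|_{\I}=0$, to obtain
\begin{equation*}
\pv\Psi(u,v)=-\int_{-\infty}^{u}V(r(u',v))\,\Psi(u',v)\,\dd u'.
\end{equation*}
Substituting $\Psi(u',v)=\Phi^-|u'|^{-(p-1)}+\mathcal{O}(|u'|^{-(p-1)-\epsilon})$ and using that $\pu r\to -\tfrac12$ as $u\to-\infty$, the critical case $p=2$ gives
\begin{equation*}
\pv\Psi(u,v)\sim -2M\Phi^-\int_{-\infty}^{u}\frac{\dd u'}{r(u',v)^{3}\,|u'|}.
\end{equation*}
Splitting this integral into the near-$\I$ region where $r\sim|u'|$ and the complementary region where $|u'|\sim|u|$, and changing variables from $u'$ to $r$ in the latter, reproduces the expansion $\pv\Psi=B^{*}(\log r-\log|u|)r^{-3}+\mathcal{O}(r^{-3})$ with $B^{*}=-2M\Phi^-$. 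Exactly the same scheme applied to higher-order $r$-expansions of $\Psi$ yields the $r^{-(p+1)}\log r$ terms for integer $p>2$.

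The main obstacle I anticipate is the sharp lower bound that produces a non-trivial $\Phi^-$: the upper estimates are essentially a routine bootstrap, but tracking the leading coefficient all the way from $\Gamma$ to $\I$ without the potential-generated corrections swamping it requires a careful separation of leading and subleading contributions, in which the exact form of the Schwarzschild potential $V$ (rather than just its $r^{-3}$ upper bound) must be exploited. This is also where the $r|_\Gamma\gtrsim 2.95M$ hypothesis, inherited from the nonlinear analysis, presumably enters, so as to guarantee a quantitative bound on the redshift region for which both upper and lower estimates remain compatible with the propagated decay.
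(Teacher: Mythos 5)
Your proposal takes a genuinely different route from the paper. The paper proves Theorem~\ref{thm.intro:linearcase} not by redoing the analysis on a fixed Schwarzschild background, but as an \emph{a fortiori} corollary of Theorem~\ref{thm.intro:timelikecase}: section~\ref{sec:linear} observes that the only ingredient of the coupled proof that seems lost in the linear case is the Hawking-mass energy estimate, and recovers it by noting that the 1-form $\eta = \frac12(1-\mu)\frac{\zeta^2}{\nu}\dd u + \frac12\frac{\theta^2}{\kappa}\dd v$ is closed (divergence theorem applied to $T^{sf}_{\mu\nu}\boldsymbol K^\nu$ with the static Killing field), so a surrogate $\varpi'$ with $\dd\varpi' = \eta$ plays the role of $\varpi$ everywhere. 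Your self-contained re-derivation on a fixed background is a legitimate and arguably more transparent route; it also bypasses the $\varpi'$ device because on Schwarzschild the geometric quantities $\kappa\equiv 1$, $\lambda = -\nu = 1-2M/r$ are explicit and the only bootstrap is on $r\phi$ itself.

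However, there is a genuine gap in the step where you conclude that $\Phi^- = \lim_{u\to-\infty}|u|^{p-1}\Psi(u,v)$ exists and is pinned down by the boundary data on $\Gamma$. You assert both from the absolute convergence of $\int_u^v |u|^{p-1}\pv\Psi\,\dd v'$. That integral is indeed absolutely convergent, but it is only $\mathcal O(1)$ (bounded by a constant proportional to $1/R$), not $o(1)$, when $r|_\Gamma$ tends to a finite limit $R$: the decomposition $|u|^{p-1}\Psi(u,v) = |u|^{p-1}\Psi(u,u) + |u|^{p-1}\int_u^v\pv\Psi\,\dd v'$ then only shows boundedness, not convergence as $u\to-\infty$, and $\Phi^-$ is \emph{not} simply the limit of the boundary data. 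This is exactly the subtlety the paper isolates in section~\ref{sec:Refinements} (cf.\ Remark~\ref{rem:shortcoming of limiting proof 2} and Lemma~\ref{lemma:trick}): one must commute with the generator $\boldsymbol T = \pu+\pv$ of $\Gamma$, prove that $\boldsymbol T(r\phi)$ itself satisfies $|u|^{-p}$ decay, and then show that $r\phi - \frac{|u|}{p-1}\boldsymbol T(r\phi) = \mathcal O(|u|^{-(p-1)-\epsilon'})$, from which integrability of $\pu(|u|^{p-1}r\phi)$ follows. On Schwarzschild with $\Gamma$ a constant-$r$ curve this is cleanest --- $\boldsymbol T$ is proportional to $\partial_t$ and commutes exactly with $\Box_g$, so $\boldsymbol T(r\phi)$ solves the same 1+1 wave equation and inherits the improved decay --- but this commutation step is essential and is missing from your proposal. (Your trivial $v$-integration argument does work in the case $r|_\Gamma\to\infty$, but the theorem also covers bounded $r|_\Gamma$.) A minor point: with $\pv r\to 1$ along $\mathcal I^-$ one has $\pu r\to -1$ there, not $-\tfrac12$.
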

    Notice that the same result does not hold on Minkowski, as we need the spacetime to possess some mass near spatial infinity.
    
   Let us now explain, both despite and due to its simplicity, the main cause for the logarithmic term (focusing now on $p=2$):  The wave equation (derived from $\nabla^\mu T^{sf}_{\mu\nu}=\nabla^\mu R_{\mu\nu}=0$) then reads
\begin{equation}\label{eq:intro:waveequation}
    \pu\pv(r\phi)=-2m\frac{(-\pu r) \pv r}{1-\frac{2m}{r}}\frac{r\phi}{r^3}.
\end{equation}
 Assuming that we can propagate upper and lower bounds for $r\phi$ from $\Gamma$ to null infinity, we have that $r\phi\sim |u|^{-1}$ everywhere. 
For sufficiently large $r$, and for sufficiently large negative values of $u$, we then have that $r(u,v)\sim (v-u)$ and that all other terms appearing in front of the $\frac{r\phi}{ r^{3}}$-term remain bounded from above, and away from zero, such that integrating \eqref{eq:intro:waveequation} from $\mathcal{I}^-$ gives (we decompose into fractions)
\begin{align}\begin{split}\label{eq:intro:heuristic}
    &\pv(r\phi)(u,v)\sim -\int_{-\infty}^u \frac{1}{r(u',v)^3|u'|}\dd u' \sim \int_{-\infty}^u \frac{1}{(v-u')^3 u'}\dd u'  \\
   &= \int_{-\infty}^u \frac{1}{v^3}\left(\frac{1}{u' }+\frac{1}{ v-u'}+\frac{v}{ (v-u')^2}+\frac{v^2}{(v-u')^3}\right)\dd u'=       \frac{\log|u|-\log(v-u)}{v^3}+\frac{3v-2u}{2v^2(v-u)^2}.
\end{split}\end{align}
Taking the limit of $v\to\infty$ while fixing $u$ then, already, suggests the logarithmic term in the asymptotic expansions of Thms.~\ref{thm.intro:timelikecase} and~\ref{thm.intro.corollary}. Of course, the calculation above  is only a sketch, and many details have been left out.\footnote{\label{fn:2}To relate to footnote~\ref{fn:1}, note that one can do a similar calculation if $r\phi\sim|u|^{-\epsilon}$ for some $\epsilon>0$. In this case, there will be an $r^{-2-\epsilon}$-term in the asymptotic expansion of $\pv(r\phi)$, unless $\epsilon\in\mathbb{Z}$. In other words, one cannot recover conformal regularity near $\mathcal I^+$ from a lack of conformal regularity near $\mathcal I^-$.}

Let us remark that posing polynomially decaying boundary data on a timelike hypersurface comes with various technical difficulties. For instance, one cannot a priori prescribe the Hawking mass on $\Gamma$ -- in fact, even showing local existence will come with some difficulties -- and $r$-weights cannot be used to infer decay when integrating in the outgoing direction from $\Gamma$ since $r$ is, in general, allowed to remain bounded on $\Gamma$.
Both of these difficulties disappear in the characteristic initial value problem, i.e., when one prescribes initial data on an ingoing null hypersurface $\mathcal{C}_{\mathrm{in}}$ terminating at past null infinity (see Figure~\ref{fig:3}) according to 
\begin{align} \label{eq:intro:nullrayfields}
 r\phi|_{\mathcal{C}_{\mathrm{in}}} = \frac{\Phi^-}{r^{p-1}}+\mathcal{O}\left(\frac{1}{r^{p-1+\epsilon}}\right),&&\left.\frac{\pu(r\phi)}{\pu r}\right|_{\mathcal{C}_{\mathrm{in}}}= \frac{(p-1)\Phi^-}{r^{p}}+\mathcal{O}\left(\frac{1}{r^{p+\epsilon}}\right),
\end{align}
where $\Phi^-$ and $p>1$ are constants, one again sets $\pv(r\phi)$ to vanish on past null infinity,
and makes the obvious modification to condition \eqref{eq:intro:Hawking}:
\begin{equation}\label{eq:intro:Hawking2}
    m(\mathcal{C}_{\mathrm{in}}\cap \mathcal{I}^-)=M>0.
\end{equation}
We then obtain the following theorem (see Thm.~\ref{thm:null:asymptotics of dvrphi} for the precise statement):
    \begin{thm}\label{thm.intro:nullcase}
        For sufficiently regular characteristic initial data on $\mathcal{I}^-$ and $\mathcal{C}_{\mathrm{in}}$ as above, i.e.\ obeying eqns.~\eqref{eq:intro:noincomingradiation}, \eqref{eq:intro:nullrayfields}, \eqref{eq:intro:Hawking2}, a unique semi-global solution to the Einstein-Scalar field system in spherical symmetry exists for sufficiently large negative values of $u$. Moreover, in the case $p=2$, we obtain the following asymptotic expansion of $\pv(r\phi)$ as $\mathcal{I}^+$ is approached along hypersurfaces of constant $u$:
        \begin{equation}
            \pv(r\phi)(u,v)=B^* \frac{\log r-\log|u|}{r^3}+\mathcal{O}(r^{-3}),
        \end{equation}
        where $B^*$ is a constant independent of $u$ given by $B^*=-2M\Phi^-$.
        On the other hand, the expansion near $\mathcal{I}^-$ remains regular, i.e.\ $\pv(r\phi)=\mathcal{O}(r^{-3})$ near $\mathcal I^-$.
        
        As before, the same holds true for the linear case, cf.\ Thm.~\ref{thm.intro:linearcase}.
         \end{thm}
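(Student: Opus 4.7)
The plan is to solve the problem via an approximation scheme: replace $\mathcal{I}^-$ by an outgoing null hypersurface $\mathcal{C}_{u_n}^+$ at finite retarded time $u_n \to -\infty$, imposing on $\mathcal{C}_{u_n}^+$ the no-incoming condition $\pv(r\phi)\equiv 0$ and prescribing the Hawking mass to equal $M$ at $\mathcal{C}_{u_n}^+\cap\mathcal{C}_{\mathrm{in}}$. The characteristic initial-value problem for each $n$ can be solved by standard methods in a region $u_n \leq u \leq u_*$. Uniform-in-$n$ a priori estimates then allow passage to the limit $u_n \to -\infty$ and yield the semi-global solution, with uniqueness following from the class within which these estimates close.

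The technical core is a bootstrap in which one propagates, uniformly in $n$, three families of bounds: (i) geometric control, $r \sim v - u$ together with $\pv r = 1 + o(1)$ and $\pu r = -1 + o(1)$ in the asymptotic regime; (ii) Hawking mass control, $0 < c \leq m \leq C$, with $m(u,v) \to M$ as $u \to -\infty$ along outgoing null cones; (iii) sharp scalar-field control, namely matching upper \emph{and} lower bounds of the form $|u|^{p-1} r\phi = \Phi^- + o(1)$. The Raychaudhuri and constraint equations express the $v$- and $u$-derivatives of $m$ as non-negative quantities involving $\pv(r\phi)$ and $\pu(r\phi)$ respectively, which together with the wave equation \eqref{eq:intro:waveequation} yields a closed system for these bootstrap quantities. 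The upper bound on $r|\phi|$ is obtained by integrating along outgoing rays starting from $\mathcal{C}_{\mathrm{in}}$, using $r$-weights and the initial decay \eqref{eq:intro:nullrayfields}; the sharp lower bound is the delicate step and requires one to carefully subtract the leading $\Phi^-/|u|$ behaviour and estimate the remainder.

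With these estimates in hand, the asymptotic expansion follows by directly integrating the wave equation from past null infinity. Since $\pv(r\phi)|_{\mathcal{I}^-} = 0$, one has
\begin{equation*}
\pv(r\phi)(u,v) = -\int_{-\infty}^u 2m\,\frac{(-\pu r)\pv r}{1 - \tfrac{2m}{r}}\,\frac{r\phi}{r^3}(u',v)\,\dd u'.
\end{equation*}
Inserting the sharp asymptotics $r\phi(u',v) = \Phi^-/|u'| + \mathcal{O}(|u'|^{-1-\epsilon})$, $r(u',v) = v - u' + \mathcal{O}(1)$, and $m(u',v) = M + \mathcal{O}(|u'|^{-\epsilon})$, the leading contribution becomes $-2M\Phi^- \int_{-\infty}^u (v-u')^{-3}|u'|^{-1}\,\dd u'$, which by the partial-fraction decomposition of \eqref{eq:intro:heuristic} equals $-2M\Phi^-(\log r - \log|u|)/r^3$ modulo a remainder of order $r^{-3}$. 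Near $\mathcal{I}^-$ (i.e.\ $u \to -\infty$ with $v$ fixed), the same integral is only $\mathcal{O}(r^{-3})$ with no logarithmic enhancement, as the integration range shrinks, which gives the final assertion of the theorem.

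The main obstacle is the propagation of the sharp limit $|u|^{p-1} r\phi \to \Phi^-$, in particular its lower bound: propagating only upper bounds would suffice to show $\pv(r\phi) = \mathcal{O}(r^{-3}\log r)$, but identifying the precise coefficient $B^* = -2M\Phi^-$ (and in particular showing $B^* \neq 0$ whenever $\Phi^- \neq 0$) requires sharp two-sided control on $r\phi$. Compared with the timelike case of Theorem \ref{thm.intro:timelikecase}, the null setup simplifies the geometric part of the analysis substantially: since $r \to \infty$ at the endpoint $\mathcal{C}_{\mathrm{in}}\cap \mathcal{I}^-$ and $r$-weighted estimates can be used freely along $\mathcal{C}_{\mathrm{in}}$, one avoids the complication that $r$ may remain bounded on a timelike boundary, which is precisely the technical advantage afforded by the characteristic formulation.
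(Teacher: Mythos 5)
Your proposal is, in essence, the paper's approach: approximate $\mathcal{I}^-$ by finite outgoing null cones $\mathcal{C}_{u_n}^+$, prove uniform two-sided decay estimates on $r\phi$ together with control of the Hawking mass and the geometry, pass to the limit, and then evaluate the $u$-integral of the wave equation by partial fractions to produce the $(\log r - \log |u|)/r^3$ term. (The paper's organization differs slightly: section~\ref{sec:null} works with \emph{a priori} estimates assuming the solution exists, obtaining the sharp $|u|^{-p+1}$ decay by an iteration seeded by the weak $|\phi|\lesssim r^{-1/2}$ pointwise bound coming from Cauchy--Schwarz and the $L^2$ energy estimate furnished by the monotonicity of the Hawking mass, rather than by a one-shot bootstrap, and the limiting construction is then carried out in section~\ref{sec:timelike}.) One small inaccuracy to flag: the paper proves only $r(u,v) = (v-u) + \mathcal{O}(\log r)$, not $\mathcal{O}(1)$, because $\lambda - 1 = \mathcal{O}(r^{-1})$ integrates to a logarithm; the resulting $\mathcal{O}(r^{-4}\log r\,|u|^{-1})$ error in the integrand is still integrable and contributes only $\mathcal{O}(r^{-3})$ after the $u$-integration, so the conclusion stands, but this error term must be tracked explicitly rather than absorbed into a bounded remainder.
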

\begin{figure}[htbp]
\floatbox[{\capbeside\thisfloatsetup{capbesideposition={right,top},capbesidewidth=4cm}}]{figure}[\FBwidth]
{\caption{The Penrose diagram of the solution of Theorem~\ref{thm.intro:nullcase}. We impose polynomially decaying data on an ingoing null hypersurface $\mathcal C_{\mathrm{in}}$ and no incoming radiation from past null infinity $\mathcal I^-$.}\label{fig:3}}
{  
\includegraphics[width = 155pt]{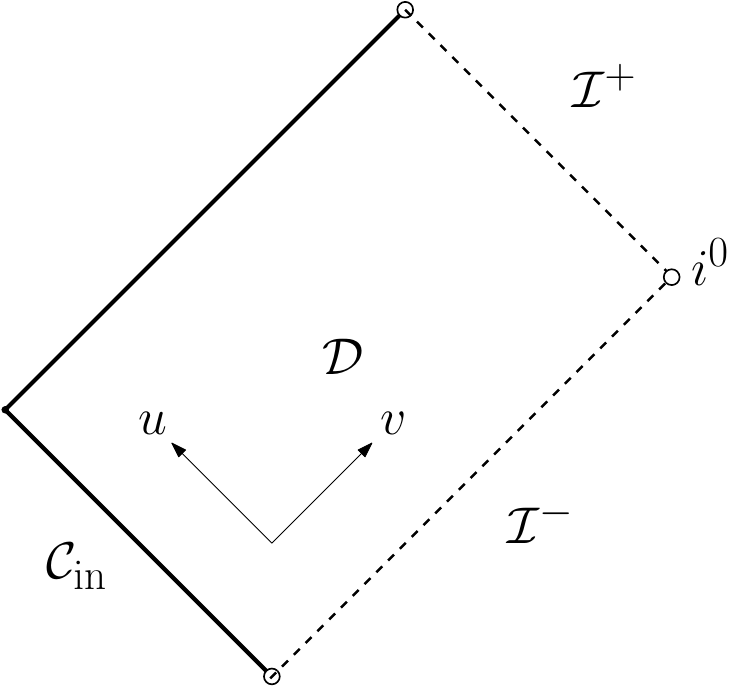}
}
\end{figure}
It is this result which motivates Conjecture~\ref{conj1} from section~\ref{sec:intro:CHR}.
   
    Since the characteristic setup above is much simpler to deal with compared to the case of boundary data on $\Gamma$, we shall prove Thm.~\ref{thm.intro:nullcase} first such that the technically more involved timelike case can be understood more easily afterwards.
    Moreover, it turns out that this setting allows for another interesting motivation or interpretation of our choice of polynomially decaying initial data, namely in the context of the \textit{scattering problem} of scalar perturbations of Minkowski or Schwarzschild. We will discuss this in the next section (section~\ref{intro:scattering}).
    
    On the other hand, the problem of timelike boundary data is interesting precisely because of its difficulties and the methods used to deal with them. 
    Indeed, we develop a quite complete understanding of the evolutions of such data in Thm.~\ref{thm:timelike:final!!!}. 
    Let us point out again that we are not able to work directly with such data, but rather need to consider a sequence of smooth compactly supported data that lead to solutions which can be extended to the past by the vacuum solution. 
    We will show uniform bounds and sharp decay rates for this sequence of solutions. We will then show that these bounds carry over to the limiting solution, which then restricts correctly to the (non-compactly supported) initial boundary data. A major obstacle in obtaining the necessary bounds will be proving decay for $\pu(r\phi)$, for which we will need to commute with the timelike generators of $\Gamma$. 
    The limiting argument itself proceeds via a careful Gr\"onwall-type argument on the differences of two solutions, thus establishing that the sequence is Cauchy. This method is then also used to infer the uniqueness of the limiting solution. Notice that the logarithmic term of~\eqref{eq:intro:thmtimelikeasymptotics} only appears in the limiting solution, whereas the actual sequence of solutions satisfies peeling. This can be understood already from the heuristic computation \eqref{eq:intro:heuristic}.
    
    We refer the reader to the introduction of section~\ref{sec:timelike} as well as Theorems~\ref{thm:timelike:final!!!} and~\ref{thm:timelike:logs} (which together contain Thm.~\ref{thm.intro:timelikecase}) for details. 
    
\subsection{An application: The scattering problem}\label{intro:scattering}
 \subsubsection{The scattering problem  on Minkowski, Schwarzschild and Reissner--Nordstr\"om}  
In the setting of data on an ingoing null hypersurface, the case $p=3$ is of independent interest in view of its natural appearance in the scattering problem "on" Minkowski or Schwarzschild (or Reissner--Nordstr\"om). 
If one puts compactly supported data for the scalar field $r\phi=G(v)$ on $\mathcal{I}^-$ and\footnote{Note that, since all our results only apply in a region sufficiently close to $\mathcal{I}^-$, it does not make a difference whether we consider compactly supported or vanishing data on $\mathcal{H}^-$.} on the past event horizon $\mathcal{H}^-$, it is not difficult to see that there exists an ingoing null hypersurface ${\mathcal{C}_{\mathrm{in}}}$, "intersecting" $\mathcal{I}^-$ to the future of the support of $r\phi|_{\mathcal{I}^-}$, on which eq.~\eqref{eq:intro:nullrayfields} generically holds with $p=3$ and such that eq.~\eqref{eq:intro:Hawking2} holds on $\mathcal{C}_{\mathrm{in}}\cap \mathcal{I}^-$. See Figure~\ref{fig:4}.
This puts us in the situation of Thm.~\ref{thm.intro:nullcase}.
\begin{figure}[htbp]
\floatbox[{\capbeside\thisfloatsetup{capbesideposition={right,top},capbesidewidth=4.4cm}}]{figure}[\FBwidth]
{\caption{The Penrose diagram of Schwarzschild. By Theorem~\ref{thm.intro:scattering}~\textbf{c)}, smooth compactly supported scattering data on $\mathcal H^-$ and $\mathcal I^-$ generically lead to the setup of Theorem~\ref{thm.intro:nullcase} with $p=3$. The region $\mathcal D$ as depicted corresponds to Figure~\ref{fig:3}. As a consequence, the solution fails to be conformally regular on $\mathcal I^+$. }\label{fig:4}}
{  
 \includegraphics[width = 190pt]{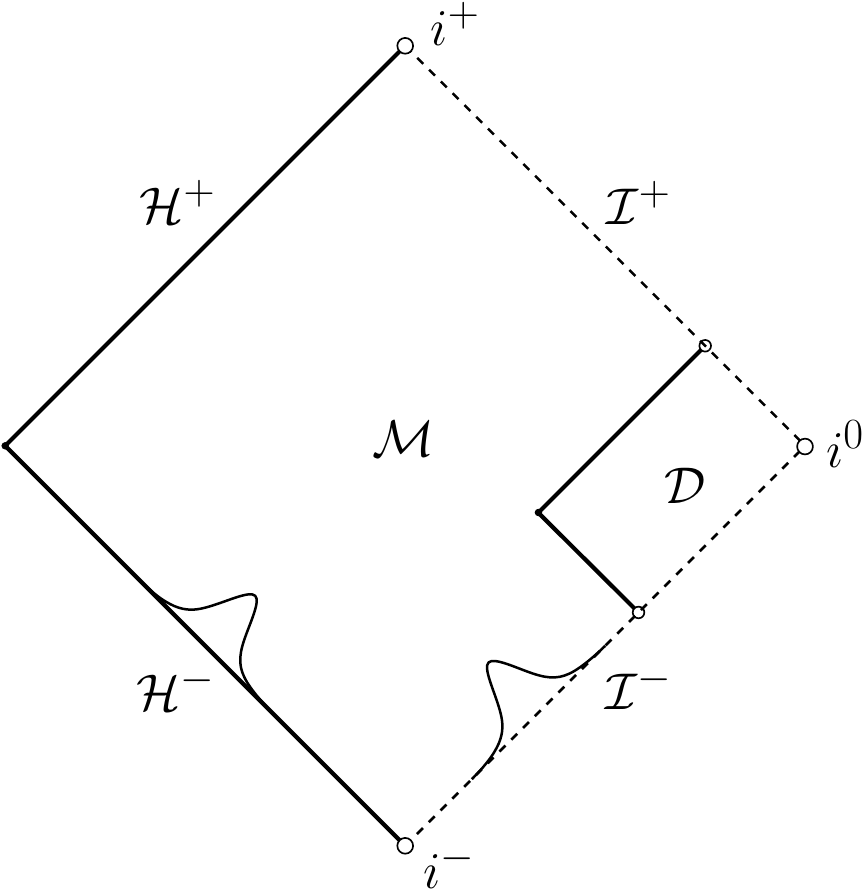}
}
\end{figure}

However, recall that we required $M$ from eq.~\eqref{eq:intro:Hawking2} to be strictly positive in order for the $\log$-terms in $\pv(r\phi)$ to be non-vanishing: 
Therefore, while $M$ is positive in both the coupled and the linear problem on Schwarzschild, \textit{one needs to consider the coupled problem} when considering the corresponding problem with a Minkowskian $i^-$ since one needs the scalar field to generate mass along $\mathcal{I}^-$.\footnote{In general, the spherically symmetric linear wave equation on a Minkowski background is not very exciting in view of the exact conservation law $\pu\pv(r\phi)=0$. Combined with the no incoming radiation condition, this conservation law would  force $\pv(r\phi)$ to vanish everywhere.}
\begin{rem}
Let us quickly explain our terminology: Since we only consider \textit{compactly supported} scattering data, the arising solutions will be identically vacuum in a neighbourhood of $i^-$. Depending on the setting, we then say that the arising spacetimes either have a \textit{Minkowskian} or a \textit{Schwarzschildean} (with mass $M>0$) $i^-$.
\end{rem}

We therefore obtain the following result (see Thm.~\ref{null:thm:scattering} for the precise version):
  \begin{thm}\label{thm.intro:scattering}
        Consider either 
        
      \textbf{  a)} the non-linear scattering problem for the spherically symmetric Einstein-Scalar field system with a Schwarzschildean $i^-$  (with mass $M>0$), with vanishing data on $\mathcal{H}^-$ and with smooth compactly supported data $r\phi=G(v)$ on $\mathcal{I}^-$,
        
        or 
        
      \textbf{  b)} the non-linear scattering problem for the spherically symmetric Einstein-Scalar field system with a Minkowskian $i^-$, with smooth compactly supported data $r\phi=G(v)$ on $\mathcal{I}^-$,
      
      or
      
       \textbf{ c)} the linear scattering problem for the wave equation on a fixed Schwarzschild background with mass $M>0$,   with vanishing data on $\mathcal{H}^-$ and with smooth compactly supported data $r\phi=G(v)$ on $\mathcal{I}^-$.
     
       Then, a unique smooth semi-global solution exists (in fact, in case \textbf{c)}, this smooth solution exists globally in the exterior of Schwarzschild), and we get, along hypersurfaces of constant $u$, for sufficiently large negative values of $u$,  the following asymptotic expansion near $\mathcal{I}^+$:
        \begin{equation}
            \pv(r\phi)(u,v)=B(u)\frac{1}{r^3}+B'\frac{\log r-\log|u|}{r^4}+\mathcal{O}(r^{-4}),
        \end{equation}
        where $B'$ is a constant which, in each case, can  be explicitly computed from $G(v)$ and is generically non-zero.
    \end{thm}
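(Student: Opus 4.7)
The plan is to reduce each of the three cases \textbf{a)}, \textbf{b)}, \textbf{c)} to Theorem~\ref{thm.intro:nullcase} applied with $p=3$. Concretely, I would pick an ingoing null hypersurface $\mathcal{C}_{\mathrm{in}}$ whose past endpoint on $\mathcal{I}^-$ lies strictly to the future of the compact support of $G(v)$, so that the no-incoming-radiation condition \eqref{eq:intro:noincomingradiation} is automatic on $\mathcal{I}^-\cap J^+(\mathcal{C}_{\mathrm{in}})$. It then remains to: (i) solve the backward scattering problem in the causal past of $\mathcal{C}_{\mathrm{in}}$; (ii) compute the leading asymptotic profile of $r\phi$ on $\mathcal{C}_{\mathrm{in}}$ so as to verify \eqref{eq:intro:nullrayfields} with $p=3$; and (iii) show that the Hawking mass at $\mathcal{C}_{\mathrm{in}}\cap\mathcal{I}^-$ is strictly positive so that \eqref{eq:intro:Hawking2} holds.

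In cases \textbf{a)} and \textbf{c)}, the region to the past of $\mathcal{C}_{\mathrm{in}}$ is a subdomain of exact Schwarzschild (automatically in \textbf{c)}, and up to a compactly supported perturbation handled by standard Cauchy stability in \textbf{a)}), so the backward problem has a unique solution and the Hawking mass at the past endpoint of $\mathcal{C}_{\mathrm{in}}$ equals the Schwarzschild parameter $M>0$. In case \textbf{b)} the region is identically Minkowski prior to the support of $G$, so I would solve the nonlinear Goursat problem with data $G$ on $\mathcal{I}^-$ and trivial vacuum data near $i^-$, and invoke the Bondi mass-gain formula along $\mathcal{I}^-$ to conclude $m(\mathcal{C}_{\mathrm{in}}\cap\mathcal{I}^-)>0$ for any nontrivial $G$ --- this is where the scalar radiation dynamically generates the mass $M$ that is needed to drive the logarithm. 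For step (ii), I would integrate the wave equation \eqref{eq:intro:waveequation} inwards in $u$ from $\mathcal{I}^-$, using that $\pv(r\phi)|_{\mathcal{I}^-}\equiv 0$ beyond the support of $G$: the leading $1/r^2$ contribution to $r\phi$ on $\mathcal{C}_{\mathrm{in}}$ then comes from the self-interaction source $-2m(-\pu r)(\pv r)(1-2m/r)^{-1}r\phi/r^3$ accumulated over the support of $G$. A calculation in the spirit of the heuristic \eqref{eq:intro:heuristic} should yield $r\phi|_{\mathcal{C}_{\mathrm{in}}}=\Phi^-/r^2+o(r^{-2})$ with $\Phi^-$ a computable functional of $G$ (and $M$); an analogous one-more-iteration argument controls the companion condition on $\pu(r\phi)/\pu r$ and the required polynomial improvement in the $\mathcal{O}$-remainder.

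With all three hypotheses of Theorem~\ref{thm.intro:nullcase} in hand for $p=3$, the conclusion of that theorem directly produces the claimed expansion with $B'$ a constant determined by $M$ and $\Phi^-$. The main obstacle is the \emph{genericity} assertion, i.e.\ verifying that $\Phi^-$ is a nontrivial functional of $G$ so that $B'\neq 0$ for generic data. In case \textbf{c)} this reduces to showing that the linear map $G\mapsto\Phi^-$ defined by the explicit inward integration above is not identically zero, which is clear since the integrand $mr\phi/r^3$ cannot conspire to cancel for generic $G$ with $M$ fixed. Cases \textbf{a)} and \textbf{b)} are handled by linearising in the amplitude of $G$ (around the Schwarzschild and Minkowski backgrounds respectively) and observing that higher-order coupling corrections cannot spoil the generic non-vanishing established at the linear level; a subtle point in case \textbf{b)} is the need to couple this linear analysis with the scalar-field-generated mass to ensure that both $M$ and $\Phi^-$ are nonzero simultaneously, which is automatic since both functionals are quadratic-plus-higher in the amplitude of $G$ and nontrivial at the quadratic level for generic $G$.
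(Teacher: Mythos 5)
Your proposal follows essentially the same route as the paper's Theorem~\ref{null:thm:scattering}: place $\mathcal{C}_{\mathrm{in}}$ at $v=v_2$ to the future of $\mathrm{supp}(G)\subset(v_1,v_2)$, integrate the coupling source $2m\nu\kappa\,r\phi/r^3$ across $[v_1,v_2]$ to extract the $\Phi^-/r^2$-profile on $\mathcal{C}_{\mathrm{in}}$, and invoke Theorem~\ref{thm.intro:nullcase} (via Theorem~\ref{thm:null:asymptotics of dvrphi,p=3}) with $p=3$. The paper carries out the ``heuristic'' step you defer by an iterated bootstrap on the wave equation, arriving at $r\phi(u,v_2)=-I_0[G]/u^2+\mathcal{O}(|u|^{-3})$ with $I_0[G]=\int_{v_1}^{v_2}\bigl(M+\tfrac12\int_{v_1}^v(\tfrac{\dd G}{\dd v'})^2\dd v'\bigr)G(v)\,\dd v$, so that $B'$ is simply proportional to $\widetilde{M}\,I_0[G]$ and genericity is read off from the closed form rather than from perturbation theory. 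One slip in your genericity discussion for case~\textbf{b)}: the leading contribution to $\Phi^-$ there is \emph{cubic}, not quadratic, in the amplitude of $G$, since the background mass vanishes and the mass driving the coupling is itself the quadratic quantity $\tfrac12\int(G')^2$; the conclusion that $B'\neq 0$ generically is unaffected, but your perturbative counting is off, and the paper avoids the issue entirely by having the exact formula.
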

\begin{rem}\label{rem:BV}
We note that, in the case \textit{\textbf{b)}}, if an additional smallness assumption on the data on $\mathcal I^-$ is made, then, in fact, the solution is causally geodesically complete,  globally regular, and has a complete $\mathcal I^+$. This follows from~\cite{ChristodoulouBV}. See also Theorem~1.7 of~\cite{LukOhYangBV} and the dichotomy of \cite{Luk2015QuantitativeSymmetry}.
\end{rem}
Theorem \ref{thm.intro:scattering} suggests that, in the context of the scattering problem, one should generically expect logarithmic terms to appear at the latest at second order in the asymptotic expansions of $\pv(r\phi)$ near $\mathcal{I}^+$. This is precisely what motivates our statement of Conjecture~\ref{conj2} in section~\ref{sec:intro:CHR}.

We can make an even stronger statement in the case of the linear wave equation on Schwarzschild.
There, the condition that $G$ needs to satisfy so that no logarithmic terms appear in the expansion of $\pv(r\phi)$ up to order $r^{-(4+n)}$ is that
\begin{equation}
    \int G(v) v^m \dd v=0
\end{equation}
for all $m\leq n$.
In particular, \textit{all} non-trivial smooth compactly supported scattering data on Schwarzschild lead to expansions of $\pv(r\phi)$ which eventually fail to be conformally smooth. See already Theorem~\ref{scatteringthm2}. We also refer the reader to~\cite{DRSR} for a general treatment of the scattering problem on Kerr.


\subsubsection{The conformal isometry on extremal Reissner--Nordstr\"om}\label{sec:222}
Our results can also be applied to the linear wave equation on extremal Reissner--Nordstr\"om.\footnote{More generally, recall that our results also apply to the coupled Einstein-Maxwell-Scalar field system with a chargeless and massless scalar field and  can then -- \textit{a fortiori} -- be specialised to the linear setting (i.e.\ to the linear wave equation on Reissner--Nordstr\"om) as in Theorem \ref{thm.intro:linearcase}.}
In this setting, let us finally draw the reader's attention to the well-known conformal "mirror" isometry~\cite{CouchTorrence} on extremal Reissner--Nordstr\"om, which implies that all results on the radiation field are essentially invariant under interchange of
\begin{align*}
 u\longleftrightarrow v, &&\frac1r\longleftrightarrow (r-r_+),
\end{align*}
where $r_+$ is the value of $r$ at the event horizon. To make this more precise, we recall from~\cite{Lucietti2013} (see also~\cite{Bizo__2013}) that if $\phi$ is a solution to the linear wave equation in \textit{outgoing} Eddington--Finkelstein coordinates $(u,r)$, then, in \textit{ingoing} Eddington--Finkelstein coordinates $(v,r')$,
\begin{equation}
\tilde{\phi}(v,r'):=\frac{r_+}{r'-r_+}\phi\left(u=v,r=\frac{r_+r'}{r'-r_+}\right)=\frac{r-r_+}{r_+}\phi\left(u=v,r=\frac{r_+r'}{r'-r_+}\right)
\end{equation}
also is a solution to the wave equation, where, in the above definition, the LHS is evaluated in ingoing and the RHS in outgoing null coordinates. 
One can directly read off from this that regularity in $r'$ of $\tilde \phi$ near the future event horizon $\mathcal H^+$ is equivalent to regularity of $r\phi$ in the conformal variable $1/r$ near $\mathcal I^+$.
In other words, applying this conformal isometry to Theorems~\ref{thm.intro:timelikecase}--\ref{thm.intro:scattering}, which made statements on the conformal regularity of $r\phi$ near $\mathcal I^+$, now produces statements on the \emph{physical} regularity of $\tilde\phi$ near the event horizon. 

For instance, the mirrored version of Theorem~\ref{thm.intro:scattering} shows that smooth compactly supported scattering data on $\mathcal{H}^-$ and on $\mathcal{I}^-$ for the linear wave equation on extremal Reissner--Nordstr\"om generically lead to solutions $\phi$ which not only fail to be conformally smooth near $\mathcal I^+$, but also fail to be in $C^4$ near $\mathcal H^+$. (See also~\cite{AAGscattering} for a general scattering theory on extremal Reissner--Nordstr\"om.)
This is in stark contrast to the scattering problem on Schwarzschild, where, under the same setup, the solution remains smooth up to and including the future event horizon. 
One can relate this to the absence of a bifurcation sphere in extremal Reissner--Nordstr\"om (see also~\cite{Luebbe_2014}). 
Indeed, if one, instead of posing data on all of $\mathcal H^-$, poses compactly supported data on a null hypersurface which coincides with $\mathcal H^-$ up to some finite time and which, for sufficiently large $u$, becomes a timelike boundary intersecting $\mathcal H^+$ at some finite $v$, then the corresponding solution remains smooth.

We will not explore potential implications of this on Strong Cosmic Censorship in this paper (see however also~\cite{Gajic2017LinearI}, where the importance of logarithmic asymptotics for extendibility properties near the \underline{inner} Cauchy horizon of extremal Reissner--Nordstr\"om is discussed). 

\subsection{Translating asymptotics near \texorpdfstring{$i^0$}{i0} into asymptotics near \texorpdfstring{$i^+$}{i+}}
All the results presented so far hold true in a neighbourhood of $i^0$. 
In our companion paper~\cite{Kerrburger2}, we answer the question how the asymptotics for $\pv(r\phi)$ obtained near \textit{spacelike infinity} translate into asymptotics for $\phi$ near \textit{future timelike infinity}.  
In that work, we restrict to the analysis of the linear wave equation on a fixed Schwarzschild background and focus on the case $p=2$ of Theorem~\ref{thm.intro:timelikecase} (so $r\phi\sim |t|^{-1}$ on data). 
Smoothly extending the boundary data to the event horizon, we prove in~\cite{Kerrburger2} that the logarithmic asymptotics~\eqref{eq:intro:thmtimelikeasymptotics} imply that the leading-order asymptotics of $\phi$ on $\mathcal{H}^+$ and of $r\phi$ on $\mathcal{I}^+$ are also logarithmic and entirely determined by the constant $-2M\Phi^-$. For instance, we obtain that $r\phi|_{\mathcal I^+}=-2M\Phi^-u^{-2}\log u+\mathcal O(u^{-2})$ along $\mathcal I^+$ as $u\to\infty$.
 In particular, the leading-order asymptotics are independent of the extension of the data to (and towards) the horizon. 
This gives rise to \textit{a logarithmically modified Price's law} and, in principle, provides a tool to \textit{directly measure the non-smoothness of} $\mathcal{I}^+$.

 The paper~\cite{Kerrburger2} crucially uses methods and results from~\cite{Angelopoulos2018ASpacetimes},~\cite{Angelopoulos2018Late-timeSpacetimes}.

It would be an interesting problem to show a similar statement for the \textit{coupled} Einstein-Scalar field system considered in the present paper. See also the works~\cite{Dafermos2005} and~\cite{Luk2015QuantitativeSymmetry} in this context.
\subsection{Future directions}\label{sec:intro:future}
As this paper constitutes the first of a series of papers, we here outline some further directions which we will pursue in the future and which build on the present work.
\subsubsection{Going beyond spherical symmetry: Higher \texorpdfstring{$\ell$}{l}-modes}\label{sec:intro:higherl}
It is natural to ask what happens outside of spherical symmetry in the case of the linear wave equation on a fixed Schwarzschild background (as the coupled problem would be incomparably more difficult):
If one decomposes the solution to the wave equation by projecting onto spherical harmonics and works in double null Eddington--Finkelstein coordinates $(u,v)$, one gets the following generalisation of the spherically symmetric wave equation \eqref{eq:intro:waveequation}:
\begin{equation}\label{eq:intro:waveequationhigherl}
    \pu\pv(r\phi_\ell)=-\ell(\ell+1)\left(1-\frac{2M}{r}\right)\frac{r\phi_\ell}{r^2}+   2M\frac{\pu r \pv r}{1-\frac{2M}{r}}\frac{r\phi_\ell}{r^3},
\end{equation}
where $\phi_\ell$ is the projection onto the $\ell$-th spherical harmonic. 
The difference from the spherically symmetric case treated so far is obvious: 
The RHS decays slower for $\ell\neq0$. 
Since the good $r^{-3}$-weight for $\ell=0$ plays a crucial rule in the proofs of all theorems in the present paper, one might think that this renders the methods of this paper useless for higher $\ell$-modes. 
However, one can recover the good $r^{-3}$-weight by commuting $\ell$ times with vector fields which, in Eddington--Finkelstein coordinates, to leading order all look like $r^2\pv$.\footnote{Recall that on the Minkowski background, i.e.\ for $M=0$, the strong Huygens' principle manifests itself in form of the conservation law $\pu(r^{-2\ell-2}(r^2\pv)^{\ell+1}(r\phi_\ell))=0$ for the $\ell$-th spherically harmonic mode.} 
Using these commuted wave equations, one can then adapt the methods of this paper to obtain similar results for higher $\ell$-modes, with logarithms appearing in the expansions of $\pv(r\phi_\ell)$ at orders which depend in a more subtle way on the precise setup.
 We note that these commuted wave equations, which we will dub \textit{approximate conservation laws}, are closely related to the higher-order Newman--Penrose quantities for the scalar wave equation (see also the introduction of~\cite{Kerrburger2} or the recent~\cite{Dejantobepublished}). 

We will dedicate an upcoming paper to the discussion of higher $\ell$-modes~\cite{Kerrburger3}. Similarly to~\cite{Kerrburger2}, we will also discuss the issue of late-time asymptotics in~\cite{Kerrburger3}. 
It will turn out that, in certain physically reasonable scenarios (such as the scattering problem of Theorem \ref{thm.intro:scattering}), the usual expectation that higher $\ell$-modes decay faster towards $i^+$ is partially violated. 
\subsubsection{The wave equation on a fixed Kerr background} Similarly, it would be interesting to understand how the results obtained in this paper would differ if one were to consider the linear wave equation on a fixed Kerr background. See also the recent~\cite{DejanKerr} and~\cite{Hintz}, where a generalisation of the well-known Price's law is obtained for Kerr backgrounds. 
\subsubsection{Going from scalar to tensorial waves: The Teukolsky equations}
Once the behaviour of higher $\ell$-modes is understood in~\cite{Kerrburger3}, the natural next step towards a resolution of Conjectures~\ref{conj1} and~\ref{conj2} would be an analysis of the Teukolsky equations of linearised gravity (e.g.\ in the context of the scattering problem of the recent~\cite{Hamedalpha}). We believe that an understanding of the approximate conservation laws associated to the Newman--Penrose constants of the Teukolsky equations will play a crucial role here, similarly to~\cite{Kerrburger3}.
\subsubsection{A resolution of Conjectures~\ref{conj1},~\ref{conj2}}
In turn, once a detailed understanding of the Teukolsky equations is obtained, we will attempt to resolve Conjectures~\ref{conj1},~\ref{conj2} for the Einstein vacuum equations themselves. This will, in particular, require a detailed understanding of the scattering problem for the Einstein vacuum equations with a Minkowskian $i^-$, which we hope to obtain in the not too distant future.
In the context of resolving the above conjectures, we will also give a detailed explanation and enhancement of Christodoulou's argument~\cite{CHRISTODOULOU2002}, in which we hope to obtain \eqref{eq:intro:Xi-} \emph{dynamically}.

Once this program is completed, one could finally attempt to tackle the actual $N$-body problem in a fully general relativistic setting, i.e.\ one could attempt to obtain a result similar to Thm.~\ref{thm.intro:timelikecase}. Let us however not yet speculate how this would look like. For now, we hope that it suffices to say that one of the most interesting aspects of such a problem would be a rigorous justification for the quadrupole approximation, arguably one of the most important tools in general relativity.

\subsection{Structure of the paper}
The remainder of this paper (corresponding to the "Counter-Examples" part of the title) is structured as follows: 
 We first reduce the spherically symmetric Einstein-Scalar field system to a system of first-order equations and set up the notation that we shall henceforth work with in section~\ref{sec:system of equations}. 
 We sketch the specialisation to the linear case, i.e.\ to the case of the wave equation on a fixed Schwarzschild background, in section~\ref{sec:linear}.
We construct characteristic initial data as outlined above in section~\ref{sec:null} and prove Theorem~\ref{thm.intro:nullcase} in section
\ref{sec:nul:asymptotics}.
We deal with the problem of timelike boundary data in section~\ref{sec:timelike} and prove Theorem~\ref{thm.intro:timelikecase} in section~\ref{sec:refine:limitingargumentover}. Section~\ref{sec:timelike} can, in principle, be read independently of section~\ref{sec:null}, though we recommend reading it after section~\ref{sec:null}.

The scattering results and, in particular, Theorem~\ref{thm.intro:scattering}  are proved in section~\ref{sec:scattering}. This section can be read immediately after section~\ref{sec:null}.

\section*{Acknowledgements}
\addcontentsline{toc}{section}{Acknowledgements}
The author would like to express their gratitude to their supervisor Mihalis Dafermos for suggesting the problem and for many discussions on drafts of the present work. The author would also like to express their gratitude towards the Analysis and Relativity groups at Cambridge and Princeton University, and, in particular, towards John Anderson, Dejan Gajic, Christoph Kehle and Claude Warnick for various helpful discussions and/or comments on drafts of the present work. 
\newpage
\part{Construction of spherically symmetric counter-examples to the smoothness of \texorpdfstring{$\mathcal{I}^+$}{I+} }\label{sec:counterexamples}
In this part of the paper, we will construct two classes of initial data that have a non-smooth future null infinity in the sense that the outgoing derivative of the radiation field $\pv(r\phi)$ has an asymptotic expansion near $\mathcal{I}^+$ that contains logarithmic terms at leading order. 
These examples will be for the spherically symmetric Einstein-Maxwell\footnote{We emphasise that the presence of a Maxwell field is not important to most results, we mainly include it in view of the remarks on the scattering problem on extremal Reissner--Nordstr\"om made in the introduction.}-Scalar field system, with no incoming radiation from $\mathcal{I}^-$ and polynomially decaying initial/boundary data on an ingoing null hypersurface or a timelike hypersurface, respectively. They are motivated by Christodoulou's argument against smooth null infinity, see the introductory remarks in section~\ref{sec:introduction:scalar}.

This part of the paper is structured as follows: 

We first reduce the spherically symmetric Einstein-Maxwell-Scalar field system to a system of first-order equations in section~\ref{sec:system of equations}.

We then construct counter-examples to the smoothness of null infinity that have polynomially decaying data on an ingoing null hypersurface in section~\ref{sec:null}. 

In section~\ref{sec:timelike}, we construct counter-examples with polynomially decaying data on a general timelike hypersurface (e.g.\  on a hypersurface of constant area radius). This latter case will be strictly more difficult than the former, so we advise the reader to first understand the former. Nevertheless, each of the sections can be understood independently of the respective other one.

Our constructions will be fully general relativistic, however, we remark that the non-smoothness of null infinity can already be observed in the linear setting, which we present in sections~\ref{sec:Schwarzschild} and~\ref{sec:linear}. 

We finally discuss implications of our results on the scattering problem on Schwarzschild; in particular, we find that it is essentially impossible for solutions to remain conformally smooth near $\mathcal{I}^+$ if they come from compactly supported scattering data. This is discussed in section~\ref{sec:scattering}. The reader can skip to this section immediately after having read section~\ref{sec:null}.

More detailed overviews will be given at the beginning of each section.

\newpage
\section{The Einstein-Maxwell-Scalar field equations in spherical symmetry}\label{sec:system of equations}
In this section, we introduce the systems of equations that are considered in this paper. We write down the spherically symmetric Einstein-Maxwell-Scalar equations in double null coordinates and transform them into a particularly convenient system of first-order equations in section~\ref{sec:subsec:system of equations}. We then briefly introduce the Reissner--Nordstr\"om family of solutions and discuss the linear setting in sections~\ref{sec:Schwarzschild} and~\ref{sec:linear}.
\subsection{The coupled case}\label{sec:subsec:system of equations}
Throughout this section, we will use the convention that upper case Latin letters denote coordinates on the sphere, whereas lower case Latin letters denote "downstairs"-coordinates. For general spacetime coordinates, we will use Greek letters.

In any double null coordinate system $(u,v)$, the Einstein equations 
    \begin{equation}
        R_{\mu\nu}-\frac 1 2 R g_{\mu\nu}=2T_{\mu\nu}\label{eq:sys:einstein}
    \end{equation}
in spherical symmetry (see~\cite{Dafermos2003} and section 3 of~\cite{Christodoulou1995} for details on the notion of spherical symmetry in this context) can be re-expressed into the following system of equations for the metric 
    \begin{equation}
        g=-\Omega ^2 \dd u\dd v + r^2\,\gamma ,\label{eq:metricindoublenull}
    \end{equation}
where $\gamma$ is the metric on the unit sphere $\mathbb S^2$, $r$ is the area radius function, $\Omega$ is a positive function, and where we assume that $r, \Omega$ are $C^2$:
    \begin{align}
         \pu\pv r &=-\frac{\Omega^2}{4r}\left(1+4\frac{\pv r \pu r}{\Omega^2}\right)+r T_{uv}  ,\label{eq:sys:dudvr}\\
        \pu\pv \log \Omega &= \frac{\Omega^2}{4r^2}\left(1+4\frac{\pv r \pu r}{\Omega^2}\right) - T_{uv} -\frac{\Omega^2}{4 }g^{AB}T_{AB} , \label{eq:sys:dudvlogOmega}\\
        \pu(\Omega^{-2}\pu r) &= -r \Omega^{-2} T_{uu} ,\label{eq:sys:dudur}\\
        \pv(\Omega^{-2}\pv r) &= -r \Omega^{-2} T_{vv}. \label{eq:sys:dvdvr}
    \end{align}
The matter system considered in this paper is represented by the sum of the following two energy momentum tensors:
    \begin{align}
        T^{sf}_{\mu\nu}&= \phi_{;\mu} \phi_{;\nu} - \frac 1 2 g_{\mu\nu}\phi^{;\xi} \phi_{;\xi} ,\\
        T^{em}_{\mu\nu}&=F_{\mu\xi}F^{\xi}{}_{\nu}-\frac 1 4 g_{\mu\nu} F_{\xi o}F^{\xi o}.
    \end{align}
    These are in turn governed by the wave equation and the Maxwell equations, respectively, which can compactly be written as $\nabla^\mu T^{sf}_{\mu\nu}=0=\nabla^\mu F_{\mu\nu}=\nabla^\mu {}^\ast F_{\mu\nu} $.
    
One can show that, in spherical symmetry (assuming no magnetic monopoles\footnote{This is an evolutionary consistent assumption. This is to say that if we exclude magnetic monopoles on data,  then they cannot arise dynamically. More mathematically, this is the statement that if $F_{AB}=0$ on data, then $F_{AB}=0$ everywhere.}), the electromagnetic contribution decouples and can be computed in terms of a constant $e^2$ (the electric charge) and  $r$: 
    \begin{align}
         T^{em}_{ab}=-\frac{e^2}{2r^4}g_{ab},&& T^{em}_{AB}=\frac{e^2}{2r^4}g_{AB}.
    \end{align}
For more details, see~\cite{Dafermos2003}. 
On the other hand, for the scalar field,  one computes directly  
    \begin{align}
        T^{sf}_{uu}=(\pu \phi)^2,&&T^{sf}_{vv}=(\pv \phi)^2 ,&&      T^{sf}_{uv}=0,&&g^{AB}T^{sf}_{AB}=4\Omega^{-2}\pu\phi\pv\phi.
    \end{align}
In particular,  equations \eqref{eq:sys:dudvr}, \eqref{eq:sys:dudvlogOmega} now read:
    \begin{align}
        \pu\pv r &=-\frac{\Omega^2}{4r}\left(1+4\frac{\pv r \pu r}{\Omega^2}\right) -\frac{\Omega^2}{4r} \frac{e^2}{r^2},\label{eq:sys:dudvr2} \\
            \pu\pv \log \Omega &= \frac{\Omega^2}{4r^2}\left(1+4\frac{\pv r \pu r}{\Omega^2}\right) - \frac{e^2\Omega^2}{2r^4}-\pu\phi\pv\phi. \label{eq:sys:dudvlogOmega2}
    \end{align}
Moreover, one derives the following wave equation for the scalar field from $\nabla^\mu T^{sf}_{\mu\nu}=0$:\footnote{Note that if $T^{em}_{\mu\nu}=0$, this would be a consequence of \eqref{eq:sys:einstein} and the Bianchi identities.}
    \begin{align}
        r \pu\pv\phi+\pu r \pv \phi+\pv r \pu\phi=0 .\label{eq:sys:dudvphi}
    \end{align}

We can transform this second-order system into a system of first-order equations by introducing the \textit{renormalised Hawking mass}:
    \begin{equation}
        \varpi:=m+\frac{e^2}{2r}:=\frac r 2 (1-g'(\nabla r,\nabla r))+\frac{e^2}{2r} \label{eq:sys:varpi},
    \end{equation}
where $g'$ is the projected metric $g'=-\Omega^2\dd u\dd v$ and $m$ denotes the \textit{Hawking mass}. In the remainder of the paper, we shall write $g$ instead of $g'$.
As we shall see, the renormalised Hawking mass obeys important monotonicity properties and will essentially allow us to do energy (i.e.\  $L^2$-) estimates, which will usually form the starting point for our estimates, which will otherwise be $L^1$- or $L^\infty$-based. 

Let us now recall the notation introduced by Christodoulou:
    \begin{align}\pu r=\nu, &&\pv r =\lambda\end{align}
and
     \begin{align}r\pu \phi=\zeta,&&r\pv\phi=\theta.\end{align}
Moreover, we write
    \begin{align}\mu:=\frac{2m}{r},&&\kappa:=\frac{\lambda}{1-\mu}=-\frac14\Omega^2\nu^{-1},\end{align}
where the last equality comes from the definition of $m$. 
It is then straightforward to derive equivalence between the system of second-order equations \eqref{eq:sys:dudur}, \eqref{eq:sys:dvdvr}, \eqref{eq:sys:dudvr2}--\eqref{eq:sys:dudvphi} and the following system of first-order equations:\footnote{Notice already that, e.g.,  by controlling $\varpi$ in the $u$-direction, eq.~\eqref{eq:puvarpi} gives us an $L^2$-estimate for $\pu\phi$, assuming sufficient control over $\nu$ and $1-\mu$.}
    \begin{align}
        \pu\varpi&=\frac12(1-\mu)\frac{\zeta^2}{\nu}, \label{eq:puvarpi}\\
        \pv\varpi&=\frac12 \frac{\theta^2}{\kappa},\label{eq:pvvarpi}\\
        \pu\kappa&=\frac{1}{r}\frac{\zeta^2}{\nu}\kappa,\label{eq:pukappa}\\
        \pu\theta&=-\frac{\zeta\lambda}{r},\label{eq:putheta}\\
        \pv \zeta &= -\frac{\theta\nu}{r}.\label{eq:pvzeta}
    \end{align}
From these equations, one derives the following two useful wave equations for $r$ and the radiation field $r\phi$:
    \begin{align}
        \pv\nu=\pu\lambda=\pu\pv r=\frac{2\nu\kappa}{r^2}\left(\varpi-\frac{e^2}{r}\right), \label{eq:pupvr}\\
        \pu\pv(r\phi)=2\nu\kappa\left(\varpi-\frac{e^2}{r}\right)\frac{r\phi}{r^3}. \label{eq:wave}
    \end{align}
In the sequel, we shall mostly work with equations \eqref{eq:puvarpi}--\eqref{eq:wave}.
\subsection{The Reissner--Nordstr\"om/Schwarzschild family of solutions}\label{sec:Schwarzschild}
If one sets $\phi$ to vanish identically in the system of equations \eqref{eq:puvarpi}--\eqref{eq:pvzeta}, then, by (a generalisation of) Birkhoff's theorem -- which essentially follows from equations \eqref{eq:puvarpi}, \eqref{eq:pvvarpi} --  all \textit{asymptotically flat} solutions\footnote{In contrast to the $e^2=0$-case, there exist spherically symmetric solutions to the Einstein-Maxwell equations which are not asymptotically flat.} belong to the well-known Reissner--Nordstr\"om family of solutions, which contains as a subfamily the Schwarzschild family  (corresponding to the case where also $e^2=0$).

Let us, for the moment, go back to the four-dimensional picture and restrict to the physical parameter range $M\geq 0$, $|e|\leq M$ ($|e|=M$ corresponding to the \textit{extremal} case). Then, the exteriors of this family of spacetimes are given by the family of Lorentzian manifolds $(\mathcal M_{M,e},g_{M,e})$, with $$\mathcal M_{M,e}=\mathbb{R}\times (M+\sqrt{M^2-e^2},\infty)\times \mathbb S^2$$ covered by the coordinate chart $(t,r,\vartheta,\varphi)$, where $t\in\mathbb R$, $r\in(M+\sqrt{M^2-e^2},\infty)$, and where $\vartheta$, $\varphi$  are the standard coordinates on the sphere, and with $g_{M,e}$ given in these coordinates by
\begin{equation}
g_{M,e}=-D(r)\dd t^2+\frac{1}{D(r)}\dd r^2+r^2\(\dd\vartheta^2+\sin^2\vartheta \dd \varphi^2\).
\end{equation}
Here, $D(r)$ is given by $D(r)=1-\frac{2M}{r}+\frac{e^2}{r^2}$. By introducing the tortoise coordinate 
\begin{equation}
r^*(r):=R+\int_{R}^r D^{-1}(r')\dd r'
\end{equation}
for some $R> M+\sqrt{M^2-e^2}$ and further introducing the (Eddington--Finkelstein) coordinates $2u=t-r^*(r)$, $2v=t+r^*(r)$, one can bring the metric into the double null form \eqref{eq:metricindoublenull} with $\Omega^2=4D(r)$. One then has $\varpi\equiv M$ and $\lambda=-\nu=D(r)$.
\subsection{Specialising to the linear case}\label{sec:linear}
We claimed in the introduction that the results that we will obtain for the coupled Einstein-Maxwell-Scalar field system can also be applied to the linear case, i.e.\ to the case of the wave equation \eqref{eq:wave} on a fixed Reissner--Nordstr\"om background. 

In that case, the right-hand sides of eqns.~\eqref{eq:puvarpi}--\eqref{eq:pukappa} are replaced by zero, whereas the remaining equations remain unchanged, with $\varpi\equiv M$ a constant.
This severely simplifies most proofs in the present paper. 
However, there is one ingredient that seems to be lost at first sight: the energy estimates (see \eqref{eq:ebinvdirection}, \eqref{eq:ebinudirection})!
These are, for instance, used for obtaining preliminary decay, $|\phi|\lesssim r^{-1/2}$, for the scalar field in \eqref{usageofenergyestimates}. 
However, in the linear case, one can obtain these very estimates  \eqref{eq:ebinvdirection}, \eqref{eq:ebinudirection} by an application of the divergence theorem to $\nabla^\mu (T_{\mu\nu}^{sf}\boldsymbol{K}^\nu)$ in a null rectangle, where $\boldsymbol{K}$ is the static Killing vector field of the Reissner--Nordstr\"om metric (given by $\partial_t$ in $(t,r)$-coordinates). 
In fact, the divergence theorem implies that the 1-form (for details, see  section~11 of~\cite{Dafermos2005})
\begin{equation}
\eta:=\frac{1}{2}(1-\mu)\frac{\zeta^2}{\nu}\dd u+\frac{1}{2}\frac{\theta^2}{\kappa}\dd v
\end{equation}
is closed, $\dd \eta=0$, and one can thus define a 0-form $\varpi'$ via $\dd \varpi'=\eta$ and by demanding that $\varpi'=M$ on past null infinity. The quantity $\varpi'$ then obeys the exact same equations as $\varpi$ does in the coupled case. This means that one can repeat all the estimates of the present paper, \textit{mutatis mutandis,} in the uncoupled case. In particular, once we show Theorem~\ref{thm.intro:timelikecase} from Part~\ref{sec:part1}, Theorem~\ref{thm.intro:linearcase} will follow \textit{a fortiori}.

%
%
\subsection{Conventions}
In the remainder of the paper, we shall typically consider functions defined on some set $\mathcal D$. We then write $f\sim g$ if there exist uniform positive constants $A,B$ such that $Af\leq g\leq Bf$ on $\mathcal D$. Similarly, we write $f=\mathcal O(g)$ if there exists a uniform constant $A>0$ such that $|f|\leq Ag$. Occasionally, we shall write that $f\sim g$ on some subset of $\mathcal D$. In this case, the constants $A,B$ may also depend on the subset. Similarly for $f=\mathcal O(g)$. 

\newpage
\section{Case 1: Initial data posed on an ingoing null hypersurface}\label{sec:null}

In this section, we consider the semi-global characteristic initial value problem with polynomially decaying data on an ingoing null hypersurface and no incoming radiation from past null infinity to the future of that null hypersurface. 

As the case of initial data on an ingoing null hypersurface is significantly simpler than that with boundary data on a timelike hypersurface presented in section~\ref{sec:timelike}, and since, in particular, the relevant local existence theory is well-known, we will only present \textit{a priori estimates} in this section, i.e., we will assume that a sufficiently regular solution that restricts correctly to the initial data, and that "possesses" past and future null infinity as well as no anti-trapped or trapped surfaces, exists, and then show the relevant estimates on this assumed solution. 
We hope that this will allow the reader to more easily develop a tentative understanding of the main argument. The left out details of the proof of existence will then be dealt with in section~\ref{sec:timelike}.
    
We shall first explicitly state our assumptions in section~\ref{sec:null:assumptions}. 
The middle part of the section will be devoted to showing that the geometric quantities $\nu,\lambda,\kappa,\varpi$ etc.\ remain bounded for large enough negative values of $u$ in section~\ref{sec:null:energyboundedness}.
 We then use the wave equation \eqref{eq:wave} to derive sharp decay rates for the scalar field and its derivatives in section~\ref{sec:null:purphi}.
 Equipped with these sharp rates, we can then upgrade all the previous estimates on $\nu,\lambda$ etc.\ to asymptotic estimates. 
 This will finally allow us to obtain an asymptotic expansion of $\pv(r\phi)$ near future null infinity in section~\ref{sec:nul:asymptotics}. This last section is thus also the section where Thm.~\ref{thm.intro:nullcase} is proved (see Thm.~\ref{thm:null:asymptotics of dvrphi}).

 \subsection{Assumptions and initial data}\label{sec:null:assumptions}
 \subsubsection{Global \textit{a priori} assumptions}
 Let $\mathbb{R}^2$ denote the standard plane, and call its double null coordinates $(u,v)$. Fix a constant $M>0$, and assume
  that we have a rectangle (see Figure~\ref{fig:5} below)
 \begin{equation}\label{eq:null:DU}
     \mathcal{D}_{U}:=(-\infty,U]\times [1,\infty)\subset{\mathbb{R}^2}
 \end{equation}
 with $U<-2M$, and denote, for $u\in (-\infty,U]$, the sets $\mathcal{C}_u:=\{u\}\times[1,\infty)$  as \textit{outgoing null rays} and, for $v\in [1,\infty)$, the sets $\mathcal{C}_v:=(-\infty,U]\times \{v\}$ as \textit{ingoing null rays}. 
 We furthermore write $\mathcal{C}_{v=1}:=\mathcal{C}_{\mathrm{in}}$, and we \textit{colloquially refer to} $\{-\infty\}\times[1,\infty)$ as $\mathcal{I}^-$ or \textit{past null infinity}, to $(-\infty,U]\times\{\infty\}$ as $\mathcal{I}^+$ or \textit{future null infinity}, and to $\{-\infty\}\times\{\infty\}$ as $i^0$ or \textit{spacelike infinity}.
 
 On this rectangle $\mathcal{D}_{U}$, we assume that a strictly positive $C^3$-function $r(u,v)$, a non-negative $C^2$-function $ m(u,v)$, a $C^2$-function $\phi(u,v)$ and a constant $e^2>0$ are defined and obey the following properties:

The function $r$ is such that, along each of the ingoing and outgoing null rays, it tends to infinity, i.e., $\sup_{C_u}r(u,v)=\infty$ for all $u\in(-\infty,U]$, and $\sup_{C_v}r(u,v)=\infty$ for all $v\in[1,\infty)$.
 We moreover assume that, throughout $\mathcal{D}_{U}$,
 \begin{align}
    \pu r=	\nu<0\label{eq:nu-},\\
    \pv r=	\lambda>0\label{eq:lambda+},
\end{align}
 that $\nu=-1$ along $\mathcal{C}_{\mathrm{in}}$, and that $r(U,1)=r_1=-U>0$. We also assume that $\lim_{u\to-\infty}\lambda(u,v)=1$ for all $v\in[1,\infty).$
 
 Concerning $m$, we assume that  
 \begin{equation}\label{eq:null:ass:kappapositive}
     \frac{\lambda}{1-\mu}=\kappa>0
 \end{equation}
 is a strictly positive quantity and that 
 \begin{equation}
     \lim_{u\to -\infty}m(u,v)=M>0\label{eq:null:Hawking}
 \end{equation}
 for all $v\in[1,\infty)$.
 
 On the function $\phi$, we make the assumptions that, along $\mathcal{C}_{\mathrm{in}}$, it obeys
 \begin{equation}\label{eq:nullcase:ass:limit}
    r^p\frac{\pu (r\phi)}{\nu}+\frac{\Phi^-}{p-1}=\mathcal{O}(r^{-\epsilon}).
\end{equation}
for some constants $\Phi^-\neq 0$, $p>1$ and $\epsilon\in(0,1)$, and that
  \begin{align}
    \lim_{u\to-\infty}	 r\phi(u,v)=0=\lim_{u\to-\infty}\pv(r\phi)(u,v) \label{eq:null:idI-}
\end{align}
for all  $v\in[1,\infty)$.

Finally, we assume that, throughout $\mathcal{D}_{U}$, equations \eqref{eq:puvarpi}--\eqref{eq:pvzeta} hold \textit{pointwise}.
\begin{figure}[htbp]
\floatbox[{\capbeside\thisfloatsetup{capbesideposition={right,top},capbesidewidth=4cm}}]{figure}[\FBwidth]
{\caption{The Penrose diagram of $\mathcal D_{U}$. It contains no black or white holes and, correspondingly, no trapped or anti-trapped surfaces (cf.\ \eqref{eq:nu-}, \eqref{eq:lambda+}). See also~\cite{Dafermos2005b} for an explanation of these notions.}\label{fig:5}}
{  
 \includegraphics[width = 155pt]{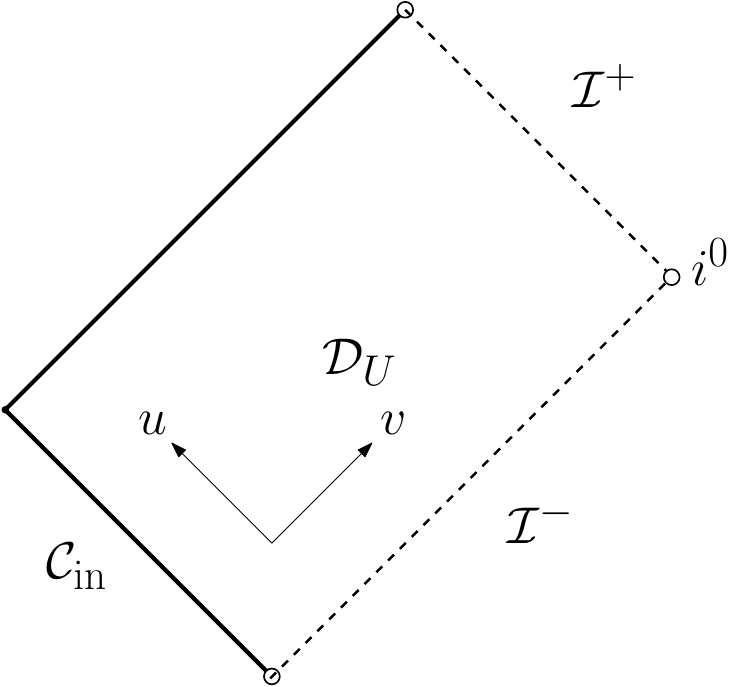}
}
\end{figure}

The reader familiar with Penrose diagrams may refer to the Penrose diagram above (Figure~\ref{fig:5}), where the geometric content of these assumptions is summarised. The reader unfamiliar with Penrose diagrams may either ignore this remark of refer to the appendix of~\cite{Dafermos2005} for a gentle introduction to Penrose diagrams.

\subsubsection{Retrieving the assumptions}
By essentially considering solutions to the spherically symmetric Einstein-Maxwell-Scalar field system with characteristic initial data which satisfy $\nu=-1$ as well as \eqref{eq:nullcase:ass:limit} on an ingoing null hypersurface $\mathcal{C}_{\mathrm{in}}$, and which satisfy $\lambda=1$, \eqref{eq:null:Hawking} and \eqref{eq:null:idI-} on $\mathcal{I}^-$ (and by a limiting argument), we will, in section~\ref{sec:timelike} (cf.\ Thm.~\ref{thm:timelike:final}), prove the following:
\begin{prop}\label{prop:exis4}
Given a set $\mathcal D_U$ as in \eqref{eq:null:DU}, there exists a unique triplet of functions $(r,\phi,m)$ such that the above assumptions are satisfied, with the uniqueness being understood in the sense  of Remark~\ref{rem:uniqueness}.
\end{prop}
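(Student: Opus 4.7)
The plan is an approximation argument that trades the ideal boundary condition at $\mathcal{I}^-$ for a sequence of finite-retarded-time problems. For each $u_n < U$, consider the mixed characteristic initial value problem on the finite rectangle $\mathcal{D}_n := [u_n,U]\times[1,\infty)$, with data on $\mathcal{C}_{\mathrm{in}}$ given by the original profile (i.e.\ $\nu = -1$ and the profile of $r\phi$ satisfying \eqref{eq:nullcase:ass:limit}), and data on $\mathcal{C}_{u_n}$ chosen consistent with the expected Schwarzschildean limit at $\mathcal{I}^-$, namely $\lambda \equiv 1$, $r\phi \equiv 0$, $\pv(r\phi) \equiv 0$, and $\varpi \equiv M$. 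Local existence for each of these problems follows from a standard Picard iteration applied to the first-order system \eqref{eq:puvarpi}--\eqref{eq:pvzeta}.

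The first main task is to extend each approximate solution semi-globally to the full rectangle $\mathcal{D}_n$, with bounds uniform in $n$. The strategy is a bootstrap: assume conservative upper and lower bounds on $\nu$, $\lambda$, $\kappa$, and $1-\mu$ on a sub-rectangle, use the monotonicity of $\varpi$ encoded in \eqref{eq:puvarpi}, \eqref{eq:pvvarpi} together with the transport equations \eqref{eq:pukappa}, \eqref{eq:putheta}, \eqref{eq:pvzeta} to improve these bounds, and then extend by a continuity argument. Decay in $|u|$ for $r\phi$ is propagated from $\mathcal{C}_{\mathrm{in}}$ via \eqref{eq:wave}, with the favorable $r^{-3}$ weight on the right-hand side absorbing the slow $|u|^{-1}$ decay (exactly as in the heuristic \eqref{eq:intro:heuristic}). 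This part essentially reproduces the estimates foreshadowed in sections~\ref{sec:null:energyboundedness}--\ref{sec:null:purphi}.

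The crux of the proof is the Cauchy property of the sequence $\{(r_n,\phi_n,m_n)\}$ as $u_n \to -\infty$. For $u_n < u_m < U$, one restricts solution $n$ to $\mathcal{D}_m$ and considers the differences of the two solutions, which on $\mathcal{C}_{u_m}$ are controlled by the evolution of solution $n$ between $u_n$ and $u_m$ and are only \emph{polynomially} small in $|u_m|^{-1}$. One then propagates this smallness across $\mathcal{D}_m$ via a Grönwall-type argument on the system satisfied by the differences. The main obstacle is that any loss of $u$-weights in the Grönwall step would destroy the argument; the estimates must track the $|u|$-decay sharply, in particular across \eqref{eq:wave}. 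Once the sequence is shown to be Cauchy in the appropriate topology, the limiting triplet $(r,\phi,m)$ inherits \eqref{eq:puvarpi}--\eqref{eq:pvzeta} pointwise by continuity of the limit, and the conditions at $\mathcal{I}^-$ (vanishing of $r\phi$ and $\pv(r\phi)$, and $m \to M$, $\lambda \to 1$) are inherited from the approximate data together with the uniform estimates.

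Finally, uniqueness within the class of Remark~\ref{rem:uniqueness} follows by the same Grönwall-type machinery applied directly to the difference of two candidate solutions sharing data on $\mathcal{C}_{\mathrm{in}}$ and the prescribed limiting behavior at $\mathcal{I}^-$: the difference then satisfies a homogeneous first-order system with vanishing data at $u = -\infty$, and hence must vanish throughout $\mathcal{D}_U$. All of this will be carried out in detail in section~\ref{sec:timelike}, where the more delicate case of timelike boundary data is treated and the present characteristic case arises as a byproduct.
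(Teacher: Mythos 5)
Your overall strategy---finite approximations indexed by $u_n$, uniform bootstrap bounds, a Cauchy-sequence argument via Gr\"onwall on differences, and uniqueness from the same machinery---is in essence the approach the paper takes (Prop.~\ref{prop:exis4} is stated as following \emph{a fortiori} from the proof of Thm.~\ref{thm:timelike:final}). However, the specific data you prescribe on the $n$-th rectangle are internally inconsistent at the corner $(u_n,1)$: the data on $\mathcal{C}_{\mathrm{in}}$ force $r\phi(u_n,1)\sim \Phi^-|u_n|^{-p+1}\neq 0$, whereas your prescription on $\mathcal{C}_{u_n}$ imposes $r\phi(u_n,1)=0$. Moreover, $\varpi\equiv M$ along $\mathcal{C}_{u_n}$ is incompatible with a nonzero corner value of $\phi$: if $\pv(r\phi)=0$ on $\mathcal{C}_{u_n}$ but $\phi(u_n,1)\neq 0$, then $\theta=-\lambda\phi\neq 0$ there and $\pv\varpi=\tfrac12\theta^2/\kappa>0$, so $\varpi$ cannot remain constant. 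This corner mismatch is not a bookkeeping detail: the Goursat problem is over-determined with your data, so even local well-posedness of the $n$-th approximating problem fails.

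The paper avoids this by cutting off the non-characteristic data to vanish near $u=u_n$; concretely, in the (harder) timelike-boundary version it replaces $\hat\phi$ on $\Gamma$ by $\chi_k\hat\phi$ with $\chi_k$ a smooth cut-off vanishing for $u\leq -k$, and the analogous fix in the characteristic case is to cut off $r\phi|_{\mathcal{C}_{\mathrm{in}}}$ so that it vanishes near the corner of the $n$-th rectangle. Each finite solution is then \emph{exactly} Schwarzschild for $u$ beyond the cut-off, which both restores corner compatibility and makes the Cauchy argument structurally cleaner: two approximants with indices $n>m$ agree \emph{identically} for $u\leq -n$ (Region~1 of the paper's proof), rather than merely up to polynomially small errors, and the Gr\"onwall step then only has to cope with the intermediate region and the fixed region near $U_0$. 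With this modification, the remainder of your argument (uniform bootstrap, Cauchy property with careful tracking of $|u|$-weights through the borderline Gr\"onwall exponential, inheritance of the $\mathcal{I}^-$ conditions, and uniqueness via the same differences) goes through as you outline, and indeed matches what the paper does in Theorems~\ref{thm:timelike:final} and~\ref{thm:timelike:final!!!}.
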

The metric associated to this solution is then given by~\eqref{eq:metricindoublenull}, with $ -\Omega^2=4\nu\kappa$. 
 
%

\begin{rem}\label{remark:null:bondimass2}
In order to see why condition \eqref{eq:null:idI-} is the correct interpretation of the no incoming radiation condition, we recall from section~\ref{sec:intro:CHR} that the statement of no incoming radiation should be interpreted as the Bondi mass along past null infinity being a conserved quantity. In spherical symmetry, the definition of the Bondi mass as limit of the Hawking mass $m$ (or $\varpi$) is straight-forward. The analogue to the Bondi mass loss formula \eqref{eq:Argument:Bondimass} then becomes \eqref{eq:puvarpi}, or, formulated with respect to the past, \eqref{eq:pvvarpi}.
We thus see that the analogue to $\Xi$ is given by $\lim_{\mathcal I^+}\zeta$ (or by $\lim_{\mathcal I^-}\theta$ in the past).
\end{rem}

\subsection{Coordinates and energy boundedness}\label{sec:null:energyboundedness}

Note that the following consistency calculation
$$r(u,1)-r_1=-\int_{u}^{U}\nu \dd u'=U-u$$ 
confirms that, as $\mathcal{I}^-$  is approached along $v=1$,\footnote{In the sequel, we will show that $\nu\sim -1$ and $\lambda\sim 1$ throughout $\mathcal D_{U_0}$, so one can do a similar consistency calculation for any $\mathcal C_v$ or $\mathcal C_u$.} $r$ tends to infinity. 
Moreover, one sees from this equation that, with this choice of $u$-coordinate, $u=-r$ along $\mathcal{C}_{\mathrm{in}}$. 

In the remainder of the section, we will want to restrict to sufficiently large negative values of $u$ in order to be able to make asymptotic statements. Therefore, we introduce the set
\begin{equation}\label{eq:DutoDU0}
\mathcal D_{U_0}:=\mathcal D_{U}\cap\{u\leq U_0\}
\end{equation}
for some sufficiently large negative constant $U_0$ whose choice will only depend on $M$, $e^2$, $\Phi^-$, $p$, $\epsilon$ and the implicit constant in the RHS of \eqref{eq:nullcase:ass:limit}. 
Our first restriction on $U_0$ will be the following: Since $|u|=r$  along $\mathcal{C}_{\mathrm{in}}$, we shall from now on assume that
    \begin{equation}\label{eq:nullcase:ass:limit2}
    \frac{|\Phi^-|}{2(p-1)}|u|^{-p}\leq|\pu(r\phi)|\leq 2\frac{|\Phi^-|}{(p-1)} |u|^{-p}
    \end{equation}
 along $\mathcal{C}_{\mathrm{in}}\cap\{u\leq U_0\}$. In view of  assumption \eqref{eq:nullcase:ass:limit}, this indeed holds for sufficiently large values of $U_0$.

In a first step, we will now prove energy boundedness along $\mathcal{C}_{\mathrm{in}}$, i.e.\ bounds on $m$ and $\varpi$. 
 
\begin{prop}[Energy boundedness on $\mathcal C_{\mathrm{in}}\cap\{u\leq U_0\}$]
    For sufficiently large negative values of $U_0$, we have along $\mathcal{C}_{\mathrm{in}}\cap\{u\leq U_0\}$, where  $\mathcal{C}_{\mathrm{in}}$ is  as described in the assumptions of section~\ref{sec:null:assumptions}:
        \begin{equation}
            \frac{M}{2}<m,\varpi< 2M .\label{eq:null:boundednessofvarpialongCin}
        \end{equation}  
\end{prop}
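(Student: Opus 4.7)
The plan is to exploit the monotonicity of $\varpi$ in the $u$-direction encoded in equation \eqref{eq:puvarpi}. Since $\nu < 0$, $1 - \mu = \lambda/\kappa > 0$ by assumption \eqref{eq:null:ass:kappapositive}, and $\zeta^2 \geq 0$, we read off $\partial_u \varpi \leq 0$, i.e., $\varpi$ is non-increasing along $\mathcal{C}_{\mathrm{in}}$. On $\mathcal{C}_{\mathrm{in}}$, the coordinate choice gives $r(u,1) = -u$, so $e^2/(2r) \to 0$ as $u \to -\infty$; combined with \eqref{eq:null:Hawking} this yields $\lim_{u \to -\infty} \varpi(u,1) = M$. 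Monotonicity then gives $\varpi(u,1) \leq M$ on all of $\mathcal{C}_{\mathrm{in}}$, immediately providing the upper bound $\varpi < 2M$ and, via $m = \varpi - e^2/(2r) < \varpi$, also $m < 2M$.

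For the lower bound, I would integrate \eqref{eq:puvarpi} directly and estimate the resulting flux. Using $\nu = -1$ on $\mathcal{C}_{\mathrm{in}}$,
\begin{equation*}
M - \varpi(u,1) \;=\; \tfrac{1}{2}\int_{-\infty}^{u}(1-\mu)\,\zeta^2 \, du',
\end{equation*}
so it suffices to show this integral is small for $u \leq U_0$ with $|U_0|$ large. The factor $1-\mu$ is bounded by $1$: from $\varpi \leq M$ we get $m \leq \varpi \leq M$, hence $\mu = 2m/r \leq 2M/|u|$, which lies in $(0,1)$ for $|U_0| > 2M$. The key estimate is on $\zeta = r\partial_u\phi$. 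On $\mathcal{C}_{\mathrm{in}}$ the identity $\partial_u(r\phi) = \nu\phi + \zeta = -\phi + \zeta$, together with assumption \eqref{eq:nullcase:ass:limit2}, gives $|\partial_u(r\phi)| \lesssim |u|^{-p}$. Integrating from $u = -\infty$ and using the boundary condition $\lim_{u\to-\infty} r\phi = 0$ from \eqref{eq:null:idI-}, we get $|r\phi|(u,1) \lesssim |u|^{1-p}$ (using $p > 1$), and hence $|\phi|(u,1) \lesssim |u|^{-p}$. Combined, $|\zeta| \lesssim |u|^{-p}$ on $\mathcal{C}_{\mathrm{in}}$.

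Plugging these bounds in, $M - \varpi(U_0, 1) \leq C \int_{-\infty}^{U_0} |u|^{-2p}\, du \leq C'|U_0|^{1-2p}$, with $C'$ depending only on $M$, $e^2$, $\Phi^-$, $p$, and the implicit constants in \eqref{eq:nullcase:ass:limit}. Choosing $|U_0|$ sufficiently large makes this less than, say, $M/4$, so $\varpi > 3M/4 > M/2$. Translating to $m = \varpi - e^2/(2r)$ and enlarging $|U_0|$ further so that $e^2/(2|U_0|) < M/4$ yields the lower bound $m > M/2$ as well. There is no serious obstacle at this stage: the argument is essentially a direct integration exploiting the sign of \eqref{eq:puvarpi} together with the quantitative initial-data decay \eqref{eq:nullcase:ass:limit2}. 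The real difficulties -- a genuine bootstrap, propagating bounds transversally off $\mathcal{C}_{\mathrm{in}}$ into $\mathcal{D}_{U_0}$, and controlling $(1-\mu)$ where monotonicity alone is not available -- will only arise in the subsequent sections \ref{sec:null:purphi}--\ref{sec:nul:asymptotics}.
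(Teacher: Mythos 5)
Your proof is correct and reaches the same conclusion by a slightly more streamlined route than the paper. The shared core of both arguments is the pointwise estimate $|\zeta|\lesssim|u|^{-p}$ on $\mathcal{C}_{\mathrm{in}}$, obtained exactly as you do: $\zeta=\pu(r\phi)+\phi$ (since $\nu=-1$ there), combined with \eqref{eq:nullcase:ass:limit2}, the vanishing of $r\phi$ at $\mathcal{I}^-$, and $r(u,1)=|u|$. Where the two arguments part ways is in handling the transport equation for $\varpi$. The paper expands $1-\mu=1+e^2/r^2-2\varpi/r$ to turn \eqref{eq:puvarpi} into a linear first-order ODE in $\varpi$, solves it exactly via an integrating factor to obtain the explicit formula \eqref{eq:null:varpiintegral}, and then bounds both terms in that formula using the $\zeta$-decay. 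You instead observe up front that $\pu\varpi\le 0$ (since $1-\mu=\lambda/\kappa>0$ by \eqref{eq:lambda+}, \eqref{eq:null:ass:kappapositive} and $\nu<0$), which together with the limiting condition $\varpi\to M$ at $i^-$ gives the one-sided bound $\varpi\le M$ for free; having also $m\le\varpi\le M$ and $m\ge 0$, you may then crudely bound $1-\mu\le 1$ and integrate \eqref{eq:puvarpi} directly, without any integrating factor, to control the deficit $M-\varpi$. This makes it more transparent that the only quantitative input is the smallness of $\int^{U_0}\zeta^2\,du'$, and it avoids the exponential bookkeeping in \eqref{eq:null:varpiintegral}. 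Both routes invoke the same hypotheses and deliver the same constants up to relabeling, so the difference is one of presentation rather than substance.
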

\begin{proof}
 We recall the transport equation for $\varpi$ \eqref{eq:puvarpi}:
    \[\pu\varpi=\frac12\(1+\frac{e^2}{r^2}\)\frac{\zeta^2}{\nu} -\frac{\varpi}{r}\frac{\zeta^2}{\nu} .\]
Upon integrating, one sees that, along $\mathcal{C}_{\mathrm{in}}$, the energy $\varpi$ is given by
    \begin{equation}
        \varpi(u,1)=M \e^{-\int_{-\infty}^u\frac{\zeta^2}{\nu r}\dd \tilde{u}}+\e^{-\int_{-\infty}^u\frac{\zeta^2}{\nu r}\dd \tilde{u}}  \int_{-\infty}^u \frac12\(1+\frac{e^2}{r^2}\)\frac{\zeta^2}{\nu}\e^{\int_{-\infty}^{u'}\frac{\zeta^2}{\nu r}\dd \tilde{u}}\dd u' . \label{eq:null:varpiintegral}
    \end{equation}
Now, observe that, on $\mathcal{C}_{\mathrm{in}}$, we have 
    $$\zeta=r\pu\phi=\pu(r\phi)+\phi.$$
Moreover, by \eqref{eq:null:idI-} and \eqref{eq:nullcase:ass:limit2}, we have that
    $$ |r\phi(u,1)|=\left|\int_{-\infty}^u \pu(r\phi)\dd u'\right|\leq\frac{2|\Phi^-|}{1-p}|u|^{-p+1}.$$
Combining the estimate above with \eqref{eq:nullcase:ass:limit2}  and applying the triangle inequality, we thus find 
    \[|\zeta|\leq C_\zeta|u|^{-p},\]
where $C_\zeta=2|\Phi^-|+\frac{2|\Phi^-|}{p-1}$.  Inserting this estimate back into \eqref{eq:null:varpiintegral}, and using that $r(u,v)\geq |u|$ for all $v\geq 1$ as a consequence of $\lambda>0$, we thus find
    \begin{align*}
       \varpi(u,1)  &\leq M \e^{\int_{-\infty}^{U_0} C_\zeta^2 |\tilde{u}|^{-2p-1}\dd \tilde{u}} \\
       +&\e^{\int_{-\infty}^{U_0} C_\zeta^2  |\tilde{u}|^{-2p-1}\dd \tilde{u}}  \int_{-\infty}^{U_0} \frac12\(1+\frac{e^2}{r^2}\)C_\zeta^2  d_\nu^{-1}|u'|^{-2p}\e^{\int_{-\infty}^{u'} C_\zeta^2  |\tilde{u}|^{-2p-1}\dd \tilde{u}} \dd u' .
    \end{align*}
For sufficiently large values of $U_0$, the RHS can be chosen smaller than $2M$. Similarly, one can make the second term in the RHS of \eqref{eq:null:varpiintegral} small enough such that the lower bound for $\varpi$ also follows. The bounds for $m$ then follow from \eqref{eq:sys:varpi} by again choosing $U_0$ sufficiently large.
\end{proof}

Equipped with these energy bounds on $\mathcal C_{\mathrm{in}}$ (to be thought of as initial data), we can now exploit the monotonicity properties of the (renormalised) Hawking mass to extend these bounds into all of $\mathcal D_{U_0}$:

    \begin{prop}[Energy boundedness in $\mathcal D_{U_0}$]\label{prop:null mass boundedness}
    For sufficiently large negative values of $U_0$, we have the following bounds in all of $\DU$, where $\DU$, $r$, $m$ and $\phi$ are as described in \eqref{eq:DutoDU0} and in the assumptions of section~\ref{sec:null:assumptions}: 
    \begin{align}
        	\pu\varpi\leq0 ,\label{eq:null:puvarpi-}\\
        	\pv\varpi\geq0. \label{eq:null:pvvarpi+}
    \end{align}
     In particular, we have
        \begin{equation}
        \frac{M}{2}<m,\varpi,\varpi-\frac{e^2}{r}\leq M \label{eq:null:boundednessofvarpi}.
    \end{equation} 
       Moreover, we have 
    \begin{equation}
    0<d_\mu:=\frac12<1-\mu\leq 1.
    \end{equation}
    \end{prop}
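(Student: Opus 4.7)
The plan is to combine the monotonicity identities \eqref{eq:puvarpi}--\eqref{eq:pvvarpi} with the energy bound on $\mathcal C_{\mathrm{in}}$ from the previous proposition and the limit \eqref{eq:null:Hawking} at $\mathcal I^-$. First I would note that the standing sign assumptions $\nu<0$, $\lambda>0$, $\kappa>0$ force $1-\mu=\lambda/\kappa>0$ throughout $\mathcal D_U$. Hence the signs \eqref{eq:null:puvarpi-} and \eqref{eq:null:pvvarpi+} can be read off directly from the right-hand sides of \eqref{eq:puvarpi} and \eqref{eq:pvvarpi}, without any smallness argument.

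Next I would obtain the upper bound $\varpi\leq M$ by integrating $\partial_u\varpi\leq 0$ inwards from $\mathcal I^-$. Along any outgoing ray $\mathcal C_v$, the sign $\nu<0$ together with the assumption $\sup_{\mathcal C_v}r=\infty$ forces $r(u,v)\to\infty$ as $u\to-\infty$, so $e^2/(2r)\to 0$; combined with \eqref{eq:null:Hawking} this gives $\lim_{u\to-\infty}\varpi(u,v)=M$, and $u$-monotonicity then yields $\varpi\leq M$ on $\mathcal D_{U_0}$. For the matching lower bound I would first sharpen \eqref{eq:null:boundednessofvarpialongCin}: inspecting the integral representation \eqref{eq:null:varpiintegral}, both the exponential prefactor and the source integral can be made arbitrarily close to $1$ and to $0$ respectively by enlarging $|U_0|$, so in fact one may arrange $\varpi(u,1)\geq 3M/4$ on $\mathcal C_{\mathrm{in}}\cap\{u\leq U_0\}$. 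Propagating this by $\partial_v\varpi\geq 0$ yields $\varpi\geq 3M/4>M/2$ throughout $\mathcal D_{U_0}$.

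The bounds on $m$, $\varpi-e^2/r$, and on $1-\mu$ now follow by exploiting that, since $\nu=-1$ along $\mathcal C_{\mathrm{in}}$ (so $r(u,1)=|u|$) and $\lambda>0$ elsewhere, we have $r\geq|U_0|$ everywhere in $\mathcal D_{U_0}$. Enlarging $|U_0|$ further so that $e^2/|U_0|<M/8$, the identity $m=\varpi-e^2/(2r)$ combined with the already proved $M/2<\varpi\leq M$ gives $M/2<m\leq M$ and, analogously, $M/2<\varpi-e^2/r\leq M$. For the Hawking-mass-to-area ratio, $\mu=2m/r\leq 2M/|U_0|\leq 1/2$ as soon as $|U_0|\geq 4M$, yielding $1-\mu\geq 1/2=d_\mu$; the upper bound $1-\mu\leq 1$ is equivalent to $m\geq 0$, which has just been shown.

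I do not foresee any genuine obstacle, as the proposition is really a quantitative bookkeeping of the signs on the right-hand sides of the $\varpi$-evolution equations together with the boundary normalisation at $\mathcal I^-$ and on $\mathcal C_{\mathrm{in}}$. The only mildly delicate point is choosing a single $U_0$ large enough to simultaneously (i) upgrade the previous proposition to $\varpi(u,1)\geq 3M/4$, (ii) render $e^2/|U_0|$ negligible compared to $M$, and (iii) satisfy $|U_0|\geq 4M$; since each of these is a single scalar smallness requirement, a common $U_0$ depending only on $M$, $e^2$, $\Phi^-$, $p$, $\epsilon$ and the implicit constant in \eqref{eq:nullcase:ass:limit} evidently exists.
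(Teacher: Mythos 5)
Your proof is correct and follows essentially the same route as the paper: deduce $1-\mu>0$ from $\kappa>0$ and $\lambda>0$, read off the monotonicity of $\varpi$ in both null directions, pin the global upper bound $\varpi\le M$ by the limit at $\mathcal I^-$ and $\partial_u\varpi\le 0$, pin the lower bound by propagating the $\mathcal C_{\mathrm{in}}$-bound with $\partial_v\varpi\ge 0$, and then absorb the $e^2/r$ corrections and the $\mu$-estimate by taking $|U_0|$ large. (One small terminological slip: $\mathcal C_v$ is the \emph{ingoing} null ray in the paper's convention, though your monotonicity argument along it is of course unaffected.)
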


\begin{proof}
Observe that 
$$\kappa=\frac{\lambda}{1-\mu}$$
 is positive by \eqref{eq:null:ass:kappapositive}. We thus obtain that $1-\mu>0$, so \eqref{eq:puvarpi}, \eqref{eq:pvvarpi}  imply $\pu\varpi \leq 0$, $ \pv\varpi\geq 0$, respectively. From these monotonicity properties, we obtain the following global energy bounds for all $(u,v)\in \DU$ (we recall assumption \eqref{eq:null:Hawking}):
\[ \varpi(U_0,1)\leq \varpi (u,v)\leq M,\]
so the estimate \eqref{eq:null:boundednessofvarpi} for $\varpi$ follows from \eqref{eq:null:boundednessofvarpialongCin}.
Boundedness of $m$ and $\varpi-\frac{e^2}{r}$ again follows by choosing $U_0$ sufficiently large. 
To find the positive lower bound for $1-\mu$, we simply insert the upper bound $m\leq M$ into the definition $\mu=1-\frac{2m}{r}$. This gives $1-\frac{2M}{|U_0|}\leq 1-\mu$. The bound then follows by choosing $U_0\leq-4M$.
\end{proof}
\subsubsection*{The energy estimates}
Energy boundedness (Prop.~\ref{prop:null mass boundedness}) in particular implies the following two crucial energy estimates by the fundamental theorem of calculus (simply integrate eqns.~\eqref{eq:puvarpi}, \eqref{eq:pvvarpi}), which hold throughout $\mathcal{D}_{U_0}$:\todo{Make propo?}
\begin{align}
0\leq\int_{v_1}^{v_2}\frac{1}{2}\frac{\theta^2}{\kappa}(u,v)\dd v\leq \frac{M}{2}, \label{eq:ebinvdirection}\\
0\leq -\int_{u_1}^{u_2}\frac{1}{2}\frac{(1-\mu)}{\nu}\zeta^2(u,v)\dd u\leq \frac{M}{2}. \label{eq:ebinudirection}
\end{align}


Equipped with these energy estimates, we can now control the geometric quantities $\nu,\lambda,\kappa$  in $L^\infty$. 

\begin{prop} \label{prop:null geometric quantities boundedness}
For sufficiently large negative values of $U_0$, there exist positive constants $d_\kappa,C_\lambda,C_\nu, d_\lambda, d_\nu$, depending only on initial data\footnote{By this, we henceforward mean that the constants depend only on $\Phi^-$, $p$, $M$ and $e^2$. We also recall that the choice of $U_0$ only depends on the constants $\Phi^-$, $p$, $\epsilon$, $M$, $e^2$ and the implicit constant in the RHS of \eqref{eq:nullcase:ass:limit2}. }, such that the following inequalities hold throughout all of $\DU$, where $\DU$, $r$, $m$ and $\phi$ are as described in \eqref{eq:DutoDU0} and in the assumptions of section~\ref{sec:null:assumptions}:
\begin{align}
\pu\kappa\leq0,\\
d_\kappa\leq\kappa\leq 1,\label{eq:null:kappacomp1}\\
d_\lambda\leq\lambda\leq C_\lambda,\\
-d_\nu\geq\nu\geq-C_\nu.\label{eq:null:nukomc}
\end{align}
\end{prop}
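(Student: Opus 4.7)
The four bounds are proved in order, each one using the previous ones and the energy estimates \eqref{eq:ebinvdirection}--\eqref{eq:ebinudirection} together with the coordinate identity $r(u,v)\geq r(u,1)=|u|\geq |U_0|$ that follows from $\lambda>0$ and $\nu=-1$ on $\mathcal C_{\mathrm{in}}$. The monotonicity $\pu\kappa\leq 0$ is immediate from \eqref{eq:pukappa} because $\nu<0$ and $\kappa>0$. To get the upper bound $\kappa\leq 1$, I would send $u\to-\infty$ along a fixed $\mathcal C_v$ in the identity $\kappa=\lambda/(1-\mu)$: the assumption $\lambda(-\infty,v)=1$ together with the uniform mass bound \eqref{eq:null:boundednessofvarpi} and $r(-\infty,v)=\infty$ give $\kappa(-\infty,v)=1$, and then the monotonicity $\pu\kappa\leq 0$ propagates this bound to all of $\mathcal D_{U_0}$.

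For the lower bound on $\kappa$, I would integrate \eqref{eq:pukappa} in $u$ at fixed $v$, giving
\begin{equation*}
\kappa(u,v)=\kappa(-\infty,v)\exp\Bigl(\int_{-\infty}^u \frac{\zeta^2}{\nu r}(u',v)\,\dd u'\Bigr).
\end{equation*}
The exponent is negative, but its absolute value can be bounded using $r\geq |U_0|$ and the energy estimate \eqref{eq:ebinudirection} combined with $1-\mu\geq d_\mu=1/2$:
\begin{equation*}
\Bigl|\int_{-\infty}^u \frac{\zeta^2}{\nu r}(u',v)\,\dd u'\Bigr|\leq \frac{1}{|U_0|}\int_{-\infty}^u \frac{-\zeta^2}{\nu}\,\dd u' \leq \frac{2M}{|U_0|}.
\end{equation*}
Choosing $|U_0|$ sufficiently large, this yields $\kappa\geq d_\kappa:=\e^{-2M/|U_0|}>0$. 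The bounds on $\lambda=\kappa(1-\mu)$ follow immediately: the upper bound $\lambda\leq 1$ uses $\kappa\leq 1$ and $1-\mu\leq 1$ (since $m>0$), while the lower bound $\lambda\geq d_\lambda:=d_\kappa d_\mu$ follows from the bounds just obtained.

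For $\nu$, I would integrate the wave equation \eqref{eq:pupvr} rewritten as
\begin{equation*}
\pv \log(-\nu)=\frac{2\kappa}{r^2}\Bigl(\varpi-\frac{e^2}{r}\Bigr)
\end{equation*}
from $v=1$ (where $\nu=-1$) to general $v$, at fixed $u$. Converting the $v$-integral into an $r$-integral via $\dd r=\lambda\,\dd v$, so $\kappa/\lambda=1/(1-\mu)\leq 2$, and using \eqref{eq:null:boundednessofvarpi} to control $|\varpi-e^2/r|\leq M$, one finds
\begin{equation*}
\Bigl|\int_1^v \frac{2\kappa}{r^2}\Bigl(\varpi-\frac{e^2}{r}\Bigr)\,\dd v'\Bigr|
\leq \int_{|u|}^{\infty}\frac{4M}{r^2}\,\dd r=\frac{4M}{|u|}\leq \frac{4M}{|U_0|}.
\end{equation*}
Hence $-\nu(u,v)=\exp\bigl(\mathcal O(|U_0|^{-1})\bigr)$, which, for $|U_0|$ large, gives both $-\nu\leq C_\nu$ and $-\nu\geq d_\nu$ with constants depending only on $M$.

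None of the steps is particularly delicate once the geometric lower bound $r\geq|U_0|$ is in hand; the main point is to choose $|U_0|$ large enough (depending only on $M,e^2,\Phi^-,p,\epsilon$) so that the energy flux through $\mathcal D_{U_0}$ is small compared to $|U_0|$, allowing all exponential corrections to stay close to $1$. The mildly subtle step is the upper bound $\kappa\leq 1$, which requires justifying that $\lim_{u\to-\infty}\lambda=1$ and $\lim_{u\to-\infty}\mu=0$ can both be taken inside the quotient; this uses the assumed limits together with the uniform boundedness of $m$ from Proposition~\ref{prop:null mass boundedness} to conclude $\kappa(-\infty,v)=1$, after which the proof proceeds by pure monotonicity.
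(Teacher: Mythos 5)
Your proposal is correct and follows essentially the same route as the paper: monotonicity of $\kappa$ from \eqref{eq:pukappa}, the energy estimate \eqref{eq:ebinudirection} together with $r\geq|u|\geq|U_0|$ for the lower bound on $\kappa$, the identity $\lambda=\kappa(1-\mu)$, and integration of the wave equation \eqref{eq:pupvr} in $v$ from $\mathcal C_{\mathrm{in}}$ for $\nu$. The only cosmetic difference is that you explicitly justify $\kappa(-\infty,v)=1$ via $\mu\to 0$ and $\lambda\to 1$, whereas the paper treats this limit as directly given by the assumptions.
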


\begin{proof}
It is clear that $\pu\kappa\leq0$ (see eq.~\eqref{eq:pukappa}). Since $\lim_{u\to-\infty} \kappa=\lim_{u\to-\infty}\lambda=1$ by assumption, $\kappa\leq 1$ follows by monotonicity.
Moreover, integrating the equation \eqref{eq:pukappa} for $\pu\kappa$ in $u$, we find, for $(u,v)\in \DU$,
\begin{align*}
\kappa(u,v)&=\kappa(-\infty,v)\e^{-\int_{-\infty}^{u}\frac{1}{r}\frac{\zeta^2}{\nu}\dd u'}\\
		&\geq \e^{\frac{2}{U_0 }\int_{-\infty}^{U_0}\frac{1}{2}\frac{\zeta^2}{\nu}\frac{1-\mu}{d_\mu}\dd u'} \geq \e^{-\frac{1}{U_0d_\mu }M}\geq \e^{-\frac{1}{2d_\mu}}=:d_\kappa,
\end{align*}
where we used $r\geq- U_0$  and $1-\mu>d_\mu$ in the second step, and the energy estimate \eqref{eq:ebinudirection} in the last step. 
We now immediately get bounds on $\lambda=\kappa(1-\mu)$: $d_\mu d_\kappa\leq \lambda \leq 1 $.

To finally show boundedness for $\nu$, we integrate eq.~\eqref{eq:pupvr} from $\mathcal{C}_{\mathrm{in}}$. We find, for $(u,v)\in \DU$:
\begin{align}\label{eq:null:nuestimate}
|\nu(u,v)|&=|\nu(u,1)|\e^{\int_1^v\frac{2\kappa}{r^2}\(\varpi-\frac{e^2}{r}\)\dd v'}.
\end{align}
The  bound \eqref{eq:null:nukomc} then follows in view of
\begin{align*}
\left|\int_1^v\frac{\kappa}{r^2}\(\varpi-\frac{e^2}{r}\)\dd v'\right|\leq \frac{1}{d_\mu}\int_1^v\frac{\lambda}{r^2 }M\dd v'
			\leq \frac{1}{d_\mu}M\(\frac{1}{r(u,1)}-\frac{1}{r(u,v)}\)
            			\leq \frac{1}{d_\mu}M\cdot \frac{1}{|U_0|}.
\end{align*}\end{proof}

\subsection{Sharp upper and lower bounds for \texorpdfstring{$\pu(r\phi)$}{d/du(r phi)} and \texorpdfstring{$r\phi$}{r phi}}\label{sec:null:purphi}
In this section, we will use the previous results, in particular the energy estimates, to derive sharp upper and lower bounds for $r\phi$ and $\pu(r\phi)$.

\begin{thm} \label{thm:null boundedness of phi}
For sufficiently large negative values of $U_0$, there exist positive constants $b_1,b_2,B_1,B_2$, depending only on initial data, such that the following estimates hold throughout $\DU$, where $\DU$, $r$, $m$ and $\phi$ are as described in \eqref{eq:DutoDU0} and in the assumptions of section~\ref{sec:null:assumptions}: 
	\begin{equation}
    b_1|u|^{-p}\leq|\pu(r\phi)|\leq B_1 |u|^{-p}\label{eq:null:thm:boundsforpurphi}
    \end{equation}
and
    \begin{equation}
    b_2|u|^{-p+1}\leq|r \phi| \leq B_2 |u|^{-p+1}. \label{eq:null:thm:boundsonrphi}
    \end{equation}
In particular, both quantities have a sign.
\end{thm}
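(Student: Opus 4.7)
My plan is to combine the wave equation \eqref{eq:wave} with the no-incoming-radiation condition \eqref{eq:null:idI-} via a bootstrap argument throughout $\DU$. Integrating \eqref{eq:wave} in $v$ from $\mathcal C_{\mathrm{in}}$ yields
\[
\pu(r\phi)(u,v)=\pu(r\phi)(u,1)+\int_1^v 2\nu\kappa\!\left(\varpi-\frac{e^2}{r}\right)\!\frac{r\phi}{r^3}(u,v')\,\dd v',
\]
while \eqref{eq:null:idI-} (no incoming radiation on $\mathcal I^-$) gives the representation formula
\[
r\phi(u,v)=\int_{-\infty}^u \pu(r\phi)(u',v)\,\dd u'.
\]
These two identities couple $r\phi$ and $\pu(r\phi)$ in an essential way, and together they should allow the $|u|$-decay encoded on $\mathcal C_{\mathrm{in}}$ by \eqref{eq:nullcase:ass:limit2} to propagate into the bulk.

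\textbf{Upper bounds by bootstrap.} Fix a constant $B_2>0$ to be chosen, and assume as bootstrap hypothesis that $|r\phi|\leq 2B_2 |u|^{-p+1}$ throughout $\DU$. Using Propositions \ref{prop:null mass boundedness} and \ref{prop:null geometric quantities boundedness} to bound $|\nu|$, $\kappa$ and $|\varpi-e^2/r|$, together with the change of variables $\dd v'=\lambda^{-1}\dd r\leq d_\lambda^{-1}\dd r$ to estimate $\int_1^v r^{-3}\,\dd v'\leq (2d_\lambda)^{-1}|u|^{-2}$, I control the correction term in the first identity by $C(B_2)\,|u|^{-p-1}$, where $C(B_2)$ depends linearly on $B_2$ and on initial data. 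Together with \eqref{eq:nullcase:ass:limit2} this yields
\[
|\pu(r\phi)|\leq \Bigl(\tfrac{2|\Phi^-|}{p-1}+C(B_2)|U_0|^{-1}\Bigr)|u|^{-p}.
\]
Inserting this into the representation formula and integrating in $u$ from $\mathcal I^-$ produces a bound of the form $|r\phi|\leq\tfrac{1}{p-1}\bigl(\tfrac{2|\Phi^-|}{p-1}+C(B_2)|U_0|^{-1}\bigr)|u|^{-p+1}$. Choosing first $B_2$ large (e.g.\ $B_2=\tfrac{4|\Phi^-|}{(p-1)^2}$) and then $|U_0|$ large (depending only on initial data), this right-hand side becomes $\leq B_2\,|u|^{-p+1}$, strictly improving the bootstrap hypothesis. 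A standard continuity argument then yields the upper bounds in \eqref{eq:null:thm:boundsforpurphi} and \eqref{eq:null:thm:boundsonrphi} on all of $\DU$.

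\textbf{Lower bounds and sign.} Assumption \eqref{eq:nullcase:ass:limit} implies that $\pu(r\phi)$ has a definite sign on $\mathcal C_{\mathrm{in}}\cap\{u\leq U_0\}$, namely $\operatorname{sgn}(\Phi^-)$, with leading-order size $\tfrac{|\Phi^-|}{p-1}|u|^{-p}$. The correction term estimated above is $O(|u|^{-p-1})$, hence, for $|U_0|$ large enough, is strictly dominated by the leading term uniformly in $v$. Consequently, $\pu(r\phi)(u,v)$ retains its sign throughout $\DU$ and satisfies $|\pu(r\phi)|\geq b_1 |u|^{-p}$ with, say, $b_1=\tfrac{|\Phi^-|}{4(p-1)}$. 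Because $\pu(r\phi)$ does not change sign, the representation formula then gives
\[
|r\phi(u,v)|=\int_{-\infty}^u|\pu(r\phi)(u',v)|\,\dd u'\geq \tfrac{b_1}{p-1}|u|^{-p+1},
\]
which establishes the lower bound in \eqref{eq:null:thm:boundsonrphi} and simultaneously shows that $r\phi$ has a sign.

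\textbf{Main obstacle.} The only genuine difficulty is the mutual coupling between the bounds on $r\phi$ (obtained by integration in $u$) and on $\pu(r\phi)$ (obtained by integration in $v$), which is why a single integration does not close the argument and a bootstrap is required. The bootstrap ultimately closes because the $r^{-3}$-weight on the right-hand side of \eqref{eq:wave} is strongly integrable in $v$, producing an extra $|u|^{-1}$-gain in the correction term; this smallness is precisely what propagates both the upper bounds and, more delicately, the leading-order size and sign of $\pu(r\phi)$.
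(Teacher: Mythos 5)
Your estimates are all correct and match the paper's (including the crucial use of $r(u,1)=|u|$ on $\mathcal C_{\mathrm{in}}$ to convert $r$-decay from the wave equation into $|u|$-decay, and the final lower-bound argument), but there is a genuine gap in the logic of the bootstrap. The domain $\DU$ extends unboundedly towards $\mathcal I^-$, and the only a priori input available before the argument begins is the energy-estimate bound $|\phi|\lesssim r^{-1/2}$, i.e.\ $|r\phi|\lesssim r^{1/2}$. This is $r$-decay, not $|u|$-decay: the quantity $\sup_{u'\leq u,\,1\leq v'\leq v}|u'|^{p-1}|r\phi(u',v')|$ that your absorption implicitly requires to be finite is not a priori finite, since the energy estimate only gives $|u'|^{p-1}|r\phi|\lesssim |u'|^{p-1/2}\to\infty$ as $u'\to-\infty$. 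Equivalently, the "standard continuity argument" has a non-trivial openness step that does not close for $p>5/2$: given the bootstrap bound up to $v=V$, the increment $r\phi(u,v)-r\phi(u,V)=\int_V^v\pv(r\phi)\,\dd v'$ in the unknown strip can only be controlled a priori by $|\pv(r\phi)|\lesssim |u|^{-3/2}$ (obtained by integrating $\pu\pv(r\phi)$ from $\mathcal I^-$ using the energy-estimate bound and $r\geq |u|$), and $|u|^{-3/2}$ dominates $|u|^{-p+1}$ when $p>5/2$, so the bound is not preserved uniformly in $u$ when extending in $v$; there is no uniform $\delta>0$.

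The paper avoids this by not running a bootstrap at all: it iterates, feeding $|\phi|\lesssim r^{-1/2}$ into the wave equation to get a first genuine $|u|$-decay $|r\phi|\lesssim|u|^{-\min(1/2,\,p-1)}$, and re-inserting this to gain one power of $|u|$ per step until the exponent saturates at $p-1$ after finitely many steps. Your absorption argument is then precisely the correct \emph{final} step, once the iteration has established that the relevant supremum is finite — but it is unjustified as the opening move. To repair your proposal you would need to prepend this iteration; indeed the paper explicitly notes that a pure continuity argument is deferred to section~\ref{sec:timelike}, where compactly supported data make the domain effectively compact and the bootstrap can be initiated.
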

    
 \begin{proof}
We will prove this by integrating the wave equation \eqref{eq:wave} along characteristics and using the energy estimates.
In a first step, we will integrate $\pu\phi=\zeta/r$ along an ingoing null ray starting from $\mathcal{I}^-$ and use Cauchy--Schwarz and the energy estimate to infer weak decay for the scalar field: $|\phi|\lesssim r^{-1/2}$. 
In a second step, we will integrate the wave equation \eqref{eq:wave} along an outgoing null ray starting from $\mathcal{C}_{\mathrm{in}}$, using the decay obtained in step 1, to then infer bounds on $|\pu(r\phi)|\lesssim |u|^{-3/2}$.
In a third step, we integrate $\pu(r\phi)  $  from  $\mathcal{I}^-$ to improve the decay of the radiation field: $|r\phi|\lesssim |u|^{-1/2}$. 
We then reiterate steps 2 and 3 until the decay matches that of the initial data on $\mathcal{C}_{\mathrm{in}}$. (Note that one could replace this inductive procedure by a continuity argument; this will be the approach of section~\ref{sec:timelike}.)
    
Let now $(u,v)\in \DU$. Recalling the no incoming radiation condition \eqref{eq:null:idI-}, we obtain
    \begin{align}\begin{split}
        |\phi(u,v)|&=\left|\int_{-\infty}^u\frac{\zeta}{r}\dd u'\right|\\
        &\leq \(\int_{-\infty}^u\frac{\zeta^2}{\nu}(1-\mu)\dd u'\)^{\frac12} \(\int_{-\infty}^u\frac{-\nu}{1-\mu}\frac{1}{r^2}\dd u'\)^{\frac12}\\
        &\leq \sqrt{\frac{M}{2}}\sqrt{\frac{1}{d_\mu}\frac{1}{r(u,v)}}\leq C_1 r(u,v)^{-\frac12},
\end{split}    \label{usageofenergyestimates}\end{align}
where we used the energy estimate \eqref{eq:ebinudirection} in the last estimate.

 Next, by integrating the wave equation \eqref{eq:wave} from $v=1$, we get
    \begin{nalign}\label{eq:null:thm:proof1}
      |\pu(r\phi)(u,v)|&\leq  \frac{2|\Phi^-|}{p-1}|u|^{-p}+\left|\int_{1}^v 2\frac{\nu}{1-\mu} \(\varpi-\frac{e^2}{r} \)\frac{\lambda\phi}{r^2}\dd v'\right|\\
      &\leq  \frac{2|\Phi^-|}{p-1}|u|^{-p}+\int_{1}^v 2\frac{C_\nu}{d_\mu} M C_1\pv\(-\frac{2}{3r^{\frac32}}\)\dd v'\\
      &\leq  \frac{2|\Phi^-|}{p-1}|u|^{-p}+ \frac43\frac{C_\nu C_1}{d_\mu} M r(u,1)^{-\frac{3}{2}},
    \end{nalign} where we used \eqref{eq:nullcase:ass:limit2} to estimate the boundary term.
But now recall that on $\mathcal{C}_{\mathrm{in}}\cap\{u\leq U_0\}$, i.e.\ on $v=1$, we have that  $r$ and $|u|$ are comparable\footnote{This is a property that we will not be able to exploit in the timelike case!}, so we indeed get, for some constant $C_2$:
    \[| \pu(r\phi)|(u,v)\leq C_2 |u|^{-\min(\frac32,p)}.\]
In a third step, we integrate this estimate in the $u$-direction:
    \[|r\phi|(u,v)\leq\int_{-\infty}^u |\pu(r\phi)|\dd u'\leq \max(2,(p-1)^{-1})C_2|u|^{-\min(\frac12,p-1)}.\]
This is an improvement over the decay obtained from the energy estimates. We can plug it back into the second step, i.e.\ into \eqref{eq:null:thm:proof1}, to get improved decay for $\pu(r\phi)$, from which we can then improve the decay for $r\phi$  again. 
The upper bounds \eqref{eq:null:thm:boundsforpurphi}, \eqref{eq:null:thm:boundsonrphi} then follow inductively.

Moreover, we can use the upper bound  $|r\phi|\leq B_2|u|^{-p+1}$ to infer a lower bound on $\pu(r\phi)$:
Integrating again the wave equation \eqref{eq:wave} as in \eqref{eq:null:thm:proof1}, and estimating  the arising integral according to
\begin{align*}
    \left|\int_{1}^v 2\frac{\nu}{1-\mu} \(\varpi - \frac{e^2}{r}\)\frac{\lambda r\phi}{r^3}\dd v'\right|&\leq \int_{1}^v \frac{C_\nu}{d_\mu} M B_2|u|^{-p+1}\pv\(-\frac{1}{r^{2}}\)\dd v'\\
    &\leq \frac{C_\nu B_2}{d_\mu} \frac{M}{r(u,1)^{2}}|u|^{-p+1}\leq \frac{|\Phi^-|}{4(p-1)}|u|^{-p},
\end{align*}
where the last inequality holds true for large enough $U_0$, we obtain the lower bound 
\begin{equation}
    |\pu(r\phi)(u,v)|\geq \frac{|\Phi^-|}{2(p-1)}|u|^{-p}-\frac{|\Phi^-|}{4(p-1)}|u|^{-p}=\frac{|\Phi^-|}{4(p-1)}|u|^{-p}.
\end{equation}
In fact, we get the asymptotic statement that
\begin{equation}
    \pu(r\phi)(u,v)-\pu(r\phi)(u,1)=\mathcal{O}(|u|^{-p-1}).
\end{equation}
The lower bound for $r\phi$ then follows by integrating the lower bound for $\pu(r\phi)$. 
\end{proof}

We have now obtained bounds over all relevant quantities. Plugging these back into the previous proofs allows for these bounds to be refined. This is done by following mostly the same steps but replacing all energy estimates with the improved pointwise bounds we now have at our disposal. 
\begin{cor}\label{cor:nullcase:asymptotics}
For sufficiently large values of $U_0$, we have the following asymptotic estimates throughout $\DU$, where $\DU$, $r$, $m$ and $\phi$ are as described in \eqref{eq:DutoDU0} and in the assumptions of section~\ref{sec:null:assumptions}:
\begin{align}
    |\varpi(u,v)-M|&=\mathcal{O}(|u|^{-2p+1}),\\
    |\nu(u,v)+1|&=\mathcal{O}(|u|^{-1}),\\
    |\kappa(u,v)-1|&=\mathcal{O}(r^{-1}|u|^{-2p+1}),\\
    |\lambda(u,v)-1|&=\mathcal{O}(r^{-1}),\label{eq:cor:lambda}\\
    |\pu(r\phi)(u,v)-\pu(r\phi)(1,v)|&=\mathcal{O}(|u|^{-p-1}).
\end{align}
In particular, since  $|u|^{p-1} r\phi$ takes a limit on initial data as $u\to-\infty$, it takes the same limit everywhere, that is:
\begin{equation}
    \lim_{u\to -\infty} |u|^{p-1} r\phi(u,1)=\lim_{u\to -\infty} |u|^{p-1} r\phi(u,v)=\Phi^-
\end{equation}
for all $v\geq 1$. In particular, we then have
\begin{equation}
    r\phi(u,v)=\frac{\Phi^-}{|u|^{p-1}}+\mathcal{O}(u^{-p+1-\epsilon}).
\end{equation}
\begin{rem}
Notice, in particular, that we obtain that $\zeta \sim |u|^{-2}$ if $p=2$. Using the results below, one can also show that $\zeta=r\pu\phi$ attains a limit on $\mathcal I^+$ and that $\lim_{\mathcal I^+}\zeta\sim |u|^{-2}$ (see Remark \ref{long remark ZetaXi}). 
Comparing \eqref{eq:Argument:Bondimass} with \eqref{eq:puvarpi} (see Remark~\ref{remark:null:bondimass2}), this can be recognised as the direct analogue of the condition that $|\Xi|\sim |u|^{-2}$ from assumption \eqref{eq:intro:Xi-} from section~\ref{sec:intro:CHR}. 
In turn,  \eqref{eq:intro:Xi-}  was motivated by the quadrupole approximation. 
Thus, the case $p=2$ reproduces the prediction of the quadrupole approximation. 
It is therefore the most interesting one from the physical point of view. 
\end{rem}
\begin{rem}\label{rem:proofofthmintrocor}
Note that one can still prove the above corollary if one demands assumption \eqref{eq:nullcase:ass:limit} to hold on $\mathcal I^+$ rather than on $\mathcal C_{\mathrm{in}}$, and if one assumes a positive lower bound on the Hawking mass $m$. In fact, the only calculation that changes in that case is \eqref{eq:null:thm:proof1}: One now integrates $\pu(r\phi)$ from $v=\infty$ rather than from $v=1$. Combined with Theorem~\ref{thm:null:asymptotics of dvrphi} below, this explains the statement of Theorem~\ref{thm.intro.corollary}.
\end{rem}
\end{cor}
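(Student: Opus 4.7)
The plan is to bootstrap the sharp bounds $|\pu(r\phi)|\sim|u|^{-p}$ and $|r\phi|\sim|u|^{-p+1}$ from Theorem~\ref{thm:null boundedness of phi}, together with the pointwise bounds on $\nu,\lambda,\kappa,\varpi$ from Propositions~\ref{prop:null mass boundedness}--\ref{prop:null geometric quantities boundedness}, back into the transport equations \eqref{eq:puvarpi}, \eqref{eq:pukappa}, \eqref{eq:pupvr}, and the wave equation \eqref{eq:wave}. In the original proofs these equations were only integrated against crude energy bounds; once the pointwise rates are available, every integration in the $u$- or $v$-direction gains additional decay, and the resulting estimates are no longer mere upper bounds on the quantities themselves but asymptotic estimates on their deviation from the limiting values.

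Concretely, for $\varpi-M$ I would integrate \eqref{eq:puvarpi} from $\mathcal{I}^-$ in the $u$-direction: using $|\zeta|\lesssim|u|^{-p}$ together with boundedness of $(1-\mu)/\nu$, one obtains $|\varpi-M|\lesssim\int_{-\infty}^{u}|u'|^{-2p}\,\dd u'\lesssim|u|^{-2p+1}$. For $\nu+1$, integrate \eqref{eq:pupvr} in the $v$-direction from $v=1$: the bound $|\pv\nu|\lesssim\lambda/r^2$ and the fact that $\lambda\,\dd v'=\dd r$ along a $u$-constant ray yield $|\nu+1|\lesssim 1/r(u,1)=|u|^{-1}$. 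For $\lambda-1$, integrate the same equation in the $u$-direction from $\mathcal{I}^-$ (where $\lambda\to 1$); the same change of variable $\nu\,\dd u'=\dd r$ along a $v$-constant ray reduces the integral to $\int_{r(u,v)}^{\infty}2\kappa(\varpi-e^2/r')/r'^2\,\dd r'=\mathcal{O}(r^{-1})$. For $\pu(r\phi)$, integrate the wave equation \eqref{eq:wave} in $v$ from $v=1$: since the RHS is bounded by $|r\phi|/r^3\lesssim|u|^{-p+1}\lambda/r^3$, the integral contributes $\mathcal{O}(|u|^{-p+1}/r(u,1)^2)=\mathcal{O}(|u|^{-p-1})$, giving the desired $|\pu(r\phi)(u,v)-\pu(r\phi)(u,1)|=\mathcal{O}(|u|^{-p-1})$.

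The final statement on the limit then follows by writing $r\phi(u,v)=\int_{-\infty}^{u}\pu(r\phi)(u',v)\,\dd u'$, which is legitimate by the no-incoming-radiation condition \eqref{eq:null:idI-}, and substituting the expansion $\pu(r\phi)(u',v)=\pu(r\phi)(u',1)+\mathcal{O}(|u'|^{-p-1})$. The error term integrates to $\mathcal{O}(|u|^{-p})$, which is subleading compared to $r\phi(u,1)$. Since on $\mathcal{C}_{\mathrm{in}}$ one has $|u|=r$ and assumption \eqref{eq:nullcase:ass:limit} combined with one integration gives $|u|^{p-1}r\phi(u,1)\to\Phi^-$, the same limit holds for every $v\in[1,\infty)$.

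The main obstacle I anticipate is the $\kappa$-estimate, which requires producing the factor $r^{-1}$ \emph{outside} the remaining $|u|^{-2p+1}$ decay. Naively integrating $\pu\log\kappa=\zeta^2/(r\nu)$ from $\mathcal{I}^-$ only gives $|\kappa-1|\lesssim|u|^{-2p+1}$, which would not be sharp enough. To recover the extra $r^{-1}$ weight, I would use the lower bound $-\nu\geq d_\nu$ of Proposition~\ref{prop:null geometric quantities boundedness} to infer $r(u',v)\geq r(u,v)+d_\nu(u-u')$ along a constant-$v$ ray, and then, setting $r_0=r(u,v)$ and $s=u-u'$, estimate $\int_{-\infty}^{u}\zeta^2/(r|\nu|)\,\dd u'\lesssim\int_{0}^{\infty}(|u|+s)^{-2p}(r_0+d_\nu s)^{-1}\,\dd s\lesssim r_0^{-1}\cdot|u|^{-2p+1}$. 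All other estimates are essentially mechanical re-runs of the arguments underlying Propositions~\ref{prop:null mass boundedness} and~\ref{prop:null geometric quantities boundedness}, with the pointwise bounds of Theorem~\ref{thm:null boundedness of phi} substituted for the energy bounds \eqref{eq:ebinvdirection}, \eqref{eq:ebinudirection}.
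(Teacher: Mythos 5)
Your proposal is correct and is essentially the paper's own (terse) proof: re-integrate the transport equations \eqref{eq:puvarpi}, \eqref{eq:pukappa}, \eqref{eq:pupvr} and the wave equation \eqref{eq:wave} with the sharp pointwise rates of Theorem~\ref{thm:null boundedness of phi} substituted for the energy estimates. The perceived obstacle in the $\kappa$-estimate can be dispatched more simply than your split integral: since $\pu r=\nu<0$, one has $r(u',v)\geq r(u,v)$ for all $u'\leq u$, so
\begin{equation*}
\int_{-\infty}^{u}\frac{\zeta^2}{r|\nu|}\,\dd u'\leq \frac{1}{r(u,v)}\int_{-\infty}^{u}\frac{\zeta^2}{|\nu|}\,\dd u'\lesssim \frac{|u|^{-2p+1}}{r(u,v)}
\end{equation*}
directly, without needing the linear lower bound $r(u',v)\geq r(u,v)+d_\nu(u-u')$.
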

We conclude this subsection with the following observation:
\begin{lemma} \label{prop: null comparibilty of r, v-u}
For sufficiently large values of $U_0$, we have throughout $\DU$, where $\mathcal D_{U_0}$ and $r$ are as described in \eqref{eq:DutoDU0} and in the assumptions of sec.~\ref{sec:null:assumptions}:
    \begin{equation}
   |r(u,v)-(v-u)|=\mathcal{O}(\log(r)) .\label{eq:null:comp of r v-u}
    \end{equation}
    \end{lemma}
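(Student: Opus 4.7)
The plan is to integrate $\lambda = \partial_v r$ along outgoing null rays from the initial hypersurface $\mathcal{C}_{\mathrm{in}}$, where $r$ is explicitly known, and use the asymptotic bound $|\lambda - 1| = \mathcal{O}(r^{-1})$ from Corollary~\ref{cor:nullcase:asymptotics} together with the change of variables $\mathrm{d} r = \lambda \mathrm{d} v'$ to pick up exactly a $\log r$ error.

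More concretely, I would first recall that, along $\mathcal{C}_{\mathrm{in}}$, the normalisation $\nu|_{\mathcal{C}_{\mathrm{in}}} = -1$ combined with the condition $r(U,1) = -U$ gives the identity $r(u,1) = |u|$, so that $r(u,1) - (1-u) = -1$. Then, by the fundamental theorem of calculus,
\begin{equation*}
r(u,v) = |u| + \int_1^v \lambda(u,v')\, \mathrm{d} v' = (v-u) - 1 + \int_1^v \bigl(\lambda(u,v') - 1\bigr)\, \mathrm{d} v',
\end{equation*}
so the problem is reduced to bounding $\int_1^v (\lambda - 1)\,\mathrm{d} v'$ by $\mathcal{O}(\log r)$.

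For the error term, the bound $|\lambda - 1| \leq C r^{-1}$ from \eqref{eq:cor:lambda} together with $\lambda \geq d_\lambda > 0$ from Proposition~\ref{prop:null geometric quantities boundedness} gives
\begin{equation*}
\left|\int_1^v (\lambda-1)\, \mathrm{d} v'\right| \leq \frac{C}{d_\lambda} \int_1^v \frac{\lambda(u,v')}{r(u,v')}\, \mathrm{d} v' = \frac{C}{d_\lambda}\bigl(\log r(u,v) - \log r(u,1)\bigr) = \frac{C}{d_\lambda}\bigl(\log r(u,v) - \log |u|\bigr),
\end{equation*}
after a change of variables $\mathrm{d} r = \lambda\, \mathrm{d} v'$ along the outgoing null ray. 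Since we have restricted to $u \leq U_0 < 0$, we have $|u| \geq |U_0|$, hence $-\log|u| \leq -\log|U_0|$, and the right-hand side is $\mathcal{O}(\log r(u,v))$ as claimed.

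There is no real obstacle here, merely some care with the comparison constants; I expect the step to watch is that one genuinely needs the lower bound $|u| \geq |U_0|$ (guaranteed by working on $\mathcal{D}_{U_0}$) in order to absorb the $-\log|u|$ term into the $\log r$ bound — if $|u|$ were allowed to be small, the estimate would be $\mathcal{O}(\log(r/|u|))$ rather than $\mathcal{O}(\log r)$. One could equally well integrate $\nu$ in the $u$-direction from any fixed outgoing ray, using $|\nu+1| = \mathcal{O}(|u|^{-1})$, but the $v$-direction argument is cleaner since $r$ is known exactly on $\mathcal{C}_{\mathrm{in}}$.
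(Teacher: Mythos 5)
Your proof is correct and follows the same route as the paper's: both fix the problem by computing $r(u,1)=|u|$ from the normalisation $\nu|_{\mathcal C_{\mathrm{in}}}=-1$, then integrate $\lambda$ in $v$ and use the asymptotic $|\lambda-1|=\mathcal O(r^{-1})$ from Corollary~\ref{cor:nullcase:asymptotics} together with the change of variables $\dd r=\lambda\,\dd v'$ to control the error by $\log r$. The paper's version is terser (it absorbs your intermediate steps into one line), but the mechanism — including the implicit use of a lower bound on $\lambda$ to change variables and of $|u|\geq|U_0|$ to drop the $-\log|u|$ term — is identical.
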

    
\begin{proof}
This follows from $r(u,1)-r(U_0,1)=U_0-u$ and the following estimate:
\[r(u,v)-r(u,1)=\int_1^v \lambda \dd v'=(v-1)+\mathcal{O}(\log r(u,v)),\]
where we used the asymptotic estimate \eqref{eq:cor:lambda} for $\lambda$.
\end{proof}

\subsection{Asymptotics of \texorpdfstring{$\pv(r\phi)$}{d/dv(r phi)} near \texorpdfstring{$\mathcal{I}^+$}{I+}, \texorpdfstring{$i^0$}{i0} and \texorpdfstring{$\mathcal{I}^-$}{I-} (Proof of Thm.~\ref{thm.intro:nullcase})}\label{sec:nul:asymptotics}
We are now ready to state the main result of this section, namely the asymptotic behaviour of $\pv(r\phi)$. 
Let us first focus on the most interesting case $p=2$. We have the following theorem:
   
\begin{thm}\label{thm:null:asymptotics of dvrphi}
Let $p=2$ in eq.~\eqref{eq:nullcase:ass:limit}, i.e., let  $\lim_{u\to -\infty} |u| r\phi(u,1)=\Phi^-\neq0$.
Then, for sufficiently large negative values of $U_0$, we obtain the following asymptotic behaviour for $\pv(r\phi)$ throughout $\DU$,  where $\DU$, $r$, $m$ and $\phi$ are as described in \eqref{eq:DutoDU0} and in the assumptions of section~\ref{sec:null:assumptions} (in particular, $M\neq0$):
\begin{equation}
     |\pv(r\phi)|\sim  
\begin{cases}
\frac{\log r-\log|u|}{r^3}, & u=\con,\,\, v \to \infty ,\\
\frac{1}{r^3}, & v=\con,\,\, u \to -\infty,\\
\frac{1}{r^3}, & v+u=\con,\,\, v\to \infty.
\end{cases}
\end{equation}
More precisely, for fixed $u$, we have the following asymptotic expansion  as $\mathcal{I}^+$ is approached:
\begin{equation}
\left|\pv(r\phi)(u,v)+2M \Phi^- r^{-3} \left(\log r-\log(|u|)-\frac32\right)\right|  =\mathcal{O}(r^{-3}|u|^{-\epsilon}). \label{eq:null:thm:B*constant}
\end{equation}
\end{thm}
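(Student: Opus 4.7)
The plan is to integrate the wave equation \eqref{eq:wave} in the $u$-direction starting from past null infinity, using the no-incoming-radiation condition $\lim_{u\to-\infty}\pv(r\phi)=0$ from \eqref{eq:null:idI-} to eliminate the boundary contribution at $\mathcal{I}^-$. This yields the representation
\begin{equation*}
\pv(r\phi)(u,v)=\int_{-\infty}^u 2\nu\kappa\left(\varpi-\frac{e^2}{r}\right)\frac{r\phi}{r^3}\,\dd u'.
\end{equation*}
The strategy is then to insert the sharp asymptotics from Corollary~\ref{cor:nullcase:asymptotics}: for $p=2$ these give $\nu\kappa(\varpi-e^2/r)=-M+\mathcal{O}(|u'|^{-1})$ (using also $r\geq|u'|$ to absorb the $e^2/r$ term) and $r\phi=\Phi^-/|u'|+\mathcal{O}(|u'|^{-1-\epsilon})$. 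The integrand therefore splits into a leading piece $-2M\Phi^-/(|u'|\,r^3)$ and error terms.

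To evaluate the leading integral, I would use Lemma~\ref{prop: null comparibilty of r, v-u} to replace $r(u',v)$ by $v-u'$ in the denominator, substitute $s=-u'$, and apply the partial fraction decomposition
\begin{equation*}
\frac{1}{s(v+s)^3}=\frac{1}{v^3 s}-\frac{1}{v^3(v+s)}-\frac{1}{v^2(v+s)^2}-\frac{1}{v(v+s)^3}.
\end{equation*}
Explicit integration from $s=|u|$ to $s=\infty$ yields
\begin{equation*}
\int_{|u|}^{\infty}\frac{\dd s}{s(v+s)^3}=\frac{1}{v^3}\log\frac{v+|u|}{|u|}-\frac{1}{v^2(v+|u|)}-\frac{1}{2v(v+|u|)^2}.
\end{equation*}
Using $v+|u|=v-u=r+\mathcal{O}(\log r)$ and $v=r-|u|+\mathcal{O}(\log r)$, the first term becomes $r^{-3}(\log r-\log|u|)$ while the last two together contribute $-3/(2r^3)$, both up to remainders that should land in $\mathcal{O}(r^{-3}|u|^{-\epsilon})$. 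Multiplying by $-2M\Phi^-$ produces the expansion \eqref{eq:null:thm:B*constant}.

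From \eqref{eq:null:thm:B*constant}, the three $\sim$-rates in the first display of the theorem are immediate: for $u$ fixed and $v\to\infty$ the logarithm $\log r-\log|u|$ blows up, so $|\pv(r\phi)|\sim r^{-3}\log r$; for $v$ fixed and $u\to-\infty$, or for $u+v=\con$ as $v\to\infty$, one has $r\sim|u|$, so $\log r-\log|u|$ stays bounded and the constant $-3/2$ dominates, yielding $|\pv(r\phi)|\sim r^{-3}$. The non-degeneracy of these equivalences uses the assumptions $M\neq 0$ and $\Phi^-\neq 0$.

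The main technical obstacle is the error analysis. One must show that (a) the error integrals arising from the asymptotic expansions of $\nu,\kappa,\varpi,r\phi$ are $\mathcal{O}(r^{-3}|u|^{-\epsilon})$, and (b) the replacement $r(u',v)\leftrightarrow v-u'$ in the denominator of the leading integrand, which introduces a factor $\log r$ into the error, is also absorbed at the same level. Both reduce to elementary integrals of the form $\int_{|u|}^\infty s^{-1-\epsilon}(v+s)^{-3}\dd s$ and $\int_{|u|}^\infty \log(v+s)\cdot s^{-1}(v+s)^{-4}\dd s$, which one handles by splitting the range at $s=v$ (so that either $v$ or $s$ dominates $v+s$) and integrating directly. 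The $|u|^{-\epsilon}$ decay emerges from the boundary contribution at $s=|u|$ combined with the extra $\epsilon$-weight coming from either the decay of $r\phi$ or the gain of an $|u'|^{-1}$ from the geometric coefficients. A clean way to structure (b) is to avoid the substitution $r\to v-u'$ in the main integrand entirely, instead writing $r^{-3}=(v-u')^{-3}-[r^3-(v-u')^3]/[r^3(v-u')^3]$ and estimating the second term by the telescoping identity $a^3-b^3=(a-b)(a^2+ab+b^2)$ together with $r\sim v-u'$.
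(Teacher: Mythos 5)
Your computation of the fixed-$u$ expansion~\eqref{eq:null:thm:B*constant} essentially reproduces the paper's proof: integrate the wave equation from $\mathcal I^-$, substitute the asymptotics of Corollary~\ref{cor:nullcase:asymptotics}, trade $r$ for $v-u'$ via Lemma~\ref{prop: null comparibilty of r, v-u}, and carry out the partial-fraction decomposition. The partial-fraction primitive you write is correct and equivalent to the paper's $\frac{\log|u|-\log(v-u)}{v^3}+\frac{3v-2u}{2v^2(v-u)^2}$ after the substitution $s=-u'$. The error-management plan in your last paragraph (splitting at $s=v$, using a telescoping difference for the $r\leftrightarrow v-u'$ replacement) is reasonable.

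However, there is a genuine gap in deriving the three $\sim$-rates. You claim they are ``immediate'' from~\eqref{eq:null:thm:B*constant}, but that formula is \emph{not} a uniform estimate on $\mathcal D_{U_0}$: the simplification that replaced $\frac{1}{v^3}\log\frac{v+|u|}{|u|}-\frac{1}{v^2(v+|u|)}-\frac{1}{2v(v+|u|)^2}$ by $r^{-3}\bigl(\log r-\log|u|-\frac32\bigr)$ uses $v\approx r$, which is sound for $u$ fixed, $v\to\infty$, but utterly fails along $v=\mathrm{const}$, $u\to-\infty$, where $v$ stays bounded while $r\to\infty$. The discrepancy between the two forms is then $\mathcal O(r^{-3})$, not $\mathcal O(r^{-3}|u|^{-\epsilon})$, so~\eqref{eq:null:thm:B*constant} simply does not hold in that regime and cannot be used to read off the rate there. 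Worse, on $v=\mathrm{const}$ there is a \emph{nontrivial cancellation} that your reasoning misses: expanding $\log\bigl(1+\tfrac{v}{|u|}\bigr)$ to third order in $v/|u|$, the $|u|^{-1}$ and $|u|^{-2}$ contributions from the logarithmic term cancel exactly against those from $\frac{3v-2u}{2v^2(v-u)^2}$, leaving only a $-\tfrac{1}{3}|u|^{-3}$ remainder; it is this cancellation which produces the clean $r^{-3}$ rate (as the paper itself flags: ``which can be seen by expanding the logarithm \ldots to third order''). Your slogan ``$\log r - \log|u|$ stays bounded and $-3/2$ dominates'' happens to land on a nonzero $r^{-3}$ coefficient, but with the wrong sign and magnitude, and by a mechanism that would not have protected you had the cancellation gone one order further. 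The fix is to derive the three rates directly from the exact partial-fraction expression (your displayed primitive before the $v\approx r$ simplification), treating each direction of approach separately.
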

    Combined with Proposition~\ref{prop:exis4} (and the specialisation to the linear case from section~\ref{sec:linear}), this theorem proves Thm.~\ref{thm.intro:nullcase} from the introduction.
\begin{proof}
 
We plug the asymptotics from Corollary~\ref{cor:nullcase:asymptotics} as well as the estimate \eqref{eq:null:comp of r v-u} into the wave equation~\eqref{eq:wave} to obtain
\[\pu\pv(r\phi)=\frac{-2M \Phi^-}{r^3|u|}+\mathcal{O}(r^{-3}u^{-1-\epsilon})=\frac{2M \Phi^-}{(v-u)^3 u}+\mathcal{O}\left(r^{-3}|u|^{-1-\epsilon}+r^{-4}\log r|u|^{-1}\right).\]
Integrating the above estimate from past null infinity then gives
\begin{align}
    \pv(r\phi)(u,v)-\int_{-\infty}^u\frac{2M \Phi^-}{(v-u')^3 u'}\dd u'= \mathcal{O}(r^{-3}|u|^{-\epsilon}).\label{eq:proof dvrphi1}
\end{align}
We can calculate the integral on the LHS by decomposing the integrand into fractions:
\begin{align}
    \pv(r\phi)(u,v)- 2M \Phi^-\left(\frac{\log|u|-\log(v-u)}{v^3}+\frac{3v-2u}{2v^2(v-u)^2}\right)=\mathcal{O}(r^{-3}|u|^{-\epsilon}).\label{eq:null.precise expansion}
\end{align}
It is then clear that, for fixed $u$ and $v\to\infty$, we have, to leading order,
\begin{equation}
   \pv(r\phi)\sim -\frac{\log r-\log|u|}{r^3}.
\end{equation}
On the other hand, for fixed $v$ and $u\to-\infty$, we have
\begin{equation}
     \pv(r\phi)\sim -\frac{1}{r^3},
\end{equation}
which can be seen by expanding the logarithm $\log(1-v/u)$ to third order in powers of $v/u$.
    
Lastly, if we take the limit along a spacelike hypersurface, e.g.\  along $u+v=0$, we get
\begin{equation}
     \pv(r\phi)\sim -\frac{1}{r^3}.
\end{equation}
    
\end{proof}
\begin{rem}[Similarities to Christodoulou's argument]\label{long remark ZetaXi}
Notice that $\Phi^-$ here plays the same role as $\slashed{\mathcal{D}}^{(3)}\Xi^-$ does in Christodoulou's argument.
Indeed, recall from Remark~\ref{remark:null:bondimass2} that, in our case, the analogue of the radiative amplitude $\Xi$ is $ \lim_{\mathcal I^+}\zeta=\lim_{\mathcal I^+}r\pu\phi$ (this limit exists in view of estimate \eqref{eq:null:thm:B*constant} and the wave equation \eqref{eq:wave}), and that we moreover have 
\begin{equation}
   \lim_{u\to-\infty}\lim_{v\to\infty}u^2\zeta (u,v)=   \lim_{u\to-\infty}\lim_{v\to\infty}u^2(\pu(r\phi)-\nu\phi) (u,v)=\Phi^-;
\end{equation}
and compare equations \eqref{eq:proof dvrphi1}, \eqref{eq:null.precise expansion} to equation \eqref{eq:Argument:betadifference}.
\end{rem}

One can generalise the above proof to integer $p>1$ to find that the asymptotic expansion of $\pv(r\phi)$ will contain a logarithmic term with constant coefficient at $(p+1)$st order.
Here, we will  demonstrate this explicitly only for the case $p=3$ since this case  is of relevance for the black hole scattering problem, as will be explained in  section~\ref{sec:scattering}. However, we provide a full treatment of general integer $p$ \textit{for the uncoupled problem} in the appendix~\ref{APPendixB}, see Theorem~\ref{thm:B}. We also note that, by considering integrals of the type $\int\frac{1}{(v-u)|u|^{p}}\dd u$ for non-integer $p$, one can obtain similar results for non-integer $p$, cf.~footnote~\ref{fn:2}. For instance, if $p\in(1,2)$, we would obtain that $\pv(r\phi)=Cr^{-1-p}+\dots$.

\begin{thm}\label{thm:null:asymptotics of dvrphi,p=3}
Let $p=3$ in eq.~\eqref{eq:nullcase:ass:limit}, i.e., let  $\lim_{u\to -\infty} |u|^2 r\phi(u,1)=\Phi^-\neq 0$.

Then, throughout $\DU$ and for sufficiently large negative values of $U_0$, where $\DU$, $r$, $m$ and $\phi$ are as described in \eqref{eq:DutoDU0} and in the assumptions of section~\ref{sec:null:assumptions} (in particular, $M\neq 0$),  we obtain for fixed $u$ the following asymptotic expansion for $\pv(r\phi)$ along each $\mathcal{C}_u$ as $\mathcal{I}^+$ is approached:
\begin{equation}
\left|\pv(r\phi)(u,v)-\frac{F(u)}{r^3}-6M\Phi^-\frac{\log(r)-\log|u|}{r^4}\right|  =\mathcal{O}(r^{-4}) ,
\end{equation}
where $F(u)$ is given by 
\begin{equation}\label{eq:thmF(u)}
    F(u)=\int_{-\infty}^u \lim_{v\to\infty}(2m\nu r\phi)(u',v)\dd u'=\frac{2M\Phi^-}{u}+\mathcal{O}(|u|^{-1-\epsilon}).
\end{equation}
\end{thm}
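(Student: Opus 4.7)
The approach is to follow the proof of Theorem~\ref{thm:null:asymptotics of dvrphi} (the case $p=2$) one order further in the $1/r$-expansion of $\pv(r\phi)$. Specifically, Corollary~\ref{cor:nullcase:asymptotics}, specialised to $p=3$, gives $\nu=-1+\mathcal{O}(|u|^{-1})$, $\kappa=1+\mathcal{O}(r^{-1}|u|^{-5})$, $\varpi=M+\mathcal{O}(|u|^{-5})$ and $r\phi=\Phi^-|u|^{-2}+\mathcal{O}(|u|^{-2-\epsilon})$. Substituting these into the wave equation~\eqref{eq:wave} and using Lemma~\ref{prop: null comparibilty of r, v-u} to replace $r(u',v)$ by $(v-u')+\mathcal{O}(\log r)$, I obtain
\begin{equation*}
\pu\pv(r\phi)(u',v)=-\frac{2M\Phi^-}{u'^2(v-u')^3}+\mathcal{O}\!\left(\frac{1}{|u'|^{2+\epsilon}(v-u')^3}\right)+\mathcal{O}\!\left(\frac{\log(v-u')}{|u'|^2(v-u')^4}\right).
\end{equation*}
Integrating in $u'$ from $-\infty$ to $u$, using the no incoming radiation condition $\pv(r\phi)|_{\mathcal{I}^-}=0$, then reduces the proof to evaluating the leading integral.

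The integral is handled by the partial-fraction decomposition
\begin{equation*}
\frac{1}{u'^2(v-u')^3}=\frac{3/v^4}{u'}+\frac{1/v^3}{u'^2}+\frac{3/v^4}{v-u'}+\frac{2/v^3}{(v-u')^2}+\frac{1/v^2}{(v-u')^3}.
\end{equation*}
The $1/u'$ and $1/(v-u')$ pieces are individually logarithmically divergent at $u'=-\infty$, but their sum is finite and integrates to $3(\log|u|-\log(v-u))/v^4$; the remaining three integrate to explicit algebraic expressions in $v$ and $v-u$. Expanding in $1/r$ using $v-u=r+\mathcal{O}(\log r)$ and $v=r+u+\mathcal{O}(\log r)$ produces a leading $1/r^3$-coefficient $2M\Phi^-/u$, matching the leading order of $F(u)$, together with a $(\log r-\log|u|)/r^4$-contribution with constant coefficient of magnitude $6M\Phi^-$ (whose sign is pinned down by the computation).

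To recover the full coefficient $F(u)$ from~\eqref{eq:thmF(u)}, rather than only its leading piece $2M\Phi^-/u$, I would observe that $F(u)=\lim_{v\to\infty}r(u,v)^3\pv(r\phi)(u,v)$. This follows from differentiating $r^3\pv(r\phi)$ in $u$ via the wave equation~\eqref{eq:wave} and using $\pv(r\phi)=\mathcal{O}(r^{-3})$: one has $\pu[r^3\pv(r\phi)]=r^3\pu\pv(r\phi)+3r^2\nu\pv(r\phi)$, where the second summand vanishes as $v\to\infty$ and the first converges pointwise to $2m\nu r\phi|_{\mathcal{I}^+}$. Integrating from $\mathcal{I}^-$ and passing to the limit then recovers $F$ as defined in~\eqref{eq:thmF(u)}. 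This identification packages all subleading corrections in $\nu$, $m$, and $r\phi$ at $\mathcal{I}^+$ directly into $F$, so they need not be tracked separately in the integration.

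The main technical obstacle is careful control of the errors at order $\mathcal{O}(r^{-4})$. In particular: (i) the discrepancy $r^{-3}-(v-u')^{-3}=\mathcal{O}(\log r/(v-u')^4)$ from Lemma~\ref{prop: null comparibilty of r, v-u}, integrated against $1/u'^2$, yields a contribution of size $\mathcal{O}(\log r/(|u|r^4))$, which is $\mathcal{O}(r^{-4})$ with a $u$-dependent constant (acceptable since the expansion is claimed for fixed $u$); (ii) the subleading $\mathcal{O}(|u'|^{-2-\epsilon})$ and $\mathcal{O}(|u'|^{-3})$ corrections in $2\nu\kappa(\varpi-e^2/r)r\phi$ contribute only to $F(u)/r^3$ at subleading order and not to the logarithmic coefficient; (iii) the Maxwell piece $-e^2/r$ in $\varpi-e^2/r$ contributes an extra integrand of the form $\mathcal{O}(1/(u'^2(v-u')^4))$ that integrates to $\mathcal{O}(r^{-4})$. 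All three error types are analogous to those already controlled in the proof of the $p=2$ case and can be dispatched by the same partial-fraction and direct estimation methods.
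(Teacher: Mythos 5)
Your route differs structurally from the paper's: the paper commutes the wave equation with $r^3$, applies the fundamental theorem of calculus in $v$ to evaluate the principal piece $2(\varpi-e^2/r)\nu\kappa\,r\phi$ exactly, and iterates a preliminary bound on the subleading piece $3\nu r^2\pv(r\phi)$, whereas you directly integrate the wave equation after substituting $r\approx v-u'$ into the integrand, as in the $p=2$ proof, and supplement with the observation $F(u)=\lim_{v\to\infty}r^3\pv(r\phi)$.

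There is a genuine gap in your error item (i). The substitution error $r^{-3}-(v-u')^{-3}=\mathcal O\bigl(\log r/(v-u')^{4}\bigr)$, integrated against the $\sim|u'|^{-2}$ weight coming from $r\phi$, produces a contribution of size $\mathcal O\bigl(\log r/(|u|r^4)\bigr)$. This is \emph{not} $\mathcal O(r^{-4})$ at fixed $u$: the ratio is $\log r/|u|\to\infty$ as $v\to\infty$, and an unbounded factor of $\log r$ cannot be absorbed into a $u$-dependent constant. Your method therefore lands one $\log$ short of the claimed precision. The same direct substitution is harmless at $p=2$ because there the theorem's error is only $\mathcal O(r^{-3})$ and easily swallows an $\mathcal O(\log^2 r/r^4)$ discrepancy; at $p=3$ the target drops to $\mathcal O(r^{-4})$, and the substitution, now made against $|u'|^{-2}r^{-3}$, lands exactly at the threshold and overshoots by a logarithm. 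The paper's commutation with $r^3$ is designed precisely to avoid this: after writing $\pu\bigl[r^3\pv(r\phi)\bigr]=2(\varpi-e^2/r)\nu\kappa\,r\phi+3\nu r^2\pv(r\phi)$, the first term is handled exactly by the FTC at $\mathcal I^+$ (no $r\approx v-u'$ needed), and the $v-u'$ substitution is deferred to the subleading term $3\nu r^2\pv(r\phi)\sim|u'|^{-2}r^{-1}$, whose substitution error integrates to $\mathcal O(\log^2 r/r^2)$; dividing back by $r^3$ then gives $\mathcal O(\log^2 r/r^5)=\mathcal O(r^{-4})$, as needed. Two further points are glossed over: the identity $F(u)=\lim_{v\to\infty}r^3\pv(r\phi)$ requires exchanging the $u'$-integral with $\lim_{v\to\infty}$, which the paper justifies by estimating $\pv\varpi$, $\pv\nu$, $\pv\kappa$ and $\pv(r\phi)$ and bounding the resulting double integral by $\mathcal O(1/(r|u|))$; and the iterative step is not optional, since the preliminary bound $\pv(r\phi)=\mathcal O(r^{-3}|u'|^{-1})$ yields only an upper bound on $\int_{-\infty}^u 3\nu r^2\pv(r\phi)\,\dd u'$ and one must feed the improved estimate $\pv(r\phi)=F(u')/r^3+\mathcal O(\cdot)$ back in to extract the precise logarithmic coefficient, controlling the residual via the dilogarithm bound.
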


\begin{proof}
Following the same steps as in the previous proof, we find that 
\begin{equation}\label{orksi}
\pv(r\phi)(u,v)=\mathcal{O}(r^{-3}|u|^{-1}).
\end{equation}
In order to write down higher-order terms in the expansion of $\pv(r\phi)$, we commute the wave equation with $r^3$ and integrate:
\begin{multline}
    r^3\pv(r\phi)(u,v)=r^3\pv(r\phi)(-\infty,v)+\int_{-\infty}^u\pu(r^3\pv(r\phi))(u',v)\dd u'\\
    =\int_{-\infty}^u 3\nu r^2\pv(r\phi)(u',v)\dd u'+\int_{-\infty}^u 2\left(\varpi-\frac{e^2}{r}\right)\nu\kappa r\phi(u',v)\dd u'.\label{pubd}
\end{multline}
Here, we used that, by the above \eqref{orksi}, $r^3\pv(r\phi)$ vanishes as $u\to -\infty$.

Let's first deal with the second integral from the second line of \eqref{pubd}.
Observe that each of the quantities $\varpi$, $\nu$ and $r\phi$ attain a limit on $\mathcal{I}^+$ by monotonicity, and that $\kappa\to 1$ by Cor.~\ref{cor:nullcase:asymptotics}. 
We write these limits as $\varpi(u,\infty)$ etc.
Note, moreover, that $\pv \varpi\lesssim r^{-2}u^{-4}$ and $\pv \nu\lesssim r^{-2}$ by \eqref{eq:pvvarpi} and \eqref{eq:pupvr}, respectively. 
We can further show that $\pv \kappa \lesssim r^{-2}u^{-6}$ by integrating $\pu\pv\log \kappa$ in $u$ from $\mathcal{I}^-$, where $\pv\log\kappa$ vanishes:\footnote{The computation below is the only place where we use that $r$ is $C^3$. If one wishes to compute higher-order asymptotics, then more regularity needs to be assumed.}
\[\pu\pv\log \kappa=\pv\pu\log \kappa=\pv\left(\frac{\zeta^2}{\nu r}\right)=-\frac{\lambda\zeta^2}{\nu r^2}-2\frac{\zeta\theta}{r^2}-\frac{2\left(\varpi-\frac{e^2}{r}\right)\kappa\zeta^2}{\nu r^3}.\]
We can thus apply the fundamental theorem of calculus to write
\begin{nalign}
 &\int_{-\infty}^u 2\left(\varpi-\frac{e^2}{r}\right)\nu\kappa r\phi(u',v)\dd u'\\
 =&\int_{-\infty}^u 2\varpi\nu r\phi(u',\infty)\dd u'-\int_{-\infty}^u \int_v^\infty \pv\left(2\left(\varpi-\frac{e^2}{r}\right)\nu\kappa r\phi\right)(u',v')\dd v'\dd u'.
\end{nalign}
The first integral on the RHS equals $F(u)$ from \eqref{eq:thmF(u)} and asymptotically evaluates to $F(u)=\frac{2M\Phi^-}{u}+\mathcal{O}(|u|^{-1-\epsilon})$ as a consequence of Corollary~\ref{cor:nullcase:asymptotics}.
On the other hand, by the above estimates for the $v$-derivatives of $\varpi$, $\nu$, $\kappa$ and $r\phi$, we can estimate the double integral above according to
\begin{multline}
\int_{-\infty}^u \int_v^\infty \pv\left(2\left(\varpi-\frac{e^2}{r}\right)\nu\kappa r\phi\right)(u',v')\dd v'\dd u'\\
\lesssim \int_{-\infty}^u \int_v^\infty \frac{1}{r^2 |u'|^6}+\frac{1}{r^2 |u'|^2}+\frac{1}{r^2 |u'|^8}+\frac{1}{r^3|u'|}\dd v'\dd u'\lesssim \frac{1}{r |u|}.
\end{multline}

Let us now turn our attention to the first integral in the second line of eq.~\eqref{pubd}. Plugging in our preliminary estimate \eqref{orksi} for $\pv(r\phi)$, we obtain:
\begin{equation}\label{orksii}
 \int_{-\infty}^u 3\nu r^2\pv(r\phi)(u',v)\dd u'
 \lesssim\int_{-\infty}^u \frac{1}{(v-u') |u'|}\dd u'  = \frac{\log(v-u)-\log|u|}{v}.
\end{equation}
Therefore, combining the three estimates above, we obtain from \eqref{pubd} the asymptotic estimate:
\begin{equation}
r^3\pv(r\phi)(u,v)-\int_{-\infty}^u 2m\nu r\phi(u',\infty)\dd u'=\mathcal{O}\left(\frac{\log(v-u)-\log|u|}{v}\right).
\end{equation}
This is an improvement over the estimate \eqref{orksi}. By inserting this into \eqref{orksii}, we can further improve the estimate \eqref{orksii} to
\begingroup
\allowdisplaybreaks
\begin{align}
& \int_{-\infty}^u 3\nu r^2\pv(r\phi)(u',v)\dd u'\nonumber\\
 =&\int_{-\infty}^u \frac{-6M\Phi^-}{(v-u') u'}+\mathcal{O}(r^{-1}|u'|^{-1-\epsilon})+\mathcal{O}\left(\frac{\log(v-u')-\log|u'|}{v(v-u')}\right)\dd u'  \nonumber \\
 =&6M\Phi^-\left(\frac{\log(v-u)-\log|u|}{v}\right)+\mathcal{O}(r^{-1}).\label{orksit}
\end{align}
\endgroup
Here, we used that 
\begin{equation}
\int_{-\infty}^u \frac{\log(v-u')-\log|u'|}{v(v-u')}\dd u'=\frac{1}{v}\mathrm{Li}_2\left(\frac{v}{v-u}\right),
\end{equation}
where $\mathrm{Li}_2$ denotes the \textit{dilogarithm}\footnote{See e.g.~\cite{AbramowitzStegun}, page 1004.}, which has the two equivalent definitions for $|x|\leq 1$:
\begin{equation}\label{dilog}
-\int_0^x \frac{\log(1-y)}{y}\dd y=:\mathrm{Li}_2(x):=\sum_{k=1}^\infty \frac{x^k}{k^2}.
\end{equation}
In particular, we thus have, since $0<v/(v-u)<1$, 
\begin{equation}
\int_{-\infty}^u \frac{\log(v-u')-\log|u'|}{v(v-u')}\dd u'\leq \frac{1}{v}\frac{v}{v-u}\sum_{k=1}^\infty\frac{1}{k^2}=\frac{\pi^2}{6}\frac{1}{v-u}.
\end{equation}

Plugging the above asymptotics \eqref{orksit} back into eq.~\eqref{pubd} and dividing by $r^3$ completes the proof.
\end{proof}

\begin{rem}[Higher derivatives]
We remark that one can commute the two wave equations for $r$ and $r\phi$ with $\pv$ to obtain similar results for higher derivatives.
For instance, one gets that $\pv\pv r\sim r^{-2}$ and, thus, asymptotically,
\begin{equation}
    \pv^2(r\phi)=-\frac{3}{r}\pv(r\phi)+\dots .
\end{equation}
This fact is of importance for proving higher-order asymptotics for general $p$ using \textit{time integrals}, see also the proof of Theorem~\ref{scatteringthm2}, in particular eq.~\eqref{eq:scatteringproof:fact2}.
\end{rem}
\begin{rem}
Comparing these results to those of~\cite{CHRISTODOULOU2002} presented in section~\ref{sec:intro:CHR}, one can of course also compute the Weyl curvature tensor $W$ and relate it to $\phi$ using the Einstein equations. Since we work in spherical symmetry, $\rho$ is the only non-vanishing component of the Weyl tensor $W_{\mu\nu\xi o}$ under the null decomposition \eqref{eq:intro:nulldecomp}. We derive the following formula in Appendix~\ref{sec:app:weyl}:
\begin{equation}\label{eq:null:W3434}
W_{vuvu}=-\frac{\Omega^4}{2}\frac{m}{r^3}+\frac83 \Omega^2 \pu\phi\pv\phi.
\end{equation}
Using the results above, it is thus easy to see that, in the case $p=2$,  the asymptotic expansion of $\rho$ contains a logarithmic term at order $r^{-5}\log r$ (coming from the $\pu\phi\pv\phi$-term).
 We stress, however, that the point of working with the Einstein-Scalar field system is to model the more complicated Bianchi equations (which encode the essential hyperbolicity of the Einstein vacuum equations) by the simpler wave equation (and thus gravitational radiation by scalar radiation), replacing e.g.\ $\beta$ with $\pv(r\phi)$. It is therefore not the behaviour of the curvature coefficients we are directly interested in, but the behaviour of the scalar field.
\end{rem}

\newpage
\section{Case 2: Boundary data posed on a timelike hypersurface }\label{sec:timelike}

 In this section, we construct solutions 
  for the setup with vanishing incoming radiation from past null infinity and with polynomially decaying \textit{boundary data} (as opposed to characteristic data as considered in the previous section), 
 \begin{equation}
 r\phi|_\Gamma \sim |t|^{-p+1},
\end{equation}  
posed on a suitably regular timelike curve $\Gamma$, where $t$ is time measured along that curve. 

For instance, the reader can keep the example of a curve of constant $r=R$ in mind, with $R$ being larger than what is the Schwarzschild radius in the linear case: $2 M$. In general, however, $r$ need not be constant and is also permitted to tend to infinity.

In contrast to the characteristic problem of section~\ref{sec:null}, this type of boundary problem does not permit \textit{a priori estimates} of the same strength as those of section~\ref{sec:null}. Instead, we will need to develop the existence theory for these problems simultaneously to our estimates.
The existence theory developed in the present section can then, \textit{a fortiori,} be used to prove Proposition~\ref{prop:exis4} of the previous section (i.e.\ to show the existence of solutions satisfying the assumptions of section~\ref{sec:null:assumptions}).

\subsection{Overview}\label{sec:timelike:overview}
 The problem considered in this section differs from the previous problem in that there are two additional difficulties: In the null case, the two crucial observations were \textbf{a)} boundedness of the Hawking mass (see Prop.~\ref{prop:null mass boundedness}) and \textbf{b)} that $|u|$  is comparable to $r$  along $\mathcal{C}_{\mathrm{in}}$, which we used to propagate $|u|$-decay from $\mathcal{C}_{\mathrm{in}}$ outwards (see \eqref{eq:null:thm:proof1}). It is clear that \textbf{b)} will in general not be true in the timelike case (where $\mathcal{C}_{\mathrm{in}}$  is replaced by~$\Gamma$).
 
 Regarding \textbf{a)}, we recall that the boundedness of the Hawking mass $\varpi$ was just a simple consequence of its monotonicity properties, which allowed us to essentially bound $\varpi$ from above and below by its prescribed values on the initial ingoing or outgoing null ray, respectively. These values were in turn determined by the constraint equations for $\pu \varpi$ and $\pv \varpi$ (eqns. \eqref{eq:puvarpi}, \eqref{eq:pvvarpi}). However, when prescribing \textit{boundary data} for $(r, \phi)$ on $\Gamma$, it is no longer possible to derive bounds for $\varpi$  just in terms of the data.\footnote{It is this fact which forms the main difficulty in proving local existence for this system.}
 
 Instead, we will therefore, inspired by the results for the null case,  \textit{bootstrap} both the boundedness of the Hawking mass and the $|u|$-decay of $r\phi$.
 Unfortunately, the need to appeal to a bootstrap argument comes with a technical subtlety: To show the non-emptiness part of the bootstrap argument, we will need to exploit continuity in a compact region! This forces us to first consider \textit{boundary data of compact support}, $\mathrm{supp}(\phi_\Gamma) \cap (\{u\leq u_0\}\cap\Gamma)  =\emptyset$ for some $u_0$, which obey a  polynomial decay bound that is independent of $u_0$.
   Then, by the domain of dependence property\footnote{We can "extend" to the past, i.e.\ towards $\mathcal{I}^-$, by the Reissner--Nordstr\"om solution for $u\leq u_0$.}, we can consider the finite problem where we set  
\begin{align}
 r\phi=0(=\pv(r\phi)=\pv\varpi),\\
\varpi=M>0
\end{align}
on the outgoing null ray $\mathcal{C}_{u_0}$ of constant $u=u_0$ emanating from $\Gamma$.
The goal is to show uniform bounds in $u_0$ for the solutions arising from this and, ultimately, to push $u_0$ to $-\infty$ using a limiting argument.
\subsubsection*{Structure}
After showing local existence for this initial boundary value problem in section~\ref{sec:local existence}, we first make a couple of restrictive but severely simplifying assumptions (such as smallness of initial data).
These allow us to prove a slightly weaker version of our final result, namely Thm.~\ref{thm:timelike:final}, in which we construct global solutions arising from \textit{non--compactly supported} boundary data. 
This is done in the way outlined above: We first consider solutions arising from \textit{compactly supported} boundary data and prove uniform bounds on $\varpi$  and uniform decay of $r\phi$ for these in section~\ref{sec:timelike bounds} (both uniform in $u_0$). 
Subsequently, we send $u_0$ to $-\infty$ using a Gr\"onwall-based limiting argument in section~\ref{sec:timelike:limit}, hence removing the assumption of compact support.

 Now, while the proof of Theorem~\ref{thm:timelike:final} already exposes many of the main ideas, it does not show sharp decay for certain quantities. As a consequence, the theorem is not sufficient for showing that $\pv(r\phi)=Cr^{-3}\log r+\mathcal{O}(r^{-3})$ (unless the datum for $r$ tends to infinity along $\Gamma$); 
 instead, it only shows that $|\pv(r\phi)|\sim r^{-3}\log r$.
 We will overcome this issue by proving various refinements -- the crucial ingredient to which is commuting with the generator of the timelike boundary $\Gamma$ -- which allow us to not only remove the aforementioned restrictive assumptions but also to show sharp decay on all quantities and, in particular, derive an asymptotic expression for $\pv(r\phi)$. 
 This is done in section~\ref{sec:Refinements}. 
 The main results of this section, namely Theorems~\ref{thm:timelike:final!!!} and~\ref{thm:timelike:logs}, are then proved in section~\ref{sec:refine:limitingargumentover}.
  In particular, these theorems together prove Thm.~\ref{thm.intro:timelikecase} from the introduction.
 The confident reader may wish to skip to  section~\ref{sec:Refinements} immediately after having finished reading section~\ref{sec:timelike bounds}.  
   
 Since the construction of our final solution will span the next 40 pages, we feel that it may be helpful to immediately give a description of the final solutions of section~\ref{sec:refine:limitingargumentover}.  This is done  in section~\ref{sec:timelikepenrose}. The sole purpose of this is for the reader to see already in the beginning what kind of solutions we will construct, it is in no way part of the mathematical argument.
 
 Furthermore, in order for the limiting argument to become more concrete, we will work on an ambient background manifold with suitable coordinates. All solutions constructed in the present section will then be subsets of this manifold. This ambient manifold is introduced in section~\ref{sec:ambient}.

\subsubsection*{The Maxwell field}
 As we have seen in the previous section, the inclusion of the Maxwell field does not change the calculations in any notable way. We will thus, from now on, consider, in order to make the calculations less messy, the case $e^2=0$; however, all results of the present section can be recovered (with some minor adaptations) for $e^2\neq 0$ as well.
 
 \subsection{Preliminary description of the final solution}\label{sec:timelikepenrose}
In this section, we describe the solutions that we will ultimately construct in section~\ref{sec:refine:limitingargumentover}. 

Let $U_0$ be a negative number, $-\infty< U_0<0$, and define the set
\begin{equation}
    \mathcal{D}_{U_0}:=\{(u,v)\in\mathbb{R}^2\, |\, -\infty< u\leq v< \infty \,\,\text{and}\,\, u\leq U_0\}.
\end{equation}
We denote, for $u\in (-\infty,U_0]$, the sets $\mathcal{C}_u:=\{u\}\times[u,\infty)$  as \textit{outgoing null rays} and, for $v\in [u,\infty)$, the sets $\mathcal{C}_v:=(-\infty,U_0]\times \{v\}$ as \textit{ingoing null rays}. 
 We \textit{colloquially refer to} $\{-\infty\}\times(-\infty,\infty)$ as $\mathcal{I}^-$ or \textit{past null infinity}, to $(-\infty,U_0]\times\{\infty\}$ as $\mathcal{I}^+$ or \textit{future null infinity}, to $\{-\infty\}\times\{-\infty\}$ as $i^-$ or \textit{past timelike infinity}, and to $\{-\infty\}\times\{\infty\}$ as $i^0$ or \textit{spacelike infinity}.
Furthermore, we denote the timelike part of the boundary of $\mathcal{D}_{U_0}$ by
\begin{equation}
    \Gamma:=\{(u,v)\in \mathcal{D}_{U_0} \,|\, u=v \}.
\end{equation}
We denote the generator of $\Gamma$ by $\boldsymbol{T}=\pu+\pv$.
 We will, in section~\ref{sec:timelike:limit}, show the following statement:\footnote{As we mentioned before, we will develop the existence theory at the same time as our estimates. The proposition below, however, extracts from this only the existence theory and the form of the final boundary/scattering data.}
 \begin{prop*}
 Prescribe boundary data $r(u,u)$, $\phi(u,u)$ along $\Gamma$ as follows:
 Let $M>0$. Assume that $r(u,u)>2M$ either tends to a finite limit $R>2M$, or that it tends to an infinite limit, as $u\to -\infty$.
In the case where it tends to a finite limit, we further assume that
\begin{equation}
    \boldsymbol{T}r(u,u)=\mathcal{O}(|u|^{-s})
\end{equation}
for some $s=1+\epsilon_r>1$.
In the case where it tends to an infinite limit, we instead assume that
\begin{equation}
-    \boldsymbol{T}r(u,u)\sim |u|^{-s}
\end{equation}
for some $s\in(0,1]$.
For the scalar field, we assume that
\begin{align}
    \boldsymbol{T}(r\phi)(u,u)&= C_{\mathrm{in},\phi}^1|u|^{-p}+\mathcal{O}(|u|^{-p-\epsilon_\phi}), \label{eq:timelike:assumptions:intitialdatadecayofphi}\\
                 r\phi(u,u)&= \frac{C_{\mathrm{in},\phi}^1}{p-1}|u|^{-p+1}+\mathcal{O}(|u|^{-p+1-\epsilon_\phi})
\end{align}
for some constants $C_{\mathrm{in},\phi}^1\neq 0$, $p>1$ and $\epsilon_\phi\in(0,1)$.

  Then, if $U_0$ is chosen to be a sufficiently large negative number, there exists a unique triplet of $C^2$-functions $(r,\phi,m)$ on $\mathcal D_{U_0}$ that solves the equations \eqref{eq:puvarpi}--\eqref{eq:pvzeta} \textit{pointwise} throughout $\mathcal D_{U_0}$, restricts to the boundary data $r(u,u)$, $\phi(u,u)$ above, and that satisfies for any $v\in(-\infty,\infty)$:
  \begin{align}
  \lim_{u\to-\infty}\pv(r\phi)(u,v)=0,&&\lim_{u\to-\infty}\pv r(u,v)=1,&& \lim_{u\to-\infty}m(u,u)=M>0.
  \end{align}

This solution moreover has the following properties: The area radius $r$ tends to infinity along each of the ingoing and outgoing null rays, i.e., $\sup_{C_u}r(u,v)=\infty$ for all $u\in(-\infty,U_0]$, and $\sup_{C_v}r(u,v)=\infty$ for all $v\in(-\infty,\infty)$, and we have throughout $\mathcal{D}_{U_0}$ that $
    	\nu<0$ and $
    	\lambda,\kappa>0
$.
Furthermore, the solution satisfies for any $v\in(-\infty,\infty)$:
\begin{equation}
\lim_{u\to-\infty} m(u,v)-M=0=\lim_{u\to-\infty}r\phi(u,v).
\end{equation}

 \end{prop*}
 In the above, uniqueness is understood with respect to the class of solutions with finite Hawking mass (see Remark~\ref{rem:uniqueness}).

The reader can again refer to the Penrose diagram below (Figure~\ref{fig:6}). 
\begin{figure}[htbp]
\floatbox[{\capbeside\thisfloatsetup{capbesideposition={right,top},capbesidewidth=4cm}}]{figure}[\FBwidth]
{\caption{The Penrose diagram of $\mathcal D_{U_0}$. Note that, with our choice of coordinates (namely, $u=v$ on $\Gamma$), $\Gamma$ should be a vertical line instead of a curved line. We depicted it as a curved line to avoid the possible confusion that $\Gamma$ describes the centre of spacetime.}\label{fig:6}}
{  
  \includegraphics[width = 135pt]{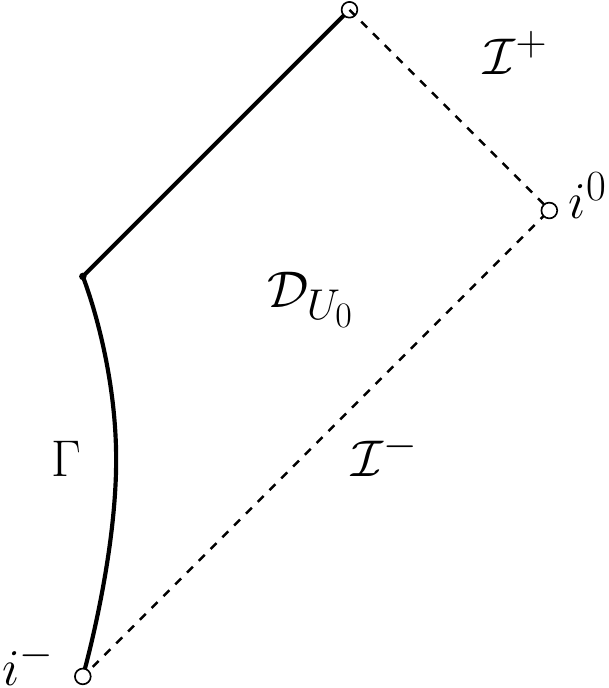}
}
\end{figure}

\subsection{The ambient manifold}\label{sec:ambient}
In this section, we introduce the ambient manifold and coordinate chart that shall provide us with the geometric background on which we shall be working. 

Let $U<0$, and let $\mathcal D_U$ be the manifold with boundary
\begin{equation}
    \mathcal{D}_{U}:=\{(u,v)\in\mathbb{R}^2\, |\, -\infty< u\leq v< \infty \,\,\text{and}\,\, u< U\}.
\end{equation}
We can equip $\mathcal D_U$ with the Lorentzian metric $-\dd u \dd v$.
In the sequel, we shall prescribe on the boundary of $\mathcal D_U$ suitable boundary data $(\hat r, \hat \phi)$ for the Einstein-Scalar field system \eqref{eq:puvarpi}--\eqref{eq:pvzeta}. 
Together with suitable data on an outgoing null ray, these will, at least locally,  lead to solutions $(r, \phi,m)$, as will be demonstrated in the next section. These solutions correspond, according to section~\ref{sec:subsec:system of equations}, to  spherically symmetric spacetimes, whose quotient under the action of $SO(3)$ we will view as subsets of the ambient manifold $\mathcal D_U$. 

\textbf{Throughout this entire section, any causal-geometric concepts such as "null", "timelike" or "future" will refer to the background manifold $(\mathcal D_U,-\dd u\dd v)$.}

\subsection{A local existence result}\label{sec:local existence}
%
%
%
%
We will start by proving a local existence result in the gauge specified above:
    \begin{prop}\label{prop:localexistence}
 Let $\mathcal D_U$ be as described in section~\ref{sec:ambient}, let $u_0\leq u_1< U$ and $v_1>u_0$, and let $\bar{\mathcal C}=\{u_0\}\times[u_0, v_1]$  be an outgoing null ray intersecting $\bar\Gamma=\{(u,u)\in \mathcal D_U\,|\,u_0\leq u\leq u_1\}$  
    at a point $q=(u_0,u_0)$.
     Specify on $\bar{\mathcal C}$  two $C^2$-functions $\bar{r}(v)$, $\bar{\phi}(v)$,  and specify on $\bar\Gamma$ two $C^2$-functions $\hat{r}(u)$, $\hat\phi(u)$. 
    Moreover, specify a value $\bar m(q)<\bar{r}(u_0)/2$, and define on $\bar{\mathcal C}$ the function $\bar m(v)$ as the unique solution to the ODE 
             \begin{equation}
             \pv \bar m=\frac12\left(1-\frac{2\bar m}{\bar r}\right)\bar r^2\frac{(\pv\bar \phi)^2}{\pv\bar r}\label{eq:timelike:existenceprop:constraint}
         \end{equation}
        with initial condition $\bar{m}(q)$.
    Finally, assume that the following data bounds are satisfied
               \begin{align}
           \max_{\bar{\mathcal C}}
           \{|\log \bar r|, |\log \pv\bar r|,|\bar\phi|, |\pv\bar\phi|,| \log (1-2\bar m(q)/\bar r(u_0))|\}\leq C ,\\
           \max_{\bar{\Gamma}}
           \{|\log \hat r|, | \boldsymbol T\hat r|,|\hat\phi|, |\boldsymbol T\hat\phi|\}\leq C ,
           \end{align}
 and  assume the usual compatibility conditions at the corner $q$ as well as\footnote{This should be thought of as implying that $\pu r<0$ along $\Gamma$. Note that we cannot specify $\pu r$ as data on $\Gamma$.} 
           \begin{equation}
           \boldsymbol T\hat r(u=v)-\pv \bar r(v)<0. \label{eq:timelike:existenceTrnurlambdar}
           \end{equation}
    Then, for $\epsilon$ sufficiently small and depending only on $C$, there is a region 
        \begin{equation}\label{eq:Deltauepsilon}
        \Delta_{u_0,\epsilon}:=\{(u,v)\in \mathcal D_U\,|\,u_0\leq u\leq v\leq u_0+\epsilon\} 
        \end{equation}
    in which  a unique $C^2$-solution to the spherically symmetric Einstein-Scalar field equations~\eqref{eq:puvarpi}--\eqref{eq:pupvr} that restricts correctly to the initial/boundary data exists. 
    Moreover, higher regularity is propagated, that is to say: If the initial data are in $C^k$ for $k>2$, then the solution will also be in $C^k$.
     \end{prop}

 \begin{proof}
 The proof will follow a classical iteration argument: We will define a contraction map $\Phi$ on a complete metric space such that the fixed-point of this map will be a solution to the system of equations. 

 For an $\epsilon>0$ to be specified later, define 
       \begin{equation}
       Y(\Delta_{u_0,\epsilon}):=\{(r,\phi,m, \mu) \in C^1(\Delta_{u_0,\epsilon})\times C^1(\Delta_{u_0,\epsilon})\times C^0(\Delta_{u_0,\epsilon})\times C^0(\Delta_{u_0,\epsilon})\} 
       \end{equation}
  and the corresponding subspace
       \begin{multline}Y_E(\Delta_{u_0,\epsilon}):=\{(r,\phi,m, \mu) \in Y(\Delta_{u_0,\epsilon})|\max\{|\log r|, |\log \pv r|, |\log (-\pu r)|,\\
       ||\phi||_{C^1(\Delta_{u_0,\epsilon})}, |m|, |\log(1-\mu)|\}\leq E\},
        \end{multline}
 equipped with the metric
    \begin{align*}
        d((r_1,\phi_1, m_1,\mu_1),(r_2,\phi_2, m_2,\mu_2)):=\sup _{\Delta_{u_0,\epsilon}}\{|\log |r_1/r_2||, |\log |\pv r_1/\pv r_2||, \\|\log |\pu r_1/\pu r_2||, 
    ||\phi_1-\phi_2||_{C^1(\Delta_{u_0,\epsilon})}, |m_1-m_2|, |\log|1-\mu_1|/|1-\mu_2||\}.
    \end{align*}

 For any element $(r,\phi, m,\mu)\in Y_E(\Delta_{u_0,\epsilon})$, we now define our candidate for the contraction map~$\Phi$ via $(r',\phi', m',\mu')=\Phi((r,\phi, m,\mu))$, where the primed quantities $r'$, $\phi'$ are, for $(u,v)\in \Delta_{u_0,\epsilon}$, defined via 
     \begin{align}
    	 r'(u,v)=\hat{r}(u)&+\bar{r}(v)-\bar{r}(u)
    			-\int_{u}^v\int_{u_0}^u \frac{2m\nu\lambda}{r^2(1-\mu)} \dd\bar{u}\dd \bar{v}
    			\end{align} and (notice that these double integrals are simply integrals over rectangles)
    			\begin{align}
    	\phi'(u,v)=\hat{\phi}(u)&+\bar{\phi}(v)-\bar{\phi}(u)
    	-\int_{u}^v\int_{u_0}^u \frac 1 r (\pu r \pv \phi+\pv r \pu\phi )\dd\bar{u}\dd \bar{v},
     \end{align}
 and $m'$, $\mu'$  are defined as solutions to the ODE's
     \begin{align}
        \pu\frac{\pv r'}{1-\mu'}&=r'\frac{(\pu\phi')^2}{\pu r'}\frac{\pv r'}{1-\mu'},\\
        \pu m'(u,v)&=\frac{1}{2}(1-\mu')r'^2\frac{(\pu\phi')^2}{\pu r'},
     \end{align}
 with initial conditions  $\mu'(u_0,v)=2\bar m(v)/\bar r(v)$, $m'(u_0, v)=\bar m (v)$, respectively.

 One now checks  that \textbf{a)} $\Phi$  is a map from $Y_E(\Delta_{u_0,\epsilon})$  to itself, and that \textbf{b)} it is a contraction w.r.t.\ the associated metric $d$.
 Both these facts can easily be established by (after also integrating the equations for $\pv r'/(1-\mu')$, $m'$, and solving the ODE for $\bar m$ for some suitably small interval)  bounding in each case the integrand by a continuous function of $E$ and then making the integrals sufficiently small by using the smallness in $\epsilon$, whereas the initial data terms from integrating the equations can be bounded from above and below by continuous functions of $C$ in case \textbf{a)}\footnote{This is the reason for the necessity of condition \eqref{eq:timelike:existenceTrnurlambdar} -- without it, we wouldn't be able to guarantee that the "initial value" term for $\pu r'$ on $\Gamma$ would be negative.}, and they vanish in case \textbf{b)}.\footnote{Note that we introduced $\mu$  and $m$  as a priori independent variables only to deduce more easily that $1-\mu$  remains bounded away from zero.}

 Having established these  two facts, we invoke the Banach fixed point theorem to obtain a unique fixed-point
 \[(r,\phi,m,\mu)\in Y_E(\Delta_{u_0,\epsilon}),\]
 which clearly solves the equations \eqref{eq:puvarpi}, \eqref{eq:pukappa}--\eqref{eq:pupvr} and restricts properly to the initial/boundary data.
 However, it is not yet clear that this fixed point has the desired regularity, that $\mu=2m/r$, or that eq.~\eqref{eq:pvvarpi} is satisfied.
 
 To obtain the desired regularity, observe that the equations that the fixed point obeys immediately tell us  that we  have that $\pu\pv r$, $\pu\pv \phi$, $\pu m$ and $\pu \mu$ are, in fact, continuous. 
 Moreover, we also have that $\mu=2m/r$ everywhere since it holds initially (on $\bar{\mathcal{ C}}$) and we can differentiate in $u$ to propagate equality inwards. (See also the argument below.)
 To infer higher regularity, consider now the equation that $\pu m$ satisfies (an ODE with coefficients that are continuously differentiable in $v$) to see that $\pv m$ is continuous; hence $m$ is continuously differentiable. 
 By a similar argument, we can then also infer that $r$ and $\phi$ are in $C^2$.  This in turn implies that  $m$ is in $C^2$,  which allows us to propagate the constraint equation \eqref{eq:timelike:existenceprop:constraint} inwards by differentiating both sides in $u$: Indeed, we have, by virtue of $m$ being $C^2$:
 \begin{align*}
     \pu\pv m=&\pv \left(\frac12 \left(1-\frac{2m}{r}\right)r^2\frac{(\pu\phi)^2}{\pu r}    \right)\\
             =&\left( \frac{-\pv m}{r}+\frac{m\pv r}{2r^2}   \right)r^2\frac{(\pu\phi)^2}{\pu r}+\left(1-\frac{2m}{r}\right)r\frac{\pv r(\pu\phi)^2}{\pu r}\\
             &+\frac12\left(1-\frac{2m}{r}\right)r^2\left(\frac{2\pu\phi\pv\pu\phi}{\pu r}-\frac{(\pu \phi)^2\pv\pu r}{(\pu r)^2}\right)\\
             =&\left( -r\frac{(\pu\phi)^2}{\pu r}\right)\pv m+\left(1-\frac{2m}{r}\right)r(\pu\phi\pv\phi),
 \end{align*}
 where, in the last step, we used equations \eqref{eq:sys:dudvphi} and \eqref{eq:pupvr}.
 On the other hand, we have:
 \begin{align*}
     &\pu \left(\frac12 \left(1-\frac{2m}{r}\right)r^2\frac{(\pv\phi)^2}{\pv r}    \right)\\
     &=\left( -r\frac{(\pu\phi)^2}{\pu r}\right)\left(\frac12\left(1-\frac{2m}{r}\right) r^2\frac{(\pv\phi)^2}{\pv r}    \right) +\left(1-\frac{2m}{r}\right)r(\pu\phi\pv\phi),
 \end{align*}
 where we used eqns.~\eqref{eq:sys:dudvphi} and \eqref{eq:pukappa} in the last step.
Applying Gr\"onwall's inequality to the two identities above shows that \eqref{eq:pvvarpi} holds everywhere in $\Delta_{u_0, \epsilon}$ and, thus, completes the main part of the proof.

In order to show that higher regularity is propagated if assumed initially, one first shows that if $\bar r$ is in $C^3$, then $\pv^2 r$ is in $C^1$ by considering the equation satisfied by $\pu\pv^2 r$. One then shows a similar statement concerning $\pu^2 r$ by considering the equation for $\pv\pu^2 r$. One thus shows that $r$ is in $C^3$ if the data for $r$ are in $C^3$. A similar argument gives that $\phi$ is in $C^3$, from which it follows directly that $m$ is in $C^3$. One now proceeds inductively.
 \end{proof}
 \begin{rem}
 Comparing to the local existence proof for the characteristic initial value problem, the main difficulty here was that we couldn't treat the equation for $\pu m$ as a constraint equation that is prescribed initially (cf.~\ref{eq:timelike:existenceprop:constraint}), so we had to include it in the contraction map. 
 The value of $m$ along $\Gamma$ is not known initially, but found dynamically via the fixed point theorem (by integrating $\pu m$ from $\bar{\mathcal C}$).
(We recall that, for the characteristic initial value problem, one can conveniently define the contraction map via the three wave equations for $\pu\pv\phi$, $\pu\pv r$ and $\pu\pv m$, and then propagate the equations for $\pu m$ and $\pv m$ "inwards" from initial data.) 
 \end{rem}

\subsection{The finite problem: Data on an outgoing null hypersurface \texorpdfstring{$\mathcal C_{u_0} $}{C-u0}}\label{sec:timelike bounds}

Now that we have established local existence for a small triangular region $\Delta_{u_0,\epsilon}$ as described above, we want to increase the region of existence. 
For this, we will first need to prove uniform bounds for $r,m,\phi$ (like those obtained in the null case) for initial/boundary data as described in section~\ref{sec:timelikepenrose}. As discussed in the introduction to this section, we will, in the present subsection, assume that the data for $r\phi$ are compactly supported on $\Gamma:=\partial D_U$. We will remove this assumption of compact support in section~\ref{sec:timelike:limit}.

Furthermore, as mentioned in the overview (sec.~\ref{sec:timelike:overview}), we will from now on make an extra assumption in order to simplify the presentation, namely that the exponent $p$ from the bound \eqref{eq:timelike:assumptions:intitialdatadecayofphi} be larger than $3/2$, that is, we will assume $p>3/2$ instead of $p>1$. This assumption will be removed in section~\ref{sec:Refinements}. We will also introduce a lower bound on $R$ \eqref{eq:thm:cc:lowerboundonR} in order to more clearly expose the ideas. In reality, if we only want to show upper bounds, this bound can always be replaced by $R>2M$, see Remark~\ref{rem:betterbound}. We will only need a slightly stronger lower bound on $R$ once we prove lower bounds for the radiation field in Theorem~\ref{thm:timelike:final}.

 \begin{thm}\label{thm:timelike:cc}
 Let $\mathcal D_U$ be as described in section~\ref{sec:ambient}, and
     specify smooth functions $\hat r$, $\hat \phi$ on $\Gamma=\partial D_U=\{(u,u)\in\mathcal D_U\}$, with $\hat\phi$ having compact support.
     Let $\mathcal{C}_{u_0}$ denote the future-complete outgoing null ray emanating from a point $q=(u_0,u_0)$ on $\Gamma$ that lies to the past of the support of $\hat{\phi}$. On $\mathcal{C}_{u_0}$, specify $\bar{m}\equiv M>0$,  $\bar{\phi}\equiv 0$ and an increasing smooth function $\bar{r}$ defined via
   $\bar r(v=u_0)=\hat r(u=u_0)$ and the ODE
     \begin{equation}\label{eq:thmcc:lambdagauge}
     \pv \bar{r}=1-\frac{2M}{\bar{r}}.
     \end{equation}
     Finally, assume that (denoting again the generator of $\Gamma$ by $\boldsymbol{T}=\pu+\pv$) the following bounds hold on $\Gamma$:
     \begin{align}
            |\boldsymbol{T}(\hat r\hat\phi)(u)|&\leq C_{\mathrm{in},\phi}^1|u|^{-p},\\
             |\boldsymbol{T}\hat r (u)|&\leq C_{\mathrm{in},r}|u|^{-s},\\
             \hat{r}& \geq R> 2M,
    \end{align}
    with positive constants  $p>3/2$, $C_{\mathrm{in},\phi}^1$, $C_{\mathrm{in},r}$  and $s>0$. 
   
   Let $\Delta_{u_0,\epsilon}$ denote the region of local existence (cf.~\eqref{eq:Deltauepsilon}) of the solution $(r,\phi,m)$ arising from these initial/boundary data in the sense of Proposition~\ref{prop:localexistence}, with $\epsilon$ only depending on the size of the data--in particular, $\epsilon$ can be chosen indepently of $u_0$.
     Then we have, for sufficiently large negative values of $U_0$ (the choice of $U_0$ depending only on data), and if
     \begin{equation}\label{eq:thm:cc:lowerboundonR}
     R\geq \frac{2M}{1-\e^{-2+\delta(U_0)}}
     \end{equation}
     for some function $\delta(u)\sim 1/|u|^{2p-3}$,
     that the following pointwise bounds hold  throughout  $\Delta_{u_0,\epsilon}\cap\{u\leq U_0\}$:
     \begingroup\allowdisplaybreaks
     \begin{align}
                0<\frac{M}{2}\leq m\leq M,\\
                0<1-\frac{2M}{r}\leq 1-\mu\leq 1,\\
                0<1-\delta(u)=\kappa\leq 1,\\
                0<\left(1-\delta(u)\right)\left(1-\frac{2M}{r}\right)=\lambda\leq 1,\\
                0<d_\nu=:\left(1-\frac{2M}{R}\right) \e^{-\frac{2M}{R-2M}}\leq |\nu|\leq \e^{\frac{2M}{R-2M}}=:C_\nu,\\
                |r\phi|\leq  \frac{ C_{\mathrm{in},\phi}|u|^{-p+1}}{1-\frac{1}{2(1-\delta (U_0))}\log\frac{R}{R-2M}}=:C'|u|^{-p+1}\label{eq:timelikemainbound1},\\
                |\pv(r\phi)|\leq M  C'\frac{|u|^{-p+1}}{r^2},\label{eq:timelike:thm:BS:pvrphi}
      \end{align} \endgroup
    where  $C_{\mathrm{in},\phi}=C_{\mathrm{in},\phi}^1/(p-1)$. In particular, all these bounds are independent of $u_0$.     
    \end{thm}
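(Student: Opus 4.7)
My approach is a continuity/bootstrap argument propagating from the outgoing null hypersurface $\mathcal{C}_{u_0}$ --- where by construction all quantities agree with exact Schwarzschild data ($\phi\equiv 0$, $\varpi\equiv M$, $\kappa\equiv 1$, with $\nu,\lambda$ Schwarzschildean) --- outward toward $\mathcal{I}^+$. Specifically, I would define
\begin{equation*}
\mathcal{T} := \bigl\{\, V \in [u_0,\infty) : \text{the solution exists on } \Delta_{u_0,\epsilon}\cap\{u\leq U_0,\,v\leq V\} \text{ and the stated bounds with inflated constants hold there}\,\bigr\}
\end{equation*}
(concretely, replace $C'$ by some $A>C'$, $\kappa\geq 1-2\delta$ instead of $1-\delta$, etc.). Non-emptiness near $V=u_0$ is immediate from Proposition~\ref{prop:localexistence} combined with the Schwarzschildean data on $\mathcal{C}_{u_0}$; closedness is clear from continuity; all the content is in showing openness, i.e.\ strictly improving the inflated bounds.

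For the geometric quantities: $\varpi\leq M$ is immediate from $\pu\varpi\leq 0$ given by \eqref{eq:puvarpi}, combined with $\varpi|_{\mathcal{C}_{u_0}}=M$. The lower bound on $\varpi$ comes from integrating $\pu\varpi$ from $\mathcal{C}_{u_0}$ and controlling $\zeta$ pointwise via the wave equation \eqref{eq:wave} under the bootstrap on $r\phi$; the smallness of the resulting ``energy drop'' uses $p>3/2$. The upper bound $\kappa\leq 1$ follows from the monotonicity of \eqref{eq:pukappa}; the lower bound is obtained by integrating $\pu\log\kappa = \zeta^2/(\nu r)$ from $\mathcal{C}_{u_0}$ and defines the function $\delta(u)\sim |u|^{-(2p-3)}$. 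The bounds on $\lambda=\kappa(1-\mu)$ follow algebraically. For $\nu$, I integrate $\pv\log|\nu| = 2\kappa\varpi/r^2$ along outgoing rays starting from $\Gamma$, where $\nu|_\Gamma = \boldsymbol{T}\hat r - \lambda|_\Gamma$ is directly controlled by the data bounds; changing variables $dv' = dr/\lambda$ and using $\lambda\geq (1-\delta)(1-2M/r)$ estimates the integral by $\frac{1}{1-\delta}\log\frac{R}{R-2M}\leq \frac{2M}{(1-\delta)(R-2M)}$, which yields both the upper and lower bounds on $|\nu|$.

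The crux, and the principal obstacle, is closing the bound on $r\phi$ itself. Under the bootstrap $|r\phi|\leq A|u|^{-p+1}$, the wave equation \eqref{eq:wave} gives $|\pu\pv(r\phi)|\leq 2MA|u|^{-p+1}|\nu|/r^3$. Since $\pv(r\phi)|_{\mathcal{C}_{u_0}}=0$ (from $\phi\equiv 0$), integrating in $u$ with the substitution $du' = dr/\nu$ collapses $\int |\nu|/r^3\,du'$ into a telescoping integral bounded by $1/(2r(u,v)^2)$, producing
\begin{equation*}
|\pv(r\phi)(u,v)| \;\leq\; \frac{MA}{r(u,v)^2}\,|u|^{-p+1}.
\end{equation*}
The boundary value $|\hat r\hat\phi(u)|\leq C_{\mathrm{in},\phi}|u|^{-p+1}$ follows from the compact support assumption ($\hat\phi(u_0)=0$) plus integration of the hypothesis on $\boldsymbol{T}(\hat r\hat\phi)$. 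Integrating $\pv(r\phi)$ outward from $\Gamma$ and again changing variables $dv' = dr/\lambda$ with the lower bound on $\lambda$ gives $\int_u^v dv'/r^2 \leq \log(R/(R-2M))/(2M(1-\delta))$, hence
\begin{equation*}
|r\phi(u,v)| \;\leq\; \left(C_{\mathrm{in},\phi} + \frac{A}{2(1-\delta)}\log\frac{R}{R-2M}\right)|u|^{-p+1}.
\end{equation*}
The defining formula \eqref{eq:timelikemainbound1} for $C'$ is exactly the one making the bracket strictly less than $A$ for any $A>C'$, and the hypothesis $R\geq 2M/(1-e^{-2+\delta(U_0)})$ is precisely what keeps $\frac{1}{2(1-\delta)}\log(R/(R-2M))<1$ so that $C'$ is positive. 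This closes the bootstrap loop $r\phi\to\pv(r\phi)\to r\phi$, and iterating improvements down to the limiting constant $C'$ establishes all stated bounds. The lower bound on $R$ is essentially sharp for this contraction, and the careful constant-chasing to make every estimate close is the main technical obstacle; the compact support assumption on $\hat\phi$ is essential here since it supplies the vanishing data on $\mathcal{C}_{u_0}$ that makes all the ``initial'' terms drop out when integrating from the past.
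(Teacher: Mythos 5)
Your proposal is correct and follows essentially the same nested bootstrap argument as the paper's proof, propagating from the Schwarzschildean data on $\mathcal{C}_{u_0}$ outward, using monotonicity for $m\leq M$ and $\kappa\leq 1$, integrating the wave equation in $u$ with the good $r^{-3}$-weight to control $\pv(r\phi)$, and closing the contraction $r\phi\to\pv(r\phi)\to r\phi$ exactly as you describe, with the lower bound on $R$ playing precisely the role of making this loop a strict contraction. The paper happens to list $|m|$, $|\zeta|$, and $|r\phi|$ explicitly as bootstrap quantities (rather than your $r\phi$ and $\kappa$), but both bookkeepings untangle the same dependency chain, and your identification of where $p>3/2$ enters — the energy drop $\int\zeta^2\,du'\sim|u|^{-(2p-3)}$, with $|\zeta|$ limited to $|u|^{-p+1}$ decay by the boundary term on $\Gamma$ that is picked up when integrating the wave equation in $v$ — is exactly right.
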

 
 \begin{proof}
The proof will consist of a nested bootstrap argument. First, we will assume boundedness of the Hawking mass. This will essentially allow us to redo the calculations done in the proof of Prop.~\ref{prop:null geometric quantities boundedness} to show boundedness of the geometric quantities $\lambda, \nu,\mu$. We will then assume $|u|$-decay for the radiation field $r\phi$ and improve this decay by using the previously derived bounds on $\nu,\lambda,\mu$, the assumed bound on $m$, and by integrating the wave equation \eqref{eq:wave} in $u$ and in $v$. We will then use the decay for $r\phi$ to get enough decay for $\zeta$ to also improve the bound on $m$ by integrating $\pu m$. 
Indeed, it will turn out to slightly simplify things if we also introduce a bound on $\zeta$ as a bootstrap assumption.

Let us  start the proof: It is easy to see that the assumptions of the theorem allow us to apply Proposition~\ref{prop:localexistence}; in particular, a solution $(r,\phi,m)$ exists in $\Delta_{u_0,\epsilon}$ for sufficiently small $\epsilon$. Next, notice that, by the monotonicity property of the Hawking mass\footnote{By the above local existence result, we already know that $\nu<0$ and $\lambda>0$ in $\Delta_{u_0, \epsilon}$.}, it is clear that $1-\mu >1-\frac{2M}{R}=:d_\mu$. Having made these  preliminary observations, we now initiate the bootstrap argument. Consider the set 
     \begin{align}\nonumber
     \Delta:=\{(u,v)\in \Delta_{u_0, \epsilon}\,|\,\text{ such that, for all } &(u',v') \in \Delta_{u_0, \epsilon} \,\text{ with }\, u'\leq u,\, v'\leq v :\\     
     |m&(u',v')|\leq \eta\cdot M ,
     \tag{BS(1)} \label{eq:timelike:bootstrap:m}\\
	|\zeta&(u',v')|\leq \tilde{C}|u'|^{-p+1}, 
	\tag{BS(2)}\label{eq:timelike:bootstrap:zeta}\\    
	|r\phi&(u',v')|\leq C'|u'|^{-p+1}\tag{BS(3)} \label{eq:timelike:bootstrap:phi}
    \,\, \}, 
     \end{align}
  where $\eta$, $\tilde{C}$ and $C'$ are positive constants with $\eta>1$, $\tilde C$ sufficiently large, and  $C'=\frac{\eta' C_{\mathrm{in},\phi}}{1-\frac{1}{2(1-\delta (U_0))}\log\frac{R}{R-2M}}>0$ (the positivity of this constant of course being precisely the condition that $R$ be sufficiently large), where $\eta'>1$ is arbitrary and $\delta(u)\sim |u|^{-2p+3}$ can be made arbitrarily small by choosing $U_0$ large enough.
 
  Proposition~\ref{prop:localexistence} guarantees that $\Delta$ is non-empty by continuity; $\{q\}\subsetneq \Delta$ ($q$ is trivially contained in $\Delta$, but $q$ alone wouldn't be enough, we instead need a small triangle in which we can integrate -- this is where we need the assumption of compactness (and thus of compact support) to exploit continuity). 

 Furthermore, $\Delta$ is clearly closed, so it suffices to show that it is also open.
 We will essentially follow the same structure as we did in the null case for this:
 First, note that, in $\Delta$, we again have  the energy estimates (cf.~\eqref{eq:ebinvdirection},~\eqref{eq:ebinudirection}) as a consequence of \eqref{eq:timelike:bootstrap:m}
  \begin{align}
    \int_{v_1}^{v_2}\frac{1}{2}\frac{\theta^2}{\kappa}(u,v)\dd v\leq 2\eta M \label{eq:timelike:ee},\\
    -\int_{u_1}^{u_2}\frac{1}{2}\frac{(1-\mu)}{\nu}\zeta^2(u,v)\dd u\leq  2\eta M.
\end{align}
We then obtain that 
    \[1\geq\kappa\geq \e^{-\frac{2\eta M}{R d_\mu}}=:d'_\kappa\]
 by integrating equation \eqref{eq:pukappa} for $\pu\kappa$ and applying the energy estimate as in the proof of Prop.~\ref{prop:null geometric quantities boundedness}. Later, we will want to show that the lower bound for $\kappa$ can be improved beyond the estimate given by the energy estimate, which is why have also introduced the bootstrap bound on $\zeta$ \eqref{eq:timelike:bootstrap:zeta}. But let us first derive bounds for $\lambda$ and $\nu$:
It is clear that 
    \[d_\lambda:=d'_\kappa d_\mu\leq \lambda\leq 1.\]
For $|\nu|$, we integrate eq.~\eqref{eq:pupvr} from $\Gamma$:
    \[|\nu(u,v)|=|\nu(u,u)|\e^{2\int_u^v\frac{\kappa}{r^2}m\dd v'}.\]
Using that 
    \[\left|\int_u^v\frac{\kappa}{r^2}m\dd v'\right|\leq \frac{\eta M}{d_\mu R},\]
as well as the fact that 
    \[\nu|_\Gamma=\boldsymbol T\hat r-\lambda |_\Gamma\leq-\frac12 \lambda |_\Gamma,\]
where we used that $\boldsymbol T\hat r$ satisfies $|\boldsymbol T\hat{r}|\leq |u|^{-s}$ and can thus be made small by choosing $U_0$ large enough,
we thus get that 
    \[d_\nu\leq |\nu|\leq C_\nu,\] 
with the constants $d_\nu,C_\nu$ only depending on initial/boundary data.

Let us now invoke our second bootstrap assumption \eqref{eq:timelike:bootstrap:zeta}.
Note that \eqref{eq:timelike:bootstrap:zeta} directly implies that $|m|\leq M$.
Indeed, integrating, as in the null case, the equation for $\pu m$ from $\mathcal{C}_{u_0}$ (see eq.~\eqref{eq:null:varpiintegral}), we obtain
\begin{align}\label{eq:timelike:bootstrapproof:mleqEi}
    \begin{split}
    |m(u,v)|
    &\leq M \e^{\frac{\tilde{C}}{(2p-3)d_\nu R}|u|^{-2p+3}}+\frac12 \frac{\tilde{C}}{(2p-3)d_\nu}|u|^{-2p+3}\e^{2\frac{\tilde{C}}{(2p-3)d_\nu R}|u|^{-2p+3}}.
    \end{split}
\end{align}
The above expression is strictly less than $\eta M$ for $|U_0|$ large enough; this improves the bootstrap assumption \eqref{eq:timelike:bootstrap:m}.
Notice that the second term in the above expression is strictly smaller than the first one for large enough $|u|$. Therefore, by considering again \eqref{eq:null:varpiintegral}, we also get that $m$ is positive, say, $m>\frac{M}{2}$.  
By the monotonicity properties of $m$ (namely, $\pu m\leq 0$), we thus conclude that $\frac{M}{2}\leq m\leq M$ (in fact, we have established that $m-M=\mathcal{O}(|u|^{-2p+3})$).  Therefore, we shall henceforth assume that $|m|\leq M$.

We now use \eqref{eq:timelike:bootstrap:zeta} to improve the lower bound on $\lambda$: Inserting \eqref{eq:timelike:bootstrap:zeta} into the integration of eq.~\eqref{eq:pukappa}, 
 we get
\begin{equation}\label{eq:bootstrapproofkappadelta}
    \kappa= 1-\delta(u)>0,
\end{equation}
where $\delta(u)\sim 1-\e^{-\frac{1}{|u|^{2p-3}}}$ tends to $0$ as $|u|$ tends to infinity. In particular, we now get the lower bound:
\begin{equation}
    \lambda= (1-\delta(u))\left(1-\frac{2M}{r}\right). \label{eq:lambdafastdecay}
\end{equation}

We finally invoke our bootstrap assumption \eqref{eq:timelike:bootstrap:phi} for the $|u|$-decay for $r\phi$.
From the wave equation
    $\pu\pv( r\phi)=2m\nu\kappa\frac{r\phi}{r^3}$,
we get, by integrating in $u$ from $\mathcal{C}_{u_0}$:
    \begin{equation}
        |\pv(r\phi)(u,v)|\leq M C' \frac{|u|^{-p+1}}{r^2}.\label{eq:bootstrapboundfordvrphi}
    \end{equation}
In turn, integrating the above in $v$ from $\Gamma$, we then get, plugging in the lower bound \eqref{eq:lambdafastdecay} to substitute $v$-integration with $r$-integration,
    \begin{nalign}\label{integralthathasbeencomputedbefore}
        |r\phi(u,v)|&\leq \frac{C_{\mathrm{in},\phi}}{|u|^{p-1}}+\frac{MC'}{|u|^{p-1}}\int_{r(u,u)}^{r(u,v)}\frac {1}{r(r-2M)(1-\delta(U_0))}\dd r\\
        &\leq \frac{1}{|u|^{p-1}}\left(C_{\mathrm{in},\phi}+\frac{C'}{2(1-\delta(U_0))}\log\left(\frac{r(u,u)}{r(u,v)}\cdot\frac{r(u,v)-2M}{r(u,u)-2M}\right)\right)\\
          &\leq \frac{1}{|u|^{p-1}}\left(C_{\mathrm{in},\phi}+\frac{C'}{2(1-\delta(U_0))}\log\left(\frac{R}{R-2M}\right)\right)<\frac{C'}{|u|^{p-1}}.
    \end{nalign}
The condition that this last term be less than $\frac{C'}{|u|^{p-1}}$ leads to the following lower bound on $R$:
\begin{equation}
    1-\frac{1}{2(1-\delta(U_0))}\log\frac{R}{R-2M}>0\implies R>\frac{2M}{1-\e^{-2(1-\delta(U_0))}}.
\end{equation}
This closes the bootstrap assumption \eqref{eq:timelike:bootstrap:phi} for $r\phi$. 
Since  $\eta'>1$ was arbitrary, we can take the limit $\eta'\to 1$.

Finally, we use this decay in the scalar field to improve the bootstrap bound on $\zeta$. This will essentially come from the wave equation: First, note that, on $\Gamma$, 
    \[|\pu(r\phi)|(u,u)\leq |\pv(r\phi)|(u,u)+|\boldsymbol{T}(\hat r\hat\phi)|(u)\leq |u|^{-p+1}\(\frac{C_{\mathrm{in},\phi}^1}{|u|}+\frac{ M  C'}{R^2}\),\]
where the second inequality comes from the bound \eqref{eq:bootstrapboundfordvrphi}.
We can now integrate the wave equation from $\Gamma$ to obtain 
    \[|\pu(r\phi)(u,v)|\leq\(\(\frac{C_{\mathrm{in},\phi}^1}{|u|}+\frac{ M  C'}{R^2}\)+\frac{MC_\nu C'}{d_\mu R^2}\)|u|^{-p+1}.\]
We thus get, for some constant $C''$ independent of $\tilde C$,
    \[|\zeta(u,v)|\leq |\pu(r\phi)(u,v)|+|(\nu\phi)(u,v)|\leq C''|u|^{-p+1},\]
improving \eqref{eq:timelike:bootstrap:zeta} for large enough $\tilde{C}$, hence closing the bootstrap argument for $\zeta$. 

This shows that the set $\Delta$ is open and, thus, concludes the proof.
\end{proof}
\begin{rem}\label{rem:pgreater32}
The reason why the above proof only works for $p>3/2$ is that, with the method presented, we cannot show sharp decay for $\pu(r\phi)$ (just integrating the wave equation in $v$ will always pick up the bad boundary term on $\Gamma$, and the decay shown for $\pv(r\phi)$ is sharp).
 This in turn means that we can only close the bootstrap assumption for $m$ for $p>3/2$, see the bound \eqref{eq:timelike:bootstrapproof:mleqEi}. 
 We will explain how to deal  with this issue later in section~\ref{sec:Refinements}, where all the above bounds are made sharp. 
 The reader interested in this may wish to skip to section~\ref{sec:Refinements} directly.
\end{rem}
\begin{rem}\label{rem:betterbound}
The lower bound on $R$ is, in fact, wasteful, as one can already see from the fact that we did not explicitly use the decay for $r\phi$ inside the relevant integrals in \eqref{integralthathasbeencomputedbefore}. Alternatively, one can do a Gr\"onwall argument as follows: We have
    \begin{nalign}\label{groenwallinsteadofBS}
        |r\phi|(u,v)&\leq \frac{C_{\mathrm{in},\phi}}{|u|^{p-1}}+\int_{u}^v\int_{u_0}^u \left|2m\nu\kappa\frac{r\phi}{r^3}\right|(u',v')\dd u'\dd v'\\
        &\leq \frac{C_{\mathrm{in},\phi}}{|u|^{p-1}}+\int_{u_0}^u M\frac{\sup_{v'\in[u,v]}|r\phi|(u',v')}{(1-\mathcal O(|u|^{-\epsilon}))r(r-2M)(u',u)}\dd u',
    \end{nalign}
    where, in the second line, we applied Tonelli and then used that $\nu=1-2M/r+\mathcal O(|u|^{-\epsilon})$ for some $\epsilon>0$, which we will prove in section~\ref{sec:refine:tr} (see~\eqref{thm.refinements.boundontr}). Taking the supremum $\sup_{v'\in[u,v]}$ on the LHS of \eqref{groenwallinsteadofBS} and then applying Gr\"onwall's inequality to this yields
    \begin{equation}\label{eq:betterbound}
   \sup_{v'\in[u,v]}  |r\phi|(u,v')|\leq \frac{C_{\mathrm{in},\phi}}{|u|^{p-1}}\left( \sqrt{\frac{R}{R-2M}}+\mathcal O(|u|^{-\epsilon})\right).
    \end{equation}
 \textbf{   In other words, we can replace the bootstrap argument for \eqref{eq:timelike:bootstrap:phi} by a direct Gr\"onwall argument, and this  only requires the lower bound $R>2M$. } In particular, one can obtain an \textit{a priori estimate} for $r\phi$ (without needing to assume compactness) provided that $m$ is bounded. For now, however, we will continue working with the lower bound $ R\geq \frac{2M}{1-\e^{-2+\delta(U_0)}}$.
\end{rem}

 The above theorem (when also taking into account the bound on $\pu(r\phi)$ shown in the proof) in particular allows us to increase the region of existence of the solution to 
\begin{equation}
        \mathcal{D}_{u_0,U_0}:=\{(u,v)\in \mathcal D_U\,|\,u\in[u_0,U_0),\,v\in[u,\infty)\},
\end{equation}
i.e.\ a region extending towards a part of $\mathcal{I}^+$. In fact, we have:
\begin{thm}\label{thm:globalinv}
Under the same assumptions as in Theorem~\ref{thm:timelike:cc}, the resulting solution exists (and satisfies the bounds of Theorem~\ref{thm:timelike:cc}) in all of $ \mathcal{D}_{U}\cap\{u_0\leq u\leq U_0\}$, and can be smoothly extended to all $u\leq u_0$ by the vacuum solution with mass $M$ that satisfies $\pv r=1-\frac{2M}{r}$ on $\Gamma\cap\{u\leq u_0\}$.
\end{thm}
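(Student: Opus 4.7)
The plan is to prove Theorem~\ref{thm:globalinv} by a standard continuity/patching argument in the $u$-direction, using the uniform \emph{a priori} bounds of Theorem~\ref{thm:timelike:cc} to prevent the solution from breaking down as the region of existence is enlarged. The backward extension to $u\le u_0$ will then follow by direct identification of the data on $\mathcal C_{u_0}$ with those of Schwarzschild in a suitable double null gauge.

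For the forward extension, I would define
\[\mathcal{U}^*:=\sup\{U'\in(u_0,U_0]\,:\,\text{a unique $C^2$ solution with the bounds of Thm.~\ref{thm:timelike:cc} exists on } \mathcal{D}_U\cap\{u_0\le u\le U'\}\},\]
and aim to show $\mathcal{U}^*=U_0$. \emph{Non-emptiness}: Proposition~\ref{prop:localexistence} produces a solution on $\Delta_{u_0,\epsilon}$, and by solving the first-order transport equations \eqref{eq:puvarpi}--\eqref{eq:pvzeta} forward in $v$ along each outgoing ray $u=\mathrm{const}$ starting from data on $\mathcal C_{u_0}$, one obtains a solution on $\mathcal D_U\cap\{u_0\le u\le U'\}$ for any $U'$ sufficiently close to $u_0$; this $v$-integration cannot blow up because the bounds of Theorem~\ref{thm:timelike:cc} keep all coefficients uniformly bounded along each outgoing characteristic. \emph{Openness}: supposing $\mathcal{U}^*<U_0$, I would apply Proposition~\ref{prop:localexistence} at the corner $(\mathcal{U}^*,\mathcal{U}^*)\in\Gamma$, taking the already-constructed values of $(r,\phi,m)$ on $\mathcal C_{\mathcal{U}^*}$ as characteristic data and the prescribed data on $\Gamma$ as boundary data; the size $\epsilon'$ of the resulting triangle depends only on the data constant $C$ in Proposition~\ref{prop:localexistence}, which is uniform in $\mathcal{U}^*$ thanks precisely to Theorem~\ref{thm:timelike:cc}. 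Patching this triangle onto the existing solution and then propagating forward in $v$ yields a solution past $\mathcal{U}^*$, contradicting its definition. \emph{Closedness} follows from the uniform $C^2$-bounds by standard compactness.

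For the backward extension: on $\mathcal C_{u_0}$ the data satisfy $\bar\phi\equiv 0$, $\bar m\equiv M$, and $\pv\bar r=1-2M/\bar r$, which coincide exactly with the restriction to an outgoing null ray of the Schwarzschild spacetime of mass $M$ in the double null gauge where $\lambda=1-2M/r$ along $\mathcal C_{u_0}$. I would simply attach this Schwarzschild solution with $\phi\equiv 0$ as the region $\{u\le u_0\}$, choosing the $u$-coordinate on that region so that $\Gamma\cap\{u\le u_0\}$ coincides with the prescribed timelike curve $r=\hat r$; this is possible since $\hat r\ge R>2M$ and so the curve lies in the exterior of Schwarzschild. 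Smoothness across $\mathcal C_{u_0}$ then follows from matching of $r$, $m$, $\phi$ and all their first derivatives, guaranteed by the gauge choice on $\mathcal C_{u_0}$ and by $\hat\phi\equiv 0$ on $\Gamma\cap\{u\le u_0\}$ (the compact support assumption of Thm.~\ref{thm:timelike:cc}).

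The principal obstacle is the openness step: it requires the local existence time $\epsilon'$ of Proposition~\ref{prop:localexistence} to remain bounded below as $U'\to\mathcal{U}^*$. This is precisely what Theorem~\ref{thm:timelike:cc} delivers, through uniform control of $\log r$, $\log\pv r$, $|\phi|$, $|\pv\phi|$, $\log(1-\mu)$ and $|m|$ along each $\mathcal C_{U'}$, combined with the sign condition \eqref{eq:timelike:existenceTrnurlambdar} which is ensured by $|\boldsymbol T\hat r|\lesssim |u|^{-s}$ (small for $|U_0|$ large enough). Once this uniform lower bound on $\epsilon'$ is in hand, the continuity argument closes and Theorem~\ref{thm:globalinv} follows.
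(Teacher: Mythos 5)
Your forward extension argument (the $\mathcal U^*$-continuity scheme with Proposition~\ref{prop:localexistence} re-applied at moving corners, uniform bounds from Theorem~\ref{thm:timelike:cc} supplying the uniform lower bound on the local existence time) is exactly the route the paper takes, stated in somewhat more detail than the paper's one-line reference to "continuously applying local existence theory." That part is fine.

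The backward extension is where there is a genuine gap. You assert that one can "simply attach the Schwarzschild solution with $\phi\equiv 0$" by "choosing the $u$-coordinate" so that $\Gamma\cap\{u\le u_0\}$ carries $r=\hat r$, and that smoothness across $\mathcal C_{u_0}$ then follows from the gauge choice there. But the statement being proved is not merely that \emph{some} Schwarzschild extension exists — it specifies that the extension satisfies $\pv r=1-\tfrac{2M}{r}$ along $\Gamma\cap\{u\le u_0\}$, and this condition is nowhere addressed in your proof. It is not automatic: once $r|_\Gamma=\hat r$ and the $\mathcal C_{u_0}$ gauge are fixed, the equations $\pu\kappa=0$ (vacuum) only force $\kappa\equiv 1$ in $\{v\ge u_0\ge u\}$; along $\Gamma$ for $v<u_0$, the quantity $\kappa|_\Gamma$ — equivalently the split of $\boldsymbol T\hat r$ into $\pu r$ and $\pv r$ — is \emph{not} determined by $\hat r$, $\boldsymbol T\hat r$, $\bar r$. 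Prescribing $\pv r=1-\tfrac{2M}{r}$ there is an additional gauge choice, and it is exactly this choice (together with the fact that, in a $1+1$ reduction, one may prescribe both tangential and normal derivatives on the timelike curve $\Gamma$, which would be overdetermined in higher dimensions) that removes the residual freedom and uniquely picks out the vacuum extension $(r_0,0,M)$. The paper makes this concrete with two Gr\"onwall computations: $\pu\left(\pv r-(1-\tfrac{2M}{r})\right)=\tfrac{2M\nu}{r^2(1-\mu)}\left(\pv r-(1-\tfrac{2M}{r})\right)$ propagates $\pv r=1-\tfrac{2M}{r}$ from $\mathcal C_{u_0}$ backwards in $u$ for $v\ge u_0$, and a second Gr\"onwall argument in $v$ starting from $\Gamma$ (where $\pv r=1-\tfrac{2M}{r}$ is now imposed) determines $\pu r$. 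Your proposal, by contrast, appeals only to the geometric picture and therefore neither fixes the residual gauge nor verifies the stated $\pv r=1-\tfrac{2M}{r}$ condition on $\Gamma$. This needs to be filled in before the backward extension is complete.
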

\begin{proof}
The existence of the solution in the region $\mathcal D_{u_0, U_0}$ follows by continuously applying local existence theory (now also for the characteristic initial value problem) combined with the uniform bounds from Theorem~\ref{thm:timelike:cc}.
Moreover, in view of Birkhoff's theorem, we can smoothly extend to $u\leq u_0$ with the mass--$M$-Schwarzschild solution that satisfies $\pv r=1-\frac{2M}{r}$ on $\Gamma\cap\{u\leq u_0\}$ (see Figure~\ref{fig:7}).\footnote{This may seem a bit confusing at first: We are specifying the full data (both tangential \textit{and} normal derivatives) on $\Gamma\cap\{u\leq u_0\}$, which is a timelike hypersurface! The reason that this works is that, in spherical symmetry, i.e.\ in $1+1$ dimensions, time and space are on the same footing. Of course, one does not have to exploit this and could, instead, "anchor" $\pv r $ on $\mathcal I^-$.}
To be more concrete, one can compute $\pu\left(\pv r-(1-\frac{2M}{r})\right)=\frac{2M\nu}{r^2(1-\mu)}\left(\pv r-(1-\frac{2M}{r})\right)$ and apply Gr\"onwall's inequality combined with \eqref{eq:thmcc:lambdagauge} to see that $\pv r=1-\frac{2M}{r}$ for all $v\geq u_0\geq u$. 
Similarly, if one imposes that $\pv r=1-\frac{2M}{r}$ on $\Gamma\cap\{u\leq u_0\}$, one can apply Gr\"onwall's inequality to $\pv\left(\pu r-\left(\boldsymbol T \hat r-(1-\frac{2M}{r})\right)\right)$ to show that $\pu r-\left(\boldsymbol T \hat r-(1-\frac{2M}{r})\right)=0$ for all $u\leq u_0$. Combining these two facts uniquely determines the vacuum solution, which we shall henceforth refer to as $(r_0, 0, M)$. In particular, this solution is independent of $u_0$ in the sense that, for any two different values of $u_0$, say $u_{0,a}<u_{0,b}$, the two arising solutions $(r_0,0,M)$ are identical for $u\leq u_{0,a}$. 
\end{proof}

\begin{figure}[htbp]
\floatbox[{\capbeside\thisfloatsetup{capbesideposition={right,top},capbesidewidth=5cm}}]{figure}[\FBwidth]
{\caption{Extension of the finite solution towards $\mathcal I^-$ via the vacuum Schwarzschild solution. The finite solution arises  from trivial data on $\mathcal C_{u_0}$ and boundary data on $\Gamma$ that are compactly supported towards the future of $\mathcal C_{u_0}$ as depicted. We shaded the region that is uniquely determined by the specification of $\pv r=1-\frac{2M}{r}$ on $\Gamma\cap\{u\leq u_0\}$.}\label{fig:7}}
{ \includegraphics[width = 135pt]{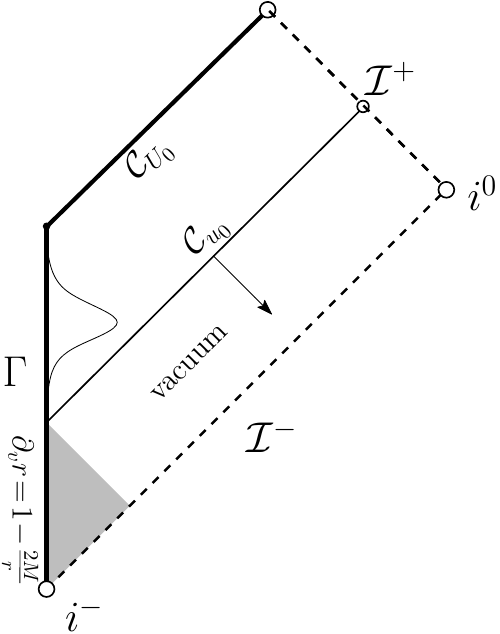}}
\end{figure}
\begin{rem}
The resulting Schwarzschild solution $(r_0,0,M)$ for $u\leq u_0$ can be related to the Schwarzschild solution $\mathcal M_M$ in double null Eddington--Finkelstein coordinates $(\tilde u,v)$ of section~\ref{sec:Schwarzschild} by redefining the $u$-coordinate such that $u(\tilde u)=v$ on a past-complete timelike curve $\Gamma\subset \mathcal M_M$ on which the area radius function coincides with $\hat r$.
\end{rem}
 We have thus generated a family of semi-global\todo{or global?} solutions coming from compactly supported and decaying boundary data on a timelike curve $\Gamma$, equipped with bounds independent of the support of the data. We will remove the assumption of compact support in the next section by sending $u_0$ to $-\infty$.
\subsection{The limiting problem: Sending \texorpdfstring{$\mathcal{C}_{u_0}$}{Cu0} to \texorpdfstring{$\mathcal{I}^-$}{I-}}\label{sec:timelike:limit}
In the previous proof, we crucially needed to exploit compactness to show the non-emptiness parts of the bootstrap arguments. We now remove this restriction to compact regions.

We shall follow a direct limiting argument:  Given \textit{non-compactly supported} boundary data, we construct a sequence of solutions with \textit{compactly supported boundary data} that will tend to a unique limiting solution such that this limiting solution will restrict correctly to the initially prescribed non-compactly supported boundary data. 
The advantage of this approach over, say, a density argument, is that it will also provide us with a uniqueness statement as we will explain in Remark~\ref{rem:uniqueness}.

We will first state the "data assumptions" that the limiting solution is required to satisfy and, after that, write down the sequence of finite solutions. The remainder of the subsection will then contain the detailed analysis of the convergence of this sequence.
\subsubsection*{The "final" boundary data}
Let $M>0$. We restrict to sufficiently large negative values of $u\leq U_0<0$ and specify boundary data $(\hat r, \hat \phi)$ on $\Gamma=\partial \mathcal D_U\cap\{u\leq U_0\}$ as follows: 
The datum $\hat r\in C^2(\Gamma)$ is to satisfy:\footnote{We remind the reader that this lower bound  on $R$ can be replaced by $R>2M$ in view of Remark~\ref{rem:betterbound}.}
 \begin{align}\label{111}
             |\boldsymbol{T}\hat r (u)|&\leq C_{\mathrm{in},r}|u|^{-s},\\
             \hat{r}& \geq R{\geq \frac{2M}{1-\e^{-2+\delta(U_0)}}}\label{222}
 \end{align}
for some positive constant $C_{\mathrm{in},r}$, {where $\delta(U_0)\sim |U_0|^{-2p+3}$ appeared before in Thm.~\ref{thm:timelike:cc}}. On the other hand, $\hat\phi\in C^2(\Gamma)$ is chosen to obey $\lim_{u\to-\infty}\hat r\hat\phi(u)=0$ and
 \begin{align}
            |\boldsymbol{T}(\hat r\hat\phi)(u)|&\leq C_{\mathrm{in},\phi}^1|u|^{-p}
    \end{align}
  with $p>3/2$ and some constant $C_{\mathrm{in},\phi}^1>0$.
  

\subsubsection*{The sequence of finite solutions $(r_k,\phi_k,m_k)$}
We finally prescribe a sequence of initial/boundary data as follows:
In a slight abuse of notation, let, for $k\in\mathbb N$, $\mathcal{C}_k=\{u=-k, v\geq u\}$ denote an outgoing null ray emanating from $\Gamma$, and let $(\chi_k)_{k\in \mathbb N}$ denote a sequence of smooth cut-off functions on $\Gamma$ (that are translates of each other), 
    \begin{equation}\label{eq:chi_k}
        \chi_k=\begin{cases}
        1,& u\geq- k+1,\\
        0,&u\leq -k.
        \end{cases}
    \end{equation}
On $\mathcal{C}_k$, we denote by $\bar{r}_k$ the solution to the ODE 
    \begin{equation}\pv(\bar{r}_k)=1-\frac{2M}{\bar{r}_k}\label{eq:pvr=1-2mr}\end{equation}
with initial condition given by $\hat{r}(\Gamma \cap \mathcal{C}_k)$. 
Then our sequence of initial data is given by:
    \begin{equation}
    (I.D.)_k=\begin{cases}
    \hat{r}_k=\hat{r}, \,\hat{\phi}_k=\chi_k\hat{\phi} &\text{ on }\, \,\,\Gamma, \\
    \bar{r}_k,\,\bar{\phi}_k=0, \,m=M & \text{ on }\, \,\,\mathcal{C}_k.
    \end{cases}
    \end{equation}
This sequence of data leads, by Theorems~\ref{thm:timelike:cc} and~\ref{thm:globalinv} (for sufficiently large values of $U_0$), to a sequence of solutions $(r_k, \phi_k,m_k)$, which we can smoothly extend (by Thm.~\ref{thm:globalinv}) with the background mass--$M$-Schwarzschild solution $(r_0, 0, M)$ that satisfies $\pv r_0=1-\frac{2M}{r_0}$ on $\Gamma$ for $u\leq-k$.  In particular, these solutions all obey the same bounds (uniformly in $k$) from Theorem~\ref{thm:timelike:cc} in the region where they are non-trivial -- most notably, they obey the upper bound \eqref{eq:timelikemainbound1}.

 In the sequel, we will always mean this extended solution when referring to $(r_k,\phi_k,m_k)$.
We will now show that this sequence tends to a limiting solution, and that this limiting solution still obeys the bounds from Thm.~\ref{thm:timelike:cc} and, moreover, restricts correctly to the non-compactly supported "final" boundary data $\hat{r},\hat{\phi}$ and vanishes on $\mathcal I^-$.

Before we state the theorem, let us recall the notation
\begin{equation}
       \DU:=\{(u,v)\in \mathbb{R}^2\,|\,u\in(-\infty,U_0],\,v\in[u,\infty)\} .
\end{equation}

\newcommand{\DUU}{{\mathcal D}_{U_0}}
\begin{thm}\label{thm:timelike:final}
    Let $(r_k, \phi_k,m_k)$ denote the sequence of solutions constructed above.
    Let $p\geq2$, and let $U_0<0$ be sufficiently large.  In the case $p=2$, assume in addition that $C_{\mathrm{in},\phi}$ is sufficiently small.\footnote{The precise smallness conditions depends on both the ratios $M/R$ and $C_{\mathrm{in},\phi}/R$. 
    It is of no importance, however, since we will remove this assumption, as well as the assumption that $p\geq2$, in Thm.~\ref{thm:timelike:final!!!}. (These are the restrictive assumptions mentioned at the beginning of this section.)}
       
       Then, as $k\to\infty$, the sequence $(r_k, \phi_k,m_k)$ uniformly converges to a limit $(r,\phi,m)$, 
            \begin{equation}
                ||r_k\phi_k-r\phi||_{C^1(\DUU)}+||r_k-r||_{C^1(\DUU)}+||m_k-m||_{C^1(\DUU)}\to 0.
            \end{equation}
       This limit is also a solution to the spherically symmetric Einstein-Scalar field equations.    Moreover, $(r,\phi,m)$ restricts correctly to the boundary data $(\hat{r},\hat{\phi})$ and satisfies, for all $v$,
        \begin{equation}
            \lim_{u\to-\infty}r\phi(u,v)=  \lim_{u\to-\infty}\pv m(u,v)=  \lim_{u\to-\infty}\pv (r\phi)(u,v)=0
        \end{equation}
        as well as
        \begin{equation}\label{eq:thm:final:mlimit}
              \lim_{u\to-\infty}m(u,v)=\lim_{u\to-\infty}m(u,u)=M.
        \end{equation}
        Furthermore,  $(r,\phi, m)$, as well as the quantities $\lambda, \nu,\kappa$, satisfy the same bounds as those derived in Theorem~\ref{thm:timelike:cc}, i.e., we have throughout all of $\DU$:\footnote{Recall the definition of $\delta(u)\sim|u|^{-2p+3}$ from Theorem~\ref{thm:timelike:cc}.}
        \begingroup
\allowdisplaybreaks
            \begin{align}
                0<\frac{M}{2}\leq m\leq M,\\
                0<1-\frac{2M}{r}= 1-\mu\leq 1,\\
                0<1-\delta(u)\leq\kappa\leq 1,\\
                0<\left(1-\delta(u)\right)\left(1-\frac{2M}{r}\right)=\lambda\leq 1,\\
               0< d_\nu=(1-2M/R) \e^{-\frac{2M}{R-2M}}\leq |\nu|\leq \e^{\frac{2M}{R-2M}}=C_\nu,\\
                |r\phi|\leq  \frac{1}{p-1}\frac{ C_{\mathrm{in},\phi}^1|u|^{-p+1}}{1-\frac{1}{2(1-\delta (U_0))}\log\frac{R}{R-2M}}=C'|u|^{-p+1} \label{eq:timelike:limit:theorem:upperboundforrphi},\\
                |\pv(r\phi)|\leq M  C'\frac{|u|^{-p+1}}{r^2}\label{eq:timelike:limit:theorem:upperboundfordvrphi}.
            \end{align} \endgroup
       If we moreover assume that there exists a positive constant $d_{\mathrm{in},\phi}^1\leq C_{\mathrm{in},\phi}^1$ such that 
       \begin{equation}
           |\boldsymbol{T}(\hat r\hat\phi)(u)|\geq d^1_{\mathrm{in},\phi}|u|^{-p},
       \end{equation}
       then, depending on the value of $C_{\mathrm{in},\phi}^1-d^1_{\mathrm{in},\phi}$, and if $R$ is large enough, there exists a positive constant $d'$ depending only on data such that, throughout $\mathcal D_{U_0}$,
            \begin{equation}\label{eq:timelike:limit:theorem:lowerboundforphi}
                |r\phi|\geq d'|u|^{-p+1},
            \end{equation}
        and furthermore, $C'$ can be replaced by $C_{\mathrm{in},\phi}^1/(p-1)$ in the  estimates \eqref{eq:timelike:limit:theorem:upperboundforrphi}, \eqref{eq:timelike:limit:theorem:upperboundfordvrphi}.
        In the special case where $d^1_{\mathrm{in},\phi}=C_{\mathrm{in},\phi}^1$, the condition that $R$ be large enough is given by
        \begin{equation}
            R>\frac{2M}{1-\e^{-1+\delta(U_0)}}.
        \end{equation}
    \end{thm}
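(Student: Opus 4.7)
The plan is to establish convergence of the sequence $(r_k,\phi_k,m_k)$ by a direct Gr\"onwall-type Cauchy argument on differences, and then to verify that the limit inherits all the desired properties. Fix indices $j<k$: by construction each solution coincides with the vacuum Schwarzschild background $(r_0,0,M)$ for $u\leq -k$, and on the strip $\{-k\leq u\leq -j\}$ the $j$-th solution is still identically $(r_0,0,M)$ since $\chi_j\equiv 0$ there. Hence the differences $\Delta f:=f_k-f_j$ for $f\in\{r\phi,\,r,\,\pv r,\,\pu r,\,m\}$ are supported in $\{u\geq -k\}$, satisfy $\Delta\hat r=0$ on $\Gamma$, and $\Delta(\hat r\hat\phi)=(\chi_k-\chi_j)\hat r\hat\phi$ on $\Gamma$, the latter being supported in $[-k+1,-j]$ and bounded pointwise by $C_{\mathrm{in},\phi}|u|^{-p+1}$. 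Subtracting the first-order system \eqref{eq:puvarpi}--\eqref{eq:pvzeta} for the two solutions produces a coupled linear system for the $\Delta f$'s whose coefficients are uniformly bounded thanks to Theorem \ref{thm:timelike:cc}.

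Introducing the weighted quantity
\begin{equation*}
S_{k,j}(u,v):=\sup_{\substack{(u',v')\in\DUU\\u'\leq u,\,v'\leq v}}|u'|^{p-1}\bigl(|\Delta(r\phi)|+|\Delta r|+|\Delta\pv r|+|\Delta\pu r|+|\Delta m|\bigr)(u',v'),
\end{equation*}
integrating each difference equation either from $\Gamma$ or from $\mathcal C_j$ as appropriate, and using the weighted bound \eqref{eq:timelike:limit:theorem:upperboundforrphi} together with the $r^{-3}$-weight in the wave equation \eqref{eq:wave}, will yield an inequality of the form
\begin{equation*}
S_{k,j}(u,v)\;\leq\;\varepsilon_{j}\;+\;\int_{-k}^{u}\!\!\int_{u'}^{v} K(u',v')\,S_{k,j}(u',v')\,\dd v'\,\dd u',
\end{equation*}
with $\varepsilon_j\to 0$ as $j\to\infty$ uniformly in $k>j$ and a kernel $K$ integrable in $v$. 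Taking the supremum in $v$ first, in the spirit of Remark \ref{rem:betterbound}, and then applying Gr\"onwall's inequality gives $S_{k,j}\to 0$ and hence the Cauchy property in $C^1(\DUU)$. The hard part here is closing this estimate precisely in the critical case $p=2$: then the weight $|u'|^{p-1}=|u'|$ produces a non-integrable factor $|u'|^{-1}$ in the kernel, the Gr\"onwall constant is genuinely of order one, and the argument only closes if this constant is strictly less than one -- exactly the reason for the smallness assumption on $C_{\mathrm{in},\phi}$ in that regime. For $p>2$ the kernel is integrable in $u'$ as well and no smallness is required.

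Once the limit $(r,\phi,m)$ has been constructed, $C^1$-convergence is enough to pass pointwise to the limit in \eqref{eq:puvarpi}--\eqref{eq:pvzeta}, so the limit is a classical solution, and uniform convergence transfers every pointwise bound from Theorem \ref{thm:timelike:cc}. The boundary condition $\phi|_\Gamma=\hat\phi$ follows from $\chi_k\hat\phi\to\hat\phi$; the vanishing of $r\phi$, $\pv(r\phi)$ and $\pv m$ at $\mathcal I^-$ follows from the weighted bounds \eqref{eq:timelike:limit:theorem:upperboundforrphi}--\eqref{eq:timelike:limit:theorem:upperboundfordvrphi} and the analogous bound on $\pv m$ obtained by inserting \eqref{eq:timelike:limit:theorem:upperboundfordvrphi} into \eqref{eq:pvvarpi}; and $m|_{\mathcal I^-}=M$ is a consequence of the monotonicity $\pu m\leq 0$ together with the estimate $m\geq M-\mathcal O(|u|^{-2p+3})$ extracted from \eqref{eq:timelike:bootstrapproof:mleqEi} in the proof of Theorem \ref{thm:timelike:cc}.

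For the lower bound \eqref{eq:timelike:limit:theorem:lowerboundforphi} I would propagate the lower bound on $\boldsymbol T(\hat r\hat\phi)$ from $\Gamma$ by exploiting \eqref{eq:timelike:limit:theorem:upperboundfordvrphi}: integrating in $v$ from $\Gamma$ and changing variables via $\dd r=\lambda\,\dd v'$ one gets
\begin{equation*}
|r\phi|(u,v)\;\geq\;|\hat r\hat\phi|(u)\;-\;\int_u^v |\pv(r\phi)|(u,v')\,\dd v'\;\geq\;\Bigl(\tfrac{d^1_{\mathrm{in},\phi}}{p-1}-\mathcal O\bigl(\tfrac{M C'}{R^2}\bigr)\Bigr)|u|^{-p+1},
\end{equation*}
with the $\mathcal O(\cdot)$-term coming from $\int_{\hat r}^{r(u,v)}r^{-2}\dd r\leq 1/R$ and the uniform lower bound on $\lambda$ from Theorem \ref{thm:timelike:cc}. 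For $R$ sufficiently large in terms of $M$ and $C_{\mathrm{in},\phi}^1/d^1_{\mathrm{in},\phi}$ the bracketed coefficient is strictly positive, which yields the desired lower bound with some $d'>0$. In the degenerate case $d^1_{\mathrm{in},\phi}=C^1_{\mathrm{in},\phi}$, the Gr\"onwall argument of Remark \ref{rem:betterbound} sharpens $C'$ to $C^1_{\mathrm{in},\phi}/(p-1)$, which reduces the threshold on $R$ to the stated $R>2M/(1-\e^{-1+\delta(U_0)})$ and simultaneously removes the logarithmic enhancement in the upper bound on $|r\phi|$.
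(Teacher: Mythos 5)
The central difficulty with your proposal is the choice to weight the differences by $|u'|^{p-1}$ in $S_{k,j}$. With that weight, the boundary data term is \emph{not} small: on the strip $\{-k\le u\le -j\}$, the boundary data difference on $\Gamma$ is $\Delta(\hat r\hat\phi)=(\chi_k-\chi_j)\hat r\hat\phi$, which satisfies $|\Delta(\hat r\hat\phi)|\lesssim |u|^{-p+1}$, so that $|u|^{p-1}|\Delta(\hat r\hat\phi)|=O(1)$. Likewise, in that strip the $j$-solution is exactly vacuum, so $|u|^{p-1}|\Delta(r\phi)|=|u|^{p-1}|(r\phi)_k|=O(1)$. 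Your $\varepsilon_j$ therefore does not tend to zero; in the weighted topology the two solutions are simply not close on the transition strip, and a Gr\"onwall bound $S_{k,j}\lesssim\varepsilon_j e^{\int\!\int K}$ only gives $O(1)$. This is why the paper works with \emph{unweighted} differences and extracts the Cauchy decay from the fact that $|u|\ge k$ (the smaller index) holds throughout the transition strip, giving direct smallness $\lesssim k^{-(p-1)}$ there before the Gr\"onwall even starts. The weight $|u|^{p-1}$ precisely cancels the decay you need.

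A second, independent gap concerns the mechanism behind the smallness of $C_{\mathrm{in},\phi}^1$ when $p=2$. You attribute it to the total Gr\"onwall constant needing to be $<1$ (an absorption argument). In fact, the relevant difficulty is that the kernel acquires a $\log k$ divergence: the $\Delta m$-term in $\Delta(\pu\pv r)$ carries a factor $1/r^2$, and $\int_{-k}^u\int_{u'}^v r^{-2}\,\dd v'\,\dd u'\sim\log k$; simultaneously, $\Delta(\pu m)$ has a borderline term $\sim C_{\mathrm{in},\phi}^1|u'|^{-1}\Delta(\pu(r\phi))$, whose $u$-integral also gives $\log k$. The paper handles this by multiplying $\Delta m$ by an auxiliary large constant $\mathbf D$ in the Gr\"onwall quantity, so that the first contribution scales like $(C/\mathbf D)\log k$ and the second like $C\mathbf D C_{\mathrm{in},\phi}^1\log k$; balancing $\mathbf D$ and making $C_{\mathrm{in},\phi}^1$ small yields a Gr\"onwall factor $k^c$ with $c<1/2$, which loses against the $k^{-1+\eta}$ smallness of the transition-strip difference. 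Your proposal does not reproduce this $\mathbf D$-trick, and without it the exponent is not under control even if the boundary term were small. (Minor additional point: your $S_{k,j}$ omits $\Delta\pu(r\phi)$ and $\Delta\pv(r\phi)$, both of which enter $\Delta(\pu m)$ through $\Delta\zeta$ and must be carried along in the Gr\"onwall quantity — the paper explicitly includes them.)

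Your argument for the lower bound on $|r\phi|$ is close in spirit to the paper's, though your estimate $\int r^{-2}\dd r\le 1/R$ drops the $\lambda^{-1}$ factor that blows up near $r=2M$; the paper's sharper integral produces the $\log\frac{R}{R-2M}$ structure and hence the specific threshold $R>2M/(1-\e^{-1+\delta(U_0)})$ rather than just "$R$ large enough in terms of $M$ and $C_{\mathrm{in},\phi}^1/d_{\mathrm{in},\phi}^1$". This part of your proposal is therefore directionally correct but not quantitatively the same.
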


Note that, from the proof of this theorem, one can, \textit{a fortiori}, derive Proposition~\ref{prop:exis4} from the previous section.
\begin{proof}[Proof of Theorem~\ref{thm:timelike:final}]
We focus on the case $p=2$; the cases $p>2$ will follow \textit{a fortiori}.
We will show that the sequence $(r_k, r_k\phi_k,m_k)$ is a Cauchy sequence in the $C^1(\DUU)\times C^1(\DUU)\times C^0(\DUU)$-norm:
Let $\epsilon>0$. We want to show that there exists an $N(\epsilon)$ (to be specified later) such that 
    \begin{equation}
        ||(r\phi)_n-(r\phi)_k||_{C^1(\DUU)}+||r_n-r_k||_{C^1(\DUU)}+||m_n-m_k||_{C^0(\DUU)}<\epsilon
    \end{equation} for all  $n>k>N(\epsilon)$.
\begin{figure}[htbp]
\floatbox[{\capbeside\thisfloatsetup{capbesideposition={right,top},capbesidewidth=4.4cm}}]{figure}[\FBwidth]
{\caption{Depiction of the three regions in which we subdivide. In Region 1, both solutions are vacuum. In Region 2, both solutions can be shown to have small matter content for sufficiently large $k$. Estimating the $r$-difference in Region 2, and estimating all differences in Region 3, however, requires more work. }\label{fig:8}}
{\includegraphics[width = 135pt]{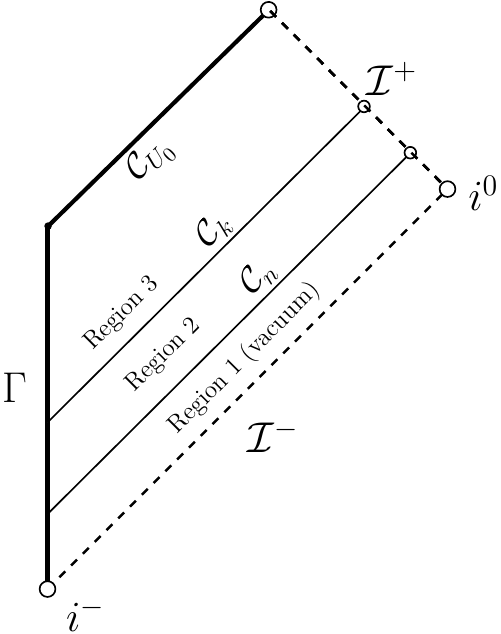}
}
\end{figure}

We will show this by splitting up into three regions (see Figure~\ref{fig:8} above). 

Note that, from now on, we will replace most uniform constants simply by $C$ and adopt the usual algebra of constants ($C+D=CE=C=\dots $).
\paragraph{Region 1:}
For $u\leq -n$, the solutions $(r_n, \phi_n,m_n)$, $(r_k,\phi_k,m_k)$ both agree with the vacuum solution $(r_0,0,M)$, so the difference vanishes. See also the argument of Thm.~\ref{thm:globalinv}.
\paragraph{Region 2:}
For $-n\leq u\leq -k$ (let's call this region $\mathcal{D}_{n,k}$), we obtain, in view of the decay of $|r\phi|\leq \frac{C}{|u|}$ and the related bounds for $\pu(r\phi)$ and $\pv(r\phi)$:
    \begin{equation}\label{eqq1}
         ||(r\phi)_n-(r\phi)_k||_{C^1(\mathcal{D}_{n,k})}\leq \frac{C}{k}.
    \end{equation}

For the $m$-difference, recall that $|\zeta|\leq \frac{C}{|u|}$, so we can just integrate $\pu m\sim-\zeta^2$ from $\mathcal{C}_n$, where the difference $m_n-m_k$ is 0:
    \begin{equation}\label{eqq2}
        ||m_n-m_k||_{C^0(\mathcal{D}_{n,k})}\leq\frac{C}{k}.
    \end{equation}

Controlling the $r$-difference, on the other hand, turns out to be more tricky: First, consider the $\kappa$-difference by integrating  $\pu\kappa\sim-\frac{\zeta^2}{r}\kappa$ from $\mathcal{C}_n$ (where the difference $\kappa_n-\kappa_k$ is again 0):
\begin{equation}\label{eqq222}
   ||\kappa_n-\kappa_k||_{C^0(\mathcal{D}_{n,k})}\leq \frac{C}{k}.
\end{equation}
For the other terms appearing in the $r$-difference, we will want to appeal to a Gr\"onwall-type argument.
Let $(u,v)\in\mathcal{D}_{n,k}$. Then we have, by the fundamental theorem of calculus:\footnote{Strictly speaking, we already control the $\lambda$-difference, so we don't need to include it here.}
    \begin{align}
    \begin{split}\label{eq:timelike:limit:proof:rdifferences}
       & |r_n(u,v)-r_k(u,v)|+|\lambda_n(u,v)-\lambda_k(u,v)|+|\nu_n(u,v)-\nu_k(u,v)|\\
        \leq& \left|\int_u^v\int_{-n}^u \pu\pv(r_n-r_k)(u',v')\dd u'\dd v'\right|+ \left|\int_u^v \pu\pv(r_n-r_k)(u,v')\dd v'\right|\\
        +&\left|\int_{-n}^u \pu\pv(r_n-r_k)(u',v)\dd u' \right|+\left|\int_{-n}^u \pu\pv(r_n-r_k)(u',u)\dd u'\right|,
          \end{split}
    \end{align}
where the fourth integral on the RHS comes from the difference $\nu_n-\nu_k$ on $\Gamma$, which we estimate by using that
 \[|\nu_n(u,u)-\nu_m(u,u)|=|\lambda_n(u,u)-\lambda_m(u,u)|\]
 and then integrating $\pu \lambda$ in $u$ from $\mathcal{C}_n$ to estimate the right-hand side $\lambda_n(u,u)-\lambda_m(u,u)$.
 Otherwise, there are no boundary terms since $r_n$ and $r_k$ coincide on $\Gamma$ and $\mathcal{C}_n$.
 
Our strategy shall now be to estimate the RHS of \eqref{eq:timelike:limit:proof:rdifferences} against integrals over products of integrable functions  and $r$-differences so that we can apply a Gr\"onwall argument to it.  
Let's first focus on the double integral on the RHS of \eqref{eq:timelike:limit:proof:rdifferences} (which controls $|r_n-r_k|$). By eq.~\eqref{eq:pupvr}, we have
\begin{align}\label{eq:prooflim1}
    \begin{split}
        \frac12 \pu\pv(r_n-r_k)=&(m_n-m_k)\left(\frac{\nu\kappa}{r^2}\right)_n+m_k\nu_k\frac{\kappa_n-\kappa_k}{r_n^2}\\
        +&m_k(\nu_n-\nu_k)\left(\frac{\kappa}{r^2}\right)_n+m_k\nu_k\frac{\kappa_k(r_n+r_k)}{r_n^2 r_k^2}(r_k-r_n).
    \end{split}
\end{align}
Notice already that, while the first two terms in the equation above are integrable, the $(\nu_n-\nu_k)$-term in \eqref{eq:prooflim1} comes with a non-integrable factor of $1/r^2$. We can, however, write\footnote{Here, and in what follows, the constant hidden inside $\lesssim$ depends only on $M, C_{\mathrm{in},\phi}^1$ and $R$.}
\begin{nalign}
    \label{eq:timelike:limit:proof3}
        &\left|\int_u^v\int_{-n}^u \pu\pv(r_n-r_k)(u',v')\dd u'\dd v'\right|\\
        \lesssim& 
                    \left|\int_u^v\int_{-n}^u \frac{|m_n-m_k|}{r_n^2}+\frac{|\kappa_n-\kappa_k|}{r_n^2}\dd u'\dd v'\right|
                    + \left|\int_u^v\int_{-n}^u \frac{r_n+r_k}{r_n^2 r_k^2}|r_k-r_n|\dd u'\dd v'\right|\\
                    +& \left|\int_u^v\int_{-n}^u     \pu\left(   \frac{r_n-r_k}{r_n^2}\kappa_n m_k    \right)+\(\frac{\pu r_n}{2r_n^3}\kappa_n  m_k-\frac{\pu(\kappa_n m_k )}{r_n^2}\)(r_n-r_k)    \dd u'\dd v'\right|,
\end{nalign}
where we dealt with the bad $(\nu_n-\nu_k)$-term by using the Leibniz rule.

For the first integral in this expression, we use the fact that\footnote{We shall write differences $f_n-f_k$ as $\Delta f$.} $\Delta\kappa+\Delta m\leq \frac{C}{|u|}$.
 In the region where $r\lesssim|u|$, i.e.\ away from $\mathcal{I}^+$, we can convert some of this $|u|$-decay, say $|u|^{-\eta}$ for some $\eta>0$, into $r$-decay so that the integral becomes integrable in $r$. Similarly, near null infinity, we can convert the $r$-decay into $|u|$-decay.
Thus, the first integral can be bounded by $C k^{-1+\eta}$. It will play the role of the function multiplying the exponential in the Gr\"onwall argument.

The second integral of \eqref{eq:timelike:limit:proof3} consists of an integrable 
 function $\sim r^{-3}$ multiplied by $\Delta r$ and, thus, is of the form we want (this term will appear inside the exponential term in the Gr\"onwall lemma, along with the other remaining terms).
 
For the third integral of \eqref{eq:timelike:limit:proof3}, the first term in it can be integrated using the fundamental theorem of calculus and gives, recalling that $\Delta r(u=-n)=0$,
\begin{equation}
    \left|\int_u^v       \left( \frac{r_n-r_k}{r_n^2}\kappa_n m_k \right)(u,v')\dd v'   \right|\lesssim  \int_u^v       \left( \frac{|r_n-r_k|}{r_n^2} \right)(u,v')\dd v' ,  
\end{equation}
so this integrand can also be written as a product of $\Delta r$ and an integrable function.
On the other hand, the second term in the third integral of \eqref{eq:timelike:limit:proof3} is, again, a product of $\Delta r$ and an integrable function that goes like $r^{-3}+r^{-2}|u|^{-2}$ and, thus, can be dealt with in the same way as the second integral.

To summarise, we can estimate \eqref{eq:timelike:limit:proof3} (we also exchange the order of integration using Tonelli) via
\begin{nalign}
    \label{eq:timelike:limit:proof:rdifference1}
      &|r_n-r_k|(u,v) \leq \left|\int_u^v\int_{-n}^u \pu\pv(r_n-r_k)(u',v')\dd u'\dd v'\right|\\
       \leq&
        \frac{C}{k^{1-\eta}}+C  \int_u^v       \left( \frac{|r_n-r_k|}{r_n^2} \right)(u,v')\dd v'     
         + C\int_{-n}^u\int_{u}^v f(u',v')|r_k-r_n|(u',v')\dd v'\dd u'\\
        \leq&         \frac{C}{k^{1-\eta}}+ C  \int_{-n}^v r_n^{-2}|r_k-r_n|(u,v')    \dd v'+C\int_{-n}^u  \sup_{v'\in[u,v]}|r_k-r_n|(u',v')  \int_{u}^v f(u',v') \dd v'\dd u', 
\end{nalign}
where $f\geq 0$ is a positive, (doubly) integrable\footnote{Indeed, we have  $\int\int \frac{1}{r_n r_k^2}\dd v\dd u\lesssim \int\frac{1}{r_nr_k}\dd u\leq \left(\int \frac{1}{r_n^2}\dd u\right)^{\frac{1}{2}}\left(\int \frac{1}{r_k^2}\dd u\right)^{\frac{1}{2}} $ by Cauchy--Schwarz.}
 function obeying 
$f(u,v)\leq \frac{C}{r_n^3}+\frac{C}{r_nr_k^2}+\frac{C}{r_k r_n^2}+\frac{C}{r_n^2|u|^2}.$ 
Rewriting, we thus have
\begin{nalign}\label{eq:gr1}
|\Delta r|(u,v)\leq \frac{C}{k^{1-\eta}}+C \int_u^v      r_n^{-2}|\Delta r|(u,v')\dd v' +C\int_{-n}^u F(u')\sup_{v'\in[u,v]}|\Delta r|(u',v')\dd u'
\end{nalign}
for some positive $F(u')$ which obeys, for all $u'\leq u$,
$$F(u')\leq  \frac{C}{r_n^2(u',u)}+\frac{C}{r_n(u',u)r_k(u',u)}+\frac{C}{r_n(u',u)|u'|^2}.$$
Now, if we consider \eqref{eq:gr1} for fixed $u$, and note that $\sup_{v'\in[u,v]}|r_k-r_n|(u',v')$ is non-decreasing in $v$, we obtain, by an application of Gr\"onwall's inequality, that, for each~$u$, 
\begin{nalign}
|\Delta r|(u,v)\leq \left(\frac{C}{k^{1-\eta}}+C\int_{-n}^u F(u')\sup_{v'\in[u,v]}|\Delta r|(u',v')\dd u'\right)\cdot \e^{C \int_u^v      r_n^{-2}(u,v')\dd v'} .
\end{nalign}
The integral in the exponential can be bounded uniformly in $u$ against $\e^{C/R}$, and we thus obtain
\begin{nalign}
|\Delta r|(u,v)\leq\frac{C}{k^{1-\eta}}+C\int_{-n}^u F(u')\sup_{v'\in[u,v]}|\Delta r|(u',v')\dd u'.
\end{nalign}
Taking now the supremum in $v'\in[u,v]$ on the LHS, and applying another Gr\"onwall estimate, we then obtain
\begin{equation}
\sup_{v'\in[u,v]}|\Delta r|(u,v')\leq \frac{C}{k^{1-\eta}}.
\end{equation}

Notice that it was crucial that we took out the supremum in $v$ in \eqref{eq:timelike:limit:proof:rdifference1} -- taking the supremum in $u$ wouldn't work because the rectangle over which we integrate, namely $-n\leq u'\leq u\leq v'\leq v$, increases in $v$-length as $u$ approaches $-n$.

Let's now move to the other three terms in \eqref{eq:timelike:limit:proof:rdifferences}. All of them can be dealt with in a simpler way than the first term. 
Consider, for example, the second term of \eqref{eq:timelike:limit:proof:rdifferences}:
\begin{align}
    \begin{split}\label{eq:timelike:limit:proof:rdifference2}
      & \left|\int_u^v \pu\pv(r_n-r_k)(u,v')\dd v'\right|  \\  
        \leq& C\int_u^v  \frac{|m_n-m_k|+|\kappa_n-\kappa_k|}{r_n^2}(u,v') \dd v'
        \\+& C\int_u^v \left(\frac{1}{r_n^2r_k}+\frac{1}{r_kr_n^2}\right)(|r_n-r_k|+|\nu_n-\nu_k|)(u,v')\dd v' \\
        \leq& \frac{C}{k}+ C\int_u^v \left(\frac{1}{r_n^2r_k}+\frac{1}{r_kr_n^2}\right)(|r_n-r_k|+|\nu_n-\nu_k|)(u,v')\dd v' .
    \end{split}
\end{align}

The last two terms in \eqref{eq:timelike:limit:proof:rdifferences} can be estimated similarly (after also taking the supremum  $\sup_{v'\in[u,v]}$). One thus obtains the inequality \eqref{eq:gr1}, but with $|\Delta r|$ replaced by $|\Delta r|+|\Delta \nu|+|\Delta \lambda|$. From this, we conclude that
\begin{equation}\label{eqq3}
      ||r_n-r_k||_{C^1(\mathcal{D}_{n,k})}\leq \frac{C}{k^{1-\eta}}\leq\frac\epsilon2
\end{equation}
for some $\eta>0$ that can be chosen arbitrarily small. The last inequality holds if $N(\epsilon)$ is chosen accordingly. Similarly, we can make each of the RHS's of estimates \eqref{eqq1} and \eqref{eqq2} smaller than $\epsilon/4$ by choosing $N(\epsilon)$ sufficiently large. We thus conclude that
\begin{equation}\label{eeeeeeeeeqqqqqqqqqqq}
 ||(r\phi)_n-(r\phi)_k||_{C^1(\mathcal{D}_{n,k})}+ ||r_n-r_k||_{C^1(\mathcal{D}_{n,k})}+        ||m_n-m_k||_{C^0(\mathcal{D}_{n,k})}\leq \epsilon.
\end{equation}

\paragraph{Region 3:}
Finally, we consider the region $-k\leq-u\leq -U_0$ (which we shall call $\mathcal{D}_k$). We will again want to perform a Gr\"onwall argument. This time however, we will need to include the $m$- and the $r\phi$-differences in the Gr\"onwall estimate as well, since we can no longer appeal to smallness in $\frac1k$ to estimate them directly.

First, we write down estimates for the differences on the boundary $\Gamma$ and on $\mathcal{C}_k$.
On the boundary $\Gamma$, we have, schematically:
\begin{align}
    \begin{split}
    |\pu r_n-\pu r_k|_{C^0(\Gamma)}&\leq |\pv r_n-\pv r_k|_{C^0(\mathcal{C}_k)}+\int \pu\pv (\Delta r) \dd u,\\
    |\pu (r\phi)_n-\pu (r\phi)_k|_{C^0(\Gamma)}&\leq |\mathbf{T} (r\phi)_n-\mathbf{T} (r\phi)_k|_{C^0(\Gamma)}\\
    &+ |\pv (r\phi)_n-\pv (r\phi)_k|_{C^0(\mathcal{C}_k)}+\int \pu\pv(\Delta (r\phi)) \dd u,\\
    | (r\phi)_n- (r\phi)_k|_{C^1(\Gamma)}&\leq \frac{C}{k},
    \end{split}
\end{align}
whereas, on $\mathcal{C}_k$, we have, by the previous result in region 2, that
\begin{equation}
     ||(r\phi)_n-(r\phi)_k||_{C^1(\mathcal{C}_k)}+||r_n-r_k||_{C^1(\mathcal{C}_k)}+||m_n-m_k||_{C^0(\mathcal{C}_k)}\leq \frac{C}{k^{1-\eta}}.
\end{equation}

Moving now on to the Gr\"onwall estimate, we write, using the above estimates for the initial/boundary differences,
\begin{align}
    \begin{split}\label{eq:timelike:limit:proof:gronwallfullthing}
      &  (|(r\phi)_n-(r\phi)_k|+|\pu(r\phi)_n-\pu(r\phi)_k|+|\pv(r\phi)_n-\pv(r\phi)_k|\\
        &+|r_n-r_k|+|\pv r_n-\pv r_k|+|\pu r_n-\pu r_k|+\mathbf{D}\cdot|m_n-m_k|)(u,v)\\
        \leq& \frac{C}{k^{1-\eta}}
       +\left|\int_{-m}^u \pu\pv((r\phi)_n-(r\phi)_k)(u',u)\dd u'\right|+\left|\int_{-m}^u \pu\pv(r_n-r_k)(u',u)\dd u'\right|\\
    &    +\underbracket{\left|\int_u^v\int_{-k}^u \pu\pv((r\phi)_n-(r\phi)_k)(u',v')\dd u'\dd v'\right|}_{(1)}\\
    &    + \left|\int_u^v \pu\pv((r\phi)_n-(r\phi)_k)(u,v')\dd v'\right|
        +\left|\int_{-k}^u \pu\pv((r\phi)_n-(r\phi)_k)(u',v)\dd u' \right|\\
      &  +\underbracket{\left|\int_u^v\int_{-k}^u \pu\pv(r_n-r_k)(u',v')\dd u'\dd v'\right|}_{(2)}\\
       & + \left|\int_u^v \pu\pv(r_n-r_k)(u,v')\dd v'\right|
        +\left|\int_{-k}^u \pu\pv(r_n-r_k)(u',v)\dd u' \right|\\
        &+\underbracket{\mathbf{D}\left|\int_{-k}^u \pu (m_n-m_k)(u',v)\dd u'\right|}_{(3)}.
    \end{split}
\end{align}
Here, we introduced a positive constant $\mathbf{D}>0$. Its relevance will become clear later.
Note that, once we control the integrals~(1),~(2), we can, a fortiori, also control all the other integrals except for~(3).

In order to estimate the underlined term~(1), observe that, schematically (omitting the indices $n,k$),
\begin{align}
    \frac12\Delta(\pu\pv(r\phi))=\Delta \(m \frac{\nu\lambda}{1-\mu}\)\frac{r\phi}{r^3}+m \frac{\nu\lambda}{1-\mu}\frac{\Delta(r\phi)}{r^3}+m \frac{\nu\lambda}{1-\mu}\frac{r\phi}{r^4}\cdot3\Delta r,
\end{align}
so (1) does not pose a problem (as all of the terms multiplying the $\Delta$-differences are integrable).

Next, we look at the underlined term~(2). Again, we have that, schematically, 
\begin{align}\label{eq:limit:problem}
    \begin{split}
        \frac12\Delta(\pu\pv r)&=\Delta \nu \frac{m\lambda}{1-\mu}\frac{1}{r^2}+\Delta \lambda \frac{m\nu}{1-\mu}\frac{1}{r^2}\\
        &+\frac{2\Delta r}{r^3} \frac{\nu m\lambda}{1-\mu}+\frac{\nu m \lambda}{(1-\mu)^2}\frac{1}{r^2}\(\frac{2\Delta m}{r}+\frac{2m\Delta r}{r^2}\)+\frac{\Delta(\mathbf{D} m)}{\mathbf{D}} \frac{\nu\lambda}{1-\mu}\frac{1}{r^2}.
    \end{split}
\end{align}
The first two terms on the RHS can be dealt with using an integration by parts as in~\eqref{eq:timelike:limit:proof3} (after also interchanging the order of integration, using Fubini, for the $\Delta\lambda$-term). The second two terms are, again, products of $\Delta$-differences and integrable functions; they also pose no problem.

The last term, however, is borderline non-integrable and will lead to a $\log$-divergence in the exponential of the Gr\"onwall lemma.
Indeed, we have (cf.~Lemma~\ref{prop: null comparibilty of r, v-u})
\begin{equation}\nonumber
\frac{1}{\mathbf{D}}\int_{-k}^u\int_{u}^v\frac{1}{r^2(u',v')}\dd v'\dd u'\leq \frac{C}{\mathbf{D}} \int_{-k}^u \frac{1}{r(u',u)}\dd u'\leq \frac{C}{\mathbf{D}}\log r(-k,u)\leq \frac{C}{\mathbf{D}}\log k.
\end{equation}
However, we can deal with this divergence! The $\frac{C}{\mathbf{D}}\log k$-term in the exponential of the Gr\"onwall lemma will lead to a power $k^{C/\mathbf{D}}$.
This is the reason for the inclusion of the constant $\mathbf{D}$: By making $\mathbf{D}$ large enough, we can ensure that the problematic term in \eqref{eq:limit:problem} will only lead to a divergence of, say, $k^{\frac{1}{2}-\eta}$. 
In other words, we can ensure that it grows slower than the initial data difference ($\sim k^{-1+\eta}$) decays. 
This trick will come at a price, though, as we will need to absorb the largeness of $\mathbf{D}$ into the smallness of $C_{\mathrm{in},\phi}^1$.\footnote{A more careful investigation reveals that we can also absorb it into the smallness of $M/R$.}

To see how this can be done, we now move on to the underlined term (3) in \eqref{eq:timelike:limit:proof:gronwallfullthing}:
We write
\begin{equation}
    \pu m=\frac12 \frac{1-\mu}{\nu}\((\pu(r\phi))^2-2\pu(r\phi)\frac{\nu r\phi}{r}+\(\frac{\nu r\phi}{r}\)^2\).
\end{equation}
Then, when considering the difference $\Delta(\pu m)$, there will again be precisely one borderline non-integrable term, with all other terms being easily controlled (recall that $\pu(r\phi)$, $ r\phi$ are bounded by a term proportional to $\frac{C_{\mathrm{in},\phi}^1}{|u|}$):
\begin{equation}\label{eq:timelike:limit:proof:borderlineterm}
   \mathbf{D}\cdot \Delta (\pu m) =\mathbf{D}\cdot \frac12 \frac{(1-\mu)\pu(r\phi)}{\nu}\Delta(\pu(r\phi))+\dots \leq C\mathbf{D}\frac{C_{\mathrm{in},\phi}^1}{|u|}\Delta(\pu(r\phi))+\dots 
\end{equation}
for some constant $C$, where the $\dots$-terms denote products of integrable terms and $\Delta$-differences.
Hence, \eqref{eq:timelike:limit:proof:borderlineterm} will again contribute a logarithmic term to the exponential and, thus, a factor of $k^{ C\mathbf{D}C_{\mathrm{in},\phi}^1}$  -- but this factor can now be made small by choosing $C_{\mathrm{in},\phi}^1$ sufficiently small.

In summary, we obtain a similar inequality to \eqref{eq:gr1}, with $|\Delta r|$ replaced by $|\Delta r|+|\Delta \nu|+|\Delta \lambda|+|\Delta r\phi|+|\Delta \pu(r\phi)|+|\Delta \pv(r\phi)|+|\Delta m|$, and can apply a similar Gr\"onwall argument to finally obtain that
\begin{equation}\label{eqq4}
     ||(r\phi)_n-(r\phi)_k||_{C^1({\mathcal{D}}_k)}+||r_n-r_k||_{C^1({\mathcal{D}}_k)}+||m_n-m_k||_{C^0({\mathcal{D}}_k)}\leq \frac{C}{\sqrt{k}}\leq\epsilon,
\end{equation}
where the last inequality holds for $N(\epsilon)\geq\frac{C^2}{\epsilon^2}$. 

Combining now the estimates \eqref{eeeeeeeeeqqqqqqqqqqq} and \eqref{eqq4} concludes the proof that the sequence $(r_n,\phi_n,m_n)$ is a Cauchy sequence. In particular, it converges to a limit $(r,\phi,m)$ in $\DUU$. Moreover, it  not only converges in  the $C^1\times C^1\times C^0$-norm: By looking at e.g.\  the equation for $\pu\pv r$, we obtain that 
\[||\pu\pv r_n-\pu\pv r_k||_{C^0(\DUU)}\to 0\]
as well. Similarly, we get convergence in other higher differentiability norms -- in particular, the limit is also a solution  to the system of equations \eqref{eq:puvarpi}--\eqref{eq:pvzeta}.

The uniform convergence then immediately tells us that all the pointwise bounds derived in Thm.~\ref{thm:timelike:cc} carry over to the limit.

It remains to show the lower bound \eqref{eq:timelike:limit:theorem:lowerboundforphi}.
In order to derive it, we simply integrate $\pv(r\phi)$ from $\Gamma$ and use the upper bound \eqref{eq:timelike:limit:theorem:upperboundfordvrphi} for $\pv(r\phi)$:
\begin{equation}
   \left |\int_u^v \pv(r\phi)\dd v'\right|\leq \frac{C_{\mathrm{in},\phi}^1}{|u|} \frac{\log\frac{R}{R-2M}}{2(1-\delta(U_0))}\frac{1}{1-\frac{1}{2(1-\delta(U_0))}\log\frac{R}{R-2M}}.\label{lowerbound}
\end{equation}
The condition that this be strictly smaller than $\frac{d^1_{\mathrm{in},\phi}}{|u|} $ can always be satisfied for large enough $R$. In the case where $d^1_{\mathrm{in},\phi}=C_{\mathrm{in},\phi}^1$, we obtain:\footnote{Note that if we inserted the better bound \eqref{eq:betterbound} here, we would obtain the marginally better lower bound $R>2.95 M$ instead. Again, this can be improved, but we do not present this here.}
\[R>\frac{2M}{1-\e^{-1+\delta(U_0)}}. \]

Finally, if $r\phi$ has a sign, as implied by the lower bound above, then $\pv(r\phi)$ has the opposite sign, and, thus, $C'$ can be replaced by $C_{\mathrm{in},\phi}^1$ in the estimates \eqref{eq:timelike:limit:theorem:upperboundfordvrphi}, \eqref{eq:timelike:limit:theorem:upperboundforrphi}.

As the main difficulty was dealing with the  bound for $\pu(r\phi)\leq |u|^{-p+1}$, which is borderline non-integrable for $p=2$, the cases where $p>2$ are strictly easier to deal with. This concludes the proof.
\end{proof}

\begin{rem}\label{rem:shortcoming of limiting proof}
The proof above has an obvious shortcoming, as manifested in the estimate \eqref{eq:timelike:limit:proof:borderlineterm}: Not only does it only work for $p\geq 2$ (as otherwise, the bound for $\pu(r\phi)$ wouldn't be integrable), but we even need to resort to some smallness condition for $C_{\mathrm{in},\phi}^1/R$.
Both of these difficulties can be traced back to the fact that we thus far have not shown sharp decay for $\pu(r\phi)$ (or for $\zeta$). Conversely, if we knew that 
\begin{equation}\label{eq:supposeddecayforpurphi}|\pu(r\phi)|\lesssim |u|^{-p}+r^{-2}|u|^{-p+1},\end{equation}
we could not only resolve the aforementioned difficulties, but the above proof would also work for all $p>1$ (as this would also allow to show better bounds for \eqref{eqq2}, \eqref{eqq222} and \eqref{eq:timelike:limit:proof:borderlineterm}).
We will show this improved decay for $\pu(r\phi)$ in section~\ref{sec:Refinements}, allowing us to remove both the smallness assumption \emph{and} the restriction on $p$.
\end{rem}

\begin{rem}\label{rem:shortcoming of limiting proof 2}
Even if one is happy with the smallness assumption (note that the smallness assumption is, in fact, not necessary in the uncoupled problem) and the restriction on $p$ -- after all, the case $p=2$ is the one we are most interested in anyway -- there is still one problem:
Even though we already know at this point that $|u|r\phi(u,v)$ remains bounded from above and away from zero, we cannot yet show that it takes a limit on $\mathcal I^-$.\footnote{Unless, of course, $r|_\Gamma\to\infty$ as $t\to \infty$, in which case we can simply integrate $|u|\pv(r\phi)$ from $\Gamma$ and use that $|u|r\phi(u,u)$ tends to a limit (which would have to be provided on data).} 
This means that, while we can show that $|\pv(r\phi)|\sim \log r/r^3$ near $\mathcal{I}^+$ by going through the same steps as in the proof of Thm.~\ref{thm:null:asymptotics of dvrphi}, we cannot write down precise asymptotics for $\pv(r\phi)$ yet, that is: We cannot yet say that $\pv(r\phi)(u,v)=C\log r/r^3 +\mathcal O(r^{-3})$ for some constant $C$. This problem will also be resolved in section~\ref{sec:Refinements}.
\end{rem}

\begin{rem}\label{rem:uniqueness}
Despite the shortcomings mentioned in Remarks~\ref{rem:shortcoming of limiting proof},~\ref{rem:shortcoming of limiting proof 2} (which we will fix anyway), the explicitness of the proof above allows one to directly extract a uniqueness statement (as claimed in Theorem \ref{thm.intro:timelikecase}) from it: 
Namely, one can extract that there exists a unique solution restricting correctly to the prescribed data on $\Gamma$ and $\mathcal I^-$. 
The precise class with respect to which this uniqueness holds can be read off from the proof. 
Let us here only give a brief sketch of how this works:

Assume that there are two smooth solutions $(r,\phi,m)_i$, $i=1,2$, defined on $\mathcal D_{U}$ which satisfy, for $u\leq U_0$, and $U_0$ a sufficiently large negative number, the following: 
The corresponding geometric quantities $\lambda_i, \nu_i, 1-\mu_i$ and $m_i$ are uniformly comparable to 1, and the scalar fields satisfy $|\pu(r\phi_i)|\leq C_{\text{small}} |u|^{-1}+Cr^{-1}|u|^{-\varepsilon}$ for some $\varepsilon>0$, some constant $C$, and a suitably small constant $C_{\text{small}}$ (cf.\ \eqref{eq:supposeddecayforpurphi}). 
Moreover, both solutions restrict correctly to $\hat\phi$, $\boldsymbol T\hat\phi$, $\hat r$ and $\boldsymbol T \hat r$ on $\Gamma$, and they satisfy $\lim_{\mathcal I^-}\pv r_i=1$, as well as $\lim_{\mathcal I^-}m_i=M$ and $\lim_{\mathcal I^-} r\phi_i=\lim_{\mathcal I^-}\pv(r\phi_i)=0$ on $\mathcal I^-$. Finally, $r_1-r_2$ is a bounded quantity that tends to 0 as $u\to-\infty$.

With these assumptions, we then let $k\in \mathbb N$ be arbitrary and split ${\mathcal D}_{U_0}$ into the subsets ${\mathcal D}_{U_0}\cap \{u\leq -k\}$ and ${\mathcal D}_{U_0}\cap \{u> -k\}$. 
In the former set, we can treat the difference $\Delta(r,r\phi,m)$ of the two solutions  as in the region $\mathcal D_{n,k}$ of the proof, leading to the estimate \eqref{eeeeeeeeeqqqqqqqqqqq} (with $\epsilon$ replaced by, say, $Ck^{-\varepsilon/2}$). 
In the latter set, we can then treat the difference as in the region ${\mathcal D}_k$ of the proof, leading to the estimate \eqref{eqq4} (with $\epsilon$ replaced by, say, $Ck^{-\varepsilon/4}$). Taking $k\to \infty$ then shows that the two solutions agree. 

In fact, if we also take into account the \textit{a priori estimate} of Remark~\ref{rem:betterbound} (which, combined with the energy estimate \eqref{eq:timelike:ee}, gives sharp decay for $r\phi$) as well as the corresponding \textit{a priori} arguments that can be extracted from section~\ref{sec:Refinements} (cf.\ Remark~\ref{rem:betterbound2}), then all of the above global assumptions can be recovered from the assumptions that the $m_i$ remain uniformly bounded. 
Thus, the solution constructed in Theorem~\ref{thm:timelike:final} is the unique solution restricting correctly to the data on $\Gamma$ and $\mathcal I^-$ that has a uniformly bounded Hawking mass.
\end{rem}

\subsection{Refinements}\label{sec:Refinements}
\renewcommand{\T}{\boldsymbol{T}}
In this section, we will refine the above results and remove the unnecessary assumptions on $p$ and on the smallness of $C_{\mathrm{in},\phi}^1/R$ made thus far (see the Remarks~\ref{rem:shortcoming of limiting proof},~\ref{rem:shortcoming of limiting proof 2} in the previous section).
More precisely, we will show sharp decay for $\pu(r\phi)$ and, thus, of $\zeta$, hence allowing  us to take $p>1$ in the proofs of Thms.~\ref{thm:timelike:cc},~\ref{thm:timelike:final} and to remove the smallness assumption on $C_{\mathrm{in},\phi}/R$ in the latter.
Finally, in order to compute the asymptotics of $\pv(r\phi)$, we will also show that the limit $\lim_{u\to-\infty}|u|^{p-1}r\phi(u,v)$ exists. These refinements  will ultimately allow us to prove Theorems~\ref{thm:timelike:final!!!} and~\ref{thm:timelike:logs} in section~\ref{sec:refine:limitingargumentover}.

Let us briefly sketch the ideas going into the following proofs. For simplicity, assume for the moment that $\Gamma$ is a curve of constant $r=R$. 
The crucial observation is that, trivially, $\pu(r\phi)=\boldsymbol{T}(r\phi)-\pv(r\phi)$.
 Since we already know the sharp decay for $\pv(r\phi)$, it is thus left to show that the bound that $\boldsymbol{T}(r\phi)$ satisfies on $\Gamma$ is propagated outwards. 
 In the case of the linear wave equation on a fixed Schwarzschild background, this would be straight-forward; in fact, in that case, $\boldsymbol{T}$ commutes with the wave equation, and, hence, the bounds would propagate immediately by Thm.~\ref{thm:timelike:cc}.

This approach cannot directly be used in the coupled problem. 
However, similarly to how we proved decay for $r\phi$ in Theorem~\ref{thm:timelike:cc} (using a bootstrap argument), we can hope to prove decay for $\boldsymbol{T}r$ since $\boldsymbol{T}r$ and $r\phi$ satisfy similar wave equations. 
Modulo technical difficulties arising from all the error terms coming from commuting with $\boldsymbol{T}$, this decay for $\boldsymbol{T}r$ then allows us to prove better decay for $\boldsymbol{T}(r\phi)$ and, thus, for $\pu(r\phi)$.

Notice that this argument needs to be slightly modified in the case where $r\to\infty$ along $\Gamma$. There, we will additionally use that we can convert some $r$-decay along $\Gamma$ into $|u|$-decay as we did when integrating the wave equation from $\mathcal{C}_{\mathrm{in}}$ (see the proof of Thm.~\ref{thm:null boundedness of phi}).

Finally, in order to show that $|u|^{p-1}r\phi $ attains a limit, we consider its derivative
\begin{align*}
    \pu(|u|^{p-1}r\phi)&=|u|^{p-2}(-(p-1)r\phi+|u|\pu(r\phi))\\
        &=|u|^{p-2}(-(p-1)r\phi+|u|\boldsymbol{T}(r\phi))-|u|^{p-1}\pv(r\phi).
\end{align*}
The goal is to show that the above is integrable, knowing already that the $\pv(r\phi)$-term decays fast enough.
To achieve this, we will compute the wave equation satisfied by the difference $-(p-1)r\phi+|u|\boldsymbol{T}(r\phi)$ and show that if this difference decays like $|u|^{-p+1-\epsilon}$ on $\Gamma$, then we can perform a similar bootstrap argument for it as we did for $r\phi$ in order to propagate this decay outwards.

The remainder of section~\ref{sec:timelike} is structured as follows:
In section~\ref{sec:refine:tr}, we will assume (as a bootstrap assumption) sharp decay on $\T(r\phi)$ and then prove decay of $\T r$ and $\T m$ as a consequence. 
In section~\ref{sec:refine:purphi1}, we will then recover the decay assumption on $\T (r\phi)$, using the decay of $\T r$ and $\T m$ proved in the preceding section.
We will show decay for $-(p-1)r\phi+|u|\boldsymbol{T}(r\phi)$ in section~\ref{sec:refine:limit}. As all these proofs are based on bootstrap methods, we always need to work with compactly supported data. 
We will remove this assumption of compact support in section~\ref{sec:refine:limitingargument} by again performing a limiting argument as in the previous section. We finally derive the asymptotics of $\pv(r\phi)$ near $\mathcal I^+$ in section~\ref{sec:timelike:asymptotics}.

\subsubsection{Proving decay for \texorpdfstring{$\boldsymbol{T}r$}{Tr} and \texorpdfstring{$\boldsymbol{T} m$}{Tm}}\label{sec:refine:tr}
\renewcommand{\c}{C_{\mathrm{in},r}}
\renewcommand{\d}{d_{\mathrm{in},r}}
\newcommand{\C}{C'_{r}}
Ultimately, we will want to make the following natural extra assumptions (in addition to $\hat r>2M$) on the boundary data on  $\Gamma=\partial \mathcal D_u$ in each of the two following cases:\footnote{In principle, we can also deal with cases where $\hat{r}$ oscillates, but, in order to simplify the presentation, and because these cases are also not physically interesting, we avoid discussing them here.}
\paragraph{Case 1: $\hat{r}\to R <\infty$:}
In this case, we assume that there exists a constant $\c>0$ s.t.
\begin{align}\label{eq:refine:Case1}
    |\T \hat{r}|\leq \frac{\c}{|u|^{s}},&& s=1+\epsilon_r >1.
\end{align}
In fact, the upper bound could be replaced by any integrable function, but we here only write polynomial upper bounds in order to simplify notation. We do require the upper bound to be integrable, however.
\paragraph{Case 2: $ \hat{r}\to\infty$:}
In this case, we assume that
\begin{align}\label{eq:refine:Case2}
    -\T \hat{r}\sim \frac{1}{|u|^{s}},&& 1\geq s> 0,
\end{align}
i.e., we assume both upper \textit{and} lower bounds for $\T \hat{r}$. This means that either, for $s=1$, $\hat{r}\sim \log |u|$, or, for $s<1$, that $\hat{r}\sim |u|^{\epsilon}$ for $\epsilon=1-s$. The reason why we need the lower bound is that, in the case where $r|_\Gamma$ tends to infinity, we also need to convert some of the $r$-weights on $\Gamma$ into $|u|$-weights.
Again, the above bounds can, in principle, be replaced by any non-integrable function.
\begin{rem}
Note that if $s\leq\frac12$, then the results of the following theorems follow trivially. Moreover, if $\hat r\sim |u|$, then, modulo the local existence part, we can apply the methods of section~\ref{sec:null}.
\end{rem}

We will now prove decay for $\T r$ in each of these two cases. We will find that its decay is dictated by its initial decay on $\Gamma$ and the decay of the scalar field.  Thus, in order to capture the sharp decay of $\boldsymbol T r$, we will, in the theorem below, \textit{assume} the sharp decay for $\T(r\phi)$. This assumption will be recovered in Thm.~\ref{thm:refinements:TPhi}.


 \begin{thm}\label{thm:refinements:TR}
 
    Let $\mathcal D_U$ be as described in section~\ref{sec:ambient}, and
     specify smooth functions $\hat r$, $\hat \phi$ on $\Gamma=\partial D_U=\{(u,u)\in\mathcal D_U\}$, with $\hat\phi$ having compact support.
     Let $\mathcal{C}_{u_0}$ denote the future-complete outgoing null ray emanating from a point $q=(u_0,u_0)$ on $\Gamma$ that lies to the past of the support of $\hat{\phi}$.
     On $\mathcal{C}_{u_0}$, specify $\bar{m}\equiv M>0$,  $\bar{\phi}\equiv 0$, and an increasing smooth function $\bar{r}$ defined via
   $\bar r(v=u_0)=\hat r(u=u_0)$ and the ODE
     $$\pv \bar{r}=1-\frac{2M}{\bar{r}}.$$
     Finally, assume that (denoting, again, the generator of $\Gamma$ by $\boldsymbol{T}=\pu+\pv$) the following bounds hold on $\Gamma$:
     \begin{align}
            |\boldsymbol{T}(\hat r\hat\phi)(u)|&\leq C_{\mathrm{in},\phi}^1|u|^{-p},\\
             |\boldsymbol{T}\hat r (u)|&\leq C_{\mathrm{in},r}|u|^{-s},
    \end{align}
    with positive constants $p>1$,\footnote{Note that this was $p>3/2$ in Thm.~\ref{thm:timelike:cc}.} $C_{\mathrm{in},\phi}^1$, $C_{\mathrm{in},r}$, and $s>0$; and assume that $\hat{r}$ tends to either an infinite (in the case $s\leq 1$) or a finite (in the case $s>1$) limit $R\geq 4M$.     
    
Let $\Delta_{u_0,\epsilon}$ denote the region of local existence from Proposition~\ref{prop:localexistence}. If the assumption 
\begin{equation}\tag{BS(4)}\label{eq:refine:bootstrapfortrphi1}
   | \T (r\phi)|\leq \frac{E}{|u|^p}
\end{equation}
  is satisfied throughout $\Delta_{u_0,\epsilon}$ for some positive constant $E$ ,
     then we have, for sufficiently large negative values of $U_0$ (the choice of $U_0$ depending only on data),
     that, throughout  $\Delta_{u_0,\epsilon}\cap\{u\leq U_0\}$ the estimates of Theorem~\ref{thm:timelike:cc} hold for the arising solution $(r,\phi,m)$.
      Moreover, we have the additional bounds:\footnote{Compare these bounds to the ones for $r\phi$ and $\pv(r\phi)$ in Thm.~\ref{thm:timelike:cc}.}
  \begingroup\allowdisplaybreaks
\begin{align}
    |\T r|\leq \C |u|^{-\min(s,2p-1)},  \label{thm.refinements.boundontr}\\
    |\pv\T r| \leq D' M \C \frac{|u|^{-\min(s,2p-1)}}{r^2},\label{thm.refinementes.boundonpvtr}\\
    |\T m|\leq C_m\left( \frac{1}{|u|^{2p-1}r}+\frac{1}{|u|^{2p}}\right).\label{thm.refinements.Tm}
\end{align}\endgroup
Here, $\C$ and $D'$ are constants which, for $2p-1>s$, only depend on the value of $\c$ and the ratio $M/R$ and, for $2p-1\leq s$, also depend on $E$, whereas $C_m$ always depends also on $E$.  In particular, none of these constants depend on $u_0$.
\end{thm}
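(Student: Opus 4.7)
\emph{Overall strategy.} The plan is to run a nested bootstrap argument on top of the one from Theorem~\ref{thm:timelike:cc}, now augmented by bootstrap bounds of the form claimed in \eqref{thm.refinements.boundontr}--\eqref{thm.refinements.Tm} for $\T r$, $\pv\T r$ and $\T m$. The key new input is the hypothesis \eqref{eq:refine:bootstrapfortrphi1}: writing $\pu(r\phi)=\T(r\phi)-\pv(r\phi)$ and using the bound $|\pv(r\phi)|\leq MC'|u|^{-p+1}/r^2$ from Theorem~\ref{thm:timelike:cc}, we obtain $|\pu(r\phi)|\leq E|u|^{-p}+MC'|u|^{-p+1}/r^2$, and therefore $|\zeta|\lesssim |u|^{-p}+|u|^{-p+1}/r$. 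This makes $(1-\mu)\zeta^2/|\nu|$ integrable in $u$ from $\mathcal{I}^-$ for \emph{every} $p>1$, so the bootstrap for $m$ corresponding to \eqref{eq:timelike:bootstrapproof:mleqEi} closes without the restriction $p>3/2$ of Remark~\ref{rem:pgreater32}; this recovers the estimates of Theorem~\ref{thm:timelike:cc} throughout $\Delta_{u_0,\epsilon}\cap\{u\leq U_0\}$ and, in fact, upgrades the bound on $\zeta$.

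\emph{Commuting with $\T$.} To control $\T r$ and $\pv\T r$, commute the wave equation $\pu\pv r=2m\nu\kappa/r^2$ with $\T$, which yields
\begin{equation*}
\pu\pv(\T r)=\frac{2\nu\kappa}{r^2}\,\T m+\frac{2m\kappa}{r^2}\,\T\nu+\frac{2m\nu}{r^2}\,\T\kappa-\frac{4m\nu\kappa}{r^3}\,\T r.
\end{equation*}
Here $\T\nu$, $\T\lambda$ and $\T\kappa$ are expressed, via the transport equations \eqref{eq:pukappa} and \eqref{eq:pupvr}, in terms of $\T m$, $\T r$ and $\T$-derivatives of the scalar field, the latter controlled by \eqref{eq:refine:bootstrapfortrphi1}. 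For $\T m$, commute $\pu m=\tfrac12(1-\mu)\zeta^2/\nu$ with $\T$ and integrate in $u$ from $\mathcal{C}_{u_0}$; the leading term $\sim\zeta\,\T\zeta$ is, by \eqref{eq:refine:bootstrapfortrphi1} and the sharp bound on $\zeta$, dominated by $|u|^{-2p}/r+|u|^{-2p-1}$, which yields \eqref{thm.refinements.Tm}. For $\pv\T r$ we integrate the commuted wave equation in $u$ from $\mathcal{C}_{u_0}$ (where $\pv\T r=\pv^2\bar r$ is given explicitly by \eqref{eq:pvr=1-2mr}), and for $\T r$ we integrate the result in $v$ from $\Gamma$ (where $\T r=\T\hat r$ and $|\T\hat r|\leq\c|u|^{-s}$). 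A bootstrap assumption $|\T r|\leq\widetilde\C|u|^{-\min(s,2p-1)}$ on $\Delta_{u_0,\epsilon}$, whose non-emptiness follows by continuity using the compact support of $\hat\phi$ just as in Theorem~\ref{thm:timelike:cc}, renders the homogeneous $\T r$-term on the right-hand side perturbative.

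\emph{Case distinction and the main obstacle.} The closing step is the only place where the two cases \eqref{eq:refine:Case1}, \eqref{eq:refine:Case2} genuinely differ. In Case~1 ($\hat r\to R<\infty$), $v$-integration along each $\mathcal{C}_u$ trades $\dd v$ for $\dd r/(1-2M/r)$ via $\lambda\geq(1-\delta(u))(1-2M/r)$, and since $\hat r\geq R>2M$ all arising $r$-integrals are finite. In Case~2 ($\hat r\to\infty$ with $-\T\hat r\sim |u|^{-s}$), the initial value $\T\hat r$ lives on a boundary whose area radius grows as $|u|^{1-s}$, so integrating along $\Gamma$ lets us convert some $r$-decay into $|u|$-decay, mirroring the trick used for $\mathcal{C}_{\mathrm{in}}$ in Section~\ref{sec:null} (cf.\ \eqref{eq:null:thm:proof1}). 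In both cases the eventual bound on $\T r$ is controlled by the larger of (i) the boundary contribution of size $\sim\c|u|^{-s}$ and (ii) the scalar-field-sourced contribution of size $\sim E\cdot|u|^{-(2p-1)}$ produced through $\T m$ and $\T\kappa$ in the source, which accounts for the $\min(s,2p-1)$ in \eqref{thm.refinements.boundontr} and for the dependence of $\C$ on $E$ exactly when $s\geq 2p-1$. The main obstacle is the borderline regime $s\approx 2p-1$, where (i) and (ii) balance and the Grönwall-type integral in the bootstrap picks up a borderline amplification factor; this is absorbed by choosing $|U_0|$ sufficiently large so that this factor is strictly less than one, thereby closing the bootstrap and completing the proof.
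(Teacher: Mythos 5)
Your proposal has the right overall shape (nested bootstrap over Theorem~\ref{thm:timelike:cc}, commuting the $r$-wave equation with $\T$, $\min(s,2p-1)$ from competition between boundary and scalar-sourced contributions, and the observation that \eqref{eq:refine:bootstrapfortrphi1} plus the $\pv(r\phi)$-bound yields sharp $\zeta$-decay and unlocks $p>1$). But there are two genuine gaps.

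\emph{The $\T\nu$-term.} You write that ``$\T\nu$, $\T\lambda$ and $\T\kappa$ are expressed, via the transport equations \eqref{eq:pukappa} and \eqref{eq:pupvr}, in terms of $\T m$, $\T r$ and $\T$-derivatives of the scalar field.'' For $\T\nu$ this is false: \eqref{eq:pupvr} controls $\pv\nu$ but provides no equation for $\pu\nu=\pu^2 r$, so $\T\nu=\pu\nu+\pv\nu$ is \emph{not} expressible through the transport equations, and in fact obtaining sharp decay for $\T\nu$ directly runs into precisely the obstruction (boundary terms on $\Gamma$ with no extra $r$-weight) that prevented sharp $\pu(r\phi)$-decay in Theorem~\ref{thm:timelike:cc}. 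The paper circumvents this by a Leibniz manipulation, writing the $\T\nu$-source
\[
\T\nu\cdot\frac{2m\kappa}{r^2}
=\pu\!\Bigl(2m\kappa\,\frac{\T r}{r^2}\Bigr)-\T r\cdot\pu\!\Bigl(\frac{2m\kappa}{r^2}\Bigr),
\]
so that after the $u$-integration the first term becomes an evaluated boundary contribution and the second is a benign $\T r$-coefficient with a $1/r^3$ weight (cf.\ \eqref{eq:refine:pupvtr}). This integration-by-parts is the crux of the argument, and your proposal skips over it entirely; without it the bootstrap for \eqref{thm.refinementes.boundonpvtr} does not close.

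\emph{The $\T m$ estimate.} You propose to commute $\pu m$ with $\T$ and integrate in $u$, with ``leading term $\sim\zeta\T\zeta$.'' This misses the essential structure: the sharp bound on $\T m$ does not come from commutation but from adding the two constraint equations, $\T m=\tfrac12(1-\mu)\bigl(\zeta^2/\nu+\theta^2/\lambda\bigr)$, and exploiting the algebraic cancellation between $\zeta^2/\nu$ and $\theta^2/\lambda$ after substituting $\nu/\lambda=-1+\mathcal O(|u|^{-s})$ (cf.\ \eqref{eq:temp2}--\eqref{5.106}); the quadratic $|u|^{-2p+2}/r^2$ pieces cancel, which is precisely what produces \eqref{thm.refinements.Tm}. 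Naively, each of $\pu m$ and $\pv m$ individually carries the slower $|u|^{-2p+2}/r^2$ weight. Relatedly, the bound you state for $\T m$ ($|u|^{-2p}/r+|u|^{-2p-1}$) has an extra power of $|u|^{-1}$ compared to the claimed \eqref{thm.refinements.Tm} and does not match the structure of the cancellation.

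Finally, the ``main obstacle'' is not the borderline $s\approx 2p-1$ (that is handled by a routine case split that merely affects whether $\C$, $D'$ depend on $E$); the main obstacle is the $\T\nu$ term above, and the explicit dilogarithm computation that closes the bootstrap for $\T r$ under the lower bound on $R$.
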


\begin{rem}\label{rem:betterbound2}
Again, the lower bound for $R$ stated here is not necessary -- one can prove the same result under the weaker assumption $R>2M$ if one replaces the bootstrap arguments (\eqref{eq:refine:bootstrapTr}, \eqref{eq:refine:bootstrappvTr}) below with Gr\"onwall arguments as described in Remark~\ref{rem:betterbound}. See footnote~\ref{fn:betterbound}. The same applies to the subsequent results (Theorem~\ref{thm:refinements:TPhi} and Lemma~\ref{lemma:trick}).
\end{rem}
\begin{rem}
In order to recover bootstrap assumption \eqref{eq:refine:bootstrapfortrphi1} in the next section, it is helpful to observe that $\C$ and $D'$ are independent of $E$ for $2p-1>s$. The distinction between the cases $2p-1>s$ and $2p-1\leq s$ is best understood by looking at the estimate \eqref{thm.refinements.boundontr}: If $\phi$ decays sufficiently slowly, then its decay dominates the decay of $\T r$.
\end{rem}

\begin{proof}
Let's first draw our attention to the claim that $p$ can be taken to be $p>1$ instead of $p>3/2$.
The reason for this is that we assume $\T(r\phi)\lesssim |u|^{-p}$. Recall that, in the proof of Thm.~\ref{thm:timelike:cc}, the obstruction to taking $p>1$ was that we were only able to show that $\zeta$ decayed as fast as $r\phi$, see Remark~\ref{rem:pgreater32}.
However, if we assume ineq.~\eqref{eq:refine:bootstrapfortrphi1}, then we immediately get that
$$\pu(r\phi)=\T(r\phi)+\pv(r\phi)\lesssim \frac{1}{|u|^p}+\frac{1}{|u|^{p-1}r^2},$$
which improves the bound on $\zeta$ to 
$$|\zeta|\lesssim \frac{1}{|u|^{p-1}r}+\frac{1}{|u|^{p}} ,$$
so we can also prove Theorem~\ref{thm:timelike:cc} for $p>1$.
Therefore, we now assume the results of Thm.~\ref{thm:timelike:cc} to hold for $p>1$. In fact, because of this improved bound for $\zeta$, we can now take $\delta(u)\sim|u|^{2-2p}$ instead of $\delta(u)\sim|u|^{3-2p}$ (cf.~\eqref{eq:bootstrapproofkappadelta}).

Furthermore, observe that, along $\mathcal{C}_{u_0}$, we have
\begin{equation}\label{eq:proof:tinygroenwall}
    \pv\T r=(\pv r+\pu r)\frac{2M}{r^2}=\frac{2M}{r^2}\T r
\end{equation}
in view of $\pv\bar{r}=1-2M/\bar{r} $ and the wave equation for $r$ \eqref{eq:pupvr}.
\paragraph*{\underline{Outline of the main ideas:}}
With these preliminary observations understood, we now give an outline of the heart of the proof:
If we consider the set
 \begin{nalign}
     \Delta:=\{(u,v)\in \Delta_{u_0, \epsilon}\,|\,|\T r(u',v')|\leq  \frac{\C}{ |u'|^{\min(s,2p-1)} }
     \,\forall (u',v') \in \Delta_{u_0, \epsilon} \,\,\text{s.t. }\, u'\leq u, v'\leq v\} ,
\end{nalign}
for some suitable $\C>\c$ (which will be specified later)  and $\epsilon>0$, then
we immediately conclude that this is non-empty ($\{q\}\subsetneq\Delta$) by continuity and compactness.
As in the proof of Thm.~\ref{thm:timelike:cc}, we want to show that $\Delta$ is open (it is clearly closed), that is, we want to improve the bound 
\begin{equation}\tag{BS(5)}
|\T r|\leq \C |u|^{-\min(s,2p-1)} \label{eq:refine:bootstrapTr}
\end{equation}
inside $\Delta$.

It turns out that we need to include another bootstrap assumption inside $\Delta$ in order for this to work, namely
\begin{equation}\tag{BS(6)}
|\pv\T r|\leq D \frac{|u|^{-\min(s,2p-1)}}{r^2} \label{eq:refine:bootstrappvTr}
\end{equation}
for some suitable constant $D$.
To see why, commute the wave equation for $r$, eq.~\eqref{eq:pupvr}, with the vector field $\T$: 
\begin{align}\label{eq:pupvTr}
\begin{split}
    \pu\pv(\T r) &= \T m \left( \frac{2\nu\kappa}{r^2}    \right)+\T\mu \left(\frac{\kappa}{1-\mu}\frac{2m\nu}{r^2}\right)\\
                  & +\T \lambda\left(\frac{2m\nu}{r^2(1-\mu)}\right) + \T \nu \left(\frac{2m\lambda}{r^2(1-\mu)}\right)
                   -\T r \left( \frac{4m\nu\kappa}{r^3}\right).
\end{split}  
\end{align}
Looking at the $\T m$-term, we will show below that the bootstrap assumptions \eqref{eq:refine:bootstrapfortrphi1} and \eqref{eq:refine:bootstrapTr} together imply that
\begin{equation}\label{eq:temp1}
\T m\lesssim \frac{1}{|u|^{2p-1}r}+\frac{1}{|u|^{2p}},
\end{equation}
which means that this term is in some sense independent of the bootstrap argument.
 In fact, it is this term which is responsible for the $\min(s,2p-1)$ appearing in the bootstrap assumptions. Note already that ineq.~\eqref{eq:temp1} is an improvement over plugging in the naive bounds for $\zeta^2, \theta^2$ into $\T m$.

On the other hand, the $\T r$-term in \eqref{eq:pupvTr} comes with a factor $1/r^3$, so we expect to be able to treat it exactly like we treated $r\phi$ in Thm.~\ref{thm:timelike:cc}. The $\T\mu$-term can be written as a sum of faster decaying $\T m$ and $\T r$-terms.
As for the $\T\lambda$-term, we expect to be able to bound it in the same way as we bounded $\pv(r\phi)$ in Thm.~\ref{thm:timelike:cc} (this explains the bootstrap assumption \eqref{eq:refine:bootstrappvTr}), so the non-integrable factor of $1/r^2$ multiplying $\T\lambda$ poses no problem.

However, for the $\T \nu$-term, it seems like we're in trouble, as we cannot expect to show a similar bound for $\T\nu$ as for $\T\lambda$, for the same reason for which we weren't able to show a better bound for $\pu(r\phi)$ in Thm.~\ref{thm:timelike:cc}.
We deal with this problem by making the $\T \nu$-term into a boundary term (using that $\T\nu=\pu\T r$), that is, we will write, inserting also the equations for $\pu\kappa$, $\pu\lambda$ and $\pu m$:
\begin{align}\label{eq:refine:pupvtr}
\begin{split}
    \pu\pv(\T r) &= \T m \left( \frac{2\nu\kappa}{r^2}+\frac{4m\nu\kappa}{r^3(1-\mu)}    \right)\\
                  &+\T \lambda\left(\frac{2m\nu}{r^2(1-\mu)}\right) + \pu\left(2m\kappa\frac{\T r}{r^2}\right)\\
                  & +\T r \left( \frac{4m^2\nu\kappa}{r^4(1-\mu)}-\frac{2\kappa}{r^2}\frac{(1-\mu)\zeta^2}{2\nu}-\frac{2m}{r^2}\frac{\kappa\zeta^2}{r\nu} \right).
\end{split}  
\end{align}

\paragraph*{\underline{The details:}}
Having given a rough outline of how we will deal with each term in the above equation, we now give the details. First, we derive the estimate \eqref{eq:temp1}: Plugging in eqns.~\eqref{eq:puvarpi} and \eqref{eq:pvvarpi}, we get
\begin{align}\label{eq:temp2}
    \begin{split}
        \T m&= \pu m+\pv m=\frac12(1-\mu)\left(\frac{\zeta^2}{\nu}+\frac{\theta^2}{\lambda}\right)\\
            &=\frac12(1-\mu)\left( \frac{(\pv(r\phi)-\lambda\phi)^2}{\lambda}+\frac{(\T(r\phi)-\pv(r\phi)-\lambda\phi \frac{\nu}{\lambda})^2}{\lambda}\frac{\lambda}{\nu}         \right).
    \end{split}
\end{align}
Now, by the bootstrap assumption \eqref{eq:refine:bootstrapTr}, we have that
$$\frac{\nu}{\lambda}=\frac{\T r -\lambda}{\lambda}=-1+\mathcal{O}(|u|^{-s}).$$
Upon inserting this back into equation \eqref{eq:temp2}, we get\footnote{Again, constants hidden inside $\lesssim$ only depend on initial/boundary data.}
\begin{align}\label{5.106}
    \begin{split}
        \T m&=\frac{1-\mu}{2\lambda}\left( -(\T(r\phi))^2+2\T(r\phi)(\pv(r\phi)-\lambda \phi +\mathcal{O}(|u|^{-s}))+\zeta^2\mathcal{O}(|u|^{-s})   \right)\\
            &\lesssim \frac{E^2}{|u|^{2p}}+\frac{E}{|u|^{2p-1}r}+\frac{\C E}{|u|^{2p-1}r}\frac{1}{|u|^{s-1}r}.
    \end{split}
\end{align}
Now, if $s>1$, then the third term on the RHS of \eqref{5.106} has more $|u|$-decay than the second term and  can hence be ignored. 
If $s=1$, the third term only has more $r$-decay, but in this case $r|_\Gamma\sim \log|u|$ by the lower bound in \eqref{eq:refine:Case2}, so it can again be ignored.
If $s<1$, then, again by the lower bound in \eqref{eq:refine:Case2},
$$\frac{1}{|u|^{s-1}r}\leq 1,$$
so the third term decays just as fast as the second one, and it cannot be ignored. 
We conclude that 
\begin{equation}\label{eq:refine:boundforTM}
    |\T m|\lesssim_{\C,E} \frac{1}{|u|^{2p-1}r}+\frac{1}{|u|^{2p}}.
\end{equation}
It is important that the implicit constant in $\lesssim_{\C,E}$ only depends on $\C$ if $r|_\Gamma\to\infty$. We shall return to this point later.

We now have all the tools to close the bootstrap argument, i.e.\ to improve assumptions \eqref{eq:refine:bootstrapTr} and \eqref{eq:refine:bootstrappvTr}.
The idea is to integrate the wave equation \eqref{eq:refine:pupvtr} for $\T r$ twice along its characteristics. 
We consider the cases $2p-1>s$, $2p-1=s$ and $2p-1<s$ separately.

\paragraph{Case i): $2p-1> s$:}
 Let us also assume, for simpler presentation, that $\T\hat{r}$ is compactly supported such that $\T\hat{r}(q)=0$. This will just mean that we won't pick up a boundary term when integrating $\pu\pv\T r$ from $\mathcal{C}_{u_0}$, in view of \eqref{eq:proof:tinygroenwall}. We will remove this assumption below.
Integrating equation \eqref{eq:refine:pupvtr} from data, we thus obtain 
\begin{align}
    |\pv\T r(u,v)| = \frac{2m\kappa\T r}{r^2}+\int_{u_0}^u \T m\left(\dots \right)+\T\lambda\left(\dots \right)+\T r\left(\dots \right)\dd u'.
\end{align}

For the $\T m$-term, we plug in the bound from \eqref{eq:refine:boundforTM}, resulting in 
$$\int_{u_0}^u \T m\left( \frac{2\nu\kappa}{r^2}+\frac{4m\nu\kappa}{r^3(1-\mu)}    \right)\dd u'\lesssim_E \frac{1}{r^2|u|^{2p-1}}. $$

For the other terms, we first use $\nu$ to turn the $|u|$-integration into $r$-integration. 
We then plug in the bootstrap assumptions \eqref{eq:refine:bootstrapTr}, \eqref{eq:refine:bootstrappvTr}, as well as all the bounds from Thm.~\ref{thm:timelike:cc}; in particular, we use that $\kappa\leq1$, $m\leq M$ and\footnote{The reader can instead just take the lower bound $ 1-\frac{2M}{R}\leq \mu$. This will simplify the integrals below but lead to a slightly worse lower bound on $R$. }  $\mu-(1-2M/r)=\mathcal{O}(|u|^{-2p+1})$.\footnote{\label{fn:betterbound}Similarly to the calculation in Remark~\ref{rem:betterbound}, one can, instead of using bootstrap arguments and going through the calculations below, first apply a Gr\"onwall estimate to $\pv\T r$ in the $u$-direction. From this, one can then obtain an inequality for $\T r$ similar to \eqref{eq:gr1} and apply a similar double Gr\"onwall estimate to it. This only requires the lower bound that $R>2M$.}
This yields
\begin{align}\label{eq:refine:integralofpupvtr}
    \begin{split}
        &|\pv\T r(u,v)| \\
        \leq& \frac{2m\kappa \C}{r^2|u|^s}+\int_{r(u_0,v)}^{r(u,v)} \left|\T r\frac{4m^2\kappa}{r^3(r-2M)}\right|+\left|\T \lambda\frac{2m}{r(r-2M)}\right|\dd r+\mathcal{O}\left(\frac{1}{r^{2}|u|^{2p-1}}\right)\\
                        \leq&  \frac{2M\C}{r^2|u|^s}+\frac{2MD+4M^2 \C}{|u|^{s}}\left(\frac{-\log(1-\frac{2M}{r})}{(2M)^3}-\frac{r+M}{r^2(2M)^2}\right)+\mathcal{O}\left(\frac{1}{r^{2}|u|^{2p-1}}\right).
    \end{split}
\end{align}
\renewcommand{\O}{\mathcal{O}}In order to close the bootstrap argument for $\pv\T r$ \eqref{eq:refine:bootstrappvTr}, we require the RHS to be strictly smaller than $\frac{D}{r^2|u|^s}$.
This leads to the following condition on $D$ and $\C$:
\begin{align}\label{eq:refine:DvsC}
    2M\C\frac{(1+\frac{1}{x^2}\log(\frac{1}{1-x})-\frac{1}{x}-\frac12)}{(1-\frac{1}{x^2}\log(\frac{1}{1-x})+\frac{1}{x}+\frac12)}+\mathcal{O}(|u|^{s+1-2p})< D,
\end{align}
where we wrote $x=\frac{2M}{r}$. The LHS is maximised when $x$ is, so it is maximised for $x=\frac{2M}{r(u,u)}$.
In the case where $r(u,u)\to\infty$ as $u\to -\infty$, \eqref{eq:refine:DvsC} is trivially satisfied for large enough values of $|u|$. 

On the other hand, if $r(u,u)\to R <\infty$, we have $r(u,u)\geq R-\O(|u|^{1-s})$. We can thus insert $x=\frac{2M}{R}$ into \eqref{eq:refine:DvsC}, resulting in another $o(1)$-term. 
The estimate \eqref{eq:refine:DvsC} then holds provided that $|U_0|$ is large enough, that $2M/R\lesssim 0.86$, and for
\begin{equation} \label{eq:temp3}
   D=\eta 2M \C\cdot  \frac{(1+\frac{1}{x^2}\log(\frac{1}{1-x})-\frac{1}{x}-\frac12)}{(1-\frac{1}{x^2}\log(\frac{1}{1-x})+\frac{1}{x}+\frac12)}=:\eta 2M \C\cdot A(x),
\end{equation}
where $x=\frac{2M}{R}$ and $\eta>1$ (where $\eta-1\to 0$ as $U_0\to-\infty$).

This improves the bootstrap assumption \eqref{eq:refine:bootstrappvTr}.

Next, we integrate the estimate \eqref{eq:refine:integralofpupvtr} in $v$  from $\Gamma$. In order to convert the $v$-integration into $r$-integration, we use the estimate $\lambda-(1-2M/r)=\mathcal{O}(|u|^{-2p+1})$. One obtains
\begin{align}\label{eq:refine:integralofpvtr}
    \begin{split}
        |\T r(u,v)|\leq  &\frac{\c}{|u|^s} +\frac{1}{|u|^s}\int_{r(u,u)}^{r(u,v)}\frac{2M\C}{ r(r-2M)}\dd r\\
                                    +& \frac{\frac{D}{2M}+\C}{|u|^s}\int_{r(u,u)}^{r(u,v)}\frac{-\log(1-\frac{2M}{r})}{2M(1-\frac{2M}{r})}-\frac{r+M}{r(r-2M)}\dd r+\mathcal{O}(|u|^{1-2p}).
    \end{split}
\end{align}
The first integral has been computed before (cf.~\eqref{integralthathasbeencomputedbefore}). For the second integral, we substitute $x=2M/r$, which brings it into the following form
\begin{align}
    \begin{split}
        \int_{2M/r(u,u)}^{2M/r(u,v)}\frac{\log(1-x)}{(1-x)x^2}+\frac{1}{x}\frac{1+\frac12 x}{1-x}\dd x.
    \end{split}
\end{align}
This can now be computed using the dilogarithm  $\mathrm{Li}_2(x)$ (cf.~\eqref{dilog}). Reinserting this back into \eqref{eq:refine:integralofpvtr}, using also that $\mathrm{Li}_2(1)=\pi^2/6$, we get the estimate:
\begingroup\allowdisplaybreaks
\begin{nalign}\label{eq:refine:integralofpvtr2}
     & |u|^s  |\T r(u,v)|\leq  \c +\C\log\(\frac{1}{1-x}\)\\
     & +\left(\frac{D}{2M}+\C\right)\underbrace{\left(  1+\frac{\pi^2}{6}+\frac12 \log(1-x)\left(\frac{2}{x}+\log\(\frac{1-x}{x^2}\)\right)-   \mathrm{Li}_2(1-x) \right)}_{:=B(x)}+\mathcal{O}(|u|^{1-2p})     
\end{nalign}\endgroup
 where we wrote $x=2M/r(u,u)$:
As before, we want this to be strictly smaller than $\C|u|^{-s}$ in order to close the bootstrap assumption \eqref{eq:refine:bootstrapTr}. This can trivially be achieved for large enough $|U_0|$ in the case where $r(u,u)\to\infty$. 
In the case where $r(u,u)\to R$, we numerically find, plugging in \eqref{eq:temp3}, that  the lower bound $2M/R\lesssim 0.516$ needs to hold.
The constant $\C$ can then be chosen to be
\begin{align}
    \begin{split}
        \C=\frac{\eta \c }{1+\log(1-2M/R)-(1+\eta A(2M/R))B(2M/R)},
    \end{split}
\end{align}
where $B(x)$ was defined in the estimate above and $A(x)$ was defined in \eqref{eq:temp3}. This bound is, in particular, independent of $E$.

This closes the bootstrap argument for $2p-1>s$ in the case of $\T\hat{r}$ being compactly supported. 

If $\T\hat{r}$ is not compactly supported, the only difference is that we pick up a boundary term $\pv(\T r)(u_0,v)$ from integrating $\pu\pv\T r$. In view of equation \eqref{eq:proof:tinygroenwall}, this boundary term can be bounded directly in terms of initial data (after applying a Gr\"onwall inequality to bound $\T r$ on $\mathcal{C}_{u_0}$). This boundary term will slightly change the definitions of $D$ and $\C$, but will not affect the lower bound on $R$ in any way, precisely because it is bounded by data!
\paragraph{Case ii): $2p-1=s$:}
In this case, it seems like there is an additional difficulty since the $\O(r^{-2}|u|^{-2p+1})$-term we treated as negligible in estimate \eqref{eq:refine:integralofpupvtr} is now of the same order as the other terms: we therefore need to add a term $\lesssim_{\C,E}\frac{1}{r^2|u|^s}$ to the estimate \eqref{eq:refine:integralofpupvtr}.
The $E$-dependence of the implicit constant in $\lesssim_{\C,E}$ then means that $\C$ and $D$ will depend on $E$. The $\C$-dependence in $\lesssim_{\C,E}$, on the other hand, seems like it would add a further restriction on the lower bound of $R$. This is where it is important that the implicit constant in $\lesssim_{\C,E}$ only depends on $\C$ in the case where $r(u,u)\to\infty$ (see the remark below estimate \eqref{eq:refine:boundforTM}). Therefore, no new bound on $R$ is introduced, and the bootstrap argument works in the same way, with $D$ and $\C$ now depending also on $E$.

\paragraph{Case iii): $2p-1<s$:}
In this case, the $\O(r^{-2}|u|^{-2p+1})$-term we treated as negligible in estimate \eqref{eq:refine:integralofpupvtr} now dominates all other terms and depends on $E$ as well as on $\C$ if $r(u,u)\to\infty$. By the same reasoning as above, the bootstrap argument then closes trivially, with $\C$ and $D$ again depending on $E$.

This concludes the proof.
\end{proof}

We will now recover the bootstrap assumption \eqref{eq:refine:bootstrapfortrphi1}:

\subsubsection{Sharp decay for \texorpdfstring{$\T(r\phi)$}{T(r phi)} and \texorpdfstring{$\pu(r\phi)$}{d/du(r phi)}}\label{sec:refine:purphi1}
In this section, we prove the next refinement to Thm.~\ref{thm:timelike:cc}. This refinement shows sharp decay for $\T(r\phi)$ and, thus, for $\pu(r\phi)$.

\begin{thm}\label{thm:refinements:TPhi}
     Let $\mathcal D_U$ be as described in section~\ref{sec:ambient}, and
     specify smooth functions $\hat r$, $\hat \phi$ on $\Gamma=\partial D_U=\{(u,u)\in\mathcal D_U\}$, with $\hat\phi$ having compact support.
     Let $\mathcal{C}_{u_0}$ denote the future-complete outgoing null ray emanating from a point $q=(u_0,u_0)$ on $\Gamma$ that lies to the past of the support of $\hat{\phi}$.
     On $\mathcal{C}_{u_0}$, specify $\bar{m}\equiv M>0$,  $\bar{\phi}\equiv 0$, and an increasing smooth function $\bar{r}$ defined via
   $\bar r(v=u_0)=\hat r(u=u_0)$ and the ODE
     $$\pv \bar{r}=1-\frac{2M}{\bar{r}}.$$
     Finally, assume that the following bounds hold on $\Gamma$:
     \begin{align}
            |\boldsymbol{T}(\hat r\hat\phi)(u)|&\leq C_{\mathrm{in},\phi}^1|u|^{-p},\\
             |\boldsymbol{T}\hat r (u)|&\leq C_{\mathrm{in},r}|u|^{-s},
    \end{align}
    with positive constants $p>1$,  $C_{\mathrm{in},\phi}^1$, $C_{\mathrm{in},r}$  and $s>0$, and assume that $\hat{r}$ tends to either an infinite (in the case $s\leq 1$) or a finite (in the case $s>1$) limit $R\geq 4M$.     
    If $s\leq1$, we moreover assume that there exists a positive constant $d_{\mathrm{in},r}<C_{\mathrm{in},r}$ such that
    \begin{equation}
         -\T \hat r (u)\geq d_{\mathrm{in},r}|u|^{-s}.
    \end{equation}
      
Then we have, for sufficiently large negative values of of $U_0$ (the choice of $U_0$ depending only on data), that, throughout  $\Delta_{u_0,\epsilon}\cap\{u\leq U_0\}$,  the estimates of Theorem~\ref{thm:timelike:cc} hold.
Moreover, we have the following additional bounds\footnote{Compare these bounds to the ones for $r\phi$ and $\pv(r\phi)$ in Thm.~\ref{thm:timelike:cc}.}:
\begingroup\allowdisplaybreaks
\begin{align}
    |\T (r\phi)|&\leq C'_{\boldsymbol{T}} |u|^{-p} , \label{thm.refinements.boundontphi}\\
    |\pv\T (r\phi)| &\leq D_{\boldsymbol{T}} M C'_{\boldsymbol{T}} \frac{|u|^{-p}}{r^2}.\label{thm.refinementes.boundonpvtphi}
\end{align}\endgroup
Here, $C'_{\boldsymbol{T}}$ and $D_{\boldsymbol{T}}$ are constants which depend on the value of $C_{\mathrm{in},\phi}^1$ and the ratio $M/R$; in particular, they do not depend on $u_0$.

Finally, in view of \eqref{thm.refinements.boundontphi}, the estimates from  Theorem~\ref{thm:refinements:TR} hold as well, with the constant $E$ given by $E=C'_{\boldsymbol{T}}$.
\end{thm}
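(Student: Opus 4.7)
The plan is to close a double bootstrap argument for the pair
\begin{equation}
|\T(r\phi)|\leq C'_{\boldsymbol T}|u|^{-p}, \qquad |\pv\T(r\phi)|\leq D_{\boldsymbol T} M C'_{\boldsymbol T}\frac{|u|^{-p}}{r^2},
\end{equation}
arguing exactly as in Theorem~\ref{thm:timelike:cc}: the set of points in $\Delta_{u_0,\epsilon}$ where these hold is closed, and non-empty because $\hat\phi$ has compact support and the vacuum Schwarzschild extension of Thm.~\ref{thm:globalinv} trivially gives $\T(r\phi)\equiv 0$ in a neighbourhood of $\mathcal C_{u_0}$, so continuity and compactness apply. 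With these bootstrap assumptions in force, Theorem~\ref{thm:refinements:TR} applies with $E=C'_{\boldsymbol T}$, supplying uniform bounds on $\T r$, $\pv\T r$ and $\T m$; these bounds are the only external input needed.

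To improve the bootstrap, commute the wave equation~\eqref{eq:wave} with the timelike generator $\T$:
\begin{equation}
\pu\pv\T(r\phi)=\T\!\left(2m\nu\kappa\,\frac{r\phi}{r^3}\right)
= 2m\nu\kappa\,\frac{\T(r\phi)}{r^3} \;+\; \mathcal E,
\end{equation}
where $\mathcal E$ collects the error terms generated by the Leibniz rule. The good term on the right is structurally identical to the inhomogeneity $2m\nu\kappa\,r\phi/r^3$ that drove the proof of Thm.~\ref{thm:timelike:cc}, so it can be absorbed by exactly the same Gr\"onwall/bootstrap manoeuvre (cf.\ Remark~\ref{rem:betterbound}). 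The terms involving $\T m$, $\T\kappa$ (to be replaced, via \eqref{eq:pukappa}, by $\zeta^2,\theta^2$ contributions) and $\T r$ are all controlled by Theorem~\ref{thm:refinements:TR} together with the $|u|$-decay of $\zeta,\theta$ from Thm.~\ref{thm:timelike:cc}, and produce, after the usual $v$-integration against a factor $(r(r-2M))^{-1}$, quantities of size $\lesssim |u|^{-p}\cdot o_{U_0}(1)$ on the right-hand side, i.e.\ strictly subleading contributions.

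The main obstacle, and the reason for the assumption $|\pv\T(r\phi)|\lesssim |u|^{-p}/r^2$, is the term involving $\T\nu$: since there is no analogue of the wave equation bound for $\T\nu$, a naive estimate loses an $r$-power against $\T\lambda$ and produces a logarithmic divergence. This is handled by exactly the trick used in Theorem~\ref{thm:refinements:TR} for the $\T\nu$-term in \eqref{eq:pupvTr}: using $\T\nu=\pu(\T r)$, one rewrites
\begin{equation}
2m\kappa\,\frac{\T\nu\cdot r\phi}{r^3}=\pu\!\left(2m\kappa\,\frac{\T r\cdot r\phi}{r^3}\right) - \T r\cdot\pu\!\left(2m\kappa\,\frac{r\phi}{r^3}\right),
\end{equation}
converting the problematic term into a total $u$-derivative plus a manifestly integrable remainder. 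Integrating $\pu\pv\T(r\phi)$ in $u$ from $\mathcal C_{u_0}$ (where $\T(r\phi)\equiv 0$, so the total-derivative contribution is a pure boundary term controlled by $\T r\cdot r\phi/r^3\lesssim |u|^{-s}\cdot|u|^{-p+1}/r^2$) yields a bound
\begin{equation}
|\pv\T(r\phi)(u,v)|\leq \frac{|\T(\hat r\hat\phi)(u)|\cdot(\text{prefactor})}{r^2}+\frac{o_{U_0}(1)\cdot C'_{\boldsymbol T}|u|^{-p}}{r^2},
\end{equation}
which improves the bootstrap assumption on $\pv\T(r\phi)$ provided $D_{\boldsymbol T}$ is taken large relative to $M/R$ and $|U_0|$ sufficiently large. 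A second integration in $v$ from $\Gamma$, using the boundary datum $|\T(\hat r\hat\phi)|\leq C^1_{\mathrm{in},\phi}|u|^{-p}$ and converting $\dd v$ into $\dd r/(1-2M/r+\mathcal O(|u|^{-2p+1}))$ via the sharp bound on $\lambda$, yields (compare \eqref{integralthathasbeencomputedbefore})
\begin{equation}
|\T(r\phi)(u,v)|\leq |u|^{-p}\!\left(C^1_{\mathrm{in},\phi}+\frac{D_{\boldsymbol T} M C'_{\boldsymbol T}}{2M}\log\frac{R}{R-2M}\right)+o_{U_0}(1)\cdot C'_{\boldsymbol T}|u|^{-p},
\end{equation}
which closes the bootstrap on $\T(r\phi)$ provided $R$ is large enough (the constraint is of the same form as in Thm.~\ref{thm:timelike:cc} and is already ensured by $R\geq 4M$ combined with $M/R$ small). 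Once both inequalities are strictly improved, the bootstrap set is open, hence equals $\Delta_{u_0,\epsilon}$. The resulting sharp bound on $\T(r\phi)$ gives $\pu(r\phi)=\T(r\phi)-\pv(r\phi)\lesssim |u|^{-p}+r^{-2}|u|^{-p+1}$ for free, and feeds back into Theorem~\ref{thm:refinements:TR} with $E=C'_{\boldsymbol T}$, completing the refinement.
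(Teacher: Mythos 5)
Your proposal follows the paper's approach in all its main moves: commuting the wave equation with $\T$, bootstrapping a bound on $\T(r\phi)$, converting the $\T\nu$ term into a total $u$-derivative via the Leibniz rule (exactly as in eq.~\eqref{eq:refine:pupvtr}), invoking Theorem~\ref{thm:refinements:TR} for $\T m$, $\T r$, $\T\lambda=\pv\T r$, and closing the loop by the $u$-then-$v$ integration as in Theorem~\ref{thm:timelike:cc}.

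There is, however, a genuine gap in the claim that the commutator errors produce ``strictly subleading contributions $\lesssim |u|^{-p}\cdot o_{U_0}(1)$.'' This is true without further argument only when $s>1$, where $\T r\lesssim|u|^{-s}$ with $s>1$ supplies the extra $|u|$-decay. When $s\leq 1$ (i.e.\ $\hat r\to\infty$), the naive size of terms like $\T\lambda\cdot\frac{2m\nu}{r^2(1-\mu)}\cdot\frac{r\phi}{r}\lesssim r^{-5}|u|^{-s}|u|^{-p+1}$ and of the boundary contribution $2m\kappa\T r\cdot r\phi/r^3\lesssim r^{-3}|u|^{-p+1-s}$ is \emph{not} subleading in $|u|$ alone: $|u|^{-p+1-s}$ is larger than $|u|^{-p}$ for $s<1$. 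The paper's proof splits into the three cases $s>1$, $s=1$, $s<1$ and in the last two cases crucially converts surplus $r$-weight into $|u|$-weight via $r^{-1}\lesssim|u|^{s-1}$ (valid because the lower bound $-\T\hat r\geq d_{\mathrm{in},r}|u|^{-s}$ forces $r|_\Gamma\gtrsim|u|^{1-s}$, resp.\ $\log|u|$). Your write-up never invokes this conversion or the lower bound on $\T\hat r$, so as stated the argument breaks for $s\leq 1$.

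Two smaller points. First, the paper runs a \emph{single} bootstrap on $\T(r\phi)$ and derives the $\pv\T(r\phi)$ bound as a by-product of the $u$-integration; your double bootstrap is harmless but the stated rationale -- that the bound on $\pv\T(r\phi)$ is needed because of $\T\nu$ -- is incorrect, since $\T\nu$ is fully disposed of by the Leibniz trick. (The analogous double bootstrap in Theorem~\ref{thm:refinements:TR} is for the $\T\lambda$-term, and here $\T\lambda$ is already controlled by that theorem.) Second, your displayed intermediate inequality for $|\pv\T(r\phi)|$ has the boundary datum $|\T(\hat r\hat\phi)(u)|$ on the right-hand side; that datum enters only when integrating $\pv\T(r\phi)$ in $v$ from $\Gamma$, not when integrating $\pu\pv\T(r\phi)$ in $u$ from $\mathcal C_{u_0}$. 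The first step should yield $|\pv\T(r\phi)|\lesssim M C'_{\boldsymbol T}|u|^{-p}/r^2$ plus errors, with the boundary datum appearing only in the second step.
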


\begin{proof}
As in the previous proof, we will bootstrap the decay of $\T(r\phi)$, that is, we will assume
\begin{equation}\tag{BS(4)}\label{eq:refine:bootstrapTPHI2}
    |\T(r\phi)(u,v)|\leq E |u|^{-p}
\end{equation}
for some suitable constant $E$, and we will subsequently improve this assumption. Note that, by the bootstrap assumption, the results of Thm.~\ref{thm:refinements:TR} hold. 

We will distinguish between the cases $s>1$, $s=1$ and $s<1$, i.e.\ between the cases where $r|_\Gamma$ tends to a finite or infinite limit.

\paragraph{Case i): $s>1$:}
We start by commuting the wave equation for $r\phi$ with $\T$. As in the previous proof, we deal with the bad $\T\nu$-term by converting it into a boundary term (cf.\ eq.~\eqref{eq:refine:pupvtr}):
\begin{nalign}\label{eq:refine:pupvtPHI}
    \pu\pv(\T (r\phi)) &= \T m \left( \frac{2\nu\kappa}{r^2}+\frac{4m\nu\kappa}{r^3(1-\mu)}    \right)\frac{r\phi}{r}\\
                  &+\T \lambda\left(\frac{2m\nu}{r^2(1-\mu)}\right) \frac{r\phi}{r}+ \pu\left(2m\kappa\frac{\T r}{r^2}\frac{r\phi}{r}\right)\\
                  & +\T r \left( \frac{4m^2\nu\kappa}{r^4(1-\mu)}-\frac{2\kappa}{r^2}\frac{(1-\mu)\zeta^2}{2\nu}-\frac{2m}{r^2}\frac{\kappa\zeta^2}{r\nu} \right)\frac{r\phi}{r}
                  +\T r\left(  -\frac{2m\kappa}{r^2} \frac{\pu(r\phi)}{r}\right)   \\
                  &+\pu\pv r \cdot \frac{\T(r\phi)}{r}   .
\end{nalign}
The last term is exactly the same term that appears in $\pu\pv(r\phi)$, but with $r\phi$ replaced by $\T (r\phi)$. We will show that all other terms decay faster in $|u|$. More precisely, we will show that all other terms can be bounded by $\frac{1}{|u|^{p+\epsilon}r^{3}}$ for some $\epsilon>0$. 

For the $\T m$-term, plugging in the bound \eqref{thm.refinements.Tm} as well as $r\phi\lesssim |u|^{-p+1}$, we find that it is bounded by
$$ \T m\left(\dots \right)\frac{r\phi}{r}\lesssim \frac{1}{|u|^{3p-2}r^4}+\frac{1}{|u|^{3p-1}r^3}.$$
The RHS  can be bounded by $\frac{1}{|u|^{p+\epsilon}r^{3}}\left(\frac{1}{|u|}+\frac{1}{r}\right)$, where $\epsilon$ is given by $\epsilon=2p-2>0$.

The $\T\lambda$- and the $\T r$-terms can be dealt with similarly in view of the bounds \eqref{thm.refinements.boundontr} and \eqref{thm.refinementes.boundonpvtr} and since we assumed that $s>1$ (and since $p>1$ implies $2p-1>1$).

For the boundary term in the second line, we find that, after integrating first in $u$ and then in $v$, it can be bounded against $R^{-2}|u|^{-p+1-\min(s,2p-1)}$. 

In conclusion, we find that 
\begin{multline}
     |\T(r\phi)|(u,v)\leq |\T(r\phi)|(u,u)+\left| \int_{u}^v dv'\int_{u_0}^u du'\, \pu\pv(\T r\phi) \right| \\
                    \leq C_{\mathrm{in},\phi}^1 |u|^{-p} + \left| \int_{u}^v dv'\int_{u_0}^u du' \,\pu\pv r \frac{\T(r\phi)}{r^3} \right|+\mathcal{O}(|u|^{-p -\min(2p-2,s-1)}),
\end{multline}
so the bootstrap argument can be closed in the same manner as in the proof of Thm.~\ref{thm:timelike:cc} for $r\phi$. (Alternatively, one can perform a Gr\"onwall argument as in  Remark~\ref{rem:betterbound}).

This concludes the proof in the case where $r(u,u)$ tends to a finite limit.
\paragraph{Case ii): s=1:}
In this case, the $\T r$- and $\T\lambda$-terms in eq.~\eqref{eq:refine:pupvtPHI} are no longer subleading compared to the $\T(r\phi)$-term (the $\T m$-term remains unchanged). Nevertheless, since in this case $r(u,u)\sim \log|u|$ diverges, the bootstrap argument still closes.
\paragraph{Case iii): $s\leq1$:}
 Let us finally deal with the case where $r(u,u)\sim |u|^{1-s}$. Here, the $\T r$- and $\T\lambda$-terms in equation \eqref{eq:refine:pupvtPHI} exhibit less decay in $u$ than the other terms, however, we can convert some of the extra $r$-decay present in these terms into $u$-decay according to
 \begin{align}\label{eq:temp5}
     r^{-1}\lesssim |u|^{s-1}.
 \end{align}
 Using this, we have e.g.\  for the $\T\lambda$-term in \eqref{eq:refine:pupvtPHI}
 \begin{equation}
     \left|\T \lambda\left(\frac{2m\nu}{r^2(1-\mu)}\right) \frac{r\phi}{r}\right|\lesssim \frac{1}{r^5|u|^s}\frac{1}{|u|^{p-1}}\lesssim\frac{1}{r^3}\frac{1}{|u|^{p+1-s}}.
 \end{equation}
 
 The boundary term in the second line, as well as the third line of \eqref{eq:refine:pupvtPHI}, can be dealt with in a similar fashion.
 For the term in the fourth line, 
we use the bootstrap assumption \eqref{eq:refine:bootstrapTPHI2} as well as the bound for $\pv(r\phi)$ from \eqref{eq:timelike:thm:BS:pvrphi} to conclude that
$$|\pu(r\phi)|\lesssim \frac{1}{|u|^p}+\frac{1}{|u|^{p-1}r^2}.$$
Plugging this bound back into the above, we see that
\begin{equation*}
     \T r\left(  -\frac{2m\kappa}{r^2} \frac{\pu(r\phi)}{r}\right)\lesssim \frac{1}{r^3|u|^s}\left(\frac{1}{|u|^p}+\frac{1}{|u|^{p-1}r^2}\right)\lesssim\frac{1}{r^3|u|^{p+s}}+\frac{1}{r^3|u|^{p+1-s}}, 
 \end{equation*}
where we again used \eqref{eq:temp5}.
 We conclude that the $\pu\pv r\frac{\T (r\phi)}{r}$-term again dominates and that we can repeat the bootstrap argument as before.
 
 This finishes the proof.
\end{proof}

\subsubsection{Convergence of  \texorpdfstring{$\lim_{u\to-\infty}|u|^{p-1}r\phi$}{u r phi}}\label{sec:refine:limit}
We now have all the tools at hand to reprove Theorem~\ref{thm:timelike:final} without the smallness assumption on $C_{\mathrm{in},\phi}$ and without the restriction on $p\geq2$, see Remark~\ref{rem:shortcoming of limiting proof}.
However, as explained in  Remark~\ref{rem:shortcoming of limiting proof 2}, we still would not be able to conclude that $|u|^{p-1}r\phi(u,v)$ tends to a limit as $u$ tends to $-\infty$.
We now prove a lemma that allows us to do precisely this:
\begin{lemma}\label{lemma:trick}
Under the same assumptions as in Theorem~\ref{thm:refinements:TPhi}, assuming moreover that, on $\Gamma$,
\begin{equation}\label{eq:lemmatrick:ass}
  \left|\hat{r}\hat{\phi}-\frac{|u|}{p-1}\T(\hat{r}\hat{\phi})\right|\leq F |u|^{-p+1-\epsilon_\phi}
\end{equation}
for some $1>\epsilon_\phi>0$ and a constant $F>0$, we have that, for $s\neq 1$, for large enough values of $|U_0|$ and for $\hat{r}\geq R\geq 4M$:
\begin{equation}\label{eq:refine:trickforlimit}
  \left  |r\phi-\frac{|u|}{p-1}\T(r\phi)\right|\leq F'|u|^{-p+1-\epsilon'}
\end{equation}
for a constant $F'$ depending only on initial data and not on the value of $u_0$.
Here, $\epsilon'$ is given by
\begin{equation}\label{eq:epsilonprime}
     \epsilon'=\min(\epsilon_\phi,2p-2,s,|s-1|)=
\begin{cases}
\min(\epsilon_\phi,2p-2,s-1),& s>1, \\
\min(\epsilon_\phi, 2p-2, s,1-s), & s<1.\\
\end{cases}
\end{equation}

If $s=1$, then we instead have
\begin{equation}\label{eq:refine:trickforlimit2}
   \left |r\phi-\frac{|u|}{p-1}\T(r\phi)\right|\leq F'|u|^{-p+1}\frac{1}{\log^2|u|}.
\end{equation}

\end{lemma}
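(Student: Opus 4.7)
Define $\Psi := r\phi - \frac{|u|}{p-1}\boldsymbol{T}(r\phi)$. Since $\boldsymbol{T}=\pu+\pv$ commutes with both $\pu$ and $\pv$, a direct computation using the wave equation $\pu\pv(r\phi) = 2m\nu\kappa\,r\phi/r^3$ and the identity $\pu|u| = -1$, $\pv|u|=0$ gives
\begin{equation*}
\pu\pv\Psi \;=\; \frac{2m\nu\kappa}{r^3}\,\Psi \;+\; E, \qquad E := \frac{1}{p-1}\pv\boldsymbol{T}(r\phi) \;-\; \frac{|u|}{p-1}\,r\phi\cdot\boldsymbol{T}\!\left(\frac{2m\nu\kappa}{r^3}\right).
\end{equation*}
Thus $\Psi$ solves the same wave equation as $r\phi$, but with an inhomogeneity $E$ whose decay (to be shown) is strictly better than that of the bare source $2m\nu\kappa\,r\phi/r^3$. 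On $\Gamma$, the hypothesis \eqref{eq:lemmatrick:ass} provides the boundary estimate $|\Psi|_{\Gamma}\le F|u|^{-p+1-\epsilon_\phi}$, and on $\mathcal{C}_{u_0}$ the identity $r\phi\equiv 0$ reduces $\Psi$ to $-\frac{|u|}{p-1}\pu(r\phi)|_{\mathcal{C}_{u_0}}$, which is determined by the compactly supported data at the corner $q$.

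\textbf{The bootstrap.} Assume $|\Psi|\le F'|u|^{-p+1-\epsilon'}$ throughout the region of local existence. Integrate $\pu\pv\Psi$ first in $u$ from $\mathcal{C}_{u_0}$ to control $\pv\Psi$, then in $v$ from $\Gamma$ to recover $\Psi$. The $\frac{2m\nu\kappa}{r^3}\Psi$-term is treated exactly as in the bootstrap for $r\phi$ in Theorem~\ref{thm:timelike:cc}: converting $du$ and $dv$-integration into $dr$-integration via $|\nu|,\lambda = (1-2M/r) + o(1)$, one produces a factor $\frac{1}{2(1-\delta(U_0))}\log\frac{R}{R-2M}<1$ multiplying $F'|u|^{-p+1-\epsilon'}$, which can be absorbed by choosing $F'$ sufficiently large (exactly as in \eqref{integralthathasbeencomputedbefore}). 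Alternatively, a direct Gr\"onwall argument as in Remark~\ref{rem:betterbound} works under the weaker condition $R>2M$. All that remains is to show that the contribution of $E$ is bounded by $\tfrac12 F'|u|^{-p+1-\epsilon'}$.

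\textbf{Estimating $E$.} The first piece, $\tfrac{1}{p-1}|\pv\boldsymbol{T}(r\phi)|$, is bounded by $|u|^{-p}/r^2$ via \eqref{thm.refinementes.boundonpvtphi}. For the second piece, expand by the Leibniz rule,
\begin{equation*}
\boldsymbol{T}\!\left(\frac{2m\nu\kappa}{r^3}\right) \;=\; \frac{2\boldsymbol{T}m\cdot\nu\kappa}{r^3} + \frac{2m\,\boldsymbol{T}(\nu\kappa)}{r^3} - \frac{6m\nu\kappa\,\boldsymbol{T}r}{r^4},
\end{equation*}
and insert \eqref{thm.refinements.boundontr} and \eqref{thm.refinements.Tm} together with the pointwise bounds $|\pv\nu|=|\pu\pv r|\lesssim r^{-2}$ and $|\pv\kappa|\lesssim r^{-2}$ (the latter via $\pv\log\kappa=\pv\lambda/\lambda-\pv(1-\mu)/(1-\mu)$, as in the proof of Theorem~\ref{thm:null:asymptotics of dvrphi,p=3}). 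The $\boldsymbol{T}\nu$-contribution, which is not directly controlled, is handled as in the passage from \eqref{eq:pupvTr} to \eqref{eq:refine:pupvtr}: writing $\boldsymbol{T}\nu=\pu\boldsymbol{T}r$ and performing an integration by parts in $u$ converts it into a harmless boundary term at $\Gamma$ plus $\pu$-derivatives of already-controlled quantities. Multiplying by $|u|\cdot|r\phi|\lesssim|u|^{-p+2}$ and integrating $du\,dv$ (with the standard $dv\to dr$ conversion) then yields a bound of the form $F''|u|^{-p+1-\epsilon'}$, where the various contributions produce: $\epsilon_\phi$ from the $\Gamma$-boundary term, $2p-2$ from the $\boldsymbol{T}m$-term, and $s$, $|s-1|$ from the $\boldsymbol{T}r$-term, where the $1-s$ factor arises (only when $s<1$) via the conversion $r^{-1}\le c|u|^{s-1}$ along $\Gamma$, and $s-1$ arises (only when $s>1$) from the $r$-integrated $r^{-2}\boldsymbol{T}r$-term. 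Taking the minimum yields $\epsilon'$ as in \eqref{eq:epsilonprime}.

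\textbf{The $s=1$ case and main obstacle.} When $s=1$ the integral $\int^u r^{-2}\boldsymbol{T}r\,du'$, whose power-counting produced the $|s-1|$-contribution, degenerates into a $\log|u|$ (compare the computation of the dilogarithm in \eqref{eq:refine:integralofpvtr2}), so the bootstrap can only close with the weaker bound $|\Psi|\leq F'|u|^{-p+1}/\log^2|u|$; tracking this marginal term through the $dv$-integration gives exactly one power of $\log|u|$ in the numerator, which is compensated by choosing the a priori bound with $1/\log^2|u|$, leaving a remaining $1/\log|u|$ to be absorbed when $U_0$ is large enough. The main technical obstacle is precisely the absence of an $L^\infty$-bound on $\boldsymbol{T}\nu$: this forces the integration-by-parts trick already used in the proof of Theorem~\ref{thm:refinements:TR}, and the careful accounting of the resulting boundary terms (which, thanks to the compact support of $\hat\phi$, vanish on $\mathcal{C}_{u_0}$) is where most of the bookkeeping lies.
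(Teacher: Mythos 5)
Your proposal is correct and follows essentially the same route as the paper's (sketched) proof: both define $\Psi := r\phi - \tfrac{|u|}{p-1}\T(r\phi)$, derive the commuted wave equation $\pu\pv\Psi = \tfrac{2m\nu\kappa}{r^3}\Psi + E$ — your $E$-decomposition is algebraically identical to the paper's grouping $\tfrac{1}{p-1}\pv\T(r\phi) - \tfrac{|u|}{p-1}\bigl(\pu\pv\T(r\phi) - \pu\pv r\,\tfrac{\T(r\phi)}{r}\bigr)$, since $\T$ commutes with $\pu\pv$ — and then bootstrap, handling the uncontrolled $\T\nu$ by the same integration-by-parts trick already used in Theorem~\ref{thm:refinements:TPhi}. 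Your bookkeeping of the exponent $\epsilon'$ (the $\epsilon_\phi$-contribution from $\Gamma$ and the $\pv\T(r\phi)$-term, $2p-2$ from $\T m$, $|s-1|$ and $s$ from $\T r$, the $\log^{-2}$ degeneration at $s=1$) correctly fleshes out what the paper leaves implicit; the only cosmetic imprecision is the description of $\Psi|_{\mathcal{C}_{u_0}}$, which is in fact identically zero (since $q$ lies to the past of $\mathrm{supp}\,\hat\phi$, so $\pu(r\phi) = 0$ propagates along $\mathcal{C}_{u_0}$ by the wave equation), rather than merely "determined by the corner data."
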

\begin{proof}
The proof will follow the same ideas as the previous proofs, and we will only sketch it. First, consider the case $s>1$.
We compute
\begin{align}
    \begin{split}
        \pu\pv\left(r\phi-\frac{|u|}{p-1}\T(r\phi)\right)=&\frac{\pu\pv r}{r}\left(r\phi-\frac{|u|}{p-1}\T(r\phi)\right)+\frac{1}{p-1}\pv\T(r\phi)\\
        &-\frac{|u|}{p-1}\left(\pu\pv\T(r\phi)-\pu\pv r\frac{\T (r\phi)}{r}\right)    .
    \end{split}
\end{align}
We will again assume \eqref{eq:refine:trickforlimit} as a bootstrap assumption and improve it. 
The first term in the equation above is the usual one, and we can deal with it. It is left to show that the others decay faster:

For the second term, plugging in the bound \eqref{thm.refinementes.boundonpvtphi} and converting some of the $u$-decay in it into $r$-decay and integrating in $u$ and $v$ does the job (this is where we need $\epsilon_\phi<1$).

For the final term, we proceed exactly as in the proof of Theorem~\ref{thm:refinements:TPhi}.

We proceed similarly in the cases $s=1$ and $s\leq1$.
\end{proof}
\begin{rem}\label{rem:lemma:trick}
Notice that one can replace the RHS of the assumption \eqref{eq:lemmatrick:ass} with any function that is non-increasing in $|u|$ and recover the correspondingly adapted \eqref{eq:refine:trickforlimit}. In particular, we can add a constant to the RHS of \eqref{eq:lemmatrick:ass}. This will play a role later on because of the cut-off functions introduced below, see Remark~\ref{rem:cutofferror}
\end{rem}

\subsection{Proof of Thm.~\ref{thm.intro:timelikecase}}\label{sec:refine:limitingargumentover}
\subsubsection{Sending \texorpdfstring{$\mathcal{C}_{u_0}$}{Cu0} to \texorpdfstring{$\mathcal{I}^-$}{I-} (revisited)}\label{sec:refine:limitingargument}
We can now prove the refined version of Thm.~\ref{thm:timelike:final}. The setup will be the same as in section~\ref{sec:timelike:limit}, with some minor modifications that we here point out:
\subsubsection*{The "final" boundary data}
As in section~\ref{sec:timelike:limit}, we let $M>0$, and we restrict to sufficiently large negative values of $u\leq U_0<0$ and specify boundary data $(\hat r, \hat \phi)$ on $\Gamma$ as follows: The datum $\hat r\in C^2(\Gamma)$ is to satisfy $\hat r>2M$ and either 
    \begin{align}
    \T \hat{r} \sim \frac{1}{|u|^{s}},\,\,\, 1\geq s> 0,\,  \,\,\text{and}\,\,\,      \lim_{u\to-\infty}\hat{r}=\infty,
    \end{align}
    or\footnote{We now use the better lower bound on $R$, cf.\ Remarks~\ref{rem:betterbound} and~\ref{rem:betterbound2}.}
    \begin{align}
        |\T \hat{r}|\lesssim  \frac{1}{|u|^{s}},\,\,\, s>1,\,  \,\,\text{and}\,\,\,      \lim_{u\to-\infty}\hat{r}=\textcolor{black}{R>2M.}
    \end{align}
 On the other hand, $\hat\phi\in C^2(\Gamma)$ is chosen to obey $\lim_{u\to-\infty}\hat r\hat\phi(u)=0$ and
\begin{align}\label{eq:refine:assumptionP}
             \T(\hat{r}\hat{\phi}) &= C_{\mathrm{in},\phi}^1|u|^{-p}+\mathcal{O}(|u|^{-p-\epsilon_\phi})\end{align}
for $p>1$, $1>\epsilon_\phi>0$ and some constant $C_{\mathrm{in},\phi}^1>0$.

\subsubsection*{The sequence of finite solutions $(r_k,\phi_k,m_k)$}
We finally prescribe a sequence of initial/boundary data as in section~\ref{sec:timelike:limit}:
We recall the notation that, for $k\in\mathbb N$, $\mathcal{C}_k=\{u=-k, v\geq u\}$, 
and we also recall the sequence of smooth cut-off functions $(\chi_k)_{k\in \mathbb N}$ on $\Gamma$ from \eqref{eq:chi_k}, which equal 1 for $u\geq -k+1$, and which equal 0 for $u\leq -k$.
 
Our sequence of initial data shall then be given\footnote{Notice that, for technical reasons, we here cut off $\T(\hat r \hat\phi)$ rather than $\hat r \hat\phi$ itself.} by:
    \begin{equation}
    (I.D.)_k=\begin{cases}
    \hat{r}_k=\hat{r}, \,\hat r \hat{\phi}_k(u)=\int_{-\infty}^u \mathbf\chi_k T(\hat r\hat{\phi})\dd u' & \text{ on }\, \,\,\Gamma ,\\
    \bar{r}_k=r_0,\,\bar{\phi}_k=0, \,m=M& \text{ on }\, \,\,\mathcal{C}_k.
    \end{cases}
    \end{equation}
These lead to a sequence of solutions $(r_k, \phi_k,m_k)$, which we extend with the vacuum solution $(r_0,0,M)$ for  $u\leq-k$ (cf.~Thm.~\ref{thm:globalinv}) and which obey, uniformly in $k$, the bounds from Theorem~\ref{thm:timelike:cc} and also the refined bounds from Theorems~\ref{thm:refinements:TR} and~\ref{thm:refinements:TPhi} and Lemma~\ref{lemma:trick}. 
\begin{rem}\label{rem:cutofferror}There is one small technical subtlety here: The difference $\hat r_k\hat \phi_k-\frac{|u|}{p-1} \mathbf T(\hat r_k\hat\phi_k)$ has an error term coming from the cut-off function $\chi_k$:
\[|\hat r_k\hat \phi_k-\frac{|u|}{p-1} \mathbf T(\hat r_k\hat\phi_k)|\leq \frac{F}{|u|^{p-1+\epsilon_\phi}}+\frac{C}{k^{p-1}}\]
for some positive constants $F$ and $C$. As explained in Remark~\ref{rem:lemma:trick}, Lemma~\ref{lemma:trick} still applies to this. The error contribution $Ck^{1-p}$ arising from the cut-off function then vanishes as $k\to\infty$.
\end{rem}

\begin{thm}\label{thm:timelike:final!!!}
 Let $p>1$ and $U_0<0$ be sufficiently large. Then, as $k\to\infty$, the sequence $(r_k, \phi_k,m_k)$ uniformly converges to a limit $(r,\phi,m)$, 
            \begin{equation}
                ||r_k\phi_k-r\phi||_{C^1(\DUU)}+||r_k-r||_{C^1(\DUU)}+||m_k-m||_{C^1(\DUU)}\to 0.
            \end{equation}
       This limit is also a solution to the spherically symmetric Einstein-Scalar field equations.   Moreover, $(r,\phi,m)$ restricts correctly to the boundary data $(\hat{r},\hat{\phi})$ and satisfies, for all $v$,             
        \begin{equation}
            \lim_{u\to-\infty}r\phi(u,v)=  \lim_{u\to-\infty}\pv m(u,v)=  \lim_{u\to-\infty}\pv (r\phi)(u,v)=0
        \end{equation}
        as well as
        \begin{equation}
              \lim_{u\to-\infty}m(u,v)=\lim_{u\to-\infty}m(u,u)=M.
        \end{equation}
       
        Assume now that also $R>2.95 M$. Then the following \underline{sharp} bounds hold throughout  $\DU$, for sufficiently large negative values of $U_0$:
        \begingroup
\allowdisplaybreaks
        \begin{align}
           m-M   &=\mathcal{O}\left(\frac{1}{|u|^{2p-2}r}+\frac{1}{|u|^{2p-1}}\right)\label{master:m},\\
            \T m    &=\mathcal{O}\left(\frac{1}{|u|^{2p-1}r}+\frac{1}{|u|^{2p}}\right)\label{master:Tm},\\
      \kappa-1    &=\mathcal{O}\left(\frac{1}{|u|^{2p-2}r^2}\right)\label{master:kappa},\\
        \lambda-\left(1-\frac{2M}{r}\right)&=\mathcal{O}\left(\frac{1}{|u|^{2p-2}r}\right)\label{master:lambda},\\
        \nu+\lambda=\T r&=\mathcal{O}(|u|^{-\min(s,2p-1)})\label{master:Tr},\\
        |r\phi|     &\leq C_{\mathrm{in},\phi}|u|^{-p+1}\label{master:rphi},\\
        |\pv(r\phi)|&\leq M C_{\mathrm{in},\phi}\frac{|u|^{-p+1}}{r^2}\label{master:pvrphi},\\
        |\pu(r\phi)+\pv(r\phi)|=|\T(r\phi)|&\leq C_{\mathrm{in},\phi}^1|u|^{-p}\label{master:trphi},\\
        |\pv\T(r\phi)|&\leq\eta M C_{\mathrm{in},\phi}^1\frac{|u|^{-p}}{r^2}\label{master:pvtrphi},
        \end{align}\endgroup
      where $C_{\mathrm{in},\phi}=C_{\mathrm{in},\phi}^1/(p-1)$, $\eta>1$ can be chosen arbitrarily close to 1 as $U_0\to-\infty$ and all the constants implicit in $\mathcal{O}$ only depend on initial data. 
      
      Finally, the following limit exists and is non-zero:
      \begin{equation}
             \lim_{u\to -\infty}|u|^{p-1}r\phi(u,v)=:\Phi^-\neq 0\label{master:Phi-}.
      \end{equation}
   More precisely, we have, for $s\neq1$, that
   \begin{equation}
       r\phi(u,v)=\frac{\Phi^-}{|u|^{p-1}}+\mathcal{O}\left( \frac{1}{|u|^{p-1+\min(\epsilon, 2p-2, s, |s-1|)}}+\frac{1}{r|u|^{p-1}}\right)\label{master:Phi- sneq1},
   \end{equation}
   and, for $s=1$,
   \begin{equation}\label{master:Phi-s=1}
       r\phi(u,v)=\frac{\Phi^-}{|u|^{p-1}}+\mathcal{O}\left( \frac{1}{|u|^{p-1}\log^2|u|}+\frac{1}{r|u|^{p-1}}\right).
   \end{equation}
   If $s\leq 1$, then the above limit \eqref{master:Phi-} is given by $\Phi^-=C_{\mathrm{in},\phi}/(p-1)$.
\end{thm}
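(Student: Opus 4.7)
The strategy mirrors that of Theorem~\ref{thm:timelike:final}: construct the limit by showing that the sequence of solutions $(r_k,\phi_k,m_k)$ arising from the cut-off data forms a Cauchy sequence in $C^1(\DUU)\times C^1(\DUU)\times C^0(\DUU)$, and then argue that all uniform bounds pass to the limit. The decisive new ingredient is that, thanks to Theorems~\ref{thm:refinements:TR} and \ref{thm:refinements:TPhi}, each $(r_k,\phi_k,m_k)$ satisfies the sharp bound $|\pu(r_k\phi_k)|\lesssim |u|^{-p}+r^{-2}|u|^{-p+1}$ uniformly in $k$. As indicated in Remark~\ref{rem:shortcoming of limiting proof}, this is precisely what is needed to remove both the smallness condition on $C^1_{\mathrm{in},\phi}/R$ and the restriction $p\geq 2$ from Theorem~\ref{thm:timelike:final}. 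The lower bound $R>2.95M$ enters only to guarantee the lower bound on $r\phi$ (cf.\ Remark~\ref{rem:betterbound} and the condition used in Theorem~\ref{thm:timelike:final}).

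\textbf{Step 1: Cauchy property.} I would split $\DUU$ into the three regions of the proof of Theorem~\ref{thm:timelike:final}, namely $\{u\leq -n\}$, $\mathcal D_{n,k}=\{-n\leq u\leq -k\}$, and $\mathcal D_k=\{-k\leq u\leq U_0\}$. In the first region the two solutions both equal the vacuum Schwarzschild extension $(r_0,0,M)$ and the difference vanishes identically. In $\mathcal D_{n,k}$ the bounds from Theorems~\ref{thm:timelike:cc}, \ref{thm:refinements:TR}, \ref{thm:refinements:TPhi} together with the improved bound on $\zeta$ yield $C/k^{1-\eta}$ smallness on differences just as in \eqref{eeeeeeeeeqqqqqqqqqqq}. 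The main work is in $\mathcal D_k$, where we run the same double Gr\"onwall argument as in \eqref{eq:timelike:limit:proof:gronwallfullthing}. The critical new point is the treatment of the borderline term \eqref{eq:timelike:limit:proof:borderlineterm}: the factor $\pu(r\phi)$ there is no longer just $\mathcal O(|u|^{-1})$ but in fact $\mathcal O(|u|^{-p}+r^{-2}|u|^{-p+1})$, which is integrable in $u$ for any $p>1$. Hence the coefficient $\mathbf D$ that was previously traded against the smallness of $C^1_{\mathrm{in},\phi}$ is no longer needed, and the Gr\"onwall estimate closes unconditionally, giving convergence in the stated norms. The extension to the wave equations (and hence $C^2$-convergence of $r_k, \phi_k, m_k$) is then obtained as in Theorem~\ref{thm:timelike:final}.

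\textbf{Step 2: Pointwise bounds.} Uniform convergence of the sequence immediately transfers all pointwise bounds derived in Theorems~\ref{thm:timelike:cc}, \ref{thm:refinements:TR}, \ref{thm:refinements:TPhi} to the limit $(r,\phi,m)$, which yields \eqref{master:m}--\eqref{master:pvtrphi}. The restriction of $(r,\phi,m)$ to $\Gamma$ follows because $\hat r_k\hat\phi_k\to\hat r\hat\phi$ pointwise, while the limiting behaviour on $\mathcal I^-$ follows from the fact that, for each fixed $v$ and any $k$, $(r_k,\phi_k,m_k)$ agrees with $(r_0,0,M)$ for $u\leq-k$, so the stated limits at $u\to-\infty$ are inherited. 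The Hawking mass limit $\lim_{u\to-\infty}m(u,v)=M$ then follows from monotonicity together with the sharp bound \eqref{master:m}.

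\textbf{Step 3: Existence of $\Phi^-$.} For this step I would invoke Lemma~\ref{lemma:trick}, adapted as in Remark~\ref{rem:cutofferror} to absorb the $Ck^{1-p}$ error coming from the cut-off $\chi_k$. Applied to the limit, it gives
\begin{equation*}
\Bigl| r\phi(u,v)-\tfrac{|u|}{p-1}\mathbf T(r\phi)(u,v)\Bigr|\leq F'|u|^{-p+1-\epsilon'}
\end{equation*}
for $s\neq1$ (with $\epsilon'$ as in \eqref{eq:epsilonprime}) and the corresponding logarithmic bound for $s=1$. Writing
\begin{equation*}
\pu\bigl(|u|^{p-1}r\phi\bigr)=|u|^{p-2}\bigl(-(p-1)r\phi+|u|\mathbf T(r\phi)\bigr)-|u|^{p-1}\pv(r\phi),
\end{equation*}
the first term is integrable in $u$ by Lemma~\ref{lemma:trick}, and the second by \eqref{master:pvrphi}. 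Hence $|u|^{p-1}r\phi(u,v)$ has a limit as $u\to-\infty$ for every fixed $v$; combining the above identity with the sharp decay of $\pv(r\phi)$ yields the asymptotics \eqref{master:Phi- sneq1}, \eqref{master:Phi-s=1}. That the limit is independent of $v$ follows from the fact that $\int_u^v|u|^{p-1}\pv(r\phi)(u,v')\dd v'=\mathcal O(r^{-1})$ by \eqref{master:pvrphi}, vanishing as $u\to-\infty$. In the case $s\leq 1$ (where $r|_\Gamma\to\infty$), evaluating along $\Gamma$ identifies $\Phi^-$ with $C^1_{\mathrm{in},\phi}/(p-1)$ directly from the boundary datum \eqref{eq:refine:assumptionP}; non-vanishing of $\Phi^-$ in the general case follows from the lower bound on $|r\phi|$ (analogous to \eqref{eq:timelike:limit:theorem:lowerboundforphi}) ensured by the hypothesis $R>2.95M$. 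I expect that the only subtle technical issue will be handling the logarithmic case $s=1$ in Lemma~\ref{lemma:trick}, but this is already built into its statement.
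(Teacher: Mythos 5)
Your proof follows the same strategy as the paper (Cauchy sequence in $C^1$ over the three regions, passage to the limit, then Lemma~\ref{lemma:trick} for $\Phi^-$), and Steps~1 and~3 are essentially identical. One genuine gap, however, is in Step~2: you claim that uniform convergence "immediately transfers all pointwise bounds derived in Theorems~\ref{thm:timelike:cc}, \ref{thm:refinements:TR}, \ref{thm:refinements:TPhi} to the limit, which yields \eqref{master:m}--\eqref{master:pvtrphi}." But the estimates \eqref{master:m}, \eqref{master:kappa}, \eqref{master:lambda} are \emph{strictly sharper} than anything stated in those three theorems. Theorem~\ref{thm:timelike:cc} only gives $M/2\leq m\leq M$, $\kappa=1-\delta(u)$ with $\delta(u)\sim|u|^{-2p+3}$, and $\lambda=(1-\delta(u))(1-2M/r)$; Theorems~\ref{thm:refinements:TR}, \ref{thm:refinements:TPhi} add bounds on $\boldsymbol Tr$, $\boldsymbol Tm$, $\boldsymbol T(r\phi)$, but not the refined $|u|^{-2p+2}r^{-2}$-type bounds on $\kappa-1$ or $\lambda-(1-2M/r)$. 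To obtain \eqref{master:m}, \eqref{master:kappa}, \eqref{master:lambda} one must \emph{re-run} the relevant integrations in the proof of Theorem~\ref{thm:timelike:cc} using the improved $\zeta$ bound $|\zeta|\lesssim |u|^{-p}+r^{-1}|u|^{-p+1}$ (equivalently the sharp bound on $\boldsymbol T(r\phi)$ from Theorem~\ref{thm:refinements:TPhi}), rather than simply transferring them. This is the step the paper summarizes as "redoing the proof of Theorem~\ref{thm:timelike:cc} with the improved bound on $\boldsymbol T(r\phi)$." It is a short argument but it is not mere limit-passing, and you should include it; once that is done your proposal matches the paper's proof.

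A small stylistic remark: for $s\leq1$ the paper shows that $|u|^{p-1}r\phi$ attains a limit by integrating $\pv(r\phi)$ directly from $\Gamma$ (where $r|_\Gamma\to\infty$ and the boundary datum already has the prescribed leading behaviour), avoiding Lemma~\ref{lemma:trick} altogether in that case; you route through Lemma~\ref{lemma:trick} for all $s$ and then observe the identification $\Phi^-=C_{\mathrm{in},\phi}$ along $\Gamma$. Both work; the paper's route is slightly more direct in that sub-case.
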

\begin{rem}
The lower bound $R>2.95M$ is only necessary for the bounds \eqref{master:rphi}--\eqref{master:pvtrphi}, which otherwise still hold with slightly worse constants, and for the statement that $\Phi^-\neq 0$. In other words, it is only necessary for the proof of \textit{lower} bounds, not of upper bounds. We expect that it can be improved.
\end{rem}
\begin{proof}
First, notice that the reason that one can take the lower bound on $R$ to be just $R>2M$ is explained in  Remarks~\ref{rem:betterbound} and~\ref{rem:betterbound2}.
The first part of the theorem, namely that $(r_k,\phi_k,m_k)$ converges to a solution $(r,\phi,m)$ which restricts correctly to the boundary data, is then shown as in the proof of Thm.~\ref{thm:timelike:final}, now using the improved decay on $\T(r\phi)_k$ from Thm.~\ref{thm:refinements:TPhi}. As discussed in Remark~\ref{rem:shortcoming of limiting proof}, the fact that we  now have sharp decay for $\pu(r\phi)$ at our disposal removes the necessity to assume $p\geq2$ as well as the smallness assumption on $C_{\mathrm{in},\phi}$. Moreover, one can perform a similar argument to show convergence in higher derivative norms as well.

We thus obtain a limiting solution which, as before, satisfies all the bounds from Theorems~\ref{thm:timelike:cc},~\ref{thm:refinements:TR} and~\ref{thm:refinements:TPhi}. 

Furthermore, the improvements in the bounds \eqref{master:m}, \eqref{master:kappa} and \eqref{master:lambda} can be obtained from redoing the proof of Theorem~\ref{thm:timelike:cc} with the improved bound on $\T(r\phi)$ from Thm.~\ref{thm:refinements:TPhi}. 

The bounds  \eqref{master:Tm}, \eqref{master:Tr}, \eqref{master:rphi} and \eqref{master:pvrphi} come directly from Theorems~\ref{thm:timelike:cc} and~\ref{thm:refinements:TR}, where, for the latter two bounds, we used that $r\phi$ also satisfies a lower bound provided that $R>2.95$ (as shown in Thm.~\ref{thm:timelike:final}, estimate \eqref{lowerbound}), which allows us to improve $C'$ to $C_{\mathrm{in},\phi}$.

With similar reasoning as for $r\phi$ and $\pv(r\phi)$, one derives the improvements in the estimates \eqref{master:trphi} and \eqref{master:pvtrphi} from Thm.~\ref{thm:refinements:TPhi} by showing that $\T(r\phi)$ also satisfies a lower bound and, in particular, has a sign. (This is done in the same way as for $r\phi$ in the proof of Thm.~\ref{thm:timelike:final}.)

To prove the final part of the theorem, we note that, if $s\leq1$, it is trivial to show that $|u|^{p-1}r\phi$ attains a limit by looking at
$$|u|^{p-1}r\phi(u,v)=|u|^{p-1}r\phi(u,u)+|u|^{p-1}\int_{u}^v\pv(r\phi)\dd v'.$$

On the other hand, if $s\neq 1$, then we can show that $\pu(|u|^{p-1}r\phi)$ is integrable using the results of Lemma~\ref{lemma:trick}:
\begin{align*}
    -\pu(|u|^{p-1}r\phi)&=(p-1)|u|^{p-2}\left(r\phi-\frac{|u|}{p-1}\pu(r\phi)\right)\\
                    &=(p-1)|u|^{p-2}\left(r\phi-\frac{|u|}{p-1}\T(r\phi)\right)+|u|^{p-1}\pv(r\phi).
\end{align*}
The second term in the second line is bounded by $r^{-2}$ and, thus, is integrable. The first term in the second line, on the other hand, has been dealt with in Lemma~\ref{lemma:trick}, see \eqref{eq:refine:trickforlimit} and Remark~\ref{rem:cutofferror}, and is also integrable.

This concludes the proof.
\end{proof}

\subsubsection{Asymptotics of \texorpdfstring{$\pv(r\phi)$}{d/dv(r phi)} near \texorpdfstring{$\mathcal{I}^+$}{I+}, \texorpdfstring{$i^0$}{i0} and \texorpdfstring{$\mathcal{I}^-$}{I-}}\label{sec:timelike:asymptotics}
We now state the asymptotics for the limiting solution $(r,\phi,m)$ in a neighbourhood of spatial infinity.
By the above theorem, we have completely reduced the problem to the null case.
We can therefore reproduce the proofs of section~\ref{sec:nul:asymptotics} to conclude the following:
\begin{thm}\label{thm:timelike:logs}
Consider the solution $(r,\phi,m)$ constructed in Theorem~\ref{thm:timelike:final!!!}, and let $p=2$ in equation \eqref{eq:refine:assumptionP}.
Then, throughout $\DU\cap\{v>1\}$, for sufficiently large negative values of $U_0$, we get the following asymptotic behaviour for $\pv(r\phi)$:
\begin{equation}
     |\pv(r\phi)|\sim  
\begin{cases}
\frac{\log r-\log|u|}{r^3}, & u=\con,\,\, v \to \infty, \\
\frac{1}{r^3}, & v=\con,\,\, u \to -\infty,\\
\frac{1}{r^3}, & v+u=\con,\,\, v\to \infty.
\end{cases}
\end{equation}
More precisely, for fixed $u$, we have the following asymptotic expansion  as $\mathcal{I}^+$ is approached:
\begin{equation}
\left|\pv(r\phi)(u,v)+2M \Phi^- r^{-3} \left(\log r-\log(|u|)-\frac32\right)\right|  =\mathcal{O}(r^{-3}\log^{-2}(|u|)+r^{-4}|u|).
\end{equation}
The $\log^{-2}|u|$-term above can be replaced by $|u|^{-\epsilon'}$ for $\epsilon'$ as in \eqref{eq:epsilonprime} if $s\neq1$.
\end{thm}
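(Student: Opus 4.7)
The plan is to mirror the proof of Theorem~\ref{thm:null:asymptotics of dvrphi}, exploiting the fact that Theorem~\ref{thm:timelike:final!!!} has essentially reduced the timelike problem to the same situation as the null one: the limit $\Phi^{-}=\lim_{u\to-\infty}|u|\,r\phi(u,v)$ exists and is nonzero, the sharp expansions \eqref{master:Phi- sneq1}/\eqref{master:Phi-s=1} for $r\phi$ are available, and the geometric quantities satisfy the improvements \eqref{master:m}--\eqref{master:lambda} together with $\T r=\mathcal{O}(|u|^{-\min(s,2p-1)})$. Moreover, the limiting solution satisfies $\pv(r\phi)|_{\mathcal I^{-}}\equiv 0$, so integrating the wave equation from $\mathcal I^{-}$ picks up no boundary contribution.

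First I would substitute these expansions into $\pu\pv(r\phi)=2m\nu\kappa\, r\phi/r^{3}$. Using $\nu=\T r-\lambda=-1+\mathcal{O}(r^{-1}+|u|^{-\min(s,2p-1)})$ and setting $p=2$, I obtain
\[\pu\pv(r\phi)(u,v)=\frac{-2M\Phi^{-}}{r^{3}|u|}+\mathcal{O}\!\left(\frac{1}{r^{3}|u|\log^{2}|u|}+\frac{1}{r^{4}}\right),\]
with the $\log^{-2}|u|$ factor replaced by $|u|^{-\epsilon'}$ in the case $s\neq 1$, and the $r^{-4}$ error absorbing the subleading contributions from $m-M$, $\nu+1$, and $\kappa-1$. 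Integrating in $u$ from $\mathcal I^{-}$ then yields
\[\pv(r\phi)(u,v)=\int_{-\infty}^{u}\frac{-2M\Phi^{-}}{r(u',v)^{3}\,|u'|}\,du'+E(u,v),\]
where, by Fubini and a direct estimate using $r\geq\max(|u|^{1-s},R)$, the remainder satisfies $|E|\lesssim r^{-3}\log^{-2}|u|+r^{-3}\cdot r^{-1}|u|^{0}$, i.e.\ it is absorbed into the stated bound. To evaluate the leading integral I would first convert $r^{-3}$ into $(v-u)^{-3}$. Integrating $\lambda-1=-2M/r+\mathcal{O}(|u|^{-2}r^{-1})$ from $\Gamma$ gives $r(u,v)=\hat r(u)+(v-u)+\mathcal{O}(\log r)$, and after replacing $r^{-3}$ by $(v-u)^{-3}$ the partial-fraction computation \eqref{eq:null.precise expansion} produces the closed-form main term
\[-2M\Phi^{-}\!\left(\frac{\log|u|-\log(v-u)}{v^{3}}+\frac{3v-2u}{2v^{2}(v-u)^{2}}\right).\]
The three asymptotic regimes (fixed $u$; fixed $v$; $u+v$ constant) then follow exactly as at the end of the proof of Theorem~\ref{thm:null:asymptotics of dvrphi}, by Taylor-expanding $\log(1-v/u)$ in the last two cases and by direct inspection in the first.

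The main technical nuisance, and the reason the timelike error term differs from the null one, is the step of swapping $v-u$ for $r$ inside the logarithm in the final answer. Since $r-(v-u)=\mathcal{O}(\hat r(u)+\log r)$ and $\hat r(u)$ can grow as fast as $|u|^{1-s}$ when $s\leq 1$, one has $\log r-\log(v-u)=\mathcal{O}(|u|^{1-s}/r)$, which after multiplication by the $r^{-3}$-weight produces exactly the $r^{-4}|u|$ piece of the stated remainder (and, in the case $s>1$, only an $\mathcal{O}(r^{-4}\log r)$ contribution, easily absorbed). Everything else is bookkeeping: the $\log^{-2}|u|$ error comes from the $s=1$ case of \eqref{master:Phi-s=1}, while the corrections from $m-M$, $\nu+1$, and $\kappa-1$ each decay strictly faster than the leading $r^{-3}|u|^{-1}$ integrand and hence contribute only to the $\mathcal{O}(r^{-3})$ remainder. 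With these ingredients the proof of the three comparability statements and of the precise expansion of $\pv(r\phi)$ is a line-by-line transcription of the null-data argument.
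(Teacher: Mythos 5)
Your proposal is correct and takes essentially the same approach as the paper: the paper's own proof of this theorem consists only of the remark that Theorem~\ref{thm:timelike:final!!!} reduces the problem to the null setting, so the argument of Section~\ref{sec:nul:asymptotics} can be reproduced. You have carried out that transcription explicitly and correctly pinpointed the one place where the timelike case genuinely differs from the null one, namely that the substitution $r\leftrightarrow v-u$ inside the logarithm costs $\mathcal{O}(r^{-4}|u|)$ because $\hat r(u)$ may grow like $|u|^{1-s}$ rather than stay bounded.
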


Similarly, we can deal with higher-order asymptotics, that is with the cases $p=3,4,\dots $.
See also Thm.~\ref{thm:null:asymptotics of dvrphi,p=3}.

Finally, in view of Remark~\ref{rem:uniqueness}, Theorems \ref{thm:timelike:final!!!} and \ref{thm:timelike:logs} combined prove Theorem~\ref{thm.intro:timelikecase} from the introduction.
\newpage
\section{An application: The scattering problem}\label{sec:scattering}
In the previous sections~\ref{sec:null} and~\ref{sec:timelike}, our motivation for the choice of initial (/boundary) data mainly came from Christodoulou's argument; in particular, the data were chosen so as to lead to solutions that satisfy the no incoming radiation condition and that agree with the prediction of the  quadrupole approximation, that is, we chose initial data such that we would obtain the rate
\begin{equation}
    \pu m(u,\infty)\sim -\frac{1}{|u|^4}
\end{equation}
at future null infinity.

Alternatively, we could have motivated our choice of initial data by the observation that our data can be chosen to be conformally smooth near $\mathcal{I}^-$ for integer $p$ and, nevertheless, lead to solutions that are not conformally smooth near $\mathcal{I}^+$.

In this section, we give yet another extremely natural motivation for our initial data of section~\ref{sec:null}. 
More precisely, we shall show in section~\ref{sec61} that the case $p=3$ appears \textit{generically} in evolutions of compactly supported scattering data on\footnote{Since we always restrict to a region sufficiently close $\mathcal{I}^-$, that is to sufficiently large negative values of $u$, we may without loss of generality assume vanishing data on $\mathcal{H}^-$.} $\mathcal{H}^-$ and $\mathcal{I}^-$. Our main theorem is Thm.~\ref{null:thm:scattering}, which contains Theorem~\ref{thm.intro:scattering} from the introduction.

We shall make further comments on \emph{linear} scattering in sections~\ref{sec62} and~\ref{sec63}, where we will, in particular, prove that the corresponding solutions are \textit{never conformally smooth}  (unless they vanish identically).

\subsection{Non-linear scattering with a Schwarzschildean or  Minkowskian \texorpdfstring{$i^-$}{i-} (Proof of Thm.~\ref{thm.intro:scattering})}\label{sec61}
\paragraph{The Maxwell field}
 As in the timelike case (section~\ref{sec:timelike}), we will ignore the Maxwell field, that is, we set $e^2=0$. However, all results of the present section can be recovered for $e^2\neq 0$ as well.
 \paragraph{The setup}
Let $M>0$, $U<-2M$, and define the rectangle
 \begin{equation}
     \mathcal{E}_{U}:=(-\infty,U]\times (-\infty,\infty)\subset{\mathbb{R}^2}.
 \end{equation}
We refer to the set $(-\infty,U]\times\{-\infty\}$ as $\mathcal{H}^-$ (to be thought of as the past event horizon of Schwarzschild), to the point $\{-\infty\}\times \{-\infty\}$ as $i^-$ or past timelike infinity, and we otherwise keep the conventions from section~\ref{sec:null:assumptions}.
 
%
%
%
%
%
%
%

\begin{figure}[htbp]
\floatbox[{\capbeside\thisfloatsetup{capbesideposition={right,top},capbesidewidth=4cm}}]{figure}[\FBwidth]
{\caption{The Penrose diagram of $\mathcal E_{U}$. We pose compactly supported scattering data on $\mathcal I^-$ and $\mathcal H^-$. Since we are only interested in a region close to $\mathcal I^-$, we can, without loss of generality, set the data on $\mathcal H^-$ to be vanishing. }\label{fig:9}}
{ \includegraphics[width = 150pt]{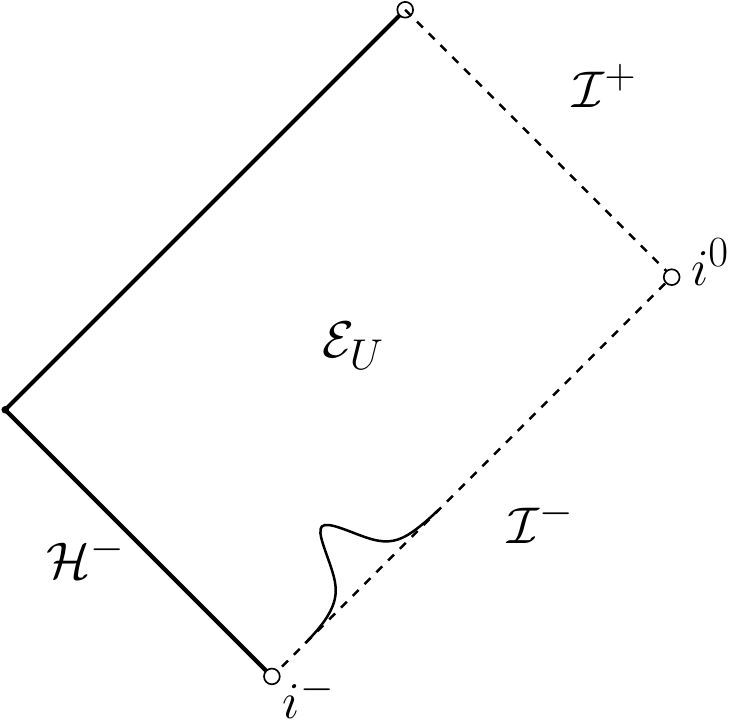}
}
\end{figure}
We will now show that if we pose compactly supported scattering data for $r\phi$ on $\mathcal I^-$ (see Figure~\ref{fig:9}) and vanishing data on $\mathcal H^-$, then $\mathcal E_U$ generically  contains as a subset a set $\mathcal{D}_{U_0}$ as defined in \eqref{eq:DutoDU0}, in which the corresponding scattering solution satisfies the assumptions of section~\ref{sec:null:assumptions} with $p=3$, and hence, according to Theorem~\ref{thm:null:asymptotics of dvrphi,p=3},  has logarithmic terms at second highest order in the expansion of $\pv(r\phi)$ near $\mathcal{I}^+$.

\begin{thm}\label{null:thm:scattering}
Let $G(v)$ be a smooth compactly supported function, $\mathrm{supp} (G)\subset(v_1,v_2)$. 
Then there exists a solution $(r,\phi,m)$ to the spherically symmetric Einstein-Scalar field system  on $\mathcal E_U$ which satisfies $r|_{\mathcal H^-}=2m|_{\mathcal H^-}$ on $\mathcal H^-$, and which satisfies $m(u,v)=M$ and  $\phi(u,v)=0$ for all $v\leq v_1$, and which finally satisfies $\lim_{u\to-\infty}r(u,v)=\infty$, $\pv r|_{\mathcal I^-}(v)=1$, and $r\phi|_{\mathcal I^-}(v)=G(v)$ for all $v\in \mathbb R$.
If we moreover fix $\pu r(u,v_2)=-1$ and $r(U,v_2)=-U$, then this solution is unique in the sense of Remark~\ref{rem:uniqueness}. We will call this solution \emph{the scattering solution}.

Furthermore, for sufficiently large negative values of $U_0$, this scattering solution $(r,\phi,m)$ satisfies the following bounds throughout $\mathcal{E}_{U}\cap \{v\geq v_2\}\cap\{u\leq U_0\}$:
\begin{equation}\label{6.3}
    \left|r\phi(u,v)+\frac{I_0[G]}{u^2}\right|=\mathcal{O}(|u|^{-3}  )  ,
\end{equation}
where $I_0[G]$ is a constant given by
\begin{equation}\label{eq:defofI0}
    I_0[G]:=\int_{v_1}^{v_2}\left(M+\frac{1}{2}\int_{v_1}^v \left(\frac{\dd G}{\dd v}\right)^2(v')\dd v'\right)G(v)\dd v.
\end{equation}
In particular, by the results of section~\ref{sec:null} (see Thm.~\ref{thm:null:asymptotics of dvrphi,p=3}), we have, for fixed values of $u$, the following asymptotic expression near $\mathcal{I}^+$ for $\pv(r\phi)$:
\begin{equation}\label{eq:scatteringtheorem:log}
\left|\pv(r\phi)(u,v)-\frac{F(u)}{r^3}+6\widetilde M I_0[G]\frac{\log(r)-\log|u|}{r^4}\right|  =\mathcal{O}(r^{-4}),
\end{equation}
where $F(u)$ is given by 
\begin{equation}\label{6.6}
    F(u)=\int_{-\infty}^u \lim_{v\to\infty}(2m\nu r\phi)(u',v)\dd u'=\frac{-2\widetilde{M} I_0[G]}{u}+\mathcal{O}(u^{-2}),
\end{equation}
and where $\widetilde{M}$, the final value of the past Bondi mass, is given by
\begin{equation}
  \widetilde{M} =\lim_{v\to\infty} m(-\infty,v)=M+\int_{v_1}^{v_2} \frac{1}{2}\left(\frac{\dd G}{\dd v}\right)^2(v')\dd v'>M.
\end{equation}

Finally, it is clear from its definition~\eqref{eq:defofI0} that the constant $I_0[G]$ is generically non-zero (in an obvious sense).
\end{thm}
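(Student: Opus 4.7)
The plan is to reduce the proof to an application of Theorem~\ref{thm:null:asymptotics of dvrphi,p=3}. Observe that, since $G$ is compactly supported in $(v_1,v_2)$, we have $r\phi|_{\mathcal I^-}\equiv 0$ for $v\geq v_2$, so the restriction of the scattering spacetime to $\mathcal E_U\cap\{v\geq v_2\}$ falls exactly into the setup of section~\ref{sec:null:assumptions}, with the hypersurface $\{v=v_2\}$ playing the role of the ingoing initial hypersurface $\mathcal C_{\mathrm{in}}$ and with the final past Bondi mass $\widetilde M:=\lim_{v\to\infty}m(-\infty,v)$ playing the role of $M$ in \eqref{eq:null:Hawking}. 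It thus suffices to construct the scattering solution on $\mathcal E_U$ and to show that its restriction to $\{v=v_2\}$ obeys $\lim_{u\to-\infty}u^2 r\phi(u,v_2)=-I_0[G]$, together with the corresponding rate for $\pu(r\phi)$ required by \eqref{eq:nullcase:ass:limit}.

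For the existence and uniqueness statements (in the sense of Remark~\ref{rem:uniqueness}), the plan is to use a limiting argument entirely parallel to the one of section~\ref{sec:refine:limitingargument}: approximate $\mathcal I^-$ by a sequence of timelike curves of constant area radius $R_n\to\infty$ on which one prescribes $r\phi=G(v)$; invoke Theorem~\ref{thm:timelike:final!!!} to obtain a sequence of finite solutions satisfying uniform bounds; and pass to the limit via the Gr\"onwall-type estimates of section~\ref{sec:timelike:limit}. For $v\leq v_1$ the solution is forced by Birkhoff to coincide with Schwarzschild of mass $M$, so $\phi\equiv 0$ and $m\equiv M$ on the ingoing hypersurface $\{v=v_1\}$. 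On $\mathcal I^-$, equation \eqref{eq:pvvarpi} evaluated as $u\to-\infty$ (where $\theta=G'(v)$ and $\kappa=1$) gives $\pv m|_{\mathcal I^-}=\tfrac12(G'(v))^2$, which integrates to the stated formula $\widetilde M=M+\tfrac12\int_{v_1}^{v_2}(G')^2\,dv'$.

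The heart of the proof is the computation of the leading $u^{-2}$-behavior of $r\phi(u,v_2)$. Since $\phi\equiv 0$ for $v\leq v_1$, we have $r\phi(u,v_2)=\int_{v_1}^{v_2}\pv(r\phi)(u,v)\,dv$. Integrating the wave equation \eqref{eq:wave} in $u$ from $\mathcal I^-$ (where $\pv(r\phi)=G'(v)$),
\[
\pv(r\phi)(u,v)=G'(v)+\int_{-\infty}^u 2m\nu\kappa\frac{r\phi}{r^3}(u',v)\,du'.
\]
Uniform estimates on the past triangular strip $\{u\leq U_0,\,v_1\leq v\leq v_2\}$, obtained by running the bootstrap and monotonicity arguments of sections~\ref{sec:null:energyboundedness} and~\ref{sec:null:purphi} in this bounded-$v$ region, yield, as $u'\to-\infty$ uniformly in $v\in[v_1,v_2]$, $\nu=-1+O(|u'|^{-1})$, $\kappa=1+O(|u'|^{-2})$, $m(u',v)=M(v)+O(|u'|^{-1})$ with $M(v)=M+\tfrac12\int_{v_1}^v(G')^2\,dv''$, $r\phi(u',v)=G(v)+O(|u'|^{-2})$, and $r(u',v)=(v-u')+O(\log|u'|)$. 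Substituting and using $\int_{-\infty}^u(v-u')^{-3}\,du'=\tfrac12(v-u)^{-2}$, one obtains $\pv(r\phi)(u,v)=G'(v)-M(v)G(v)/(v-u)^2+O(|u|^{-3})$ uniformly in $v\in[v_1,v_2]$. The $v$-integral of the $G'(v)$ term gives $G(v_2)-G(v_1)=0$, and expanding $(v-u)^{-2}=u^{-2}+O(|u|^{-3})$ for bounded $v$ yields exactly \eqref{6.3} with $I_0[G]$ as in \eqref{eq:defofI0}.

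With \eqref{6.3} established, the companion estimate for $\pu(r\phi)$ on $\{v=v_2\}$ follows by direct differentiation (or by re-integration of the wave equation), verifying hypothesis \eqref{eq:nullcase:ass:limit} with $p=3$. Theorem~\ref{thm:null:asymptotics of dvrphi,p=3} applied to $\mathcal E_U\cap\{v\geq v_2\}$, with $M\mapsto\widetilde M$ and $\lim|u|^2 r\phi=-I_0[G]$, then produces \eqref{eq:scatteringtheorem:log} together with the identification of $F(u)$ in \eqref{6.6}. The generic non-vanishing of $I_0[G]$ is immediate: to leading order in small $G$, $I_0[G]\approx M\int_{v_1}^{v_2}G(v)\,dv$, which is nonzero for generic $G$. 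The main obstacle in the above plan is the third paragraph --- extracting the sharp $u^{-2}$-leading order of $r\phi(\cdot,v_2)$ requires the stated uniform estimates across the radiative strip $(v_1,v_2)$ on $\mathcal I^-$, slightly sharper (and in a different direction) than the preliminary bounds needed in section~\ref{sec:null}, though the required refinements follow routinely from the techniques developed there.
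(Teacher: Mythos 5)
Your approach is essentially the paper's: both establish the strip estimates on $\{v_1\le v\le v_2,\ u\le U_0\}$ by the bootstrap/energy arguments of section~\ref{sec:null}, then compute $\lim_{u\to-\infty}u^2 r\phi(u,v_2)=-I_0[G]$. The only difference is that the paper integrates $\pv\bigl(r^3\pu(r\phi)\bigr)$ first in $v$ (from $v_1$, where $\phi\equiv 0$, to $v_2$) and then $\pu(r\phi)(\cdot,v_2)$ in $u$ from $\mathcal I^-$, whereas you integrate $\pu\pv(r\phi)$ first in $u$ from $\mathcal I^-$, obtaining $\pv(r\phi)(u,v)=G'(v)-M(v)G(v)/(v-u)^2+\mathcal O(|u|^{-3})$, and then integrate in $v$ (the $G'$-term contributing zero since $G$ is compactly supported). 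These are just the two orders of the same iterated integral, resting on the same strip estimates.

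One caveat on your existence sketch: approximating $\mathcal I^-$ by constant-$r$ timelike curves $\Gamma_n=\{r=R_n\}$ and invoking Theorem~\ref{thm:timelike:final!!!} does not fit the setup, since there $\Gamma$ is the inner boundary of the domain, so the corresponding regions $\{r\geq R_n\}$ would shrink as $R_n\to\infty$ rather than exhaust $\mathcal E_U$. The natural approximation (which the paper gestures at via section~\ref{sec:timelike:limit}) is by outgoing null rays $\{u=u_n\}$ with characteristic data $r\phi(u_n,\cdot)=G$, sending $u_n\to-\infty$. Since the paper itself disposes of this part with a one-line reference, this is a minor point and does not affect the substance of your argument.
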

Combined with Remark~\ref{rem:minkowskiscattering} below and the specialisation to the linear case described in section~\ref{sec:linear}, this theorem proves Theorem~\ref{thm.intro:scattering} from the introduction.
\begin{proof}
We first restrict to $v< v_1$. 
There, by the domain of dependence property, the scattering solution exists and is identically Schwarzschild. 
The existence and uniqueness of the scattering solution for $v\geq v_1$ can then be obtained by combining the estimates of the present proof with the methods of section~\ref{sec:timelike:limit}. (It is convenient to treat the regions $v\in[v_1,v_2]$ and $v>v_2$ separately.)

Let us now assume that we have already established the existence of the scattering solution.
Then, we first note that the Hawking mass $m$ on $\mathcal{I}^-$ is given by 
\begin{equation}
    m(-\infty,v)=M+\int_{v_1}^v \frac{1}{2}\left(\frac{\dd G}{\dd v}\right)^2(v')\dd v',
\end{equation}
which can be seen by integrating eq.~\eqref{eq:pvvarpi} from $i^-$ (and by standard limiting considerations, see the arguments below).

In the rest of the proof, we restrict to the region $v\in[v_1,v_2]$.
We can then, using the monotonicity of the Hawking mass, redo the proofs of Propositions~\ref{prop:null mass boundedness} and~\ref{prop:null geometric quantities boundedness} to show that $m$, $\nu$, $\lambda$ and $\kappa$ remain bounded from above, and away from zero, for $v\in[v_1,v_2]$.
Moreover, we can apply the energy estimate as in the proof of Thm.~\ref{thm:null boundedness of phi} to show that $\sqrt{r}|\phi|$ is bounded from above as well, cf.~\eqref{usageofenergyestimates}.

In order to improve this bound on $\phi$, we integrate the wave equation \eqref{eq:wave} from the ingoing null ray $v'=v_1$ (where $\phi$ vanishes), for $v\in[v_1,v_2]$:
\begin{equation}\label{eq6.9}
    |\pu(r\phi)(u,v)|\leq C\int_{v_1}^v r^{-\frac52}\dd v'\leq  \frac{C}{|u|^{\frac52}}
\end{equation}
for some positive constant $C$ that depends only on initial data (in particular, $C$ depends on $v_2-v_1$) but which is allowed to change from line to line.

In turn, integrating estimate \eqref{eq6.9} from $\mathcal{I}^-$ implies that
\[|r\phi(u,v)-G(v)|\leq \frac{C}{|u|^{\frac{3}{2}}}.\]
Plugging this improved bound back into the wave equation and repeating the argument \eqref{eq6.9}, we find that
\[ |\pu(r\phi)(u,v)|\leq \frac{C}{|u|^3} \]
and, thus, by again integrating from $\mathcal I^-$,
\[|r\phi(u,v)-G(v)|\leq \frac{C}{u^{2}}.\]

With these decay rates for $r\phi$ and $\pu(r\phi)$, we can prove the analogue of Corollary~\ref{cor:nullcase:asymptotics}; in particular, we can show that, for $v\in[v_1,v_2]$,
\[|\kappa(u,v)-1|+|\nu(u,v)+1|+|m(u,v)-m(-\infty,v)|=\mathcal{O}(u^{-2}).\]
To now obtain the asymptotic behaviour of $\pu(r\phi)$ along $v=v_2$,  we calculate the $v$-derivative of $r^3\pu(r\phi)$: Using the above bounds, we find that
\begin{multline}\pv(r^3\pu(r\phi))\\=3r^2\lambda \pu(r\phi)+2m\nu\kappa r\phi=-2\left(M+\int_{v_1}^v \frac{1}{2}\left(\frac{\dd G}{\dd v}\right)^2(v')\dd v'\right)G(v)+\mathcal{O}(|u|^{-1}).
\end{multline}
Integrating the estimate above from $v_1$ to $v_2$ yields
\begin{align}\label{eq6.11}
    \left|r^3\pu(r\phi)(u,v_2)+\int_{v_1}^{v_2}2\left(M+\int_{v_1}^v \frac{1}{2}\left(\frac{\dd G}{\dd v}\right)^2(v')\dd v'\right)G(v)\dd v\right|=\mathcal{O}(|u|^{-1}).
\end{align}
Since $\nu=-1$ on $v=v_2$, this puts us in precisely the setting of section~\ref{sec:null:assumptions} with $p=3$: 
Indeed, integrating the equation \eqref{eq6.11} from $\mathcal I^-$ along $v=v_2$, we find
\begin{equation}\label{zztr}
    \left|r\phi(u,v_2)+\frac{1}{u^2} \int_{v_1}^{v_2}\left(M+\int_{v_1}^v \frac{1}{2}\left(\frac{\dd G}{\dd v}\right)^2(v')\dd v'\right)G(v)\dd v\right|=\mathcal{O}(|u|^{-3}).
\end{equation}
In fact, by Corollary~\ref{cor:nullcase:asymptotics}, the same holds for any $v\geq v_2$.
In particular, Thm.~\ref{thm:null:asymptotics of dvrphi,p=3} applies with  $p=3$, with $\Phi^-$ given by
\begin{equation}
\Phi^-=-\int_{v_1}^{v_2}\left(M+\int_{v_1}^v \frac{1}{2}\left(\frac{\dd G}{\dd v}\right)^2(v')\dd v'\right)G(v)\dd v,
\end{equation}
and with $M$ in \eqref{eq:null:thm:B*constant} replaced by $\widetilde{M}=m(-\infty,v_2)$. This concludes the proof.
\end{proof}

\begin{rem}[Non-linear scattering for perturbations of Minkowski]\label{rem:minkowskiscattering}

In contrast to the setting in the previous section, we can now also have $M=0$ and still see the logarithmic term. 
This is because the scattering data on $\mathcal{I}^-$ will always generate mass such that there will ultimately be a mass term near $i^0$. 
In particular, the results of Theorem~\ref{null:thm:scattering} not only apply to scattering solutions with a Schwarzschildean $i^-$ and compactly supported scattering data, but also to scattering solutions with a  Minkowskian $i^-$ (see the Penrose diagram below). This is because if one puts vanishing data on the center $r=0$ and compactly supported data on $\mathcal{I}^-$, there will be a backwards null cone which is emanating from the center and on which $r\phi=0$ by the domain of dependence property.

Moreover, we recall from Remark~\ref{rem:BV} that if the initial data on $\mathcal I^-$ are sufficiently small, then, according to~\cite{ChristodoulouBV}, the arising solution is causally geodesically complete and globally regular, it has a complete null infinity, and its Penrose diagram can  be extended to a Minkowskian Penrose diagram as in Figure~\ref{fig:10}.
\begin{figure}[htbp]
\floatbox[{\capbeside\thisfloatsetup{capbesideposition={right,top},capbesidewidth=4cm}}]{figure}[\FBwidth]
{\caption{Scattering solution arising from compactly supported scattering data on $\mathcal I^-$ and a Minkowskian $i^-$. The solution fails to be conformally smooth near $\mathcal I^+$ by Theorem~\ref{null:thm:scattering}. 
Moreover, if the scattering data are suitably small, then the Penrose diagram can be extended to a Minkowskian Penrose diagram by the results of~\cite{ChristodoulouBV}.}\label{fig:10}}
{\includegraphics[width = 135pt]{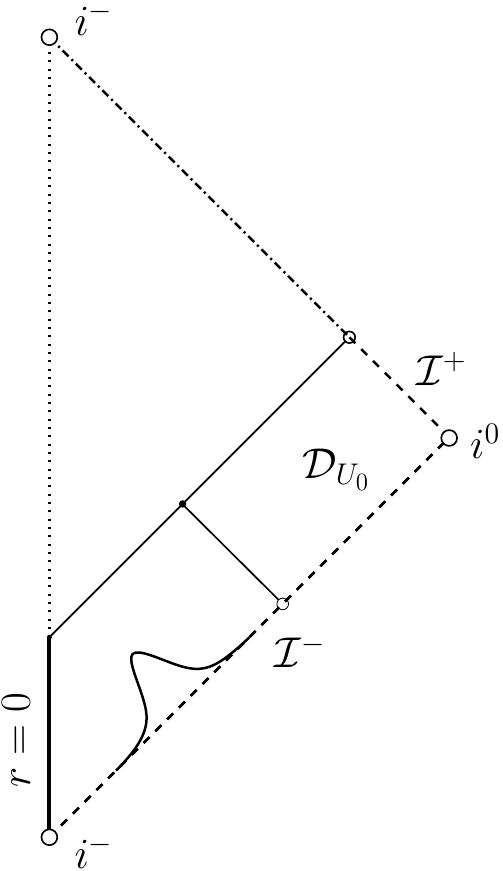}
}
\end{figure}
\end{rem}

\subsection{Linear scattering on Schwarzschild}\label{sec62}
By the remarks in section~\ref{sec:linear}, Theorem~\ref{null:thm:scattering} also applies in the case of the linear wave equation on a fixed Schwarzschild background  with mass $M>0$ (see Figure~\ref{fig:11} below).
 In the Eddington--Finkelstein double null coordinates\footnote{The gauge of the $u$-coordinate $\pu r=1-\frac{2M}{r}$ differs from our choice in the non-linear setting, where we set $\nu=-1$ on $v=v_2$. In these coordinates, the only difference to our results is then that the $\mathcal O(|u|^{-3})$-term in \eqref{6.3} is replaced by an $\mathcal O(|u|^{-3}\log |u|)$-term, and similarly for \eqref{6.6}. This is completely inconsequential to any of our other results, however.} of section~\ref{sec:Schwarzschild} (recall that $\pv r=-\pu r=1-\frac{2M}{r}$), the linear wave equation reads
\begin{equation}\label{eq:scatteringwaveequation}
    \pu\pv(r\phi)=-2M\(1-\frac{2M}{r}\)\frac{r\phi}{r^3}.
\end{equation}
The only difference in the linear case is that $I_0[G]$ is now given by
\begin{equation}
    I_0[G]:=\int_{v_1}^{v_2}M G(v)\dd v,
\end{equation}
since the scalar field no longer generates mass along past null infinity.
\begin{figure}[htbp]
\floatbox[{\capbeside\thisfloatsetup{capbesideposition={right,top},capbesidewidth=4cm}}]{figure}[\FBwidth]
{\caption{Smooth compactly supported scattering data for $\phi$ on a fixed Schwarzschild background. The solution generically contains a region $\mathcal D_{U_0}$ with $p=3$ as in section~\ref{sec:null}. Moreover, the scalar field is \textit{never} conformally smooth unless the scattering data vanish, see Theorem~\ref{scatteringthm2}.}\label{fig:11}}
{\includegraphics[width = 195pt]{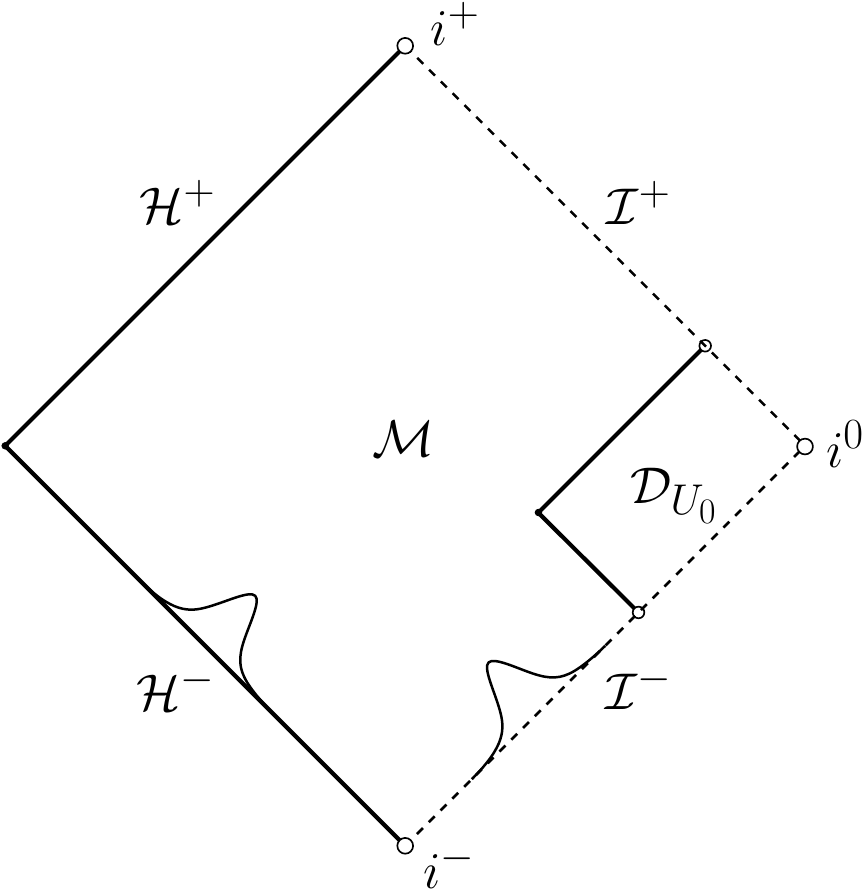}}
\end{figure}

We now want to  classify all spherically symmetric solutions to the linear wave equation arising from compactly supported scattering data in terms of their conformal smoothness near $\mathcal{I}^+$.
We have already established that if 
\begin{equation*}
    I_0[G]:=\int_{v_1}^{v_2}M G(v)\dd v\neq0,
\end{equation*}
then there will be a logarithmic term in the expansion of $\pv(r\phi)$ at order $\frac{\log r}{r^4}$.
Let us now discuss the case when $I_0[G]=0$. We prove the following theorem:
\begin{thm}\label{scatteringthm2}
Prescribe compactly supported scattering data $G(v)$ on $\mathcal I^-$ and compactly supported scattering data on $\mathcal H^-$ for the spherically symmetric linear wave equation \eqref{eq:scatteringwaveequation} on a fixed Schwarzschild background with mass $M>0$. Then, by the results of~\cite{DRSR}, there exists a unique smooth scattering solution $\phi$ attaining these data, with the uniqueness being understood in the class of finite-energy solutions.

For $n\in\mathbb{N}_0$, define the scattering data constants $I^{(n)}[G]$ via
\begin{equation}
    I^{(n)}[G]:=M\int_{v_1}^{v_2}(-1)^n\frac{v^n}{n!} G(v)\dd v.
\end{equation}
Let $n$ denote the smallest natural number such that $I^{(n)}[G]\neq 0$.

Then the solution $\phi$ satisfies, for all $v\geq v_2$ and for all $u<U_0$, and for sufficiently large negative values of $U_0$:
\begin{equation}
    \left|r\phi(u,v)+\frac{I^{(n)}[G]}{|u|^{2+n}}(n+1)!\right|=\mathcal{O}(|u|^{-3-n}\log|u|).\label{scatteringtheorem udecay}
\end{equation}
Moreover, for fixed values of $u$, we have the following asymptotic expansion  as $\mathcal{I}^+$ is approached:
\begin{equation}
    \pv(r\phi)=\sum_{i=0}^{n}\frac{f_i^{(n)}(u)}{r^{3+i}}-(-1)^n(3+n)! I^{(n)}[G] M \frac{\log r-\log|u|}{r^{4+n}}+\mathcal{O}(r^{-4-n}) \label{scatteringtheorem rdecay}
\end{equation}
for some smooth functions $f_i^{(n)}$.
\end{thm}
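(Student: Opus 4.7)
The plan is to reduce Theorem~\ref{scatteringthm2} to Theorem~\ref{null:thm:scattering} and the integer-$p$ generalisation of Theorem~\ref{thm:null:asymptotics of dvrphi,p=3} provided by Theorem~\ref{thm:B}, via an iterated time-integral construction. The key observations are that the static Killing field $\boldsymbol{K} := \pu + \pv = 2\partial_t$ commutes with the wave equation~\eqref{eq:scatteringwaveequation} and that $\boldsymbol{K} r = 0$ in Eddington--Finkelstein coordinates; in particular $\boldsymbol{K}(r\phi) = r\boldsymbol{K}\phi$, so time integration preserves both the wave equation and multiplication by $r$. By finite speed of propagation, we may restrict to $u \leq U_0$ with $|U_0|$ sufficiently large that the compactly supported data on $\mathcal H^-$ do not influence this region, thereby reducing to the case of vanishing $\mathcal H^-$-data.

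Given the hypothesis $I^{(0)}[G] = \dots = I^{(n-1)}[G] = 0$, I would set $G^{(0)} := G$ and inductively, for $j = 1, \dots, n$, define
\begin{equation*}
    G^{(j)}(v) := \int_{-\infty}^v G^{(j-1)}(v')\dd v' = \frac{1}{(j-1)!}\int_{v_1}^v (v-v')^{j-1}G(v')\dd v'.
\end{equation*}
The vanishing of the lower moments together with Fubini's theorem imply that $\int G^{(j-1)}(v')\dd v' = 0$ for $j \leq n$, so each $G^{(j)}$ has compact support in $[v_1,v_2]$; a binomial expansion of $(v_2-v')^j$ further yields $I^{(0)}[G^{(j)}] = \frac{M(-1)^j}{j!}\int v^j G(v)\dd v = I^{(j)}[G]$. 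Let $\phi^{(j)}$ denote the unique smooth scattering solution to~\eqref{eq:scatteringwaveequation} with data $G^{(j)}$ on $\mathcal I^-$ and vanishing data on $\mathcal H^-$, obtained from~\cite{DRSR}. Since $\boldsymbol{K}$ commutes with the wave equation and acts as $\partial_v$ on $\mathcal I^-$, the scattering data of $\boldsymbol{K}\phi^{(j)}$ match those of $\phi^{(j-1)}$, and by uniqueness $\boldsymbol{K}\phi^{(j)} = \phi^{(j-1)}$; iterating and using $\boldsymbol K r = 0$,
\begin{equation*}
    r\phi = r\,\boldsymbol K^n\phi^{(n)} = \boldsymbol K^n(r\phi^{(n)}).
\end{equation*}
Applying Theorem~\ref{null:thm:scattering} (in its linearised form, cf.~section~\ref{sec:linear}) to $\phi^{(n)}$, which has $I^{(0)}[G^{(n)}] = I^{(n)}[G] \neq 0$, produces $r\phi^{(n)}(u,v) = -I^{(n)}[G]/u^2 + \mathcal O(|u|^{-3}\log|u|)$ throughout $\mathcal E_U \cap \{v\geq v_2\}\cap\{u\leq U_0\}$, together with analogous bounds for the mixed derivatives $\pu^j\pv^k(r\phi^{(n)})$ (obtained by commuting \eqref{eq:scatteringwaveequation} with powers of $\pv$ to produce the approximate conservation laws alluded to in section~\ref{sec:intro:higherl}, and running the iteration of Theorem~\ref{thm:null:asymptotics of dvrphi,p=3} on each commuted equation). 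Expanding $\boldsymbol K^n = \sum_{k=0}^n \binom{n}{k}\pu^{n-k}\pv^k$ and noting that each $\pv$ produces an $r^{-2}$-weight while each $\pu$ produces only a $|u|^{-1}$-weight, the pure $\pu^n$ term dominates and
\begin{equation*}
    \pu^n\!\left(-\frac{I^{(n)}[G]}{u^2}\right) = -\frac{(n+1)!\,I^{(n)}[G]}{|u|^{n+2}},
\end{equation*}
which yields \eqref{scatteringtheorem udecay}.

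The expansion~\eqref{scatteringtheorem rdecay} of $\pv(r\phi)$ is obtained by treating the ingoing null ray $\{v = v_2\}$ as a hypersurface $\mathcal C_{\mathrm{in}}$ in the sense of section~\ref{sec:null:assumptions}: the estimate \eqref{scatteringtheorem udecay} (together with the corresponding $\pu$-asymptotics, derived in the same fashion via $\boldsymbol K^{n+1}\phi^{(n+1)}$) places the solution into the setup of section~\ref{sec:null} with effective exponent $p = n+2$ and $\Phi^- = -(n+1)!\,I^{(n)}[G]$. Iterating the proof of Theorem~\ref{thm:null:asymptotics of dvrphi,p=3} a total of $n+1$ times then produces at each step the next polynomial coefficient $f_k^{(n)}(u)r^{-3-k}$, with the final borderline integral $\int_{-\infty}^u(v-u')^{-4-n}|u'|^{-1}\dd u'$ yielding the logarithmic term with coefficient $(-1)^n(3+n)!\,M\,I^{(n)}[G]$. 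This is precisely the content of Theorem~\ref{thm:B} applied with $p = n+2$. The principal obstacle is the careful bookkeeping of subleading terms throughout both iterations --- in particular, verifying that the mixed $\pv^k\pu^{n-k}$-derivatives of $r\phi^{(n)}$ contribute strictly below $|u|^{-n-2}$, which requires propagating higher-order analogues of Corollary~\ref{cor:nullcase:asymptotics} through all $n+1$ auxiliary solutions $\phi^{(j)}$, and controlling the dilogarithmic-type integrals (cf.~\eqref{dilog}) arising in the intermediate iterations, analogously to \eqref{orksit}.
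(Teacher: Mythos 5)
Your proposal is essentially the same argument as the paper's, just reorganized: you perform the time integration $n$ times at once and then apply the binomial theorem to $\boldsymbol{K}^n = \sum_k \binom{n}{k}\pu^{n-k}\pv^k$, whereas the paper performs one time integration per step inside an induction on $n$. The core ideas coincide: the commutation of $\boldsymbol{K}=\pu+\pv$ with \eqref{eq:scatteringwaveequation}, the uniqueness-based identification $\boldsymbol{K}(r\phi^T)=r\phi$, the moment shift $I^{(k)}[G^T]=I^{(k+1)}[G]$, and the reduction to the $p=3$ base case of Theorem~\ref{null:thm:scattering} followed by a higher-$p$ expansion as in Theorem~\ref{thm:B}. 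Your moment computations ($G^{(j)}$ compactly supported for $j\leq n$, and $I^{(0)}[G^{(n)}]=I^{(n)}[G]$) are correct and equivalent to the paper's.

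Where the two versions differ is purely in how the subleading cross-terms are packaged. The paper's induction carries a $\pu^m$-commuted decay estimate as part of the inductive hypothesis, so at each step only a single $\pv(r\phi^T)$-term needs to be bounded, and this is done directly from the wave equation ($\pv(r\phi^T)\lesssim r\phi^T/r^2$). Your one-shot binomial expansion instead requires you to simultaneously estimate all mixed derivatives $\pu^{n-k}\pv^k(r\phi^{(n)})$ for $k\geq 1$, which is a larger bookkeeping task; the slogan ``each $\pv$ produces an $r^{-2}$ weight'' is the right intuition, but for $v$ in a bounded strip near $v=v_2$ this trades into $|u|$-decay (since $r\sim|u|$ there), and one needs the full higher-derivative analogue of Corollary~\ref{cor:nullcase:asymptotics} to make this rigorous. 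You flag this yourself at the end; the paper sidesteps it by the inductive single-step structure. Neither version closes the bookkeeping in full detail, so this is not a gap relative to the paper; the inductive packaging is simply slightly cleaner. The rest of your argument (sign checks for $\pu^n(-I/u^2)$, identification of $\Phi^- = -(n+1)!\,I^{(n)}[G]$ and of the effective exponent $p=n+2$ for the $r$-expansion, appeal to the dilogarithm control of the intermediate integrals) is consistent with what the paper does.
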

This theorem shows, in particular, that the solution only remains conformally smooth near $\mathcal{I}^+$ if $G=0$, that is to say, any  smooth compactly supported linear scalar perturbation on $\mathcal{I}^-$ gives rise to a solution which is not conformally smooth.

\begin{proof}
The existence of the scattering solution $\phi$ follows by our previous methods or by the results of~\cite{DRSR}.
The proof of the estimates \eqref{scatteringtheorem udecay} and \eqref{scatteringtheorem rdecay} will be a proof via induction, with the base case having been established in Thm.~\ref{null:thm:scattering}.
The crucial idea is to use \textit{time integrals}.\footnote{These have been used in a similar context in~\cite{Angelopoulos2018Late-timeSpacetimes}.}

To begin, let us state the following two basic facts:
First, we have, for $v\in[v_1,v_2]$, and for any $n\in\mathbb N$:
\begin{align}\label{eq:scatteringproof:fact1}
    \pu\pv(\pu^n (r\phi))=-\left(2M\(1-\frac{2M}{r}\)\frac{G(v)}{r^3}\right)\frac{(n+2)!}{2r^n}+\mathcal{O}(r^{-3-n-1}).
\end{align}
This can easily be established using the methods of the proof of Thm.~\ref{null:thm:scattering}.
Secondly and similarly, we have that, for all $n\in\mathbb N$ and for all $v\geq v_2$,
\begin{align}\label{eq:scatteringproof:fact2}
    \pu\pv(\pv^n (r\phi))=-\left(2M\(1-\frac{2M}{r}\)\frac{r\phi}{r^3}\right)(-1)^n\frac{(n+2)!}{2r^n}+\mathcal{O}(r^{-3-n-1}),
\end{align}
where this can easily be established from the asymptotics for $\pv(r\phi)$ proved in section~\ref{sec:nul:asymptotics} and an inductive argument. Note that both these facts also hold in the non-linear setting.

Let us now initiate the inductive step. We assume that \eqref{scatteringtheorem udecay} holds for some $n-1\geq 0$, and we moreover assume that it commutes with $\pu$, that is to say, we assume that
\begin{equation}
    \left|\pu^m\left(r\phi(u,v)+\frac{I^{(n-1)}[G]}{|u|^{2+n-1}}(n-1+1)!\right)\right|=\mathcal{O}(|u|^{-3-n+1-m}\log|u|)
\end{equation}
for some $n-1\geq 0$, for all $m\in \mathbb N$, and for all $r\phi$ arising from compactly supported scattering data $G$ such that $I^{(k)}[G]=0$ for all $k<n-1$. (That this holds in the base case $n=1$ is an easy consequence of eqns.~\eqref{zztr} and~\eqref{eq:scatteringproof:fact1}.)

Consider now compactly supported scattering data $G$ such that $I^{(k)}[G]=0$ for all $k<n$. These lead to a solution $r\phi$. 
The goal is to show that $r\phi$ can be written as $\T(r\phi^T)$, where $\T=\pu+\pv$, and where $r\phi^T$, the \textit{time integral} of $r\phi$, is another solution coming from compactly supported data $G^T$ such that $I^{(k)}[G^T]=0$ for all $k<n-1$.
To achieve this, we take the obvious candidate for $G^T$:
\begin{equation}
    G^T(v)=\int_{v_1}^vG(v')\dd v'.
\end{equation}
Indeed, by the methods of the proof of Thm.~\ref{null:thm:scattering}, it is easy to see that the solution $r\phi^T$ arising from this satisfies
\begin{equation}
    \T(r\phi^T)(-\infty,v)=\pv(r\phi^T)(-\infty, v)=(G^T)'(v)=G(v).
\end{equation}
Therefore, since $\T$ also commutes with the wave equation\footnote{This is the only property used in this proof that fails to hold in the coupled case.}, we indeed have $\T(r\phi^T)=r\phi$ by uniqueness.

It is left to show that $I^{(k)}[G^T]=0$ for all $k<n-1$.
But this is an easy consequence of the fact that 
\begin{equation}
    \int_{v_1}^{v_2}\frac{v'^k}{k} G(v')\dd v'=-\int_{v_1}^{v_2} v'^{k-1}\int_{v_1}^{v'}G(v'')\dd v''\dd v',
\end{equation}
where we used that $I^{(k)}[G]=0$ for all $k<n$.
In particular, the above equation implies
\begin{equation}\label{equick}
    I^{(n-1)}[G^T]=I^{(n)}[G],
\end{equation} and, similarly,  that $I^{(k)}[G^T]=0$ for all $k<n-1$.
From the induction assumption, it now follows that, for all $m\in\mathbb N$,
\begin{equation}
     \left|\pu^m\left((r\phi^T(u,v)+\frac{I^{(n-1)}[G^T]}{|u|^{2+n-1}}n!\right)\right|=\mathcal{O}(|u|^{-3-n+1-m}\log|u|).
\end{equation}

Finally, if we now write
$
    r\phi(u,v)=\T(r\phi^T)=\pu(r\phi^T)+\pv(r\phi^T)
$, then, as a consequence of the wave equation, the $\pv(r\phi^T)$-term goes like $r\phi^T/|u|^2$ and is therefore sub-leading (for $v\geq v_2$). We thus obtain, for all $v\geq v_2$, that
\begin{equation}
    \left|r\phi(u,v)+\frac{I^{(n)}[G]}{|u|^{2+n}}(n+1)!\right|=\mathcal{O}(|u|^{-3-n}\log|u|),
\end{equation}
where we also used \eqref{equick}.
One proceeds similarly for higher $\pu$-derivatives. (One can appeal to the wave equation \eqref{eq:scatteringwaveequation} to deal with the arising $\pu\pv$-terms.) This completes the inductive proof of eq.~\eqref{scatteringtheorem udecay}.

We proceed exactly in the same way for the proof of \eqref{scatteringtheorem rdecay}: 
We again make the inductive assumption that \eqref{scatteringtheorem rdecay} holds for some $n$ and moreover commutes with $\pv^m$ for all $m$. By this, we mean the following: 
We assume that, for some fixed $n$ and for all $m$,
\begin{equation}
   \pv^m ( \pv(r\phi))=\sum_{i=0}^{n}\frac{f_i^{(n,m)}(u)}{r^{3+i+m}}-(-1)^n(3+n)! I^{(n)}[G] M \pv^m\left(\frac{\log r}{r^{4+n}}\right)+\mathcal{O}(r^{-4-n-m}) 
\end{equation}
for all solutions $r\phi$ arising from compactly supported scattering data $G$ that have $I^{(k)}[G]=0$ for all $k<n$.
 Here, the $f_i^{(n,m)}$ are again some smooth functions. That this holds in the base case is a consequence of eq.\ \eqref{eq:scatteringtheorem:log} from Theorem~\ref{null:thm:scattering} combined with eq.~\eqref{eq:scatteringproof:fact2} and an inductive argument.

Then, in order to the inductive step, we assume that $r\phi$ is a solution arising from compactly supported scattering data $G$ with $I^{(k)}[G]=0$ for all $k<n$, and we anew write $r\phi$ as a time derivative, $r\phi=T(r\phi^T)$, and compute
\begin{equation}\label{equick2}
    \pv(r\phi)(u,v)=\pv^2(r\phi^T)+\pu\pv(r\phi^T)=\pv^2(r\phi^T)-2M\left(1-\frac{2M}{r}\right)\frac{r\phi^T}{r^3}.
\end{equation}
It is then a simple exercise to write down the asymptotics for the second term by plugging in the asymptotics for $\pv(r\phi^T)$ into
\begin{equation}
    r\phi^T(u,v)=r\phi^T(u,\infty)-\int_v^\infty \pv(r\phi^T)(u,v')\dd v'.
\end{equation}
Leaving the details to the reader, one hence finds that the second term in~\eqref{equick2} only produces $\log$-terms at later orders than $\pv^2(r\phi^T)$ does, so the leading-order logarithmic contributions to the asymptotics of $\pv (r\phi)$ are determined by $\pv^2(r\phi^T)$.
A similar argument works for higher derivatives. 

This concludes the proof.
\end{proof}
\subsection{Linear scattering  on extremal Reissner--Nordstr\"om}\label{sec63}

Finally, we remark that, by the "mirror symmetry" of the exterior of the extremal Reissner--Nordstr\"om spacetime~\cite{CouchTorrence} discussed in section~\ref{sec:222} of this paper (and the fact that all our results also apply when including a Maxwell field), we can state as an immediate corollary of our Thms.~\ref{null:thm:scattering} and~\ref{scatteringthm2}:
\begin{cor}
Consider the linear wave equation $\nabla^\mu\nabla_\mu \phi=0$ on extremal Reissner--Nordstr\"om ($|e|=M$). 
Put smooth compactly supported spherically symmetric scattering data on $\mathcal{I}^-$ and on $\mathcal{H}^-$. Then, by the results of~\cite{AAGscattering}, there exists a unique scattering solution attaining these data. This solution, in addition to not being conformally smooth near $\mathcal{I}^+$, fails to be \underline{smooth} at the future event horizon $\mathcal{H}^+$ unless it vanishes identically, and one generically has that $\phi$ is not $C^4$ in the variable $r-r_+$.
\end{cor}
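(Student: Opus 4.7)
The plan is to reduce the corollary to Theorem~\ref{scatteringthm2} (which, by the remarks of section~\ref{sec:linear} and the footnote of section~\ref{sec:222}, applies equally on extremal Reissner--Nordstr\"om with a Maxwell field present) via the Couch--Torrence conformal isometry recorded in section~\ref{sec:222}.

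First, given the scattering solution $\phi$ with compactly supported data $G$ on $\mathcal I^-$ and $H$ on $\mathcal H^-$, I would construct its Couch--Torrence image $\tilde\phi$. This is a second solution to the linear wave equation whose scattering data on $\mathcal I^-$ and $\mathcal H^-$ coincide, up to the explicit conformal weight $r_+/(r'-r_+)$, with $H$ and $G$ respectively. The isometry sends $\mathcal H^+$ of $\phi$ to $\mathcal I^+$ of $\tilde\phi$ and is governed by
\[
 r\phi(u{=}v,r)\;=\;r'\tilde\phi(v,r'),\qquad r=\frac{r_+r'}{r'-r_+};
\]
as emphasized in section~\ref{sec:222}, physical regularity of $\phi$ in $r-r_+$ at $\mathcal H^+$ is equivalent to conformal regularity of $r'\tilde\phi$ in $1/r'$ at $\mathcal I^+$.

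Next, I would apply Theorem~\ref{scatteringthm2} to $\tilde\phi$. Whenever $H\not\equiv 0$, the moment uniqueness for compactly supported functions furnishes a smallest $n\geq 0$ with $I^{(n)}[H]\neq 0$, and the theorem produces an irremovable logarithmic term of order $(r')^{-4-n}\log r'$ in $\partial_v(r'\tilde\phi)$ at $\mathcal I^+$. Integrating in $v$ from $\mathcal I^+$ upgrades this to an $(r')^{-3-n}\log r'$ contribution in $r'\tilde\phi$ itself. Translating back via the relations $r'-r_+ = r_+^2/r + \mathcal O(r^{-2})$ and $\log r = -\log(r'-r_+) + \mathcal O(1)$ (obtained by inverting $r=r_+r'/(r'-r_+)$), this becomes a $(r-r_+)^{3+n}\log(r-r_+)$ contribution to $\phi$ near $\mathcal H^+$, modulo functions smooth in $r-r_+$. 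Since $(r-r_+)^3\log(r-r_+)$ lies in $C^2\setminus C^3$ at $r=r_+$, the generic case $n=0$, i.e.\ $I^{(0)}[H]=M\int H\,\mathrm d v\neq 0$, already forces $\phi$ to fail to be $C^3$, and a fortiori $C^4$, in $r-r_+$ at $\mathcal H^+$. The clause that $\phi$ fails to be smooth at $\mathcal H^+$ unless $\phi\equiv 0$ then follows by combining this argument with its mirror obtained by applying Theorem~\ref{scatteringthm2} directly to $\phi$ in the case $G\not\equiv 0$, together with the elementary observation that $\phi\equiv 0$ iff $(G,H)=(0,0)$.

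The principal technical point, and the only nontrivial computation in the argument, is checking that after the Couch--Torrence substitution and the division by $r'$ the leading logarithmic term from Theorem~\ref{scatteringthm2} for $\tilde\phi$ indeed survives as a $(r-r_+)^{3+n}\log(r-r_+)$ singularity in $\phi$ and is not cancelled by the polynomial-in-$1/r'$ subleading terms in the expansion; this amounts to an explicit change-of-variables exercise using the relation between $r$ and $r'$ recorded above.
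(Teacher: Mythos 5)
Your reduction via the Couch--Torrence isometry to Theorem~\ref{scatteringthm2} applied to the mirror image $\tilde\phi$ is exactly the paper's argument, whose own "proof" is the one-paragraph remark surrounding the corollary. The chain you spell out---$r\phi=r'\tilde\phi$ under the isometry, the $(1/r')^{3+n}\log(1/r')$ contribution to $r'\tilde\phi$ at $\mathcal I^+$ obtained by integrating the Theorem~\ref{scatteringthm2} expansion once in $v$, and the resulting $(r-r_+)^{3+n}\log(r-r_+)$ singularity of $\phi$ at $\mathcal H^+$---is correct. You are also right that for $n=0$ this already gives failure of $C^3$ in $r-r_+$, of which the stated "not $C^4$" is the weaker consequence.

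Where your argument has a genuine gap is the "unless $\phi\equiv 0$" clause, and you should flag it. The mirror argument shows $\phi$ is non-smooth at $\mathcal H^+$ near $i^-$ whenever the $\mathcal H^-$ datum $H\not\equiv 0$, because $\tilde\phi$ then has non-trivial compactly supported $\mathcal I^-$ data and Theorem~\ref{scatteringthm2} applies near $i^0$. For the complementary case $H\equiv 0$, $G\not\equiv 0$, you invoke Theorem~\ref{scatteringthm2} directly on $\phi$; but this only gives failure of \emph{conformal} smoothness of $r\phi$ at $\mathcal I^+$, which says nothing about physical regularity of $\phi$ at $\mathcal H^+$. In fact, if $H\equiv 0$ then $\phi$ vanishes identically on a neighbourhood of $i^-$ along $\mathcal H^+$ by domain of dependence, $\tilde\phi$ has trivial $\mathcal I^-$ data, and Theorem~\ref{scatteringthm2} is vacuous near $i^0$ for $\tilde\phi$; non-smoothness of $\phi$ elsewhere on $\mathcal H^+$ would require late-time input (in the spirit of~\cite{Kerrburger2}) beyond what the cited theorems supply. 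What the theorems actually deliver is that, unless $\phi\equiv 0$, $\phi$ fails to be smooth at $\mathcal H^+$ \emph{or} fails to be conformally smooth at $\mathcal I^+$. This corner case is left open by the paper's remark as well, so it is a shared imprecision, but a careful write-up should restrict the $\mathcal H^+$-clause to the case $H\not\equiv 0$ or supply a separate argument for $H\equiv 0$.
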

This failure of the solution to remain smooth of course comes from the "mirrored" $\log(r-r_+)$-terms of Theorem~\ref{null:thm:scattering} that now appear in the ingoing derivative of $\phi$ instead of in $\pv(r\phi)$. Here, $r_+=M$ is the $r$-value at $\mathcal{H}^+$.

Notice that this is in stark contrast to the Schwarzschild (or sub-extremal Reissner--Nordstr\"om) case, where the solution remains globally smooth in the exterior. This can be traced back to the existence of a bifurcation sphere in Schwarzschild, which does not exist in the extremal case. We refer the reader to section~\ref{sec:222} for a more detailed discussion.

\appendix
\section{Useful curvature computations}\label{sec:app:weyl}
In this appendix, we write down formulae for various curvature coefficients for the spherically symmetric Einstein-Scalar field system and, in particular, derive eq.~\eqref{eq:null:W3434}.
Recall that $g_{uv}=-\frac12\Omega^2$ and $g_{AB}=r^2\gamma_{AB}$. Recall, moreover, that we use capital Latin letters to denote coordinates on the sphere.

We first compute the Christoffel symbols. The only non-vanishing ones are (apart from those related by symmetry and from $\Gamma_{AB}^C$):
\begin{nalign}
\Gamma_{uu}^u&=\pu\log(\Omega^2),&&\Gamma_{vv}^v=\pv\log(\Omega^2),\\
\Gamma_{AB}^u&=-2\Omega^{-2}r\pu r\gamma_{AB}, &&\Gamma_{AB}^v=-2\Omega^{-2}r\pv r\gamma_{AB}, \\
\Gamma_{uB}^A&=\frac{\pu r}{r} \delta_B^A,&&\Gamma_{vB}^A=\frac{\pv r}{r} \delta_B^A.
\end{nalign}
Next, we compute some of the Riemann tensor components, using the definition
\begin{equation}
R^{\mu}_{\nu\xi o}:=\partial_\xi\Gamma^\mu_{\nu o}-\partial_o\Gamma^\mu_{\nu\xi}+\Gamma_{\xi\pi}^{\mu}\Gamma_{\nu o}^\pi-\Gamma_{o\pi}^{\mu}\Gamma^{\pi}_{\nu\xi}.
\end{equation}
We have:
\begin{align}
R_{AuBu}&=g_{AB}\frac{\zeta^2}{r^2},&&R_{AvBv}=g_{AB}\frac{\theta^2}{r^2},\nonumber\\
R_{Avuv}&=R_{Auvu}=0=R_{ABuv},\\
R_{3434}&=-\frac{\Omega^2}{2}\left(\Omega^2\frac{m}{r^3}-2\pu\phi\pv\phi\right).\nonumber
\end{align}
Finally, we compute the Weyl curvature tensor
\begin{equation}
W_{\mu\nu\xi o}:=R_{\mu\nu\xi o}-\frac12(R_{\mu\xi}g_{\nu o}-R_{\mu o}g_{\nu\xi}+R_{\nu o}g_{\mu\xi}-R_{\nu\xi}g_{\mu o})-\frac{R}{6}(g_{\mu\xi}g_{\nu o}-g_{\mu o}g_{\nu\xi})
\end{equation}
by using the Einstein equations (here, $T$ denotes the trace of $T_{\mu\nu}$, $T=g^{\mu\nu}T_{\mu\nu}$)
\begin{equation}
R_{\mu\nu}=2T_{\mu\nu}-g_{\mu\nu}T
\end{equation}
along with the fact that  $T=\frac{4}{\Omega^2}\pu\phi\pv\phi $
to obtain that
\begin{equation}
W_{AuBu}=0=W_{AvBv}=W_{Auvu}=W_{Avuv}=W_{ABuv}
\end{equation}
and
\begin{equation}
W_{uvuv}=-\frac{\Omega^4}{4}\frac{2m}{r^3}+\frac83\Omega^2\pu\phi\pv\phi.
\end{equation}
\section{Constructing the time integral from characteristic initial data}\label{APPendixB}
In this appendix, we prove the analogue of Thm.~\ref{thm:null:asymptotics of dvrphi} for general $p$, albeit only for the case of the linear wave equation. The crucial part of the proof is the construction of time integrals from characteristic initial data. Note that this has already been done in~\cite{Angelopoulos2018ASpacetimes} and in an improved way in~\cite{Angelopoulos:2018uwb}, Proposition~10.1. We will follow the approach of~\cite{Angelopoulos:2018uwb}.
\begin{thm}\label{thm:B}
Consider the same setup as in section~\ref{sec:null}, but for the linear, uncoupled problem with Eddington--Finkelstein double null coordinates $(u,v)$ (i.e.\ $\pv r=1-\frac{2M}{r}=-\pu r=:D$).
Assume, moreover, that $2\leq p\in \mathbb{Z}$.  In view of Corollary~\ref{cor:nullcase:asymptotics}, we then have, for all $v\geq 1$,
\begin{equation}
r\phi(u,v)=\frac{\Phi^-_p}{|u|^{p-1}}+\mathcal{O}(|u|^{-p+1-\epsilon}).
\end{equation}
Moreover, we have the following asymptotic expansion near $\mathcal{I}^+$:
\begin{equation}
    \pv(r\phi)=\sum_{i=0}^{p-3}\frac{f_i^{(p)}(u)}{r^{3+i}}+(-1)^{p-1} (p-1)p  \frac{\Phi^-_p M(\log r-\log|u|)}{r^{3+p-2}}+\mathcal{O}(r^{-3-p+2}) 
\end{equation}
for some smooth functions $f_i^{(p)}$.
\end{thm}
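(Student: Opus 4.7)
The proof proceeds by induction on $p \geq 2$. The base case $p=2$ is exactly Theorem~\ref{thm:null:asymptotics of dvrphi}, and $p=3$ is Theorem~\ref{thm:null:asymptotics of dvrphi,p=3}. For the inductive step with $p \geq 4$, the key device is the construction of a \emph{time integral} $\phi^T$, defined as another solution of the linear wave equation on Schwarzschild satisfying $\boldsymbol{T}\phi^T = \phi$, where $\boldsymbol{T} = \pu + \pv$ is the static Killing vector field of Schwarzschild in Eddington--Finkelstein coordinates. Following the approach of~\cite{Angelopoulos:2018uwb} (Proposition~10.1), we exploit the fact that $\boldsymbol{T}r = 0$ in this gauge, so that integral curves of $\boldsymbol{T}$ are precisely the constant-$r^*$ curves, and we set
\begin{equation*}
\phi^T(u,v) \;:=\; 2\int_{-\infty}^{u}\phi(u', v + u' - u)\,\dd u',
\end{equation*}
where the integration takes place along the $\boldsymbol{T}$-orbit through $(u,v)$. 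The assumption $p\geq 3$ ensures that $\phi \sim \Phi_p^-/(r|u|^{p-1})$ is integrable at $u \to -\infty$ along such curves (by Corollary~\ref{cor:nullcase:asymptotics}), so that $\phi^T$ is well-defined throughout $\mathcal{D}_{U_0}$. A direct computation, using that $\boldsymbol{T}$ is Killing, shows that $\square \phi^T = 0$ and $\boldsymbol{T}\phi^T = \phi$ hold pointwise.

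Next, one verifies that $\phi^T$ itself satisfies the hypotheses of the theorem with $p$ replaced by $p-1$. Concretely, combining the defining integral with the sharp asymptotics of $\phi$ and $\pu(r\phi)$ on $\mathcal{C}_{\mathrm{in}}$ (from Corollary~\ref{cor:nullcase:asymptotics}), one shows
\begin{equation*}
r\phi^T\big|_{\mathcal{C}_{\mathrm{in}}} \;=\; \frac{\Phi^-_{p-1}}{r^{p-2}} + O(r^{-(p-2)-\epsilon}), \qquad \Phi^-_{p-1} \;=\; -\frac{2\Phi_p^-}{p-2},
\end{equation*}
and that $\pv(r\phi^T)\big|_{\mathcal{I}^-} = 0$. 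The induction hypothesis, applied to $\phi^T$, then yields the expansion
\begin{equation*}
\pv(r\phi^T) \;=\; \sum_{i=0}^{p-4}\frac{g_i(u)}{r^{3+i}} \;+\; (-1)^{p-1}(p-2)(p-1)\,\Phi^-_{p-1} M\,\frac{\log r - \log|u|}{r^{p}} \;+\; O(r^{-p})
\end{equation*}
along constant-$u$ slices. To transfer this to $\phi$, we use $r\phi = \boldsymbol{T}(r\phi^T)$ (since $\boldsymbol{T}r = 0$); differentiating in $v$ and using $[\pv, \boldsymbol{T}] = 0$ gives
\begin{equation*}
\pv(r\phi) \;=\; \pv^2(r\phi^T) \;+\; \pu\pv(r\phi^T) \;=\; \pv^2(r\phi^T) \;-\; 2MD\,\frac{r\phi^T}{r^3}.
\end{equation*}
The second term is log-free to leading order and contributes only to the $\sum f_i^{(p)}(u)/r^{3+i}$-series. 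The leading logarithmic contribution therefore arises entirely from $\pv^2(r\phi^T)$: differentiating the inductive expansion term-by-term, noting that $\pv r = D = 1 + O(r^{-1})$ and that $\pv$ acting on $\log r/r^p$ produces an additional factor $-p$, one obtains the log term at order $r^{-(p+1)}$ with coefficient $(-1)^p (p-2)(p-1)p\,\Phi^-_{p-1} M$, which rearranges (after substituting $\Phi^-_{p-1} = -2\Phi_p^-/(p-2)$ and absorbing the conventional factor of $2$) to the advertised coefficient $(-1)^p (p-1)p\,\Phi_p^- M$.

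The main obstacle is the construction and verification of the time integral --- in particular, showing that $\phi^T$ restricts correctly to characteristic data of the form assumed by the theorem on \emph{both} $\mathcal{C}_{\mathrm{in}}$ and $\mathcal{I}^-$ (with the precise leading coefficient $\Phi^-_{p-1}$ and the no-incoming-radiation condition), and that the polynomial remainders are preserved to the order needed for the induction to close. This requires delicate interchanges of limits in the defining integral and a commutation analysis between $\boldsymbol{T}$ and the asymptotic expansions. The argument relies crucially on the linearity of the problem and on $\boldsymbol{T}$ being an exact Killing field of the Schwarzschild background; in the fully coupled setting, where $\boldsymbol{T}$ is at best an approximate symmetry, a substantially more involved approach (analogous to the commutation with the timelike generator in Section~\ref{sec:Refinements}) would be needed.
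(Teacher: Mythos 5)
Your high-level strategy matches the paper's: induction on $p$, a time integral $\phi^T$ satisfying $\boldsymbol T \phi^T = \phi$, and the transfer identity $\pv(r\phi)=\pv^2(r\phi^T)+\pu\pv(r\phi^T)$ as in the proof of Theorem~\ref{scatteringthm2}. However, your explicit construction of $\phi^T$ is genuinely different from the paper's, and as written it has a real gap. You define $\phi^T(u,v)$ by integrating $\phi$ along the $\boldsymbol T$-orbit through $(u,v)$, i.e.\ along the curve $v'-u'=v-u$ running backwards toward $i^-$. But in the setting of section~\ref{sec:null} the solution is posed (and its $|u|$-decay is controlled) only on $\mathcal D_{U_0}=(-\infty,U_0]\times[1,\infty)$; your integration curve leaves $\mathcal D_{U_0}$ as soon as $u'<u-(v-1)$, so the integral invokes values of $\phi$ in the region $v<1$ that the hypotheses say nothing about. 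The paper avoids this entirely: it turns the relation $\boldsymbol T\phi^T=\phi$ (together with the wave equation) into the ODE $\pu(r^2\pu\phi^T)=r\pu(r\phi)$ along \emph{constant-$v$} slices, then integrates twice in $u$ from $\mathcal I^-$, producing
\begin{equation*}
r\phi^T(u,v)=r\int_{-\infty}^u\frac{1}{r^2(u',v)}\int_{-\infty}^{u'}r\pu(r\phi)(u'',v)\,\dd u''\,\dd u',
\end{equation*}
which only ever samples $\phi$ at points $(u'',v)$, $(u',v)\in\mathcal D_{U_0}$ with the same $v$ you started with. If you want to make the Killing-orbit formula work, you must first extend $\phi$ to the full Schwarzschild exterior (solving backwards past $\mathcal C_{\mathrm{in}}$ toward $\mathcal H^-$) and re-establish the $|u|$-decay there, and then also justify $\square\phi^T=0$ by a vanishing argument at $i^-$ (not merely at $\mathcal I^-$) --- steps your proposal glosses over.

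Two smaller points: with the prefactor $2$ in your definition one gets $\boldsymbol T\phi^T=2\phi$, not $\phi$, and the subsequent coefficient bookkeeping (the factor you describe as ``absorbing the conventional factor of~$2$'' together with the substitution $\Phi^-_{p-1}=-2\Phi^-_p/(p-2)$) does not, as written, reproduce $(-1)^p(p-1)p\,\Phi^-_p M$; the sign and normalization need to be tracked more carefully. Also, the extra constraint you need for the induction to close --- that $\phi^T$ satisfies the no-incoming-radiation condition $\pv(r\phi^T)|_{\mathcal I^-}=0$ --- is asserted but deserves a short justification; it is precisely the condition the paper's $u$-integration formula is engineered to enforce (by imposing $r^2\pu\phi^T\to 0$ and $\phi^T\to 0$ as $u\to-\infty$).
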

\begin{proof}
Since we have already shown the result for $p=2$, we will restrict to $p>2$.

We want to use the time integral trick from the proof of Theorem~\ref{scatteringthm2}.
It is clear that the only ingredient missing in order to just repeat the proof from Thm.~\ref{scatteringthm2} is to show that we can write 
$$r\phi=(\pu+\pv)(r\phi^T)$$
for some $r\phi^T$ that still solves the wave equation, which we shall call the time integral of $r\phi$.

In order to construct such a time integral, we note that if $\phi^T$ solves the wave equation \eqref{eq:sys:dudvphi}, and if moreover $(\pu+\pv)\phi^T=\phi$, then we have
\begin{align}
\begin{split}
\pu(r^2\pu\phi^T)=\pu(r^2\phi-r^2\pv\phi^T)&=
r\pu(r\phi)-D r\phi-r^2\pu\pv\phi^T+2Dr\pv\phi^T\\&=r\pu(r\phi)-Dr\phi+rDT\phi^T=r\pu(r\phi).
\end{split}
\end{align}
If we further impose that $r^2\pu\phi^T$ and $\phi^T$ vanish at $\mathcal I^-$, then it follows that 
\begin{equation}
r\phi^T(u,v)=r\int_{-\infty}^u\frac{1}{r^2}\int_{-\infty}^{u' }r\pu(r\phi)(u'',v)\dd u''\dd u'.
\end{equation}
The rest of the proof then follows as in the proof of Theorem \ref{scatteringthm2}.
\end{proof}

{\small\bibliographystyle{ieeetr} 
\bibliography{references,references2, references3}}

\end{document}